\newif\iflongversion 
\DeclareSymbolFont{cmmathcal}{OMS}{cmsy}{m}{n}
\DeclareSymbolFontAlphabet{\mathcal}{cmmathcal}
\renewcommand*{\@fnsymbol}[1]{\ensuremath{\ifcase#1\or \star\or \dagger\or \ddagger\or
       \mathsection\or \mathparagraph\or \|\or **\or \dagger\dagger
       \or \ddagger\ddagger \else\@ctrerr\fi}}
\let\oldmarginpar\marginpar
\renewcommand\marginpar[1]{\-\oldmarginpar[\raggedleft\scriptsize #1]%
{\raggedright\scriptsize #1}}
\newcommand\raisepunct[1]{\,\mathpunct{\raisebox{0.5ex}{#1}}}
\newcommand{\states}{\ensuremath{S} }
\newcommand{\statesSize}{\ensuremath{\vert\states\vert} }
\newcommand{\state}{\ensuremath{s} }
\newcommand{\statesOne}{\ensuremath{S_{1}} }
\newcommand{\statesTwo}{\ensuremath{S_{2}} }
\newcommand{\initState}{\ensuremath{s_{{\sf init}}} }
\newcommand{\edges}{\ensuremath{E} }
\newcommand{\edge}{\ensuremath{e} }
\newcommand{\succStates}[1]{\ensuremath{{\sf Succ}(#1)} }
\newcommand{\succStatesFull}{\ensuremath{\succStates{\state} = \{ \state' \in \states \mid (\state, \state') \in \edges\}} }
\newcommand{\weight}{\ensuremath{w} }
\newcommand{\largestW}{\ensuremath{W} }
\newcommand{\bits}{\ensuremath{V} }
\newcommand{\graph}{\ensuremath{\mathcal{G}} }
\newcommand{\graphFull}{\ensuremath{\mathcal{G} = (\states, \edges, \weight)} }
\newcommand{\game}{\ensuremath{G} }
\newcommand{\gameFull}{\ensuremath{\game = \left( \graph, \statesOne, \statesTwo\right)} }
\newcommand{\ec}{\ensuremath{U} }
\newcommand{\ecsSet}{\ensuremath{\mathcal{E}} }
\newcommand{\winningECs}{\ensuremath{\mathcal{W}} }
\newcommand{\losingECs}{\ensuremath{\mathcal{L}} }
\newcommand{\maxWinningECs}{\ensuremath{\mathcal{U}_{\textsc{w}}} }
\newcommand{\reduc}{\ensuremath{\downharpoonright} }
\newcommand{\proj}[1]{\ensuremath{{\sf proj}_{#1}} }
\newcommand{\infVisited}[1]{\ensuremath{{\sf Inf}(#1)} }
\newcommand{\markovChain}{\ensuremath{M} }
\newcommand{\mcTrans}{\ensuremath{\delta} }
\newcommand{\markovChainFull}{\ensuremath{\markovChain = (\graph, \mcTrans)} }
\newcommand{\mpTrans}{\ensuremath{\Delta} }
\newcommand{\markovProcess}{\ensuremath{{P} }}
\newcommand{\markovProcessFull}{\ensuremath{{P = (\graph, \statesOne, \statesProb, \mpTrans)} }}
\newcommand{\statesProb}{\ensuremath{S_{\mpTrans}} }
\newcommand{\thresholdWC}{\ensuremath{\mu} }
\newcommand{\thresholdExp}{\ensuremath{\nu} }
\newcommand{\optimalWC}{\ensuremath{\mu^{\ast}} }
\newcommand{\optimalExp}{\ensuremath{\nu^{\ast}} }
\newcommand{\truncatedTarget}{\ensuremath{T} }
\newcommand{\truncatedSum}[1]{\ensuremath{{\sf TS}_{#1}} }
\newcommand{\integ}{\ensuremath{\mathbb{Z}} }
\newcommand{\nat}{\ensuremath{\mathbb{N}} }
\newcommand{\natStrict}{\ensuremath{\mathbb{N}_{0}} }
\newcommand{\rat}{\ensuremath{\mathbb{Q}} }
\newcommand{\reals}{\ensuremath{\mathbb{R}} }
\newcommand{\realsAndInfinity}{\ensuremath{\mathbb{R} \cup \{-\infty,\, \infty\}} }
\newcommand{\player}{\ensuremath{\mathcal{P}} }
\newcommand{\playerOne}{\ensuremath{\mathcal{P}_{1}} }
\newcommand{\playerTwo}{\ensuremath{\mathcal{P}_{2}} }
\newcommand{\play}{\ensuremath{\pi} }
\newcommand{\plays}[1]{\ensuremath{{\sf Plays}(#1)} }
\newcommand{\prefixes}[1]{\ensuremath{\mathsf{Prefs}(#1)} }
\newcommand{\prefixesArg}[2]{\ensuremath{\mathsf{Prefs}_{#2}(#1)} }
\newcommand{\prefix}{\ensuremath{\rho} }
\newcommand{\last}[1]{\ensuremath{\mathsf{Last}(#1)} }
\newcommand{\first}[1]{\ensuremath{\mathsf{First}(#1)} }
\newcommand{\mpay}{\ensuremath{\mathsf{MP}} }
\newcommand{\tpay}{\ensuremath{\mathsf{TP}} }
\newcommand{\tpayFiniteFull}{\ensuremath{\tpay(\prefix) = \sum_{i = 0}^{i = n - 1} \weight((\state_{i}, \state_{i+1}))} }
\newcommand{\mpayFiniteFull}{\ensuremath{\mpay(\prefix) = \frac{1}{n} \tpay(\prefix)} }
\newcommand{\mpayFull}{\ensuremath{\mpay(\play) = \liminf_{n \rightarrow \infty} \mpay (\play(n))}}
\newcommand{\tpayFull}{\ensuremath{\tpay(\play) = \liminf_{n \rightarrow \infty} \tpay (\play(n))}}
\newcommand{\strat}{\ensuremath{\lambda} }
\newcommand{\stratStoch}{\ensuremath{\lambda^{{\sf stoch}}_{2}} }
\newcommand{\stratWC}{\ensuremath{\strat_{1}^{\textit{wc}}} }
\newcommand{\stratExp}{\ensuremath{\strat_{1}^{\textit{e}}} }
\newcommand{\stratComb}{\ensuremath{\strat_{1}^{\textit{cmb}}} }
\newcommand{\stratSecure}{\ensuremath{\strat_{1}^{\textit{sec}}} }
\newcommand{\stratWNS}{\ensuremath{\strat_{1}^{\textit{wns}}} }
\newcommand{\stratGlobal}{\ensuremath{\strat_{1}^{\textit{glb}}} }
\newcommand{\stepsWC}{\ensuremath{L} }
\newcommand{\stepsExp}{\ensuremath{K} }
\newcommand{\stepsGlobal}{\ensuremath{N} }
\newcommand{\cmbSum}{\ensuremath{\mathsf{Sum}} }
\newcommand{\typeA}{\ensuremath{\textit{(a)}} }
\newcommand{\typeB}{\ensuremath{\textit{(b)}} }
\newcommand{\strats}{\ensuremath{\Lambda} }
\newcommand{\stratsMemoryless}{\ensuremath{\Lambda^{M}} }
\newcommand{\stratsFinite}{\ensuremath{\Lambda^{F}} }
\newcommand{\stratsPureMemoryless}{\ensuremath{\Lambda^{PM}} }
\newcommand{\stratsPureFinite}{\ensuremath{\Lambda^{PF}} }
\newcommand{\outcomesMC}[2]{\ensuremath{\mathsf{Outs}_{#1}(#2)} }
\newcommand{\outcomesMDP}[3]{\ensuremath{\mathsf{Outs}_{#1}(#2,#3)} }
\newcommand{\outcomesGame}[4]{\ensuremath{\mathsf{Outs}_{#1}(#2,#3,#4)} }
\newcommand{\NPinter}{\ensuremath{\text{NP} \cap \text{coNP}}}
\newcommand{\PTIME}{\ensuremath{\text{P}}}
\newcommand{\NPTIME}{\ensuremath{\text{NP}}}
\newcommand{\PPTIME}{\ensuremath{\text{PP}}}
\newcommand{\playValue}{\ensuremath{f} }
\newcommand{\dist}{\ensuremath{p} }
\newcommand{\dists}{\ensuremath{\mathcal{D}} }
\newcommand{\event}{\ensuremath{\mathcal{A}} }
\newcommand{\proba}{\ensuremath{\mathbb{P}} }
\newcommand{\expect}{\ensuremath{\mathbb{E}} }
\newcommand{\supp}{\ensuremath{{\sf Supp}} }
\newcommand{\mooreMachine}[1]{\ensuremath{\mathcal{M}(#1)} }
\newcommand{\mooreMachineFull}[1]{\ensuremath{\mooreMachine{#1} = (\mooreMem, \mooreInitMem, \mooreUpd, \mooreNext)} }
\newcommand{\mooreMem}{\ensuremath{{\sf Mem}} }
\newcommand{\mooreMemElem}{\ensuremath{{\sf m}} }
\newcommand{\mooreInitMem}{\ensuremath{\mooreMemElem_{0}} }
\newcommand{\mooreUpd}{\ensuremath{\alpha_{{\sf u}}} }
\newcommand{\mooreUpdHat}{\ensuremath{\hat{\alpha}_{{\sf u}}} }
\newcommand{\mooreNext}{\ensuremath{\alpha_{{\sf n}}} }
\newcommand{\mpAlgoName}{\ensuremath{\textsc{BWC\_MP}} }
\newcommand{\mwecAlgoName}{\ensuremath{\textsc{MWEC}} }
\newcommand{\mpAlgo}[5]{\ensuremath{\mpAlgoName(#1,#2,#3,#4,#5)} }
\newcommand{\yes}{\ensuremath{\textsc{Yes}} }
\newcommand{\no}{\ensuremath{\textsc{No}} }
\newcommand{\BWC}{\text{BWC}}
\newcommand{\expDecFct}[2]{\ensuremath{\mathcal{F}(#1, #2)} }
\newcommand{\gameTS}{\ensuremath{\game_{\thresholdWC}} }
\newcommand{\tsFailSymbol}{\ensuremath{\top} }
\newcommand{\attr}{\ensuremath{\mathsf{Attr}} }
\newcommand{\kthSet}{\ensuremath{A} }
\newcommand{\kthSetSize}{\ensuremath{n} }
\newcommand{\kthSizeFct}{\ensuremath{h} }
\newcommand{\kthSizeFctFull}{\ensuremath{\kthSizeFct\colon \kthSet \rightarrow \natStrict} }
\newcommand{\kthSetsNbr}{\ensuremath{K} }
\newcommand{\kthSetMaxSum}{\ensuremath{L} }
\newcommand{\kthSubset}{\ensuremath{C} }
\newcommand{\kthElem}{\ensuremath{a} }
\newcommand{\kthNewSizeFct}{\ensuremath{\kthSizeFct_{\kthSetSize}} }
\newcommand{\kthRandomPath}[1]{\ensuremath{p_{#1}} }
\newcommand{\kthPathSizeFct}{\ensuremath{t} }
\newcommand{\kthPathBound}{\ensuremath{T} }
\newcommand{\kthWCState}{\ensuremath{\state_{wc}} }
\newcommand{\kthExpState}{\ensuremath{\state_{e}} }
\newcommand{\kthWeightA}{\ensuremath{x_{1}} }
\newcommand{\kthWeightB}{\ensuremath{x_{2}} }
\newcommand{\kthWeightC}{\ensuremath{x_{3}} }
\newcommand{\kthPathMax}{\ensuremath{Q} }
\newcommand{\kthLBExp}{\ensuremath{{\sf LB}_{e} } }
\newcommand{\kthLBWC}{\ensuremath{{\sf LB}_{wc} } }
\newcommand{\kthUBExp}{\ensuremath{{\sf UB}_{e} } }
\newcommand{\kthUBWC}{\ensuremath{{\sf UB}_{wc} } }
\newcommand{\edgesNonZero}{\ensuremath{\edges_{\mpTrans}} }
\newcommand{\negligibleStates}{\ensuremath{\states_{{\sf neg}}} }
\newcommand{\gameNonZero}{\ensuremath{\game_{\mpTrans}} }
\newcommand{\markovProcessNonZero}{\ensuremath{\markovProcess_{\mpTrans}} }
\newcommand{\mwecAlgoSet}{\ensuremath{\mathcal{M}_{\textsc{w}} }}
\renewcommand{\arraystretch}{1.2}
\spnewtheorem{assumption}{Assumption}{\bfseries}{\itshape}
\let\doendproof\endproof
\renewcommand\endproof{~\hfill\qed\doendproof}
\title{Meet Your Expectations With Guarantees:\\Beyond Worst-Case Synthesis in Quantitative Games\thanks{E.~Filiot is a F.R.S.-FNRS Research Associate, M.~Randour is a F.R.S.-FNRS Postdoctoral Researcher, J.-F.~Raskin is supported by ERC Starting Grant (279499: inVEST). Work partly supported by European project CASSTING (FP7-ICT-601148).}}
\author{V\'eronique Bruy\`ere\inst{1} \and Emmanuel Filiot\inst{2} \and Mickael Randour\inst{1,2,3} \and \mbox{Jean-Fran\c{c}ois Raskin\inst{2}}}
\institute{
Computer Science Department, Université de Mons (UMONS), Belgium\\
\and D\'epartement d'Informatique, Universit\'e Libre de Bruxelles (ULB), Belgium\\
\and LSV - CNRS \& ENS Cachan, France
}
\begin{document}

\maketitle

\begin{abstract}
We extend the quantitative synthesis framework by going beyond the worst-case. On the one hand, classical analysis of two-player games involves an adversary (modeling the environment of the system) which is purely antagonistic and asks for strict guarantees. On the other hand, stochastic models like Markov decision processes represent situations where the system is faced to a purely randomized environment: the aim is then to optimize the expected payoff, with no guarantee on individual outcomes. We introduce the beyond worst-case synthesis problem, which is to construct strategies that guarantee some quantitative requirement in the worst-case while providing a higher expected value against a particular stochastic model of the environment given as input. 
This problem is relevant to produce system controllers that provide nice expected performance in the everyday situation while ensuring a strict (but relaxed) performance threshold even in the event of very bad (while unlikely) circumstances.
We study the beyond worst-case synthesis problem for two important quantitative settings: the mean-payoff and the shortest path. 
In both cases, we show how to decide the existence of finite-memory strategies satisfying the problem and how to synthesize one if one exists. We establish algorithms and we study complexity bounds and memory requirements.
\end{abstract}

\section{Introduction}
\label{sec:introduction}

\smallskip\noindent\textbf{Classical models.} Two-player zero-sum quantitative games~\cite{EM79,ZP96,BCDGR11} and Markov decision processes (abbrev.~MDPs)~\cite{Puterman94,chatterjee_MEMICS11} are two popular formalisms for modeling decision making in adversarial and uncertain environments respectively. In the former, two players compete with opposite goals (zero-sum), and we want strategies for player~1 (the system) that ensure a given \textit{minimal performance against all possible strategies} of player~2 (its environment). In the latter, the system plays against a stochastic model of its environment, and we want strategies that ensure a \textit{good expected overall performance}.  Those two models are well studied and simple optimal memoryless strategies exist for classical objectives such as mean-payoff~\cite{liggett_SR69,EM79,filar1997} or shortest path~\cite{bertsekas_MOR1991,deAlfaro_CONCUR1999}. But both models have clear weaknesses: strategies that are good for the worst-case may exhibit suboptimal behaviors in probable situations while strategies that are good for the expectation may be terrible in some unlikely but possible situations.

\smallskip\noindent\textbf{What if we want both?} In practice, we would like to have strategies that are both ensuring $(a)$  some worst-case threshold no matter how the adversary behaves (i.e., against any arbitrary strategy) and $(b)$ a good expectation against the expected behavior of the adversary (given as a stochastic model).  This is the subject of this paper: we show how to construct finite-memory strategies that ensure both $(a)$ and $(b)$. We consider finite-memory strategies for player~1 as they can be implemented in practice (as opposed to infinite-memory ones). Player~2 is not restricted in his choice of strategies, but we will see that simple strategies suffice. Our problem, the \textbf{beyond worst-case synthesis problem}, is interesting for any quantitative measure, but we give here a thorough study of two classical ones: the {\em mean-payoff}, and the {\em shortest path}. Our results are summarized in Table~\ref{summaryTable}.

\renewcommand{\arraystretch}{1.4}
\begin{table}[htb]
\small
\centering
\begin{tabular}{|c|c||c|c|c|}
\cline{3-5} \multicolumn{2}{c|}{} & worst-case & ~expected value~ & \textbf{BWC}\\
\hline
\hline
\multirow{2}{*}{mean-payoff} & ~~complexity~~ & ~~$\NPinter$~~ & $\PTIME$ & \textbf{$\NPinter$}\\
\cline{2-5}
& memory & \multicolumn{2}{c|}{memoryless} & \textbf{pseudo-poly.}\\
\hline
\multirow{2}{*}{~~shortest path~~} & complexity & \multicolumn{2}{c|}{$\PTIME$} & ~~\textbf{pseudo-poly. / $\NPTIME$-hard}~~\\
\cline{2-5}
& memory & \multicolumn{2}{c|}{memoryless} & \textbf{pseudo-poly.}\\
\hline
\end{tabular}
\vspace{2mm}
\caption{Overview of decision problem complexities and (tight) memory requirements for winning strategies of player~1 in games (worst-case), MDPs (expected value) and the BWC setting (combination).}
\label{summaryTable}
\end{table}

\smallskip\noindent\textbf{Example.} Let us consider the weighted game in Fig.~\ref{fig:exampleTS} to illustrate the {\em shortest path} context. Circle states belong to player~1, square states to player~2, integer labels are durations in minutes, and fractions are probabilities that model the expected behavior of player~2. Player~1 wants a strategy to go from ``home'' to ``work'' such that ``work'' is \textit{guaranteed} to be reached within 60 minutes (to avoid missing an important meeting), and player~1 would also like to minimize the expected time to reach ``work''. First, note that the strategy that minimizes the expectation is to take the car (expectation is 33 minutes) but this strategy is excluded as there is a possibility to arrive after 60 minutes (in case of heavy traffic). Bicycle is safe but the expectation of this solution is 45 minutes. We can do better with the following strategy: try to take the train, if the train is delayed three times consecutively, then go back home and take the bicycle. This strategy is safe as it always reaches ``work'' within 58 minutes and its expectation is $\approx 37,45$ minutes (so better than taking directly the bicycle). Observe that this simple example already shows that, unlike the situation for classical games and MDPs, strategies using memory are strictly more powerful than memoryless ones. Our algorithms are able to decide the existence of (and synthesize) such finite-memory strategies.
\begin{wrapfigure}{r}{65mm}
  \centering
\scalebox{0.6}{\begin{tikzpicture}[->,>=latex,shorten >=1pt,auto,node
    distance=2.5cm,bend angle=45,font=\normalsize]
    \tikzstyle{p1}=[draw,circle,text centered,minimum size=10mm, text width=15mm]
    \tikzstyle{p2}=[draw,rectangle,text centered,minimum size=15mm]
    \tikzstyle{empty}=[]
    \node[p1] (1) at (0,0) {home};
    \node[p2] (2) at (-2,-3) {station};
    \node[p2] (3) at (2,-3) {traffic};
    \node[p1] (4) at (-2,-7) {waiting room};
    \node[p1] (5) at (2,-7) {work};
    \node[empty] (a) at (-2.9,-4) {$\frac{1}{10}$};
    \node[empty] (b) at (-1.05,-3.4) {$\frac{9}{10}$};
    \node[empty] (c) at (1.05,-3.2) {$\frac{2}{10}$};
    \node[empty] (d) at (1.6,-4.1) {$\frac{7}{10}$};
    \node[empty] (e) at (2.6,-4.1) {$\frac{1}{10}$};
    \coordinate[shift={(0mm,5mm)}] (init) at (1.north);
    \path
    (1) edge node[left,align=center] {\textit{train}\\2} (2)
    (1) edge node[right,align=center] {\textit{car}\\1} (3) 
    (init) edge (1)
    ;
	\draw[->,>=latex] (4) to[out=150,in=180] node[left,align=center] {\textit{back home}\\2} (1);
	\draw[->,>=latex] (1) to[out=0,in=30] node[right,align=center] {\textit{bicycle}\\45} (5);
	\draw[->,>=latex] (2) to[out=240,in=120] node[left,align=center] {\textit{delay}\\1} (4);
	\draw[->,>=latex] (4) to[out=60,in=300] node[right,align=center] {\textit{wait}\\3} (2);
	\draw[->,>=latex] (3) to[out=210,in=150] node[left, very near start,align=center] {\textit{light}\\20} (5);
	\draw[->,>=latex] (3) to[out=260,in=100] node[left,align=center,xshift=1mm] {\textit{medium}\\30} (5);
	\draw[->,>=latex] (3) to[out=310,in=50] node[right, very near start,align=center,xshift=-1mm] {\textit{heavy}\\70} (5);
	\draw [->,>=latex] (2) to[out=0,in=90] node[right, very near start,align=center] {\textit{departs}\\35}(0, -5) to[out=270,in=180] (5);
      \end{tikzpicture}}
      \caption{Player 1 wants to minimize its expected time to reach ``work'', but while ensuring it is less than one hour in all cases.}
\label{fig:exampleTS}
\vspace{-6mm}
\end{wrapfigure}

\smallskip\noindent\textbf{Contributions.} Our main results are the following. 
First, for the mean-payoff value, we provide an algorithm (Thm.~\ref{thm:mp_decisionProblem}) that implies $\NPinter$-membership of the problem, which would reduce to $\PTIME$ if mean-payoff games were proved to be in~$\PTIME$, a long-standing open problem~\cite{BCDGR11,Chatterjee201525}. Pseudo-polynomial memory may be necessary and always suffices (Thm.~\ref{thm:mp_memoryRequirements}). Finally, we observe that infinite-memory strategies are strictly more powerful that finite-memory strategies (Sect.~\ref{subsec:mpInfiniteMemory}).
Second, for the shortest path, we provide a pseudo-polynomial time algorithm (Thm.~\ref{thm:ts_pseudoPoly}), and show that the associated decision problem is $\NPTIME$-hard (Thm.~\ref{thm:ts_NPHardness}). According to a very recent result by Haase and Kiefer~\cite{HaasePP}, our reduction even proves $\PPTIME$-hardness, which suggests that the problem does not belong to $\NPTIME$ at all otherwise the polynomial hierarchy would collapse. Pseudo-polynomial memory may be necessary and always suffices (Thm.~\ref{thm:ts_memory}). 
In the case of the shortest path problem, infinite-memory strategies give no additional power in comparison with finite-memory strategies (Rem.~\ref{rem:ts_infMem}). 

\smallskip\noindent\textbf{Related work.} This paper extends previous works presented in conferences~\cite{bruyere_STACS2014,DBLP:journals/corr/BruyereFRR14} and in a PhD thesis~\cite{Ran14}: it gives a full
presentation of the technical details, along with additional results. Our problems generalize the corresponding problems for two-player zero-sum games and MDPs. In mean-payoff games, optimal memoryless worst-case strategies exist and the best known complexity is $\NPinter$~\cite{EM79,ZP96,BCDGR11}. For the shortest path, we consider game graphs with strictly positive weights ant try to minimize the cost to target: memoryless strategies also suffice, and the problem is in $\PTIME$~\cite{DBLP:dblp_journals/mst/KhachiyanBBEGRZ08}. In MDPs, optimal strategies for the expectation are studied in~\cite{Puterman94,filar1997} for the mean-payoff and the shortest path: in both cases, memoryless strategies suffice and they can be computed in polynomial time.

Our strategies are {\em strongly risk averse}: they avoid at all cost outcomes that are below a given threshold (no matter what is their probability), and inside the set of those {\em safe} strategies, we maximize expectation. To the best of our knowledge, we are the first to consider such strategies. Other different notions of risk have been studied for MDPs: for example in~\cite{WL99}, the authors want to find policies which minimize the probability (risk) that the total discounted rewards do not exceed a specified value (target), or in~\cite{FKR95} the authors want policies that achieve a specified value of the long-run limiting average reward at a specified probability level (percentile). The latter problem has been recently extended significantly in the framework of \textit{percentile queries}, which provide elaborate guarantees on the performance profile of strategies in multi-dimensional MDPs~\cite{RRS15a}. While all those strategies limit risk, they only ensure {\em low probability} for bad behaviors but they do not ensure their absence, furthermore, they do not ensure good expectation either.

Another body of work is the study of strategies in MDPs that achieve a trade-off between the expectation and the variance over the outcomes (e.g.,~\cite{brazdil_LICS2013} for the mean-payoff,~\cite{mannor_ICML2011} for the cumulative reward), giving a statistical measure of the stability of the performance. In our setting, we strengthen this requirement by asking for \textit{strict guarantees on individual outcomes}, while maintaining an appropriate expected payoff.

A survey of rich behavioral models extending the classical approaches for MDPs --- including the beyond worst-case framework presented here --- was published in~\cite{RRS15b}, with a focus on the shortest path problem.

In this paper, we focus on \textit{finite}-memory strategies. The synthesis of \textit{infinite}-memory strategies for the beyond worst-case mean-payoff problem was recently addressed in~\cite{CR15}.

\smallskip\noindent\textbf{Structure of the paper.} In Sect.~\ref{sec:preliminaries}, we introduce the necessary definitions. In Sect.~\ref{sec:problem}, we formally define the beyond worst-case synthesis problem. Sect.~\ref{sec:mean_payoff} and Sect.~\ref{sec:shortest_path} are respectively devoted to the solutions for the mean-payoff and the shortest path. We conclude (Sect.~\ref{sec:conclusion}) with a comparative note on the two solutions.

\section{Preliminaries}
\label{sec:preliminaries}

\smallskip\noindent\textbf{Weighted directed graphs.} A {\em weighted directed graph} is a tuple $\graphFull$ where (i) $\states$ is the set of vertices, called {\em states}; (ii) $\edges \subseteq \states \times \states$ is the set of directed edges; and (iii) $\weight\colon \edges \rightarrow \integ$ is the weight labeling function. Since we only work with directed graphs in the following, we omit the adjective and talk about \textit{weighted graphs}. Also, in the sequel, we almost exclusively work with \textit{finite} graphs, i.e., graphs for which the set of states $\states$ is finite. Given a state $\state \in \states$, we denote by $\succStatesFull$ the set of successors of $\state$ by edges in $\edges$. We assume that graphs are non-blocking, i.e., for all $\state \in \states$, $\succStates{\state} \neq \emptyset$. We denote by $\largestW$ the largest absolute weight that appears in the graph. We assume that weights are encoded in binary and denote by $\bits = \lceil \log_{2} \largestW \rceil$ the number of bits of their encoding.

A \textit{play} in $\graph$ from an initial state $\initState \in \states$ is an infinite sequence of states $\play = \state_{0}\state_{1}\state_{2}\ldots{}$ such that $\state_{0} = \initState$ and $(s_{i}, s_{i+1}) \in \edges$ for all $i \geq 0$. The \textit{prefix} up to the $n$-th state of $\play$ is the finite sequence $\play(n) = s_{0}s_{1}\ldots{}s_{n}$. We resp. denote the first and last states of the prefix by $\first{\play(n)} = s_{0}$ and $\last{\play(n)} = s_{n}$. For a play $\play$, we naturally extend the notation to $\first{\play}$. The set of plays of $\graph$ is denoted by $\plays{\graph}$ and the corresponding set of prefixes is denoted by $\prefixes{\graph}$. Given a play $\play \in \plays{\graph}$, we denote by $\infVisited{\play} \subseteq \states$ the set of states that are visited infinitely often along the play.

Given a function $\playValue\colon \plays{\graph} \rightarrow \realsAndInfinity$, the \textit{value} of a play $\play$ is denoted by $\playValue(\play)$. We consider two classical value functions, the \textit{total-payoff} and the \textit{mean-payoff}, defined as follows. The \textit{total-payoff} of a prefix $\prefix = s_{0}s_{1}\ldots{}s_{n}$ is $\tpayFiniteFull$, and its \textit{mean-payoff} is $\mpayFiniteFull$. This is naturally extended to plays by considering the limit behavior: the total-payoff of a play $\play$ is $\tpayFull$ and its mean-payoff is $\mpayFull$.
Given a graph $\graph$ where all weights are strictly positive (i.e., $\weight\colon \edges \rightarrow \natStrict$) and a target set of states $\truncatedTarget \subseteq \states$, we define the \textit{truncated sum up to $\truncatedTarget$} as $\truncatedSum{\truncatedTarget} \colon \plays{\graph} \rightarrow \nat \cup \{ \infty \}$, $\truncatedSum{\truncatedTarget} (\play = s_{0}s_{1}s_{2}\ldots{}) = \sum_{i = 0}^{n-1} \weight((s_{i}, s_{i+1}))$, with $n$ the first index such that $s_{n} \in \truncatedTarget$, and $\truncatedSum{\truncatedTarget} (\play) = \infty$ if $\play$ never reaches any state in $\truncatedTarget$. As all weights are strictly positive, it is possible to reduce the truncated sum to the total-payoff (i.e., for all $\play \in \plays{\graph}$, $\truncatedSum{\truncatedTarget} (\play) = \tpay(\play)$) by making all states of $\truncatedTarget$ absorbing with a self-loop of zero weight. That is, for all $\state \in \truncatedTarget$, we have that $\succStates{\state} = \{\state\}$ and $\weight((\state, \state)) = 0$.

\smallskip\noindent\textbf{Probability distributions.} Given a finite set $A$, a (rational) \textit{probability distribution} on $A$ is a function $\dist \colon A \rightarrow [0, 1] \cap \rat$ such that $\sum_{a\in A} \dist(a) = 1$. We denote the set of probability distributions on $A$ by $\dists(A)$. The \textit{support} of the probability distribution $\dist$ on $A$ is $\supp(\dist) = \left\lbrace a \in A \;\vert\; \dist(a) > 0\right\rbrace$.

\smallskip\noindent\textbf{Two-player games.} We consider two-player turn-based games and denote the two \textit{players} by $\playerOne$ and $\playerTwo$. A finite \textit{two-player game} is a tuple $\gameFull$ composed of (i) a finite weighted graph $\graphFull$; and (ii) a partition of its states $\states$ into $\statesOne$ and $\statesTwo$ that resp. denote the sets of states belonging to $\playerOne$ and $\playerTwo$. A prefix $\play(n)$ of a play $\play$ belongs to $\player_{i}$, $i \in \lbrace 1, 2\rbrace$, if $\last{\play(n)} \in \states_{i}$. The set of prefixes that belong to $\player_{i}$ is denoted by $\prefixesArg{\game}{i}$. We sometimes denote by~$\vert\game\vert$ the size of a game, defined as a polynomial function of $\vert\states\vert$, $\vert\edges\vert$ and $\bits = \lceil \log_{2} \largestW \rceil$.

\smallskip\noindent\textbf{Strategies.} Let $\gameFull$ be a two-player game, a \textit{strategy} for player $\player_{i}$, $i \in \lbrace 1, 2\rbrace$, is a function $\strat_{i} \colon \prefixesArg{\game}{i} \rightarrow \dists(\states)$ such that for all $\prefix \in \prefixesArg{\game}{i}$, we have $\supp(\strat_{i}(\prefix)) \subseteq \succStates{\last{\prefix}}$. A strategy is called \textit{pure} if it is deterministic, i.e., if its support is a singleton for all prefixes. When a strategy $\strat_{i}$ of $\player_{i}$ is pure, we sometimes simplify its notation and write $\strat_{i}(\prefix) = \state$ instead of $\strat_{i}(\prefix)(\state) = 1$, for any $\prefix \in \prefixesArg{\game}{i}$ and the unique state $\state \in \supp(\strat_{i}(\prefix))$.

A strategy $\strat_{i}$ for $\player_{i}$ has \textit{finite memory} if it can be encoded by a stochastic finite state machine with outputs, called {\em stochastic Moore machine}, $\mooreMachineFull{\strat_{i}}$, where (i) $\mooreMem$ is a finite set of memory elements, (ii)~$\mooreInitMem \in \mooreMem$ is the initial memory element, (iii) $\mooreUpd \colon \mooreMem \times \states \to \mooreMem$ is the update function, and (iv) $\mooreNext \colon \mooreMem \times \states_{i} \to \dists(\states)$ is the next-action function. If the game is in $\state \in \states_{i}$ and $\mooreMemElem \in \mooreMem$ is the current memory element, then the strategy chooses $\state'$, the next state of the game, according to the probability distribution $\mooreNext(\mooreMemElem, \state)$. When the game leaves a state $\state \in \states$, the memory is updated to $\mooreUpd(\mooreMemElem, \state)$. Hence updates are deterministic and outputs are potentially stochastic. Formally, $(\mooreMem, \mooreInitMem, \mooreUpd, \mooreNext)$ defines the strategy $\strat_{i}$ such that $\strat_{i}(\prefix \cdot \state) = \mooreNext(\mooreUpdHat(\mooreInitMem, \prefix), \state)$ for all $\prefix \in \prefixes{\graph}$ and $\state \in \states_{i}$, where $\mooreUpdHat$ extends $\mooreUpd$ to sequences of states starting from $\mooreInitMem$ as expected. Note that pure finite-memory strategies have deterministic next-action functions. A strategy is \textit{memoryless} if $\vert \mooreMem\vert = 1$, i.e., it does not depend on the history but only on the current state of the game.

We resp. denote by $\strats_{i}(\game), \stratsFinite_{i}(\game), \stratsPureFinite_{i}(\game), \stratsMemoryless_{i}(\game)$ and $\stratsPureMemoryless_{i}(\game)$ the sets of general (i.e., possibly randomized and infinite-memory), finite-memory, pure finite-memory, memoryless and pure memoryless strategies for player $\player_{i}$ on the game $\game$. We do not write $\game$ in this notation when the context is clear. A~play~$\play$ is said to be \textit{consistent} with a strategy $\strat_{i} \in \strats_{i}$ if for all $n \geq 0$ such that $\last{\play(n)} \in \states_{i}$, we have $\last{\play(n+1)} \in \supp(\strat_{i}(\play(n))$.

\smallskip\noindent\textbf{Markov decisions processes.} 
A finite \textit{Markov decision process} (MDP) is a tuple $\markovProcessFull$ where (i) $\graphFull$ is a finite weighted graph, (ii) $\statesOne$ and $\statesProb$ define a partition of the set of states $\states$ into states of $\playerOne$ and \textit{stochastic states}, and (iii) $\mpTrans\colon \statesProb \rightarrow \dists(\states)$ is the transition function that, given a stochastic state $\state \in \statesProb$, defines the probability distribution $\mpTrans(s)$ over the possible successors of $s$, such that for all states $s \in \statesProb$, $\supp(\mpTrans(s)) \subseteq \succStates{\state}$.

In contrast to some other classical definitions of MDPs in the literature, we explicitly allow that, for some states $s \in \statesProb$, $\supp(\mpTrans(s)) \subsetneq \succStates{\state}$: some edges of the graph $\graph$ are assigned probability zero by the transition function. This is important as far as modeling is concerned, as in our context, transition functions will be defined according to a stochastic model for the environment of a system, and we cannot reasonably assume that such a model always involves all possible actions of the environment: it is possible that some actions are only used by an antagonistic environment, in our two-player game view. Hence, we study the most general setting --- where edges of probability zero are allowed. Consequently, given the MDP $\markovProcess$, we define the subset of edges $\edgesNonZero = \{ (\state_{1}, \state_{2}) \in \edges \mid \state_{1} \in \statesProb \Rightarrow \state_{2} \in \supp(\mpTrans(\state_{1}))\}$, representing all edges that either start in a state of $\playerOne$, or are chosen with non-zero probability by the transition function $\mpTrans$.

An MDP can be seen as a two-player game where $\playerOne$ is playing against a probabilistic adversary using a fixed randomized memoryless strategy $\mpTrans$ in states of the set $\statesProb$. Hence MDPs are sometimes called $1\frac{1}{2}$-player games. The notions of prefixes belonging to $\playerOne$ and of strategies for $\playerOne$ are naturally extended to MDPs.

\smallskip\noindent\textbf{End-components.}
We define \textit{end-components} (ECs) of an MDP as subgraphs in which $\playerOne$ can ensure to stay despite stochastic states \cite{de1997formal}. Formally, let $\markovProcessFull$ be an MDP, with $\graphFull$ its underlying graph. An EC in~$\markovProcess$ is a set $\ec \subseteq \states$ such that (i) the subgraph $(\ec, \edges_{\mpTrans} \cap (\ec \times \ec))$ is strongly connected, with $\edges_{\mpTrans}$ defined as before, i.e., stochastic edges with probability zero are treated as non-existent; and (ii) for all $\state \in \ec \cap \statesProb$, $\supp(\mpTrans(\state)) \subseteq \ec$, i.e., in stochastic states, all outgoing edges either stay in $\ec$ or belong to $\edges \setminus \edgesNonZero$ (that is, the probability of leaving $\ec$ from a state $\state \in \statesProb$ is zero). The set of all ECs of $\markovProcess$ is denoted $\ecsSet \subseteq 2^{\states}$.

\smallskip\noindent\textbf{Markov chains.} A finite \textit{Markov chain} (MC) is a tuple $\markovChainFull$ where (i) $\graphFull$ is a finite weighted graph; and (ii) $\mcTrans \colon \states \rightarrow \dists(\states)$ is the transition function that, given a state $\state \in \states$, defines the probability distribution $\mcTrans(s)$ over the successors of $s$, such that for all states $s \in \states$, $\supp(\mcTrans(s)) \subseteq \succStates{\state}$.

In a Markov chain $\markovChainFull$, an \textit{event} is a measurable set of plays $\event \subseteq \plays{\graph}$. Every event has a uniquely defined probability \cite{vardi_FOCS85} (Carathéodory's extension theorem induces a unique probability measure on the Borel $\sigma$-algebra over $\plays{\graph}$). We denote by $\proba^{\markovChain}_{\initState}(\event)$ the probability that a play belongs to $\event$ when the Markov chain $\markovChain$ starts in $\initState \in \states$ and is executed for an infinite number of steps. Given a measurable value function $\playValue \colon \plays{\graph} \rightarrow \realsAndInfinity$, we denote by $\expect^{\markovChain}_{\initState}(\playValue)$ the \textit{expected value} or \textit{expectation} of $\playValue$ over a play starting in $\initState$. The $\sigma$-algebra is defined through cylinder sets of prefixes: each prefix $\prefix$ defines a set of plays $\play$ such that $\prefix$ is a prefix of $\play$ \cite{baier_MIT08}. Hence, the notions of probability and expected value can naturally be used over prefixes by considering the plays belonging to their cylinder set.

\smallskip\noindent\textbf{Projections.} Given a set $A_{i}$, $1 \leq i \leq k$ of a cartesian product $A_{1} \times \ldots{} \times A_{k}$, we define the \textit{projection} over $A_{i}$, denoted by $\proj{A_{i}}\colon A_{1} \times \ldots{} \times A_{k} \rightarrow A_{i}$, as the mapping from elements $\overline{a} = (a_{1}, \ldots{}, a_{k})$ to $\proj{A_{i}}(\overline{a}) = a_{i}$.

\smallskip\noindent\textbf{Outcomes.}
Let $\markovChain = (\graph, \mcTrans)$ be a Markov chain, with $\graph = (\states, \edges, \weight)$ its underlying graph. Given an initial state $\initState \in \states$, we define the set of its possible \textit{outcomes} as
\begin{equation*}
\outcomesMC{\markovChain}{\initState} = \left\lbrace \play = \state_{0}\state_{1}\state_{2}\ldots{} \in \plays{\graph} \;\vert\; \state_{0} = \initState \wedge \forall\, n \in \nat,\, s_{n+1} \in \supp(\mcTrans(\state_{n}))\right\rbrace.
\end{equation*}
Note that if $\mcTrans$ is deterministic (i.e., if the support is a singleton) in all states, we obtain a unique play $\play = s_{0}s_{1}s_{2}\ldots{}$ as the unique possible outcome.

Let $\game = (\graph, \statesOne, \statesTwo)$ be a two-player game, with $\graph = (\states, \edges, \weight)$ its underlying graph. Given two strategies, $\strat_{1} \in \strats_{1}$ and $\strat_{2} \in \strats_{2}$, and an initial state $\initState \in \states$, we extend the notion of outcomes as follows:
\begin{equation*}
\outcomesGame{\game}{\initState}{\strat_{1}}{\strat_{2}} = \left\lbrace \play = \state_{0}\state_{1}\state_{2}\ldots{} \in \plays{\graph} \;\vert\; \state_{0} = \initState \wedge \play \text{ is consistent with } \strat_{1} \text{ and } \strat_{2}\right\rbrace.
\end{equation*}
Observe that when fixing the strategies, we obtain an MC denoted by $\game[\strat_{1}, \strat_{2}]$. This MC is finite if both~$\strat_{1}$ and~$\strat_{2}$ are finite-memory strategies. Let $\mooreMachine{\strat_{1}} = (\mooreMem_{1}, \mooreMemElem_{1}, \mooreUpd^{1}, \mooreNext^{1})$ and $\mooreMachine{\strat_{2}} = (\mooreMem_{2}, \mooreMemElem_{2}, \mooreUpd^{2}, \mooreNext^{2})$ be the Moore machines of two such strategies. The set of states of the resulting MC is obtained through the product of the memory elements of the strategies given as Moore machines and the states of the game, i.e., $\states \times \mooreMem_{1} \times \mooreMem_{2}$; and its transition function is defined based on the distributions prescribed by the strategies and in order to accurately account for the memory updates. Notice that the outcomes of $\game$ and $\game[\strat_{1}, \strat_{2}]$ are different objects by nature: the former are plays on a graph defined by the set of states $\states$ while the latter are plays on a graph defined by $\states \times \mooreMem_{1} \times \mooreMem_{2}$. Still, there exists a bijection between outcomes of the MC and their \textit{traces} in the initial game, thanks to the projection operator (Lemma \ref{lem:projBijective}).
\begin{lemma}
\label{lem:projBijective}
Let $\gameFull$ be a game, with $\graphFull$ its underlying graph. Let $\strat_{1} \in \stratsFinite_{1}$ and $\strat_{2} \in \stratsFinite_{2}$ be the finite-memory strategies of the players. Then there is a bijection between outcomes in $\game$ and outcomes in the resulting Markov chain $\game[\strat_{1}, \strat_{2}]$.
\end{lemma}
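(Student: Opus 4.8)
The plan is to construct the bijection explicitly from the product structure of the Markov chain $\game[\strat_1, \strat_2]$ and to verify it is well-defined in both directions. Recall that the states of $\game[\strat_1, \strat_2]$ are triples in $\states \times \mooreMem_1 \times \mooreMem_2$ and that, by the definition of the product construction, its transition function moves from $(\state, \mooreMemElem_1, \mooreMemElem_2)$ to $(\state', \mooreUpd^1(\mooreMemElem_1, \state), \mooreUpd^2(\mooreMemElem_2, \state))$ with probability dictated by whichever player owns $\state$ (using $\mooreNext^1$ or $\mooreNext^2$), and with probability zero on all other edges. First I would fix the initial state $(\initState, \mooreMemElem_1, \mooreMemElem_2)$ of the MC and observe that every outcome $\play' = (\state_0, \mooreMemElem_1^0, \mooreMemElem_2^0)(\state_1, \mooreMemElem_1^1, \mooreMemElem_2^1)\ldots$ of the MC projects, via $\proj{\states}$ applied statewise, to a sequence $\proj{\states}(\play') = \state_0 \state_1 \state_2 \ldots$ which I claim is an outcome of $\game$ from $\initState$, i.e., an element of $\bigcup_{\strat_1', \strat_2'} \outcomesGame{\game}{\initState}{\strat_1}{\strat_2}$ — more precisely it is consistent with $\strat_1$ and $\strat_2$.

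The core of the argument is that the memory components are entirely determined by the state component once the initial memory is fixed: by an easy induction on $n$, along any outcome of the MC we have $\mooreMemElem_i^n = \mooreUpdHat(\mooreMemElem_i, \state_0 \ldots \state_{n-1})$ for $i \in \{1,2\}$, because the MC's transition function updates memory deterministically through $\mooreUpd^i$. This gives injectivity of $\proj{\states}$ on outcomes: two MC-outcomes with the same state-projection must have the same memory components at every index, hence are equal. For surjectivity, given an outcome $\play = \state_0 \state_1 \state_2 \ldots$ of $\game$ that is consistent with $\strat_1$ and $\strat_2$, I would lift it to the sequence $\play' = (\state_n, \mooreUpdHat(\mooreMemElem_1, \play(n-1)), \mooreUpdHat(\mooreMemElem_2, \play(n-1)))_{n \geq 0}$ (with the convention that $\play(-1)$ is the empty prefix so the $0$-th triple is $(\initState, \mooreMemElem_1, \mooreMemElem_2)$) and check that consecutive triples are connected by an edge of the MC: this is exactly where consistency of $\play$ with $\strat_i$ is needed, since it guarantees $\state_{n+1} \in \supp(\strat_i(\play(n))) = \supp(\mooreNext^i(\mooreUpdHat(\mooreMemElem_i, \play(n-1)\cdot\ldots), \state_n))$ when $\state_n$ belongs to $\player_i$, so the corresponding MC-edge has non-zero probability and is therefore present in $\outcomesMC{\game[\strat_1,\strat_2]}{\cdot}$.

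I do not expect any serious obstacle here; the statement is essentially bookkeeping about the product construction. The one point requiring a little care is making sure the notion of ``outcome of $\game$'' in the statement is read as ``play from $\initState$ consistent with both $\strat_1$ and $\strat_2$'' (equivalently, an element of $\outcomesGame{\game}{\initState}{\strat_1}{\strat_2}$), rather than an arbitrary play of $\graph$, so that the bijection is between the two sets $\outcomesGame{\game}{\initState}{\strat_1}{\strat_2}$ and $\outcomesMC{\game[\strat_1,\strat_2]}{(\initState, \mooreMemElem_1, \mooreMemElem_2)}$; and to note that this is the ``trace'' terminology used just before the lemma. I would also remark in passing that the bijection is measure-preserving in the natural sense (cylinder of a prefix maps to cylinder of its lift), which is what justifies transferring probabilities and expectations between $\game[\strat_1,\strat_2]$ and $\game$ in later sections, though strictly speaking the lemma as stated only asks for the bijection.
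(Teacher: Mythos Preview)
Your proposal is correct and follows essentially the same approach as the paper: both arguments hinge on the observation that the memory-update functions $\mooreUpd^1,\mooreUpd^2$ are deterministic, so the memory components of an MC-outcome are uniquely reconstructible from its $\states$-projection, giving the bijection. Your write-up is in fact more careful than the paper's, which does not explicitly verify consistency of the projected play nor the validity of the lifted edges; your remarks on the correct reading of ``outcome of $\game$'' and on measure-preservation are apt additions.
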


\begin{proof}
Let $\initState \in \states$ be the initial state of $\game$, $\mooreMachine{\strat_{1}} = (\mooreMem_{1}, \mooreMemElem_{1}, \mooreUpd^{1}, \mooreNext^{1})$ and $\mooreMachine{\strat_{2}} = (\mooreMem_{2}, \mooreMemElem_{2}, \mooreUpd^{2}, \mooreNext^{2})$ be the Moore machines. Consider an outcome in $\game[\strat_{1}, \strat_{2}]$: it is a sequence of states from $\states \times \mooreMem_{1} \times \mooreMem_{2}$. Obviously, its projection on the set $\states$ is unique and defines the outcome in the sense of $\game$.

Conversely, consider an outcome in $\game$: it is of the form $\state_{0}\state_{1}\state_{2}\ldots{} \in \states^{\omega}$, with $\state_{0} = \initState$. We claim there is a unique corresponding outcome in $\game[\strat_{1}, \strat_{2}]$, written $(\state_{0}, \mooreMemElem_{1}^{0}, \mooreMemElem_{2}^{0}) (\state_{1}, \mooreMemElem_{1}^{1}, \mooreMemElem_{2}^{1})\ldots{} \in \states \times \mooreMem_{1} \times \mooreMem_{2}$, with  $(\state_{0}, \mooreMemElem_{1}^{0}, \mooreMemElem_{2}^{0}) = (\initState, \mooreMemElem_{1}, \mooreMemElem_{2})$. Indeed, it suffices to see that the update functions of the Moore machines, $\mooreUpd^{1}$ and $\mooreUpd^{2}$, are deterministic functions. Hence, it is easy to reconstruct the outcome of $\game[\strat_{1}, \strat_{2}]$ based on its projection on $\states$ as it suffices to apply the effect of the update functions on the memory at each step.
\end{proof}
Hence, we obtain the following equality:
$\outcomesGame{\game}{\initState}{\strat_{1}}{\strat_{2}} = \proj{\states}\left( \outcomesMC{\game[\strat_{1}, \strat_{2}]}{(\initState, \mooreMemElem_{1}, \mooreMemElem_{2})}\right)$.
Based on this, and for the sake of readability, we abuse the notation and write $\outcomesMC{\game[\strat_{1}, \strat_{2}]}{\initState}$ equivalently to refer to this set of outcomes. Similar abuse is taken for value functions and initial states.

Back to the outcomes of the game: note that if both strategies $\strat_{1}$ and $\strat_{2}$ are pure, the resulting Markov chain only involves Dirac distributions ($\mcTrans$ is deterministic) and the set $\outcomesGame{\game}{\initState}{\strat_{1}}{\strat_{2}}$ is composed of a unique play $\play = s_{0}s_{1}s_{2}\ldots{}$ such that for all $n \geq 0$, $i \in \lbrace 1, 2\rbrace$, if $s_{n} \in \states_{i}$, then we have $\strat_{i}(\state_{n}) = \state_{n+1}$.

Let $\markovProcess = (\graph, \statesOne, \statesProb, \mpTrans)$ be a Markov decision process, with $\graph = (\states, \edges, \weight)$ its underlying graph. Again, we can fix the strategy $\strat_{1}$ of $\playerOne$ and obtain the Markov chain $\markovProcess[\strat_{1}]$. Let $\mooreMachine{\strat_{1}} = (\mooreMem, \mooreInitMem, \mooreUpd, \mooreNext)$. The set of outcomes starting in $\initState \in \states$ is defined as $\outcomesMDP{\markovProcess}{\initState}{\strat_{1}} = \proj{\states}\left( \outcomesMC{\markovProcess[\strat_{1}]}{(\initState, \mooreInitMem)}\right)$. Again, we abuse the notation and write $\outcomesMC{\markovProcess[\strat_{1}]}{\initState}$ equivalently.

Finally, back to the two-player game $\game$, if we fix the strategy $\strat_{i}$ of only one player $\player_{i}$, $i \in \{1, 2\}$, we obtain not a Markov chain, but a Markov decision process for the remaining player $\player_{3-i}$. This MDP is denoted by $\game[\strat_{i}]$. We define its set of outcomes as 
\begin{equation*}
\outcomesMDP{\game}{\initState}{\strat_{i}} = \bigcup_{\strat_{3-i} \,\in\, \strats_{3-i}} \outcomesMDP{\game[\strat_{i}]}{\initState}{\strat_{3-i}} = \bigcup_{\strat_{3-i} \,\in\, \strats_{3-i}} \outcomesMC{\game[\strat_{1}, \strat_{2}]}{\initState}.
\end{equation*}

\smallskip\noindent\textbf{Attractors.} Given a game $\gameFull$, $\graphFull$, the \textit{attractor} for $\playerOne$ of a set $A \subseteq \states$ in $\game$ is denoted by $\attr_{\game}^{\playerOne}(A)$ and computed as the fixed point of the sequence
$\attr_{\game}^{\playerOne,\,n+1}(A) = \attr_{\game}^{\playerOne,\,n}(A) \cup \{s \in \states_{1} \,\vert\, \exists\, (s,t) \in \edges,\, t \in \attr_{\game}^{\playerOne,\,n}(A)\} \cup \{s \in \states_{2} \,\vert\, \forall\, (s,t) \in \edges,\, t \in \attr_{\game}^{\playerOne,\,n}(A)\}$, with $\attr_{\game}^{\playerOne,\,0}(A) = A$. It is exactly the set of states from which $\playerOne$ can ensure to reach $A$ no matter what $\playerTwo$ does. That is,
$\attr_{\game}^{\playerOne}(A) = \left\lbrace \state \in \states \mid \exists\, \strat_{1} \in \strats_{1}(\game),\, \forall\, \strat_{2} \in \strats_{2}(\game),\, \forall\, \play = \state_{0}\state_{1}\state_{2}\ldots{} \in \outcomesGame{\game}{\state}{\strat_{1}}{\strat_{2}},\, \state_{0} = \state,\, \exists\, i \in \nat,\, \state_{i} \in A\right\rbrace$.
We define symmetrically $\attr_{\game}^{\playerTwo}(A)$, the attractor for $\playerTwo$.

\smallskip\noindent\textbf{Subgraphs, subgames and sub-MDPs.} Given a graph $\graphFull$ and a subset of states $A \subseteq \states$, we define the induced subgraph $\graph \reduc A = (A, \edges \cap (A \times A), \weight)$ naturally. Subgames and sub-MDPs are defined similarly by considering their induced subgraphs. It is to note that subgames and sub-MDPs can only be properly defined if the induced subgraphs contain no deadlock and if the transition functions remain well-defined in the case of MDPs (i.e., if the probabilities on outgoing edges still sum up to one in all stochastic states of the sub-MDP).

\smallskip\noindent\textbf{Worst-case synthesis.} Given a two-player game $\gameFull$, with $\graphFull$ its underlying graph, an initial state $\initState \in \states$, a value function $\playValue\colon \plays{\graph} \rightarrow \realsAndInfinity$, and a rational threshold $\thresholdWC \in \rat$, the {\em worst-case threshold problem} asks to decide if $\playerOne$ has a strategy $\strat_{1} \in \strats_{1}$ such that
for all $\strat_{2} \in \strats_{2}$, for all $\play \in \outcomesGame{\game}{\initState}{\strat_{1}}{\strat_{2}}$, we have that $\playValue(\play) \geq \thresholdWC$.

For the mean-payoff value function, pure memoryless optimal\footnote{A strategy for $\player_{i}$, $i \in \{1, 2\}$, is said to be \textit{optimal} if it ensures a threshold higher or equal to the threshold ensured by any other strategy of the same player. The threshold ensured by an optimal strategy is called the \textit{optimal value}.} strategies exist for both players~\cite{liggett_SR69,EM79}. Hence, deciding the winner is in $\NPinter$, and it was furthermore shown to be in UP~$\cap$~coUP~\cite{ZP96,jurdzinski98,gawlitza2009}. Whether the problem is in $\PTIME$ is a long-standing open problem \cite{BCDGR11,Chatterjee201525}. Total-payoff value functions also yield pure memoryless optimal strategies for both players \cite{gimbert2004} and the associated decision problem is in UP~$\cap$~coUP~\cite{gawlitza2009}. For the truncated sum function, which can be seen as a particular instance of total-payoff, it can be shown that the decision problem takes polynomial time~\cite{DBLP:dblp_journals/mst/KhachiyanBBEGRZ08}, as a winning strategy of $\playerOne$ should avoid all cycles (because they yield strictly positive costs), hence usage of attractors and comparison of the worst possible sum of costs with the threshold suffices.

\smallskip\noindent\textbf{Expected value synthesis.} Given MDP $\markovProcessFull$, with $\graphFull$ its underlying graph, initial state $\initState \in \states$, measurable value function $\playValue\colon \plays{\graph} \rightarrow \realsAndInfinity$, and rational threshold $\thresholdExp \in \rat$, the {\em expected value threshold problem} asks to decide if $\playerOne$ has a strategy $\strat_{1} \in \strats_{1}$ such that $\expect_{\initState}^{\markovProcess[\strat_{1}]}(f) \geq \thresholdExp$.

Optimal expected mean-payoff in MDPs can be achieved by memoryless strategies, and the corresponding decision problem can be solved in polynomial time through linear programming~\cite{filar1997}. The truncated sum value function has been studied in the literature under the name of \textit{shortest path problem}: again, memoryless strategies suffice and the problem is solvable in polynomial time via linear programming \cite{bertsekas_MOR1991,deAlfaro_CONCUR1999}.

\section{Beyond Worst-Case Synthesis}
\label{sec:problem}

We here define the \textit{beyond worst-case synthesis problem}. Our goal is to study the synthesis of finite-memory strategies that, \textit{simultaneously}, ensure a value greater than some threshold $\thresholdWC$ in the worst-case situation (i.e., against any strategy of the adversary), and ensure an expected value greater than some threshold $\thresholdExp$ against a given finite-memory stochastic model of the adversary (e.g., representing commonly observed behavior of the environment). 

\begin{definition}
\label{def:bwc_problem}
Given a game $\gameFull$, with $\graphFull$ its underlying graph, an initial state $\initState \in \states$, a finite-memory stochastic model $\stratStoch \in \stratsFinite_{2}$ of the adversary, represented by a stochastic Moore machine, a measurable value function $\playValue\colon \plays{\graph} \rightarrow \realsAndInfinity$, and two rational thresholds $\thresholdWC, \thresholdExp \in \rat$, the {\em beyond worst-case ($\BWC$) problem} asks to decide if $\playerOne$ has a finite-memory strategy $\strat_{1} \in \stratsFinite_{1}$ such that
    \begin{numcases}{}
      \forall\, \strat_{2} \in \strats_{2},\, \forall\, \play \in \outcomesGame{\game}{\initState}{\strat_{1}}{\strat_{2}},\, \playValue(\play) > \thresholdWC\label{eq:thresholdWC}\\
      \expect_{\initState}^{\game[\strat_{1}, \stratStoch]}(f) > \thresholdExp\label{eq:thresholdExp}
    \end{numcases}
and the $\BWC$ synthesis problem asks to synthesize such a strategy if one exists.
\end{definition}
We take the convention to ask for values strictly greater than the thresholds in order to ease the formulation of our results in the following. Indeed, we will show that for some thresholds, it is possible to synthesize strategies that ensure $\varepsilon$-close values, for any $\varepsilon > 0$, while it is not feasible to achieve the exact threshold (Sect.~\ref{subsec:mpInfiniteMemory}). Using the strict inequality, we avoid tedious manipulation of such $\varepsilon$ in our proofs. Notice that we can assume $\thresholdExp > \thresholdWC$, otherwise the problem reduces to the classical worst-case analysis as follows. Assume $\thresholdWC \geq \thresholdExp$ and $\strat_{1}^{pm} \in \stratsPureMemoryless_{1}$ satisfies the worst-case threshold (recall memory is not necessary for the worst-case requirement alone). Consider the MC $\game[\strat_{1}^{pm}, \stratStoch]$. By Eq.~\eqref{eq:thresholdWC} and Lemma~\ref{lem:projBijective}, we have that for all $\play \in \outcomesMC{\game[\strat_{1}^{pm}, \stratStoch]}{\initState}$, $\playValue(\play) > \thresholdWC$. Hence, regardless of how the probability is defined in the MC, we have that $\expect_{\initState}^{\game[\strat_{1}^{pm}, \stratStoch]}(f) > \thresholdWC \geq \thresholdExp$ and Eq.~\eqref{eq:thresholdExp} is trivially satisfied.

\section{Mean-Payoff Value Function}
\label{sec:mean_payoff}

The first value function that we study is the mean-payoff. We present an algorithm, $\mpAlgoName$ (Alg.~\ref{alg:mp}), for deciding the corresponding $\BWC$ problem. Its cornerstones are highlighted in Sect.~\ref{subsec:mpApproach}. A running example is presented in Sect.~\ref{subsec:mpExample}. Sections~\ref{subsec:mpAssumptions} through~\ref{subsec:mpGlobal} are devoted to a detailed justification of this algorithm and the proof of its correctness. In Sect.~\ref{subsec:mpComplexity}, we prove that our algorithm implies $\NPinter$-membership and that it is optimal with regard to the complexity of the worst-case problem. Thus, the $\BWC$ framework for mean-payoff remarkably provides additional modeling power without negative impact on the complexity class. In Sect.~\ref{subsec:mpMemoryRequirements}, we prove that pseudo-polynomial memory is sufficient and in general necessary for finite-memory strategies satisfying the $\BWC$ mean-payoff problem (polynomial in the size of the game and the stochastic model, and in the values of weights and thresholds). Finally, we show in Sect.~\ref{subsec:mpInfiniteMemory} that infinite-memory strategies are strictly more powerful than finite-memory ones for $\playerOne$. This is in contrast with the worst-case and the expected value settings, where memoryless strategies always suffice.

\subsection{\textbf{The approach in a nutshell}}
\label{subsec:mpApproach}

Algorithm $\mpAlgoName$ is described in Alg.~\ref{alg:mp}. We give an intuitive sketch of its functioning in the following.

\smallskip\noindent\textbf{Inputs and outputs.} The algorithm takes as input: a game $\game^{i}$, a finite-memory stochastic model of the adversary $\strat_{2}^{i}$, a worst-case threshold $\thresholdWC^{i}$, an expected value threshold $\thresholdExp^{i}$, and an initial state $\initState^{i}$. Its output is $\yes$ if and only if there exists a finite-memory strategy of $\playerOne$ satisfying the $\BWC$ problem (Def.~\ref{def:bwc_problem}).

The output as described in Alg.~\ref{alg:mp} is boolean: the algorithm answers whether a satisfying strategy exists or not, but does not explicitely construct it (to avoid tedious formalization within the pseudocode). Nevertheless, we present how to synthesize such a winning strategy in Sect.~\ref{subsec:mpGlobal}. We sketch its operation in the following and we highlight the role of each step of the algorithm in the construction of this winning strategy, as producing a witness winning strategy is a straightforward by-product of the process we apply to decide satisfaction of the $\BWC$ problem.

\captionsetup[algorithm]{font=small}
\renewcommand{\algorithmicrequire}{\textbf{Input:}}
\renewcommand{\algorithmicensure}{\textbf{Output:}}

\begin{algorithm}[t]
\caption{$\mpAlgo{\game^{i}}{\strat_{2}^{i}}{\thresholdWC^{i}}{\thresholdExp^{i}}{\initState^{i}}$}
\label{alg:mp}
\begin{algorithmic}[1]
\small
\REQUIRE $\game^{i} = \left( \graph^{i}, \statesOne^{i}, \statesTwo^{\textit{i}}\right)$ a game, $\graph^{i} = \left( \states^{i}, \edges^{i}, \weight^{i}\right)$ its underlying graph, $\strat_{2}^{i} \in \stratsFinite_{2}(\game^{i})$ a finite-memory stochastic model of the adversary, $\mooreMachineFull{\strat_{2}^{i}}$ its Moore machine, $\thresholdWC^{i} = \frac{a}{b}, \thresholdExp^{i} \in \rat$, $\thresholdWC^{i} < \thresholdExp^{i}$, resp.~the worst-case and the expected value thresholds, and $\initState^{i} \in \states^{i}$ the initial state
\ENSURE $\yes$ if and only if $\playerOne$ has a finite-memory strategy  $\strat_{1} \in \stratsFinite_{1}(\game^{i})$ satisfying the $\BWC$ problem from $\initState^{i}$, for the thresholds pair $(\thresholdWC^{i}, \thresholdExp^{i})$ and the mean-payoff value function

\vspace{2mm}
\COMMENT{\textit{Preprocessing}}
\IF{$\thresholdWC^{i} \neq 0$} \label{alg:mp_thresholdsTest}
	\STATE Modify the weight function of $\graph^{i}$ s.t. $\forall\, \edge \in \edges^{i},\, \weight^{i}_{{\sf new}}(\edge) := b\cdot\weight^{i}(\edge) - a$, and consider the new thresholds pair $(0,\, \thresholdExp := b\cdot\thresholdExp^{i} -a)$\label{alg:mp_thresholds}
\ENDIF
\STATE Compute $\states_{\textit{WC}} := \left\lbrace \state \in \states^{i} \mid \exists\, \strat_{1} \in \strats_{1}(\game^{i}),\, \forall\, \strat_{2} \in \strats_{2}(\game^{i}),\, \forall\, \play \in \outcomesGame{\game^{i}}{\state}{\strat_{1}}{\strat_{2}},\, \mpay(\play) > 0\right\rbrace$\label{alg:mp_winningStates}
\IF{$\initState^{i} \not\in \states_{\textit{WC}}$}\label{alg:mp_losingNo}
	\RETURN $\no$\label{alg:mp_losingNoB}
\ELSE
	\STATE Let $\game^{w} := \game^{i} \reduc \states_{\textit{WC}}$ be the subgame induced by worst-case winning states\label{alg:mp_reduc}
	\STATE Build $\game := \game^{w} \otimes \mooreMachine{\strat_{2}^{i}} = (\graph, \statesOne, \statesTwo)$, $\graphFull$, $\states \subseteq \left( \states_{\textit{WC}} \times \mooreMem\right)$, the game obtained by product with the Moore machine, and $\initState := (\initState^{i}, \mooreInitMem)$ the corresponding initial state\label{alg:mp_memory}
	\STATE Let $\stratStoch \in \stratsMemoryless_{2}(\game)$ be the memoryless transcription of $\strat_{2}^{i}$ on $\game$\label{alg:mp_stochModel}
	\STATE Let $\markovProcess := \game[\stratStoch] = (\graph, \statesOne, \statesProb = \statesTwo, \mpTrans = \stratStoch)$ be the MDP obtained from $\game$ and $\stratStoch$\label{alg:mp_mdp}
\ENDIF

\vspace{2mm}
\COMMENT{\textit{Main algorithm}}
\STATE Compute $\maxWinningECs$ the set of maximal winning end-components of $\markovProcess$\label{alg:mp_main}
\STATE Build $\markovProcess' = (\graph', \statesOne, \statesProb, \mpTrans)$, where $\graph' = (\states, \edges, \weight')$ and $\weight'$ is defined as follows:
\begin{equation*}
\forall\, e = (\state_{1}, \state_{2}) \in \edges,\, \weight'(e) := \begin{cases}\weight(e) \text{ if } \exists\: \ec \in \maxWinningECs \text{ s.t. } \{\state_{1}, \state_{2}\} \subseteq \ec\\0 \text{ otherwise} \end{cases}
\end{equation*}\label{alg:mp_modifyWeights}
\STATE Compute the maximal expected value $\optimalExp$ from $\initState$ in $\markovProcess'$\label{alg:mp_maxExp}
\IF{$\optimalExp > \thresholdExp$}\label{alg:mp_maxExpComp}
	\RETURN $\yes$
\ELSE
	\RETURN $\no$
\ENDIF\label{alg:mp_main_end}
\end{algorithmic}
\end{algorithm}

\smallskip\noindent\textbf{Preprocessing.} The first part of the algorithm (lines~\ref{alg:mp_thresholdsTest}-\ref{alg:mp_mdp}) is a preprocessing of the game $\game^{i}$ and the stochatic model $\strat_{2}^{i}$ given as inputs in order to apply the second part of the algorithm (lines~\ref{alg:mp_main}-\ref{alg:mp_main_end}) on a modified game $\game$ and stochastic model $\stratStoch$, simpler to manipulate. We show in the following that the answer to the $\BWC$ problem on the modified game is $\yes$ if and only if it is also $\yes$ on the input game, and we present how a winning strategy of $\playerOne$ in $\game$ can be transferred to a winning strategy in $\game^{i}$.

The preprocessing is composed of four main steps. First, we modify the weight function of $\graph^{i}$ in order to consider the equivalent $\BWC$ problem with thresholds $(0,\, \thresholdExp)$ instead of $(\thresholdWC^{i},\, \thresholdExp^{i})$. This classical trick is used to get rid of explicitely considering the worst-case threshold in the following, as it is equal to zero. 
Second, observe that any strategy that is winning for the $\BWC$ problem must also be winning for the classical worst-case problem. Such a strategy cannot allow visits of any state from which $\playerOne$ cannot ensure winning against an antagonistic adversary because mean-payoff is a prefix-independent (i.e., for all $\prefix \in \prefixes{\graph}$, $\play \in \plays{\graph}$, we have that $\mpay(\prefix\cdot\play) = \mpay(\play)$) objective (hence it is not possible to ``win'' it over the finite prefix up to such a state). Hence, we reduce our study to $\game^{w}$, the subgame induced by worst-case winning states in $\game^{i}$ (lines~\ref{alg:mp_winningStates} and~\ref{alg:mp_reduc}). Obviously, if from the initial state $\initState^{i}$, $\playerOne$ cannot win the worst-case problem, then the answer to the $\BWC$ problem is $\no$ (lines~\ref{alg:mp_losingNo}-\ref{alg:mp_losingNoB}).
Third, we build the game $\game$ which states are defined by the product of the states of $\game^{w}$ and the memory elements of the Moore machine $\mooreMachine{\strat_{2}^{i}}$ (line~\ref{alg:mp_memory}). Intuitively, we expand the initial game by integrating the memory of the stochastic model of $\playerTwo$ in the graph. Note that this does not modify the power of the adversary. Fourth, the finite-memory stochastic model $\strat^{i}_{2}$ on $\game^{i}$ clearly translates to a memoryless stochastic model $\stratStoch$ on $\game$ (line~\ref{alg:mp_stochModel}). This will help us obtain elegant proofs for the second part of the algorithm.

\smallskip\noindent\textbf{Analysis of end-components.} The second part of the algorithm (lines~\ref{alg:mp_main}-\ref{alg:mp_main_end}) hence operates on a game $\game$ such that from all states, $\playerOne$ has a strategy to achieve a strictly positive mean-payoff value (recall $\thresholdWC = 0$). We consider the MDP $\markovProcess = \game[\stratStoch]$ and notice that the underlying graphs of $\game$ and $\markovProcess$ are the same thanks to $\stratStoch$ being memoryless. The following steps analyze \textit{end-components} in the MDP, i.e., strongly connected subgraphs in which $\playerOne$ can ensure to stay when playing against the stochastic adversary (cf. Sect.~\ref{sec:preliminaries}).

The motivation to the analysis of ECs is the following. It is well-known that under any arbitrary strategy $\strat_{1} \in \strats_{1}$ of $\playerOne$ in~$\markovProcess$, the probability that states visited infinitely often along an outcome constitute an EC is one~\cite{courcoubetis_JACM1995,de1997formal}. Recall that the mean-payoff is prefix-independent, therefore the value of any outcome only depends on those states that are seen infinitely often. Hence, the expected mean-payoff in $\markovProcess[\strat_{1}]$ depends \textit{uniquely} on the value obtained in the ECs. Inside an EC, we can compute the maximal expected value that can be achieved by $\playerOne$, and this value is the same in all states of the EC~\cite{filar1997}.

Consequently, in order to satisfy the expected value requirement (Eq.~\eqref{eq:thresholdExp}), an acceptable strategy for the $\BWC$ problem has to favor reaching ECs with a sufficient expectation, but under the constraint that it should also ensure satisfaction of the worst-case requirement (Eq.~\eqref{eq:thresholdWC}). As we will show, this constraint implies that some ECs with high expected values may still need to be avoided because they do not permit to guarantee the worst-case requirement. This is the cornerstone of the classification of ECs that follows.

\smallskip\noindent\textbf{Classification of end-components.} Let $\ecsSet \subseteq 2^{\states}$ be the set of all ECs in MDP $\markovProcess$. By definition, only edges in $\edgesNonZero$, as defined in Sect.~\ref{sec:preliminaries}, are involved to determine which sets of states form an EC in $\markovProcess$. As such, for any EC $\ec \in \ecsSet$, there may exist edges from $\edges \setminus \edgesNonZero$ starting in $\ec$, such that $\playerTwo$ can force leaving $\ec$ when using an arbitrary strategy. Still these edges will never be used by the stochastic model $\stratStoch$. This remark will be important to the definition of strategies of $\playerOne$ that guarantee the worst-case requirement, as $\playerOne$ needs to be able to react to the hypothetic use of such an edge. We will see in the following that it is also the case \textit{inside} an EC.

Now, we want to consider the ECs in which $\playerOne$ can ensure that the worst-case requirement will be fulfilled (i.e., without having to leave the EC): we call them \textit{winning} ECs. Indeed, the others will need to be eventually avoided, hence will have zero impact on the expectation of a finite-memory strategy satisfying the $\BWC$ problem. So we call the latter \textit{losing} ECs. The subtlety of this classication is that it involves considering the ECs both in the MDP~$\markovProcess$, and in the game~$\game$. Formally, let $\ec \in \ecsSet$ be an EC. It is \textit{winning} if, in the subgame $\game \reduc \ec$, from all states, $\playerOne$ has a strategy to ensure a strictly positive mean-payoff against any strategy of $\playerTwo$ \textit{that only chooses edges which are assigned non-zero probability by $\stratStoch$}, or equivalently, edges in $\edgesNonZero$. We denote $\winningECs \subseteq \ecsSet$ the set of such ECs. Non-winning ECs are \textit{losing}: in those, whatever the strategy of $\playerOne$ played against the stochastic model $\stratStoch$ (or any strategy with the same support), there exists at least one outcome for which the mean-payoff is not strictly positive (even if its probability is zero, its mere existence is not acceptable for the worst-case requirement).

\smallskip\noindent\textbf{Maximal winning end-components.} Based on these definitions, see that line~\ref{alg:mp_main} of algorithm $\mpAlgoName$ does not actually compute the set $\winningECs$ containing all winning ECs, but rather the set $\maxWinningECs \subseteq \winningECs$, defined as $\maxWinningECs = \{\ec \in \winningECs \mid \forall\, \ec' \in \winningECs,\, \ec \subseteq \ec' \Rightarrow \ec = \ec'\}$, i.e., the set of \textit{maximal} winning ECs.

The intuition on \textit{why we can} restrict our study to this subset is as follows. If an EC $\ec_{1} \in \winningECs$ is included in another EC $\ec_{2} \in \winningECs$, i.e., $\ec_{1} \subseteq \ec_{2}$, we have that the maximal expected value achievable in $\ec_{2}$ is at least equal to the one achievable in~$\ec_{1}$. Indeed, $\playerOne$ can reach $\ec_{1}$ with probability one (by virtue of $\ec_{2}$ being an EC and $\ec_{1} \subseteq \ec_{2}$) and stay in it forever with probability one (by virtue of $\ec_{1}$ being an EC): hence the expectation would be equal to what can be obtained in~$\ec_{1}$ thanks to the prefix-independence of the mean-payoff. This property implies that it is sufficient to consider maximal winning ECs in our computations.

As for \textit{why we do it}, observe that the complexity gain is critical. The number of winning ECs can be as large as~$\vert\winningECs\vert \leq \vert\ecsSet\vert \leq 2^{\vert\states\vert}$, that is, exponential in the size of the input. Yet, the number of maximal winning ECs is bounded by $\vert\maxWinningECs\vert \leq \vert\states\vert$ as they are disjoint by definition. Indeed, for any two winning ECs with a non-empty intersection, their union also constitutes an EC, and is still winning because $\playerOne$ can essentially stick to the EC of his choice. The computation of the set $\maxWinningECs$ is executed by a recursive subalgorithm calling polynomially-many times an $\NPinter$ oracle solving the worst-case threshold problem. Roughly sketched, this algorithm computes the maximal end-component decomposition of an MDP (in polynomial time~\cite{DBLP:journals/jacm/ChatterjeeH14}), then checks for each EC $\ec$ in the decomposition (their number is polynomial) if $\ec$ is winning or not, which requires a call to an $\NPinter$ oracle solving the worst-case threshold problem on the corresponding subgame. If $\ec$ is losing, it may still be the case that a sub-EC $\ec' \subsetneq \ec$ is winning. Therefore we recurse on the MDP reduced to $\ec$, where states from which $\playerTwo$ can win in $\ec$ have been removed (they are a no-go for~$\playerOne$). Hence the stack of calls is also at most polynomial.

\smallskip\noindent\textbf{Ensure reaching winning end-components.} As discussed, under any arbitrary strategy of $\playerOne$, states visited infinitely often form an EC with probability one. Now, if we take a \textit{finite-memory} strategy that \textit{satisfies} the $\BWC$ problem (Def.~\ref{def:bwc_problem}), we can precise this result and state that they form a \textit{winning} EC with probability one. Equivalently, we have that the probability that an outcome $\play$ is such that $\infVisited{\play} = \ec$ for some $\ec \in \ecsSet \setminus \winningECs$ is zero. The equality is crucial. It may be the case, with non-zero probability, that $\infVisited{\play} = \ec' \subsetneq \ec$ for some $\ec' \in \winningECs$ and $\ec \in \ecsSet \setminus \winningECs$ (hence the recursive algorithm to compute $\maxWinningECs$). It is clear that $\playerOne$ should not visit all the states of a losing EC forever, as then he would not be able to guarantee the worst-case threshold inside the corresponding subgame (we show in Sect.~\ref{subsec:mpInfiniteMemory} that with infinite memory, there may still be some incentive to stay in a losing EC).

We denote $\negligibleStates = \states \setminus \bigcup_{\ec \in \maxWinningECs} \ec$ the set of states that, with probability one, are only seen a finite number of times when a $\BWC$ satisfying strategy is played, and call them \textit{negligible} states.

Our ultimate goal here is to build a modified MDP $\markovProcess'$, sharing the same graph and ECs as $\markovProcess$, such that a classical optimal strategy for the expected value problem on $\markovProcess'$ will naturally avoid losing ECs and prescribe which winning ECs are the most interesting to reach for a $\BWC$ strategy on the initial game $\game$ and MDP $\markovProcess$. Observe that the expected value obtained in $\markovProcess$ by any $\BWC$ satisfying strategy of $\playerOne$ only depends on the weights of edges involved in winning ECs, or equivalently, in maximal winning ECs (as the set of outcomes that are not trapped in them has measure zero). Consequently, we build $\markovProcess'$ by modifying the weight function of $\markovProcess$ (line~\ref{alg:mp_modifyWeights}): we keep the weights unchanged in edges that belong to some $\ec \in \maxWinningECs$, and we put them to zero everywhere else, i.e., on any edge involving a negligible state. Weight zero is taken because it is lower than the expectation granted by winning ECs, which is strictly greater than zero by definition.

\smallskip\noindent\textbf{Reach the highest valued winning end-components.} We compute the maximal expected mean-payoff $\thresholdExp^{\ast}$ that can be achieved by $\playerOne$ in the MDP $\markovProcess'$, from the corresponding initial state (line~\ref{alg:mp_maxExp}). This computation takes polynomial time and memoryless strategies suffice to achieve the maximal value~\cite{filar1997}.

As discussed before, such a strategy reaches an EC of $\markovProcess'$ with probability one. Basically, we build a strategy that favors reaching ECs with high associated expectations in $\markovProcess'$. We argue that the ECs reached with probability one by this strategy are necessarily winning ECs. Clearly, if a winning EC is reachable instead of a losing one, it will be favored because of the weights definition in $\markovProcess'$ (expectation is strictly higher in winning ECs). It remains to check if the set of winning ECs is reachable with probability one from any state in $\states$. That is the case because of the preprocessing. Indeed, we know that all states are winning for the worst-case requirement. Clearly, from any state in $A = \states \setminus \bigcup_{\ec \in \ecsSet} \ec$, $\playerOne$ cannot ensure to stay in $A$ (otherwise it would form an EC) and thus must be able to win the worst-case requirement from reached ECs. Now for any state in $B = \bigcup_{\ec \in \ecsSet} \ec \setminus \bigcup_{\ec \in \maxWinningECs} \ec$, i.e., states in losing ECs and not in any sub-EC winning, $\playerOne$ cannot win the worst-case by staying in $B$, by definition of losing EC. Since we know $\playerOne$ can ensure the worst-case by hypothesis, he must be able to reach $C = \bigcup_{\ec \in \maxWinningECs} \ec$ from any state in $B$, as claimed.

\smallskip\noindent\textbf{Inside winning end-components.} Based on that, winning ECs are reached with probability one. Let us first consider what we can say about such ECs if we assume that $\edgesNonZero = \edges$, i.e., if the stochastic model maps all possible edges to non-zero probabilities. 
We establish a finite-memory \textit{combined strategy} of $\playerOne$ that ensures~(i)~worst-case satisfaction while yielding (ii) an expected value $\varepsilon$-close to the maximal expectation inside the component. For two well-chosen parameters $\stepsExp, \stepsWC \in \nat$, it is informally defined as follows: in phase~$\typeA$, play a memoryless expected value optimal strategy for $\stepsExp$ steps and memorize $\cmbSum \in \integ$, the sum of weights along these steps; in phase $\typeB$, if $\cmbSum > 0$, go to~$\typeA$, otherwise play a memoryless worst-case optimal strategy for $\stepsWC$ steps, then go to $\typeA$. In phases $\typeA$, $\playerOne$ tries to increase its expectation and approach its optimal one, while in phase $\typeB$, he compensates, if needed, losses that occured in phase $\typeA$. The two memoryless strategies exist on the subgame induced by the EC: by definition of ECs, based on~$\edgesNonZero$, the stochastic model of $\playerTwo$ will never be able to force leaving the EC against the combined strategy.

A key result of our paper is the existence of values for $\stepsExp$ and~$\stepsWC$ such that (i) and (ii) are verified. We see plays as sequences of periods, each starting with phase~$\typeA$. First, for any $\stepsExp$, it is possible to define $\stepsWC(\stepsExp)$ such that any period composed of phases~$\typeA+\typeB$ ensures a mean-payoff at least $1/(\stepsExp+\stepsWC) > 0$. Periods containing only phase $\typeA$ trivially induce a mean-payoff at least~$1/\stepsExp$ as they are not followed by phase $\typeB$. Both rely on the weights being integers. As the length of any period is bounded by $(\stepsExp+\stepsWC)$, the inequality remains strict for the mean-payoff of any play, implying~(i). Now, consider parameter $\stepsExp$. Clearly, when~$\stepsExp \rightarrow \infty$, the expectation over a phase $\typeA$ tends to the optimal one. Nevertheless, phases~$\typeB$ also contribute to the overall expectation of the combined strategy, and (in general) lower it so that it is strictly less than the optimal for any $\stepsExp, \stepsWC \in \nat$. Hence to prove (ii), we not only need that the probability of playing phase $\typeB$ decreases when $\stepsExp$ increases, but also that it decreases faster than the increase of $\stepsWC$, needed to ensure~(i), so that overall, the contribution of phases~$\typeB$ tends to zero when $\stepsExp \rightarrow \infty$. This is indeed the case and is proved using results bounding the probability of observing a mean-payoff significantly (more than some $\varepsilon$) different than the optimal expectation along a phase $\typeA$ of length $\stepsExp \in \nat$: this probability decreases exponentially when $\stepsExp$ increases~\cite{tracol_ORL2009,glynn_SPL2002} (related to the notions of Chernoff bounds and Hoeffding's inequality in MCs), while $\stepsWC$ only needs to be polynomial in $\stepsExp$.

Now, consider what happens if $\edgesNonZero \subsetneq \edges$. Then, if $\playerTwo$ uses an arbitrary strategy, he can take edges of probability zero, i.e., in $\edges \setminus \edgesNonZero$, either staying in the EC, or leaving it. In both cases, this must be taken into account in order to satisfy Eq.~\eqref{eq:thresholdWC} as it may involve dangerous weights (recall that zero-probability edges are not considered when an EC is classified as winning or not). Fortunately, if this were to occur, $\playerOne$ could switch to a worst-case winning memoryless strategy, which exists in all states thanks to the preprocessing, to preserve the worst-case requirement. Regarding the expected value (Eq.~\eqref{eq:thresholdExp}), this has no impact as it occurs with probability zero against $\stratStoch$. The strategy to follow in winning ECs hence adds this reaction procedure to the combined strategy: we call it the \textit{witness-and-secure strategy}.

\smallskip\noindent\textbf{Global strategy synthesis.} In summary, (a) losing ECs should be avoided and will be by a strategy that optimizes the expectation on the MDP~$\markovProcess'$; (b) in winning ECs, $\playerOne$ can obtain ($\varepsilon$-closely) the expectation of the EC \textit{and} ensure the worst-case threshold.

Hence, we finally compare the value $\thresholdExp^{\ast}$ with the expected value threshold $\thresholdExp$ (line~\ref{alg:mp_maxExpComp}): (i) if it is strictly higher, we conclude that there exists a finite-memory strategy satisfying the $\BWC$ problem, and (ii) if it is not, we conclude that there does not exist such a strategy.

To prove (i), we establish a finite-memory strategy in $\game$, called \textit{global strategy}, of $\playerOne$ that ensures a strictly positive mean-payoff against an antagonistic adversary, and ensures an expected mean-payoff $\varepsilon$-close to $\thresholdExp^{\ast}$ (hence $> \thresholdExp$) against the stochastic adversary modeled by $\stratStoch$ (i.e., in $\markovProcess$). The intuition is as follows. We play the memoryless optimal strategy of the MDP~$\markovProcess'$ for a sufficiently long time, defined by a parameter $\stepsGlobal \in \nat$, in order to be with probability close to one in a winning EC (the convergence is exponential by results on absorption times in MCs~\cite{grinstead_AMS1997}). Then, if we are inside a winning EC, we switch to the witness-and-secure strategy which, as sketched in the previous paragraph, ensures the worst-case and the expectation thresholds. If we are not yet in a winning EC, then we switch to a worst-case winning strategy in $\game$, which always exists by hypothesis. Thus the mean-payoff of plays that do not reach winning ECs is strictly positive. Since in winning ECs we are $\varepsilon$-close to the maximal expected value of the EC, we can conclude that it is possible to play the optimal expectation strategy of MDP $\markovProcess'$ for sufficiently long to obtain an overall expected value which is arbitrarily close to $\thresholdExp^{\ast}$, and still guarantee the worst-case threshold in all outcomes.

To prove (ii), it suffices to understand that only ECs have an impact on the expectation, and that losing ECs cannot be used forever without endangering the worst-case requirement. Note that given a winning strategy on $\game$, it is possible to build a corresponding winning strategy on $\game^{i}$ by reintegrating the memory elements of the Moore machine in the memory of the winning strategy of $\playerOne$.

\smallskip\noindent\textbf{Complexity bounds.} The input size of the algorithm depends on the size of the game, the size of the Moore machine for the stochastic model, and the encodings of weights and thresholds. We can prove that all computing steps require (deterministic) polynomial time except for calls to an oracle solving the worst-case threshold problem, which is in $\NPinter$~\cite{ZP96,jurdzinski98} and not known to be in $\PTIME$. Hence, the overall complexity of the BWC problem is $\NPinter$ and may collapse to $\PTIME$ if the worst-case problem were to be proved in $\PTIME$.

We also establish that the $\BWC$ problem is at least as difficult as the worst-case problem thanks to a polynomial time reduction from the latter to the former. Thus, $\mpAlgoName$ membership to $\NPinter$ can be seen as optimal regarding our current knowledge of the worst-case threshold problem.

\begin{theorem}
\label{thm:mp_decisionProblem}
The beyond worst-case problem for the mean-payoff value function is in $\NPinter$ and at least as hard as deciding the winner in mean-payoff games.
\end{theorem}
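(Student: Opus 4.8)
(1) membership in $\NPinter$, which follows from the correctness and complexity analysis of Algorithm~$\mpAlgoName$ sketched in Sect.~\ref{subsec:mpApproach}, and (2) hardness, via a polynomial-time reduction from the classical mean-payoff game problem to the $\BWC$ mean-payoff problem.

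\textbf{Membership.} First I would argue that the $\BWC$ problem is in $\NPTIME$. The certificate is a finite-memory strategy $\strat_1$ of $\playerOne$, together with a bound on its memory; since (as the sketch anticipates, and as will be proved in Sect.~\ref{subsec:mpMemoryRequirements}) pseudo-polynomial memory always suffices, the size of this certificate is polynomial in $\vert\game\vert$, the size of $\mooreMachine{\stratStoch}$, and the binary encodings of $\largestW$, $\thresholdWC$, $\thresholdExp$. Given such a candidate, one fixes $\strat_1$ in $\game$ to obtain an MDP $\game[\strat_1]$, checks in polynomial time that the expected mean-payoff against $\stratStoch$ exceeds $\thresholdExp$ (linear programming on the finite MC $\game[\strat_1,\stratStoch]$, cf.~\cite{filar1997}), and checks that the worst-case requirement holds against \emph{every} adversary strategy — equivalently, that $\playerTwo$ has no strategy forcing mean-payoff $\le \thresholdWC$ in $\game[\strat_1]$, which is a mean-payoff MDP question decidable in polynomial time. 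For the $\NPinter$ bound, I would instead invoke the correctness of $\mpAlgoName$ directly: every step of the algorithm runs in deterministic polynomial time \emph{except} the computation of $\states_{\textit{WC}}$ (line~\ref{alg:mp_winningStates}) and the classification of ECs feeding into $\maxWinningECs$ (line~\ref{alg:mp_main}), both of which reduce, polynomially many times, to the worst-case mean-payoff threshold problem, known to lie in $\NPinter$~\cite{ZP96,jurdzinski98}. An $\NPinter$ oracle is closed under polynomial-time Turing reductions, so the whole procedure — and hence the $\BWC$ problem — sits in $\NPinter$. The one point requiring care is that the recursive subalgorithm computing $\maxWinningECs$ makes only polynomially many oracle calls with a polynomially bounded call stack; this is exactly the sketched argument (MEC-decomposition in polynomial time~\cite{chatterjee_SODA2012}, polynomially many ECs per level, strictly decreasing EC sizes along the recursion), and I would spell out the stack-depth bound carefully.

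\textbf{Hardness.} For the reduction, start from an arbitrary mean-payoff game $\game$ with initial state $\initState$ and threshold $\thresholdWC$, asking whether $\playerOne$ can ensure $\mpay \ge \thresholdWC$ (equivalently, after the standard weight shift, $\mpay > \thresholdWC - \varepsilon$; one must handle the strict-versus-non-strict convention, e.g.\ by a slight perturbation of the threshold using that values are rational with bounded denominators, or by the $\varepsilon$-close argument of Sect.~\ref{subsec:mpInfiniteMemory}). Build a $\BWC$ instance on essentially the same game: take \emph{any} fixed finite-memory stochastic model $\stratStoch$ for $\playerTwo$ (for definiteness, the uniform memoryless one over each state's successors), keep the worst-case threshold $\thresholdWC$, and choose the expected-value threshold $\thresholdExp$ to be trivially satisfiable — concretely $\thresholdExp$ strictly below $\thresholdWC$ (or below $-\largestW$). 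Then Eq.~\eqref{eq:thresholdExp} is automatically met by any strategy that satisfies Eq.~\eqref{eq:thresholdWC}, by exactly the observation already made after Definition~\ref{def:bwc_problem} (when $\thresholdWC \ge \thresholdExp$ the $\BWC$ problem collapses to the worst-case problem, and a pure memoryless worst-case-optimal strategy witnesses it). Hence the $\BWC$ instance is positive iff $\playerOne$ wins the original mean-payoff game, and the construction is clearly polynomial.

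\textbf{Expected main obstacle.} The bookkeeping for $\NPinter$-membership is the delicate part: I must make sure that \emph{all} non-oracle steps are genuinely polynomial (in particular the product construction with $\mooreMachine{\stratStoch}$, the MEC decomposition, the weight modification yielding $\markovProcess'$, and the final expected-value LP), and that the recursion computing $\maxWinningECs$ issues only polynomially many oracle queries with polynomial stack depth — otherwise one only gets $\NPTIME$ with an $\NPinter$ oracle rather than $\NPinter$ itself. The correctness of the algorithm (that its $\yes$/$\no$ output really characterizes existence of a $\BWC$ strategy) is what underpins this direction, but that is deferred to Sections~\ref{subsec:mpAssumptions}--\ref{subsec:mpGlobal}; here I would simply cite it. The hardness direction is routine once the strict-inequality convention is reconciled, which is a minor technical wrinkle rather than a real difficulty.
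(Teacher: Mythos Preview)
Your overall plan matches the paper's: membership via the complexity analysis of $\mpAlgoName$ (polynomial-time computation plus polynomially many calls to an $\NPinter$ oracle for the worst-case mean-payoff problem), and hardness via trivializing the expectation constraint so that the $\BWC$ instance collapses to the worst-case question. That is exactly how Lemmas~\ref{lem:mp_algoComplexity} and~\ref{lem:mp_problemHardness} proceed.

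There is, however, a genuine gap in your \emph{first} membership argument (the $\NPTIME$-certificate one). You write that since pseudo-polynomial memory always suffices, ``the size of this certificate is polynomial in $\vert\game\vert$, the size of $\mooreMachine{\stratStoch}$, and the binary encodings of $\largestW$, $\thresholdWC$, $\thresholdExp$.'' This is false: pseudo-polynomial means polynomial in the \emph{value} $\largestW$, hence potentially exponential in its binary encoding $\bits = \lceil\log_2\largestW\rceil$. Theorem~\ref{thm:mp_memoryRequirements} in fact proves a matching \emph{lower} bound: there is a family of games where any $\BWC$-satisfying strategy needs memory polynomial in $\largestW$. So a finite-memory strategy is \emph{not} a polynomial-size certificate in general, and this route does not give $\NPTIME$ membership. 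You should drop that paragraph entirely and rely solely on the oracle argument, as the paper does.

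For hardness, you have the right idea but are vague precisely where the paper is careful. The strict-versus-non-strict mismatch is resolved in the paper not by an $\varepsilon$-perturbation of the threshold but by the observation that, since pure memoryless strategies suffice for both players in mean-payoff games, the value of any consistent outcome lies in the discrete set $\{-\largestW,\ldots,-\tfrac{1}{\vert\states\vert},0,\tfrac{1}{\vert\states\vert},\ldots,\largestW\}$; hence $\mpay(\play)\ge 0$ is equivalent to $\mpay(\play) > -\tfrac{1}{\vert\states\vert}$. One then takes worst-case threshold $-\tfrac{1}{\vert\states\vert}$ in the $\BWC$ instance, an arbitrary stochastic model, and $\thresholdExp < -\largestW$ so that Eq.~\eqref{eq:thresholdExp} is vacuous. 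Your sketch would work once you make this precise.
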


\begin{remark}[approximation of the optimal value]
Given a worst-case threshold $\thresholdWC \in \rat$, a natural question is whether we can
maximize the expectation of finite-memory strategies that satisfy this threshold. 
However, there is no best expectation value in general, as increasing the size of the memory
may also strictly increase the expectation. Nevertheless, the least upper bound
of all the expected value thresholds that can be achieved by finite-memory strategies can 
be approached up to an $\varepsilon$, for all $\varepsilon>0$. Formally,
assume that the worst-case threshold $\mu$ can be satisfied, and let
 $\thresholdExp_{\top}$ be the least upper bound of the set $\{ \thresholdExp \in \rat \mid \exists\, \strat_{1} \in \stratsPureFinite_{1}$ that
satisfies the $\BWC$ problem for thresholds $(\thresholdWC, \thresholdExp) \}$ (it exists since this set is trivially bounded by $W$ and it is non-empty).
Without knowing $\thresholdExp_{\top}$ a priori, it is still possible to approach it $\varepsilon$-closely, for
all $\varepsilon > 0$, by a dichotomic search with a polynomial (in $\bits = \log_{2} \largestW$, the length of the encoding of weights) number of steps, initialized to the interval $[\mu, W]$.
\end{remark}

\subsection{\textbf{Running example}}
\label{subsec:mpExample}

\begin{figure}[htb]
  \centering   
  \scalebox{0.6}{\begin{tikzpicture}[->,>=latex,shorten >=1pt,auto,node
    distance=2.5cm,bend angle=45,font=\Large]
    \tikzstyle{p1}=[draw,circle,text centered,minimum size=10mm]
    \tikzstyle{p2}=[draw,rectangle,text centered,minimum size=10mm]
    \tikzstyle{empty}=[]
    \node[p1] (1) at (0,0) {$\state_{9}$};
    \node[p1] (2) at (4,0) {$\state_{1}$};
    \node[p2] (3) at (8,0) {$\state_{2}$};
    \node[p1] (4) at (8,-4) {$\state_{3}$};
    \node[p2] (5) at (12,-4) {$\state_{4}$};
    \node[p2] (6) at (4,-4) {$\state_{5}$};
    \node[p1] (7) at (0,-4) {$\state_{6}$};
    \node[p2] (8) at (-4,-4) {$\state_{7}$};
    \node[p1] (9) at (-4,0) {$\state_{10}$};
    \node[p2] (10) at (-8,0) {$\state_{11}$};
    \node[p2] (11) at (0,-8) {$\state_{8}$};
    \node[empty] (swec) at (-8, 1.9) {$\ec_{3}$};
    \node[empty] (wwec) at (-4, -2.1) {$\ec_{2}$};
    \node[empty] (lec) at (12, -2.1) {$\ec_{1}$};
    \node[empty] (proba5a) at (11.6, -3.1) {$\frac{1}{2}$};
    \node[empty] (proba5b) at (11.6, -4.9) {$\frac{1}{2}$};
    \node[empty] (proba3a) at (8.3, -1) {$\frac{1}{2}$};
    \node[empty] (proba3b) at (7.5, 0.95) {$\frac{1}{2}$};
    \node[empty] (proba8a) at (-3.5, -3.2) {{\large $1$}};
    \node[empty] (proba8b) at (-3.5, -4.8) {{\large $0$}};
    \node[empty] (proba10a) at (-7.5, 0.9) {$\frac{1}{2}$};
    \node[empty] (proba10b) at (-7.5, -0.9) {$\frac{1}{2}$};
    \node[empty] (proba11a) at (0.8, -7.5) {$\frac{1}{2}$};
    \node[empty] (proba11b) at (-0.8, -7.5) {$\frac{1}{2}$};
    \node[empty] (proba6a) at (3.8, -3.1) {$\frac{1}{2}$};
    \node[empty] (proba6b) at (3.3, -4.4) {$\frac{1}{2}$};
    \coordinate[shift={(-3mm,8mm)}] (init) at (2.north west);
    \path
    (2) edge node[above] {$0$} (1)
    (6) edge node[above] {$0$} (7)
    (4) edge node[above] {$-1$} (6)
    (3) edge node[left] {$-1$} (4)
    (6) edge node[left] {$-1$} (2)
    (4) edge node[above] {$0$} (5)
    (7) edge node[above] {$0$} (8)
    (7) edge node[left] {$0$} (1)
    (init) edge (2)
    ;
	\draw[->,>=latex] (3) to[out=140,in=40] node[above] {$-1$} (2);
	\draw[->,>=latex] (2) to[out=0,in=180] node[below] {$-1$} (3);
	\draw[->,>=latex] (5) to[out=140,in=40] node[above] {$17$} (4);
	\draw[->,>=latex] (5) to[out=220,in=320] node[below] {$-1$} (4);
	\draw[->,>=latex] (8) to[out=40,in=140] node[above] {$1$} (7);
	\draw[->,>=latex] (8) to[out=320,in=220] node[below] {$-1$} (7);
	\draw[->,>=latex] (1) to[out=140,in=40] node[above] {$1$} (9);
	\draw[->,>=latex] (9) to[out=320,in=220] node[below] {$1$} (1);
	\draw[->,>=latex] (9) to[out=180,in=0] node[below] {$0$} (10);
	\draw[->,>=latex] (10) to[out=40,in=140] node[above] {$-1$} (9);
	\draw[->,>=latex] (10) to[out=320,in=220] node[below] {$9$} (9);
	\draw[->,>=latex] (7) to[out=270,in=90] node[left] {$0$} (11);
	\draw[->,>=latex] (11) to[out=130,in=230] node[left] {$-1$} (7);
	\draw[->,>=latex] (11) to[out=50,in=310] node[right] {$13$} (7);
	\draw[dashed,-] (-9,1.6) -- (1,1.6) -- (1,-1.6) -- (-9,-1.6) -- (-9,1.6);
	\draw[dashed,-] (7,-2.4) -- (13,-2.4) -- (13,-5.6) -- (7,-5.6) -- (7,-2.4);
	\draw[dashed,-] (-5,-2.4) -- (1.7,-2.4) -- (1.7,-9) -- (-5,-9) -- (-5,-2.4);
      \end{tikzpicture}}
      \caption{Mean-payoff game with maximal winning ECs $\ec_{2}$ and $\ec_{3}$. End-component $\ec_{1}$ is losing.}
\label{fig:mpRunningExample}
  \end{figure}

In order to illustrate several notions and strategies, we will consider the game depicted in Fig.~\ref{fig:mpRunningExample} throughout Sect.~\ref{sec:mean_payoff}. States of $\playerOne$ are represented by circles and states of $\playerTwo$ by squares. The stochastic model of $\playerTwo$ is memoryless and is described by the probabilities written close to the start of outgoing edges. For example, in $\state_{2}$, the stochastic model chooses edge $(\state_{2}, \state_{1})$ with probability $1/2$ and edge $(\state_{2}, \state_{3})$ with probability $1/2$. Each edge is assigned a weight represented by an integer number midway. Formally, our definition of the set~$\edges$ allows only one edge from any given state $\state$ to any state $\state'$, hence asks that multiple edges with different values be split by adding dummy states (states with exactly one ingoing edge and one outgoing edge). Note that in order to preserve the same mean-payoff values for paths in the graph, we need to split every edge and copy its weight in both halfs. This restriction is w.l.o.g. and applied for the sake of readability in technical proofs. Still, our graphical representation oversteps it to maintain compactness.

We consider the $\BWC$ problem with the worst-case threshold $\thresholdWC = 0$. Observe that this game satisfies the assumptions guaranteed at the end of the preprocessing part of the algorithm. That is, the worst-case threshold is zero, a worst-case winning strategy of $\playerOne$ exists in all states (e.g., the memoryless strategy choosing edges $(\state_{1}, \state_{9})$, $(\state_{3}, \state_{5})$, $(\state_{6}, \state_{9})$, $(\state_{9}, \state_{10})$ and $(\state_{10}, \state_{9})$ in their respective starting states), and the stochastic model is memoryless, as explained above.

\subsection{\textbf{Preprocessing - simplifying assumptions}}
\label{subsec:mpAssumptions}

We discuss here the preprocessing part of the algorithm (lines~\ref{alg:mp_thresholdsTest}-\ref{alg:mp_mdp}). The goal is to be able to execute the main algorithm (lines~\ref{alg:mp_main}-\ref{alg:mp_main_end}) with the following hypotheses: (a) the worst-case threshold is zero, (b) in all states, $\playerOne$ has a strategy to satisfy the worst-case requirement, and (c) the stochastic model of the adversary is memoryless. This preprocessing is sound and complete: $\playerOne$ has a strategy for the $\BWC$ problem in the input $\game^{i}$ (for thresholds $(\thresholdWC^{i}, \thresholdExp^{i})$ and against stochastic model $\strat_{2}^{i}$) if and only if he also has one in the preprocessed game $\game$  (for thresholds $(0, \thresholdExp)$ and against stochastic model $\stratStoch$). We break down the proof in three lemmas, one for each hypothesis.

\smallskip\noindent\textbf{Thresholds.} First, Lemma~\ref{lem:mp_thresholdChange} states that the worst-case threshold can be taken equal to zero thanks to a slight modification of the weight function (lines~\ref{alg:mp_thresholdsTest}-\ref{alg:mp_thresholds}). From now on, we thus assume that $\thresholdWC = 0$.

\begin{lemma}
\label{lem:mp_thresholdChange}
Let $\gameFull$ be a two-player game, $\graphFull$ its underlying graph, $\initState \in \states$ the initial state, $\strat_{2}^{f} \in \stratsFinite_{2}$ a finite-memory stochastic model of $\playerTwo$, and $(\thresholdWC = \frac{a}{b}, \thresholdExp) \in \rat^{2}$ a pair of thresholds, with $a \in \integ$ and $b \in \natStrict$. Then $\playerOne$ has a satisfying strategy for the $\BWC$ mean-payoff problem in $\game$ if and only if $\playerOne$ has a satisfying strategy when considering the thresholds pair $(0, \thresholdExp' = b\cdot\thresholdExp - a)$ and the weight function $\weight'$ such that $\forall\, \edge \in \edges,\, \weight'(\edge) = b\cdot\weight(\edge) - a$.
\end{lemma}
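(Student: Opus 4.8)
The plan is to show that the transformation $\weight'(\edge) = b\cdot\weight(\edge) - a$ preserves the $\BWC$ problem by tracking how it affects mean-payoff values along plays. First I would compute, for an arbitrary play $\play = \state_0\state_1\state_2\ldots$, the relationship between $\mpay'(\play)$ (the mean-payoff under $\weight'$) and $\mpay(\play)$ (under $\weight$). Since the total-payoff of a prefix $\play(n)$ under $\weight'$ is $\sum_{i=0}^{n-1}\left(b\cdot\weight((\state_i,\state_{i+1})) - a\right) = b\cdot\tpay(\play(n)) - n\cdot a$, dividing by $n$ gives $\mpay'(\play(n)) = b\cdot\mpay(\play(n)) - a$. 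Taking $\liminf_{n\to\infty}$ and using that $b \in \natStrict$ is a positive constant (so it commutes with $\liminf$) and $-a$ is an additive constant, we obtain $\mpay'(\play) = b\cdot\mpay(\play) - a$ for every play $\play$.

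From this pointwise identity the two requirements of Definition~\ref{def:bwc_problem} transfer cleanly. For the worst-case requirement~\eqref{eq:thresholdWC}: since $b > 0$, we have $\mpay(\play) > \thresholdWC = \frac{a}{b}$ if and only if $b\cdot\mpay(\play) - a > 0$, i.e. if and only if $\mpay'(\play) > 0$. This equivalence holds for a fixed play, hence it holds after quantifying universally over $\strat_2 \in \strats_2$ and over $\play \in \outcomesGame{\game}{\initState}{\strat_1}{\strat_2}$; note the set of outcomes is unchanged since the transformation only modifies weights, not the graph structure, so a strategy $\strat_1$ witnesses~\eqref{eq:thresholdWC} for $(\thresholdWC, \weight)$ exactly when it does for $(0, \weight')$.

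For the expected value requirement~\eqref{eq:thresholdExp}: fix $\strat_1 \in \stratsFinite_1$ and consider the Markov chain $\game[\strat_1, \stratStoch]$, whose underlying graph (as a set of states and edges) is identical whether we use $\weight$ or $\weight'$ — only the labeling changes. By linearity of expectation applied to the identity $\mpay' = b\cdot\mpay - a$ (which is a measurable-function identity, so linearity is valid), $\expect_{\initState}^{\game[\strat_1,\stratStoch]}(\mpay') = b\cdot\expect_{\initState}^{\game[\strat_1,\stratStoch]}(\mpay) - a$. Hence $\expect_{\initState}^{\game[\strat_1,\stratStoch]}(\mpay) > \thresholdExp$ iff $\expect_{\initState}^{\game[\strat_1,\stratStoch]}(\mpay') > b\cdot\thresholdExp - a = \thresholdExp'$. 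Combining the two equivalences, a strategy $\strat_1 \in \stratsFinite_1$ satisfies the $\BWC$ problem for $(\thresholdWC, \thresholdExp)$ under $\weight$ if and only if it satisfies it for $(0, \thresholdExp')$ under $\weight'$, which in particular gives the existence equivalence claimed.

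The main obstacle — really a point to be careful about rather than a genuine difficulty — is the interaction of the scaling with the $\liminf$ in the definition of $\mpay$ for infinite plays: one must invoke that multiplication by a strictly positive constant $b$ and addition of a constant $-a$ both commute with $\liminf$, which is elementary but should be stated explicitly since it is exactly where the hypothesis $b \in \natStrict$ (and not merely $b \neq 0$) is used. A secondary subtlety worth a sentence is that expectations are well-defined and finite here because mean-payoff values lie in a bounded range (between $-\largestW$ and $\largestW$, or their $\weight'$-analogues), so the linearity step raises no integrability concerns.
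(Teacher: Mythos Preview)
Your proposal is correct and follows essentially the same approach as the paper: establish the pointwise identity $\mpay_{\weight'}(\play) = b\cdot\mpay_{\weight}(\play) - a$, then deduce the worst-case equivalence directly and the expectation equivalence via linearity. If anything, you are more explicit than the paper about the $\liminf$ step and the role of $b>0$, which the paper simply asserts without justification.
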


\begin{proof}
Recall the mean-payoff of a play is defined as the (infimum) limit of the mean weight over prefixes of increasing length. Hence, the affine transformation applied on weights carries over to play values: for all $\play \in \plays{\graph}$, we have that $\mpay_{\weight'}(\play) = b\cdot\mpay_{\weight}(\play) - a$, where the subscript denotes which weight function we consider. Let $\strat_{1} \in \strats_{1}$ be a strategy of $\playerOne$. First, consider the worst-case requirement of Eq.~\eqref{eq:thresholdWC}. We claim that the following equivalence holds:
$\forall\, \strat_{2} \in \strats_{2},\, \forall\, \play \in \outcomesGame{\game}{\initState}{\strat_{1}}{\strat_{2}},\, \mpay_{\weight}(\play) > \thresholdWC = \frac{a}{b} \;\Longleftrightarrow\; \forall\, \strat_{2} \in \strats_{2},\, \forall\, \play \in \outcomesGame{\game}{\initState}{\strat_{1}}{\strat_{2}},\, \mpay_{\weight'}(\play) > 0$.
This is trivial by applying the affine transformation. Second, consider the expected value requirement of Eq.~\eqref{eq:thresholdExp}. We claim that
\[\expect_{\initState}^{\game[\strat_{1}, \strat_{2}^{f}]}(\mpay_{\weight}) > \thresholdExp \;\Longleftrightarrow\; \expect_{\initState}^{\game[\strat_{1}, \strat_{2}^{f}]}(\mpay_{\weight'}) > \thresholdExp' = b\cdot\thresholdExp - a.\]
The expected value operator is well-known to be linear. Thus, by extracting the affine transformation, we obtain that $\expect_{\initState}^{\game[\strat_{1}, \strat_{2}^{f}]}(\mpay_{\weight'}) = b\cdot\expect_{\initState}^{\game[\strat_{1}, \strat_{2}^{f}]}(\mpay_{\weight}) - a$, which proves our point and concludes the proof.
\end{proof}

\smallskip\noindent\textbf{Worst-case winning.} Second, we prove in Lemma~\ref{lem:mp_noLosingStates} that a necessary condition for a strategy to satisfy the $\BWC$ problem is to avoid visiting states that are losing for the worst-case mean-payoff requirement. This justifies lines~\ref{alg:mp_winningStates}-\ref{alg:mp_reduc} of the algorithm. Observe that the graph of $\game^{w} = \game^{i} \reduc \states_{\textit{WC}}$ contains no deadlock as otherwise it would contradict the fact that $\playerOne$ can satisfy the worst-case threshold problem from states in $\states_{\textit{WC}}$ in the game $\game^{i}$. Also note that $\game^{w}[\strat_{2}^{i}]$ remains a well-defined MDP as there exists no edge from states $\state \in \states_{2}^{i} \cap \states_{\textit{WC}}$ to states $\state' \in \states \setminus \states_{\textit{WC}}$, otherwise $\playerTwo$ could win the game from $\state$ by reaching $\state'$, by prefix-independence of the mean-payoff, and so $\state$ would no belong to $\states_{\textit{WC}}$.

\begin{lemma}
\label{lem:mp_noLosingStates}
Let $\gameFull$ be a two-player game, $\graphFull$ its underlying graph, $\initState \in \states$ the initial state, $\strat_{2}^{f} \in \stratsFinite_{2}$ a finite-memory stochastic model of $\playerTwo$, and $\thresholdExp \in \rat$ the expected value threshold. Let $\states_{\textit{WC}} \subseteq \states$ be the set of winning states for $\playerOne$ for the worst-case threshold problem. If $\strat_{1} \in \stratsFinite_{1}$ is a satisfying strategy for the $\BWC$ mean-payoff problem, then
$\forall\, \strat_{2} \in \strats_{2},\, \forall\, \play \in \outcomesGame{\game}{\initState}{\strat_{1}}{\strat_{2}},\, \forall\, \state \in \states \setminus \states_{\textit{WC}},\, \state \not\in \play$.
\end{lemma}

\begin{proof}
Formally, let $\states_{\textit{WC}} := \left\lbrace \state \in \states \mid \exists\, \strat_{1} \in \strats_{1},\, \forall\, \strat_{2} \in \strats_{2},\, \forall\, \play \in \outcomesGame{\game}{\state}{\strat_{1}}{\strat_{2}},\, \mpay(\play) > 0\right\rbrace$. We claim that a winning strategy of $\playerOne$ for the $\BWC$ problem must avoid states that are not in $\states_{\textit{WC}}$ and prove it by contradiction. Indeed, let $\strat_{1} \in \stratsFinite_{1}$ be such a strategy. Assume the claim is false: there exist $\strat_{2} \in \strats_{2}$, $\play \in \outcomesGame{\game}{\initState}{\strat_{1}}{\strat_{2}}$ and $\state \in \states \setminus \states_{\textit{WC}}$ such that $\state \in \play$. By definition of $\states_{\textit{WC}}$ and determinacy of mean-payoff games~\cite{EM79,martin_AM75}, it is the case that in state~$\state$, $\playerTwo$ has a winning strategy for the worst-case objective: there exists $\strat'_{2} \in \strats_{2}$ such that for all $\strat'_{1} \in \strats_{1}$, there exists $\play' \in \outcomesGame{\game}{\state}{\strat'_{1}}{\strat'_{2}}$ such that $\neg(\mpay(\play') > 0)$, i.e., $\mpay(\play') \leq 0$. Hence, consider the strategy $\strat''_{2}$ of $\playerTwo$ that plays according to $\strat_{2}$ up to the first visit of $\state$ and according to $\strat'_{2}$ afterwards.  Clearly, there exists an outcome $\play'' \in \outcomesGame{\game}{\initState}{\strat_{1}}{\strat''_{2}}$ such that $\play'' = \prefix \cdot \play'$ and $\mpay(\play') \leq 0$ for some $\prefix \in \prefixes{\graph}$. Since the mean-payoff objective is prefix-independent, we have that $\mpay(\play'') \leq 0$. Thus, the strategy $\strat_{1}$ of $\playerOne$ does not satisfy Eq.~\eqref{eq:thresholdWC}, which contradicts the hypothesis and concludes the proof.
\end{proof}

\smallskip\noindent\textbf{Memoryless stochastic model.} Finally, we show in Lemma~\ref{lem:mp_memorylessStochModel} that we can study the equivalent $\BWC$ problem on the game obtained by product of the original game and the Moore machine of the stochastic model, using a memoryless stochastic model instead of the finite-memory one (lines~\ref{alg:mp_memory}-\ref{alg:mp_mdp} of $\mpAlgoName$). Having a memoryless stochastic model proves useful in the main steps of the algorithm (lines~\ref{alg:mp_main}-\ref{alg:mp_main_end}) as it guarantees that the game $\game$ and the MDP $\game[\stratStoch]$ possess the same underlying graph.

Let $\gameFull$ be a two-player game, $\graphFull$ its underlying graph, $\initState \in \states$ the initial state, $\strat_{2}^{f} \in \stratsFinite_{2}(\game)$ a finite-memory stochastic model of $\playerTwo$, $\mooreMachineFull{\strat_{2}^{f}}$ its Moore machine. We define the product game $\game' = \game \otimes \mooreMachine{\strat_{2}^{f}} = (\graph', \statesOne', \statesTwo')$, with $\graph' = (\states', \edges', \weight')$, as follows.
\begin{itemize}
\item $\states' = \states \times \mooreMem$, $\statesOne' = \statesOne \times \mooreMem$, $\statesTwo' = \statesTwo \times \mooreMem$;
\item $\forall\, \state_{1}, \state_{2} \in \states,\, \forall\, \mooreMemElem_{1}, \mooreMemElem_{2} \in \mooreMem$, $\left((\state_{1}, \mooreMemElem_{1}), (\state_{2}, \mooreMemElem_{2})\right) \in \edges'$ $\Leftrightarrow$ $(\state_{1}, \state_{2}) \in \edges$ $\wedge$ $\mooreUpd(\mooreMemElem_{1}, \state_{1}) = \mooreMemElem_{2}$;
\item $\forall\, e = \left((\state_{1}, \mooreMemElem_{1}), (\state_{2}, \mooreMemElem_{2})\right) \in \edges'$, $\weight'(e) = \weight((\state_{1}, \state_{2}))$;
\item $\initState' = (\initState, \mooreInitMem)$ is the new initial state.
\end{itemize}
Given the finite-memory stochastic model $\strat_{2}^{f} \in \stratsFinite_{2}(\game)$ of $\playerTwo$, we transcript it into a memoryless strategy $\strat_{2}^{m} \in \stratsMemoryless_{2}(\game')$ on the product game such that
$\forall\, \state_{1} \in \statesTwo,\, \forall\, \left((\state_{1}, \mooreMemElem_{1}), (\state_{2}, \mooreMemElem_{2})\right) \in \edges',\, \strat_{2}^{m}((\state_{1}, \mooreMemElem_{1}))((\state_{2}, \mooreMemElem_{2})) = \mooreNext(\mooreMemElem_{1}, \state_{1})(\state_{2})$.
Basically, we have integrated the finite memory of $\strat_{2}^{f}$ into the states of $\game'$ and defined the remaining corresponding memoryless strategy $\strat_{2}^{m}$ on $\game'$.
The following lemma holds.

\begin{lemma}
\label{lem:mp_memorylessStochModel}
Let $\thresholdExp \in \rat$ be the expected value threshold. The two following statements are equivalent.
\begin{enumerate}[(a)] 
\item $\playerOne$ has a strategy to satisfy the $\BWC$ problem on $\game$ against the finite-memory stochastic model $\strat_{2}^{f}$.
\item $\playerOne$ has a strategy to satisfy the $\BWC$ problem on $\game'$ against the memoryless stochastic model $\strat_{2}^{m}$.
\end{enumerate}
\end{lemma}

\begin{proof}
We first prove \textit{(a)} $\Rightarrow$ \textit{(b)}. Assume $\strat_{1} \in \stratsFinite_{1}(\game)$ is a satisfying strategy for the $\BWC$ problem on $\game$, i.e., it satisfies Eq.~\eqref{eq:thresholdWC} and~\eqref{eq:thresholdExp} against the stochastic model $\strat_{2}^{f}$. We build a corresponding strategy $\strat'_{1} \in \stratsFinite_{1}(\game')$ that is winning against $\strat_{2}^{m}$ in $\game'$ as follows:
$\forall\, \prefix' = (\state_{0}, \mooreMemElem_{0}) (\state_{1}, \mooreMemElem_{1}) \ldots{} (\state_{k}, \mooreMemElem_{k}) \in \prefixesArg{\game'}{1},\, (\state_{k+1}, \mooreMemElem_{k+1}) \in \states' \text{ such that } \left( (\state_{k}, \mooreMemElem_{k}), (\state_{k+1}, \mooreMemElem_{k+1})\right) \in \edges',\;
\strat'_{1}(\prefix')\left((\state_{k+1}, \mooreMemElem_{k+1})\right) = \strat_{1}(\proj{\states}(\prefix'))(\state_{k+1})$.
Note that strategy $\strat'_{1}$ is well-defined as $\strat_{1}$ is. Consider the worst-case requirement on $\game'$ (Eq.~\eqref{eq:thresholdWC}). Let $\play' \in \outcomesMDP{\game'}{\initState'}{\strat'_{1}}$ be any outcome consistent with the newly defined strategy $\strat'_{1}$. By definition of~$\strat'_{1}$, we have that $\play = \proj{\states}(\play') \in \outcomesMDP{\game}{\initState}{\strat_{1}}$ is an outcome consistent with $\strat_{1}$ in $\game$. Hence, by hypothesis, we have that $\mpay(\play) > 0$. By definition of $\weight'$, we have that $\mpay_{\weight'}(\play') = \mpay_{\weight}(\play)$, thus $\mpay(\play') > 0$ and the worst-case requirement is satisfied. Now consider the expected value, Eq.~\eqref{eq:thresholdExp}. By hypothesis, we have that $\expect_{\initState}^{\game[\strat_{1}, \strat_{2}^{f}]}(\mpay) > \thresholdExp$. We claim that $\expect_{\initState'}^{\game'[\strat'_{1}, \strat_{2}^{m}]}(\mpay) > \thresholdExp$. Indeed, observe that there is a bijection between outcomes in $\outcomesMC{\game[\strat_{1}, \strat_{2}^{f}]}{\initState}$ and $\outcomesMC{\game'[\strat'_{1}, \strat_{2}^{m}]}{\initState'}$ using the projection operator because the memory update function $\mooreUpd$ of the Moore machine $\mooreMachine{\strat_{2}^{f}}$ is deterministic. As the values of plays are preserved by the changes to $\weight'$ and the probability measures of cylinder sets are also preserved by definition of $\strat^{m}_{2}$, the claim is verified.

Second, we show that \textit{(b)} $\Rightarrow$ \textit{(a)}. Assume $\strat'_{1} \in \stratsFinite_{1}(\game')$ is a satisfying strategy for the $\BWC$ problem on $\game'$, against the stochastic model $\strat_{2}^{m}$. We build a corresponding strategy $\strat_{1} \in \stratsFinite_{1}(\game)$ that is winning against $\strat_{2}^{f}$ in $\game$. Thanks to the update function of $\mooreMachine{\strat_{2}^{f}}$ being deterministic, given a prefix $\prefix = \state_{0}\state_{1}\ldots{}\state_{k} \in \prefixes{\game}$, there is a unique corresponding prefix $\prefix' \in \prefixes{\game'}$ such that $\prefix = \proj{\states}(\prefix')$. Hence we define $\strat_{1}$ as follows: $\forall\, \prefix = \state_{0} \state_{1} \ldots{} \state_{k} \in \prefixesArg{\game}{1},\, \state_{k+1} \in \states \text{ such that } (\state_{k}, \state_{k+1}) \in \edges,\;
\strat_{1}(\prefix)(\state_{k+1}) = \strat'_{1}(\prefix')((\state_{k+1},\mooreMemElem_{k+1}))$,
with $\prefix' = (\state_{0}, \mooreMemElem_{0}) (\state_{1}, \mooreMemElem_{1}) \ldots{} (\state_{k}, \mooreMemElem_{k})$ the unique prefix in $\prefixesArg{\game'}{1}$ such that $\prefix = \proj{\states}(\prefix')$ and $\mooreMemElem_{k+1} = \mooreUpd(\mooreMemElem_{k}, \state_{k})$. Strategy $\strat_{1}$ is well-defined. The rest follows by similar arguments as for $(a) \Rightarrow (b)$.
\end{proof}

Given a satisfying strategy in the product game, the proof of Lemma~\ref{lem:mp_memorylessStochModel} describes how to obtain a corresponding satisfying strategy in the original game. Hence, strategies obtained through algorithm $\mpAlgoName$ are preserved alongside the answer to the $\BWC$ problem when going back to the original game.
We now study the main steps of the algorithm, assuming the simplifying assumptions given by the preprocessing.

\subsection{\textbf{Classification of end-components}}
\label{subsec:mpClassification}

In the following, we consider a game $\game$ where all states are winning for the worst-case requirement (Eq.~\eqref{eq:thresholdWC}), a memoryless stochastic model $\stratStoch \in \stratsMemoryless_{2}$, and $\markovProcess = \game[\stratStoch]$, the MDP obtained when this stochastic model is followed by~$\playerTwo$. As sketched in Sect.~\ref{subsec:mpApproach}, the crux is the analysis of the ECs of $\markovProcess$.

\smallskip\noindent\textbf{Long-run appearance of end-components.} Lemma~\ref{lem:ecsProbaOne} recalls a well-known property of MDPs: under any arbitrary strategy of $\playerOne$, the set of states visited infinitely often along a play almost-surely constitutes an EC.  Notice the abuse of notation as discussed in Sect.~\ref{sec:preliminaries}.

\begin{lemma}[\cite{courcoubetis_JACM1995,de1997formal}]
\label{lem:ecsProbaOne}
Let $\markovProcessFull$ be an MDP, $\graphFull$ its underlying graph, $\ecsSet \subseteq 2^{\states}$ the set of its end-components, $\initState \in \states$ the initial state, and $\strat_{1} \in \strats_{1}(\markovProcess)$ an arbitrary strategy of $\playerOne$. Then, we have that
$\proba^{\markovProcess[\strat_{1}]}_{\initState}\left( \{\play \in \outcomesMC{\markovProcess[\strat_{1}]}{\initState} \mid \infVisited{\play} \in \ecsSet\}\right) = 1$.
\end{lemma}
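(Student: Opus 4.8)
The statement is a classical result (due to \cite{courcoubetis_JACM1995,de1997formal}), so the proof I would give is essentially a citation-backed sketch, structured in two parts: first for memoryless strategies of $\playerOne$, then lifting to arbitrary strategies via the standard product construction. The key object is the random variable $\infVisited{\play} \subseteq \states$, the set of states visited infinitely often along an outcome $\play$, and the goal is to show that, almost surely, this set is an end-component of $\markovProcess$.

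\smallskip\noindent\textbf{Step 1: the memoryless case.} Suppose $\strat_{1} \in \stratsMemoryless_{1}(\markovProcess)$ is memoryless. Then $\markovProcess[\strat_{1}]$ is a finite Markov chain on the state space $\states$. I would invoke the well-known structure theory of finite Markov chains: almost surely, a run eventually enters a bottom strongly connected component (BSCC) of the chain and visits all of its states infinitely often, and only those states. Hence $\infVisited{\play}$ equals, with probability one, the set of states of some BSCC of $\markovProcess[\strat_{1}]$. It then remains to check that every BSCC $\ec$ of $\markovProcess[\strat_{1}]$ is an end-component of $\markovProcess$ in the sense defined in Sect.~\ref{sec:preliminaries}: (i) $(\ec, \edgesNonZero \cap (\ec\times\ec))$ is strongly connected — this holds because, being a BSCC, for every $\state \in \ec \cap \statesOne$ the edge $\strat_{1}$ selects stays in $\ec$ and lies in $\edgesNonZero$, and for every $\state \in \ec \cap \statesProb$ all of $\supp(\mpTrans(\state))$ lies in $\ec$, so the BSCC edges are exactly a subset of $\edgesNonZero$ and connect $\ec$; (ii) for all $\state \in \ec \cap \statesProb$, $\supp(\mpTrans(\state)) \subseteq \ec$ — this is immediate from $\ec$ being a bottom SCC, since no stochastic transition can leave it with positive probability. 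So $\ec \in \ecsSet$ and the claim follows for memoryless $\strat_{1}$.

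\smallskip\noindent\textbf{Step 2: arbitrary strategies.} For a general (possibly randomized, infinite-memory) strategy $\strat_{1} \in \strats_{1}(\markovProcess)$, the natural approach is to reduce to the memoryless case. If $\strat_{1}$ is finite-memory, encoded by a Moore machine, then $\markovProcess[\strat_{1}]$ is a finite Markov chain on $\states \times \mooreMem$ and the argument of Step~1 applies verbatim on that product chain; projecting a BSCC of the product onto $\states$ via $\proj{\states}$ yields a set that satisfies conditions (i)–(ii) of an end-component of $\markovProcess$ (strong connectedness and the stochastic-closure property are preserved under the projection, since the memory updates are deterministic), and $\infVisited{\play} = \proj{\states}(\infVisited{\play'})$ for the corresponding run $\play'$ in the product. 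For genuinely infinite-memory $\strat_{1}$, the cleanest route is to cite the general result directly (\cite{courcoubetis_JACM1995,de1997formal}): the proof there proceeds by a limiting/martingale argument showing that, conditioned on the event that $\infVisited{\play} = R$ for a fixed set $R$ that is \emph{not} an end-component, with probability one some state of $R$ is eventually left forever through an edge not returning to $R$ — contradicting $R = \infVisited{\play}$. Summing over the finitely many candidate sets $R \subseteq \states$ that fail to be end-components gives the result.

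\smallskip\noindent\textbf{Main obstacle.} Since this is a cited classical lemma, there is no real technical obstacle; the only delicate point is the reduction in Step~2 for infinite-memory strategies, where one cannot literally form a finite product chain. I would handle this by either (a) simply deferring to \cite{courcoubetis_JACM1995,de1997formal} for the general statement, or (b) noting that for the purposes of this paper only finite-memory strategies of $\playerOne$ are ever used, so the finite-product argument of Step~2 suffices. Given the surrounding text already restricts attention to finite-memory $\playerOne$ strategies, option (b) combined with a citation is the pragmatic choice, and the write-up would be kept to a short paragraph pointing to the references.
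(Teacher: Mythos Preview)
Your proposal is correct and, in fact, goes beyond what the paper does: the paper states Lemma~\ref{lem:ecsProbaOne} purely as a cited result from \cite{courcoubetis_JACM1995,de1997formal} and provides no proof or sketch whatsoever. Your plan to cite the references while offering a short justification (BSCC argument for finite-memory strategies, with a pointer to the literature for the general case) is entirely appropriate and more informative than the paper's own treatment.
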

Hence the expected value $\expect_{\initState}^{\markovProcess[\strat_{1}]}(\mpay)$ depends exclusively on the values of end-components (because the mean-payoff of any play belongs to $\left[-\largestW, \largestW\right]$ and thus the value of plays not entering ECs, which set has probability measure zero, cannot be infinite).

\smallskip\noindent\textbf{Winning end-components.} First, we introduce some notations. We respectively denote $\gameNonZero$ and $\markovProcessNonZero$ the game and MDP where the underlying graph is limited to the subset of edges which are assigned non-zero probability by $\mpTrans = \stratStoch$, i.e., $\edgesNonZero \subseteq \edges$. By definition, ECs are computed with regard to~$\edgesNonZero$, hence the ECs of $\markovProcessNonZero$ are exactly equal to the ECs of~$\markovProcess$. Moreover, we have that
$\strats_{2}(\gameNonZero) = \big\lbrace \strat_{2} \in \strats_{2}(\game) \mid \forall\, \prefix \cdot \state \in \prefixesArg{\game}{2},\, \supp(\strat_{2}(\prefix\cdot\state)) \subseteq \supp(\mpTrans(\state)) \big\rbrace$,
whereas available choices are unchanged for $\playerOne$ in $\gameNonZero$ as edges in $\edges \setminus \edgesNonZero$ all start in states of $\playerTwo$. 

Using these notations, we now define \textit{winning} ECs as the ECs $\ec \in \ecsSet$ where $\playerOne$ can ensure satisfaction of the worst-case requirement in the corresponding subgame $\gameNonZero \reduc \ec$. Note that we consider $\gameNonZero$ as we only need to consider strategies of $\playerTwo$ that share the support of the stochastic model. That is because our goal is to ensure that $\playerOne$ can benefit from (the maximal expectation achievable in) these ECs, which is only safe with regard to the worst-case requirement if~$\playerOne$ can force that all outcomes that may occur when facing the stochastic model yield a strictly positive mean-payoff. Note that reacting to an arbitrary strategy of $\playerTwo$, i.e., a strategy in $\strats_{2}(\game)$, as required in Eq.~\eqref{eq:thresholdWC}, will be considered in the following sections: for now we only care about $\stratStoch$ and satisfaction of the expected value requirement as specified in Eq.~\eqref{eq:thresholdExp}.

\begin{definition}
\label{def:classificationECs}
Let $\gameFull$ be a two-player game, $\graphFull$ its underlying graph, $\stratStoch \in \stratsMemoryless_{2}$ a memoryless stochastic model of $\playerTwo$, $\markovProcess = \game[\stratStoch] = (\graph, \statesOne, \statesProb = \statesTwo, \mpTrans = \stratStoch)$ the resulting MDP and $\gameNonZero$ the game reduced to non-zero probability edges. Let $\ec \in \ecsSet$ be an end-component of $\markovProcess$. Then, we have that
\begin{itemize}
\item $\ec \in \winningECs\;$, the {\em winning ECs}, if the following holds:
$\exists\, \strat_{1} \in \strats_{1}(\gameNonZero \reduc \ec),\, \forall\, \strat_{2} \in \strats_{2}(\gameNonZero \reduc \ec),\, \forall\, \state \in \ec,\, \forall\, \play \in \outcomesGame{(\gameNonZero \reduc \ec)}{\state}{\strat_{1}}{\strat_{2}},\, \mpay(\play) > 0\;;$
\item $\ec \in \losingECs$, the {\em losing ECs}, otherwise. By determinacy of mean-payoff games,
$\exists\, \strat_{2} \in \strats_{2}(\gameNonZero \reduc \ec),\, \forall\, \strat_{1} \in \strats_{1}(\gameNonZero \reduc \ec),\, \exists\, \state \in \ec,\, \exists\, \play \in \outcomesGame{(\gameNonZero \reduc \ec)}{\state}{\strat_{1}}{\strat_{2}},\, \mpay(\play) \leq 0$.
\end{itemize}
\end{definition}
Note that an EC is winning if $\playerOne$ has a worst-case winning strategy from \textit{all} states. This point is important as it may well be the case that winning strategies exist in a strict subset of states of the EC. This does not contradict the definition of ECs as strongly connected subgraphs, as the latter only guarantees that every state can be reached \textit{with probability one}, and not necessarily surely. Hence one cannot call upon the prefix-independence of the mean-payoff to extend the existence of a winning strategy to all states. Such a situation can be observed on the game of Fig.~\ref{fig:mp_winningECsComputationExample}, where the EC~$\ec_{2}$ is losing (because from $\state_{1}$, the outcome $(\state_{1}\state_{3}\state_{4})^{\omega}$ can be forced by $\playerTwo$, yielding mean-payoff $-1/3 \leq 0$), while its sub-EC $\ec_{3}$ is winning. From $\state_{1}$, $\playerOne$ can ensure to reach $\ec_{3}$ almost-surely, but not surely, which is critical in this case.

\begin{figure}[htb]
  \centering   
  \scalebox{0.6}{\begin{tikzpicture}[->,>=latex,shorten >=1pt,auto,node
    distance=2.5cm,bend angle=45,font=\Large]
    \tikzstyle{p1}=[draw,circle,text centered,minimum size=10mm]
    \tikzstyle{p2}=[draw,rectangle,text centered,minimum size=10mm]
    \tikzstyle{empty}=[]
    \node[p1] (1) at (0,0) {$\state_{1}$};
    \node[p1] (2) at (4,0) {$\state_{2}$};
    \node[p2] (3) at (-2,-2) {$\state_{3}$};
    \node[p1] (4) at (2,-2) {$\state_{4}$};
    \node[p1] (5) at (-4,0) {$\state_{5}$};
    \node[empty] (ec1) at (-5.2, -1.2) {$\ec_{3}$};
    \node[empty] (ec2) at (-7.4, 1) {$\ec_{2}$};
    \node[empty] (ec2) at (6.6, 1) {$\ec_{1}$};
    \node[empty] (proba1) at (-2.7, -2.4) {$\frac{1}{2}$};
    \node[empty] (proba2) at (-1.3, -2.4) {$\frac{1}{2}$};
    \coordinate[shift={(0mm,5mm)}] (init) at (1.north);
    \path
    (1) edge node[above] {$0$} (2)
    (5) edge node[above] {$0$} (1)
    (1) edge node[left,xshift=-1mm] {$0$} (3)
    (4) edge node[right] {$-1$} (1)
    (3) edge node[below] {$0$} (4)
    (init) edge (1)
    (5) edge [loop left, out=150, in=210,looseness=3, distance=16mm] node [left] {$10$} (5)
    (2) edge [loop right, out=30, in=330,looseness=3, distance=16mm] node [right] {$1$} (2)
    ;
	\draw[->,>=latex] (3) to[out=180,in=270] node[left,xshift=-1mm] {$0$} (5);
	\draw[dashed,-] (-3.2,0.8) -- (-6.4,0.8) -- (-6.4,-0.8) -- (-3.2,-0.8) -- (-3.2,0.8);
	\draw[dashed,-] (0,1.4) -- (-7,1.4) -- (-7,-2.8) -- (3.8,-2.8) -- (0,1.4);
	\draw[dashed,-] (1.7,1.4) -- (6.2,1.4) -- (6.2,-2.8) -- (5.5,-2.8) -- (1.7,1.4);
      \end{tikzpicture}}
      \caption{End component $\ec_{2}$ is losing. The set of maximal winning ECs is $\maxWinningECs = \winningECs = \{\ec_{1}, \ec_{3}\}$.}
\label{fig:mp_winningECsComputationExample}
  \end{figure}

\smallskip\noindent\textbf{Maximality.} As discussed in Sect.~\ref{subsec:mpApproach}, we can restrict our analysis to \textit{maximal} winning ECs in the following. This is a consequence of Lemma~\ref{lem:mp_ecsMaximal}.

\begin{lemma}
\label{lem:mp_ecsMaximal}
Let $\ec_{1}, \ec_{2} \in \winningECs$ be two winning ECs in the MDP $\markovProcess$ such that $\ec_{1} \subsetneq \ec_{2}$. Let $\optimalExp_{1}, \optimalExp_{2}$ denote the respective maximal expected values achievable by $\playerOne$ in $\ec_{1}$ and $\ec_{2}$. Then, we have that $\optimalExp_{1} \leq \optimalExp_{2}$.
\end{lemma}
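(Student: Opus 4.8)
The plan is to construct, inside the sub-MDP $\markovProcess \reduc \ec_{2}$, a strategy of $\playerOne$ whose expected mean-payoff equals $\optimalExp_{1}$. Since $\optimalExp_{2}$ is by definition the supremum of the expected mean-payoffs of all strategies of $\playerOne$ in $\markovProcess \reduc \ec_{2}$, exhibiting one such strategy immediately yields $\optimalExp_{2} \geq \optimalExp_{1}$.

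First I would recall two standard facts. By~\cite{filar1997}, there is a memoryless strategy $\strat_{1}^{\ec_{1}}$ of $\playerOne$ in the sub-MDP $\markovProcess \reduc \ec_{1}$ (which is a well-defined sub-MDP precisely because $\ec_{1}$ is an EC, so the transition probabilities still sum to one on $\ec_{1}$) achieving the maximal expected mean-payoff, and this value is $\optimalExp_{1}$ from \emph{every} state of $\ec_{1}$. Moreover, since $\ec_{2}$ is an EC and $\emptyset \neq \ec_{1} \subseteq \ec_{2}$, from every state of $\ec_{2}$ player $\playerOne$ has a strategy $\strat_{1}^{\mathrm{reach}}$ in $\markovProcess \reduc \ec_{2}$ that reaches $\ec_{1}$ with probability one: this is the usual almost-sure reachability property of end-components, relying only on $(\ec_{2}, \edgesNonZero \cap (\ec_{2} \times \ec_{2}))$ being strongly connected together with condition~(ii) of the definition of ECs (every $\edgesNonZero$-successor of a stochastic state of $\ec_{2}$ lies in $\ec_{2}$).

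Then I would define the combined strategy $\strat_{1}$ in $\markovProcess \reduc \ec_{2}$: play $\strat_{1}^{\mathrm{reach}}$ until the first visit to a state of $\ec_{1}$, and from then on play $\strat_{1}^{\ec_{1}}$ forever; this requires only one extra bit of memory. Under $\strat_{1}$, with probability one the outcome has a finite prefix (contained in $\ec_{2}$) after which it enters some $\state \in \ec_{1}$ and, since $\ec_{1}$ is an EC and $\strat_{1}^{\ec_{1}}$ is a strategy of $\markovProcess \reduc \ec_{1}$, never leaves $\ec_{1}$ again. Because the mean-payoff is prefix-independent, the value of such an outcome equals the mean-payoff of its suffix inside $\ec_{1}$; conditioning on the entry state $\state$, the expected value of that suffix is exactly $\optimalExp_{1}$, and since this holds for every possible entry state and $\ec_{1}$ is entered with probability one, the expected mean-payoff of $\strat_{1}$ from any state of $\ec_{2}$ is $\optimalExp_{1}$. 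Hence $\optimalExp_{2} \geq \optimalExp_{1}$.

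\textbf{Main obstacle.} The only step that is not pure bookkeeping is the almost-sure reachability of $\ec_{1}$ within $\ec_{2}$ using non-zero-probability edges only. I would either cite it as a folklore property of end-components (e.g.~\cite{de1997formal,courcoubetis_JACM1995}), or dispatch it in one line: the set of states of $\ec_{2}$ from which $\playerOne$ cannot force reaching $\ec_{1}$ almost surely while staying in $\ec_{2}$ is closed under all $\edgesNonZero$-edges (for $\playerOne$-states because no move helps, for stochastic states by condition~(ii)) and is disjoint from $\ec_{1}$, hence is empty by strong connectivity of $(\ec_{2}, \edgesNonZero \cap (\ec_{2} \times \ec_{2}))$.
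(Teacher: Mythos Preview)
Your proposal is correct and follows essentially the same approach as the paper: reach $\ec_{1}$ almost surely from within $\ec_{2}$, then play optimally in $\ec_{1}$, and invoke prefix-independence of the mean-payoff together with the state-independence of the optimal expectation inside an EC. The paper's proof is just a terser version of your argument; one small caveat is that your one-line justification for almost-sure reachability is not quite right for stochastic states (a state in the ``bad'' set need not have \emph{all} its $\edgesNonZero$-successors in the bad set, only \emph{some}), so you should rely on the citation option you already propose.
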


\begin{proof}
First notice that the maximal expectation achievable in an EC does not depend on the starting state inside the EC. Hence, assume any state $\state_{2} \in \ec_{2}$. Since $\ec_{1} \subsetneq \ec_{2}$ and $\ec_{2}$ is an end-component, $\playerOne$ can reach a state in $\ec_{1}$ with probability one from $\state_{2}$.
Hence, the contribution of plays that do not reach $\ec_{1}$ in the expected value is null. Finally, by prefix-independence of the mean-payoff, we can forget about the finite prefixes outside $\ec_{1}$ and we deduce that $\optimalExp_{2} \geq \optimalExp_{1}$.
\end{proof}

We formally define the set of \textit{maximal winning ECs} as $\maxWinningECs = \{\ec \in \winningECs \mid \forall\, \ec' \in \winningECs,\, \ec \subseteq \ec' \Rightarrow \ec = \ec'\}$.
While the total number of winning ECs $\vert\winningECs\vert \leq \vert \ecsSet\vert \leq 2^{\vert\states\vert}$ can be exponential in the number of states of the game, the number of maximal winning ECs $\vert\maxWinningECs\vert \leq \vert\states\vert$ is bounded by this number of states, as all ECs of $\maxWinningECs$ are disjoint (because the union of two winning ECs is itself a winning EC). Hence, restriction to the maximal winning ECs is a cornerstone in the overall $\NPinter$ complexity that we claim for the BWC problem.

\smallskip\noindent\textbf{Illustration.} Consider the running example in Fig.~\ref{fig:mpRunningExample}. Note that states $\state_{1}$, $\state_{2}$ and $\state_{5}$ do not belong to any EC: given any strategy of $\playerOne$ in $\markovProcess$, with probability one, any consistent outcome will only visit those states a finite number of times (Lemma~\ref{lem:ecsProbaOne}). The set of \textit{maximal winning ECs} is $\maxWinningECs = \{\ec_{2}, \ec_{3}\}$. Obviously, those ECs are disjoint. The set of winning ECs is larger, $\winningECs = \maxWinningECs \cup \{\{\state_{9}, \state_{10}\}, \{\state_{6}, \state_{7}\}\}$.

End-component $\ec_{1}$ is \textit{losing}. Indeed, in the subgame $\gameNonZero \reduc \ec_{1}$, the strategy consisting in always picking the $-1$ edge guarantees an outcome which mean-payoff is negative. Note that this edge is present in $\edgesNonZero$ as it is assigned probability $1/2$ by the stochastic model. Here, we witness why it is important to base our definition of winning ECs on the game $\gameNonZero$ rather than $\game$. Indeed, in $\game \reduc \ec_{2}$, $\playerTwo$ can also guarantee a negative mean-payoff by always choosing edges with weight $-1$. However, to achieve this, $\playerTwo$ has to pick edges that are \textit{not} in $\edgesNonZero$: this will never happen against the stochastic model and as such, this can be watched by $\playerOne$ to see if $\playerTwo$ uses an arbitrary antagonistic strategy, and dealt with. If $\playerTwo$ conforms to $\edgesNonZero$, i.e., if he plays in $\gameNonZero$, he has to pick the edge of weight $1$ in $\state_{7}$ and $\playerOne$ has a worst-case winning strategy consisting in always choosing to go in $\state_{7}$. This EC is thus classified as \textit{winning}. Note that for $\ec_{3}$, in both subgames $\game \reduc \ec_{3}$ and $\gameNonZero \reduc \ec_{3}$, $\playerOne$ can guarantee a strictly positive mean-payoff by playing $(\state_{9}\,\state_{10})^\omega$: even \textit{arbitrary} strategies of $\playerTwo$ cannot endanger $\playerOne$ in this case.

Lastly, consider the game depicted in Fig.~\ref{fig:mp_winningECsComputationExample}. While $\ec_{2}$ is a strict superset of $\ec_{3}$, the former is losing whereas the latter is winning, as explained above. Hence, the set $\maxWinningECs$ is equal to $\{\ec_{1}, \ec_{3}\}$.

\smallskip\noindent\textbf{Computation of the maximal winning end-components.} Obviously, from a complexity standpoint, to benefit from the polynomial size of $\maxWinningECs$, in contrast to the potentially exponential size of $\winningECs$, we need to compute $\maxWinningECs$ without first computing all winning ECs of $\winningECs$. We present an algorithm to do so, called $\mwecAlgoName$, in Alg.~\ref{alg:ec}. Lemma~\ref{lem:mpECsPartition} establishes that $\mwecAlgoName$ is correct and complete, and uses a polynomial number of calls to an $\NPinter$ oracle to solve mean-payoff games (i.e., decide the answer of the worst-case threshold problem), other than that implementing polynomial operations. It operates under the assumption that all edges are of non-zero probability, i.e., $\edgesNonZero = \edges$. This is w.l.o.g.~as it suffices to remove those edges as a preprocessing step (they have no impact in the definitions of ECs and winning ECs).

\begin{lemma}
\label{lem:mpECsPartition}
Let $\markovProcessFull$ be an MDP, with $\graphFull$ its underlying graph such that $\edgesNonZero = \edges$. Then algorithm $\mwecAlgoName$ computes its set of maximal winning ECs $\maxWinningECs = \mwecAlgoName(\markovProcess)$ in polynomially-many calls to an $\NPinter$ oracle for mean-payoff games.
\end{lemma}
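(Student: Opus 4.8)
The plan is to establish three facts about $\mwecAlgoName$ (Alg.~\ref{alg:ec}): it issues only polynomially many oracle calls, with all remaining computation deterministic and polynomial; every set it emits is a winning EC of $\markovProcess$; and conversely every maximal winning EC of $\markovProcess$ is emitted. Together with the observation that distinct emitted sets are pairwise disjoint, these give $\maxWinningECs = \mwecAlgoName(\markovProcess)$, and the $\NPinter$ bound follows since $\NPinter$ is closed under polynomial-time Turing reductions and each oracle query is an instance of the worst-case mean-payoff threshold problem on a subgame $\game \reduc \ec$ — a genuine mean-payoff game, as $\edgesNonZero = \edges$ means $\gameNonZero \reduc \ec = \game \reduc \ec$ — which is in $\NPinter$~\cite{ZP96,jurdzinski98}.

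Recall the algorithm's shape: on the current sub-MDP it computes the maximal end-component decomposition (polynomial time~\cite{chatterjee_SODA2012}); for each MEC $\ec$ it makes one oracle call deciding whether $\ec \in \winningECs$, i.e. whether in $\game \reduc \ec$ player $\playerOne$ forces $\mpay(\play) > 0$ from all states (Eq.~\eqref{eq:winningECs}); if so, $\ec$ is emitted and that branch halts; if $\ec \in \losingECs$, the algorithm removes from $\ec$ the $\playerTwo$-attractor $D := \attr_{\game \reduc \ec}^{\playerTwo}(Z)$ of the set $Z$ of states from which $\playerTwo$ wins the worst-case game in $\game \reduc \ec$, and recurses on the sub-MDP induced by $\ec \setminus D$ (which, using $\edgesNonZero = \edges$, is again a valid sub-MDP). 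By Eq.~\eqref{eq:losingECs} and determinacy of mean-payoff games~\cite{EM79,martin_AM75}, $Z$ and hence $D$ is non-empty for a losing MEC, so each recursive call strictly decreases the number of states; the recursion depth is therefore at most $\vert\states\vert$. At each level the processed MECs are pairwise disjoint, so their number — and thus the number of oracle calls per level — is at most $\vert\states\vert$, giving $O(\vert\states\vert^{2})$ calls in total, with MEC decompositions, attractor computations and restrictions all polynomial.

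The crux is correctness, and it rests on one structural observation: since $\edgesNonZero = \edges$, for any EC $\ec^{*}$ every stochastic state of $\ec^{*}$ (a $\playerTwo$ state, by the end-component closure condition) has all its outgoing edges inside $\ec^{*}$; consequently, if $\playerOne$ has a strategy winning the worst-case objective in $\game \reduc \ec^{*}$, playing that strategy keeps every outcome inside $\ec^{*}$ — $\playerOne$ himself never leaves and $\playerTwo$ cannot force leaving — so $\playerOne$ wins from all states of $\ec^{*}$ in any larger subgame $\game \reduc \ec$ with $\ec^{*} \subseteq \ec$. Soundness is then immediate: an emitted $\ec$ passed the oracle test in some recursively obtained sub-MDP, so it satisfies Eq.~\eqref{eq:winningECs}; it is moreover an EC of $\markovProcess$ itself, since strong connectivity and $\mpTrans$-closure of $\ec$ are untouched by the recursion (which only removes states lying on $\playerTwo$-attractors outside $\ec$); hence $\ec \in \winningECs$. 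For completeness, take $\ec^{*} \in \maxWinningECs$ and induct on the recursion: $\ec^{*}$ is an EC of the current sub-MDP, hence inside a unique MEC $\ec_{0}$. If $\ec^{*} = \ec_{0}$, the oracle confirms $\ec_{0} \in \winningECs$ and emits it. If $\ec^{*} \subsetneq \ec_{0}$, then $\ec_{0} \notin \winningECs$ — otherwise $\ec_{0}$ would be a winning EC of $\markovProcess$ strictly containing $\ec^{*}$, contradicting maximality — so the algorithm recurses on $\ec_{0} \setminus D$; by the structural observation, playing an $\ec^{*}$-confined winning strategy traps all outcomes in $\ec^{*}$, which therefore meets neither $Z$ (those outcomes have $\mpay > 0$, so from $\ec^{*}$ it is $\playerOne$, not $\playerTwo$, who wins in $\game \reduc \ec_{0}$) nor its attractor $D$ (those outcomes never reach $Z$); hence $\ec^{*} \subseteq \ec_{0} \setminus D$, it survives as an EC of the recursive sub-MDP, and it remains a maximal winning EC there, so the induction hypothesis applies. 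Finally, distinct emitted sets are disjoint (MECs of a common sub-MDP are disjoint, and across levels an emitted set lies in a branch descending from a losing — hence non-emitted — MEC, disjoint from MECs in sibling branches), which with completeness forces each emitted winning EC to equal the maximal winning EC containing it; thus the emitted family is exactly $\maxWinningECs$.

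The step I expect to be the real work is pinning down the structural observation and its two uses in the completeness proof — in particular being careful that "$\playerOne$ cannot leave $\ec^{*}$" refers to $\playerOne$ playing his $\ec^{*}$-confined winning strategy rather than an arbitrary one, that "$\playerTwo$ cannot leave $\ec^{*}$" genuinely follows from $\edgesNonZero = \edges$ together with the end-component closure condition on stochastic states, and that these yield exactly the disjointness $\ec^{*} \cap D = \emptyset$ that keeps $\ec^{*}$ intact through the recursion. The remaining ingredients — polynomial-time MEC decomposition, determinacy of mean-payoff games, monotonicity of attractors, and closure of $\NPinter$ under Turing reductions — are standard.
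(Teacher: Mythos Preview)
Your argument is correct and follows the paper's closely: both prove soundness and completeness by induction along the recursion, the decisive step being that a winning EC $\ec^{*}$ never meets the set $L_i$ of worst-case losing states of any EC containing it (your ``structural observation'' is exactly the paper's Remark~\ref{ec:rm3}). The minor deviations---phrasing the removed set as the $\playerTwo$-attractor $D$ of $Z$ (harmless, since by prefix-independence the losing region of a mean-payoff game already equals its own $\playerTwo$-attractor, so $D = Z = L_i$), and deriving maximality of the emitted sets from disjointness-plus-completeness rather than directly in the soundness step as the paper does---do not affect correctness.
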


\captionsetup[algorithm]{font=small}

\begin{algorithm}[t]
\caption{$\mwecAlgoName(\markovProcess)$}
\label{alg:ec}
\begin{algorithmic}[1]
\small
\REQUIRE $\markovProcessFull$, with $\graphFull$ such that $\edgesNonZero = \edges$
\ENSURE $\mwecAlgoSet = \maxWinningECs$, the set of maximal winning ECs of $\markovProcess$ 

\vspace{2mm}
\IF{$S = \emptyset$} \label{alg:ec_baseCase}
	\RETURN $\emptyset$
\ELSE \label{alg:ec_indCase}
\STATE Compute $\mwecAlgoSet := \{ \ec_{1}, \ldots{}, \ec_{n}\}$ the maximal EC decomposition of $\markovProcess$\label{alg:ec_maxEC}
   \FORALL{$i = 1, \ldots{}, n$}\label{alg:ec_loop}
    \STATE Compute $L_i\subseteq U_i$ the set of states from which $\playerTwo$ has a strategy to enforce $\mpay \leq 0$ in $\markovProcess \reduc \ec_{i}$, i.e.,\label{alg:ec_Li}
\begin{equation*}
L_i := \left\lbrace  \state\in \ec_{i} \mid \forall\, \strat_{1} \in \strats_{1}(\markovProcess \reduc \ec_{i}),\, \exists\, \play \in \outcomesMDP{\markovProcess \reduc \ec_{i}}{\state}{\strat_{1}},\, \mpay(\play)\leq 0\right\rbrace 
\end{equation*}
\vspace{-5mm}
\IF{$L_i\neq \emptyset$}
       \STATE $\mwecAlgoSet := (\mwecAlgoSet \setminus \{ \ec_{i}\}) \cup \mwecAlgoName(\markovProcess\reduc (\ec_i \setminus L_i))$\label{alg:ec_reccall}
\ENDIF
    \ENDFOR
\ENDIF
\RETURN $\mwecAlgoSet$
\end{algorithmic}
\end{algorithm}

Algorithm $\mwecAlgoName$ can be sketched as follows. Given a non-empty MDP ($S \neq \emptyset$), it first computes its decomposition into maximal end-components\footnote{Given an MDP $\markovProcess$ with a set of ECs $\ecsSet$, an EC $\ec$ is said to be \textit{maximal} in $\markovProcess$ if for all $\ec' \in \ecsSet$, $\ec \subseteq \ec' \Rightarrow \ec = \ec'$. This is not to be confused with the definition of maximal winning ECs, given in this section. In particular, maximal ECs need not be winning in general, whereas maximal winning ECs need not be maximal ECs in the sense we just defined (they only need to be maximal with regard to other \textit{winning} ECs).} (without distinction between winning and losing ECs). This can be obtained in polynomial time~\cite{DBLP:journals/jacm/ChatterjeeH14}. Afterwards, it checks for each of these ECs if it is winning or not, in the sense of Def.~\ref{def:classificationECs}. If the EC is winning, it is now part of the set of claimed maximal winning ECs, denoted $\mwecAlgoSet$ in the algorithm, and the algorithm will not recurse on this set of states. If the EC is losing, then it may still be the case that a sub-EC is winning, as discussed in Sect.~\ref{subsec:mpClassification}. Hence, the algorithm eliminates all worst-case losing states and executes recursively on the induced sub-MDP. It stops recursing on sub-MDPs whenever one is declared winning or empty. Since it suppresses at least one state in each call, the algorithm is ensured to stop. Moreover, the number of calls is polynomial in the size of the MDP. Deciding if an EC is winning or losing requires calling an $\NPinter$ oracle solving the worst-case threshold problem~\cite{ZP96,jurdzinski98,gawlitza2009}.

The remainder of this section is dedicated to the proof of the correctness and completeness of the algorithm, as well as an illustration of its operation. We first state several remarks on its functioning.

\begin{remark}\label{ec:rm1}
    In line~\ref{alg:ec_Li}, we have that $P\reduc U_i$ is a well-defined MDP, since $U_i$ is an EC. Similarly, in line~\ref{alg:ec_reccall}, $P\reduc (U_i\setminus L_i)$ is also a well-defined MDP. Indeed, from all states $s\in (U_i\setminus L_i) \cap \states_{1}$, 
    there exists an edge from $s$ that goes to a state of~$U_i\setminus L_i$, otherwise $s$ would be a losing state (and so would be in $L_i$). Moreover, 
    for all states $\state \in (U_i\setminus L_i) \cap \statesProb$, there is no edge from $s$ that goes in $L_i$ (otherwise $s$ would be in $L_i$) nor in $\states \setminus \ec_{i}$ (otherwise $\ec$ would not be an EC), and therefore the probability distribution $\mpTrans(s)$ is still well-defined on $P\reduc (U_i\setminus L_i)$. Additionally, this sub-MDP still verifies that there is no edge with probability zero.
\end{remark}

\begin{remark}\label{ec:rm2}
    Let $U$ be an EC of $P$, $L$ its set of losing states (as computed by line~\ref{alg:ec_Li}), and $V\subseteq U\setminus L$. 
    Then $V$ is a winning EC of $P\reduc (U\setminus L)$ if and only if $V$ is a winning EC of $P$. This follows from the same reasoning as for Rem.~\ref{ec:rm1}.
\end{remark}

\begin{remark}\label{ec:rm3}
    Let $U$ be a winning EC of $P$, strictly included in a losing EC $V$ of $P$. Let $s\in V$ be a worst-case losing state (i.e., a state from which $\playerOne$ cannot guarantee a strictly positive mean-payoff in $\markovProcess \reduc V$, as defined at line~\ref{alg:ec_Li} of the algorithm). Then we claim that $s\not\in U$. Indeed, suppose the contrary. From $s$, $\playerTwo$ can enforce $\mpay \leq 0$ against any strategy $\lambda_1\in\Lambda_1(P\reduc V)$, and a fortiori could do so against any strategy
    $\lambda_1\in \Lambda_1(P\reduc U)$ (notice that only $\playerOne$ can decide to leave $U$ as $U$ is an EC). Therefore, $U$ would not be winning.
\end{remark}

\begin{proof}
We first show that the algorithm is \textit{sound}, i.e., $\mwecAlgoSet \subseteq \maxWinningECs$. It is done by induction on the size
    of~$P$. If $P$ is empty ($S = \emptyset$), the claim is clear. Otherwise let $U\in \mwecAlgoSet$. There are two cases.
(I) \textit{$U$ is equal to some $U_i$ computed at line~\ref{alg:ec_maxEC} and has never been removed from $\mwecAlgoSet$}. It means that $L_i$ is empty, and by definition of $L_i$, that $U_i$ is winning. Also, $\ec_{i}$ is trivially a \textit{maximal} winning EC as it belongs to the maximal EC decomposition of $\markovProcess$.
(II) \textit{$U$ has been added at line~\ref{alg:ec_reccall} as the result of some recursive call $\mwecAlgoName(P\reduc (U_i\setminus L_i))$ for some
          maximal EC $U_i$ of~$P$}. Since $L_i\neq \emptyset$, the set $U_i\setminus L_i$ is strictly smaller than~$S$, and
          by induction hypothesis, $U$ is a winning EC of $P\reduc (U_i\setminus L_i)$. By Rem.~\ref{ec:rm2}, it is also a winning EC of~$P$. It remains to show that $U$ is a \emph{maximal} winning EC of~$P$. Suppose that it is not the case. Then there exists a
          strict superset $U'$ of $U$ which is a winning EC of~$P$. Clearly, $U'\subseteq U_i$ since $U_i$ is a maximal
          EC of $P$, and maximal ECs are pairwise disjoint. Morevoer, $U'$ is a subset of $U_i\setminus L_i$ by Rem.~\ref{ec:rm3}. 
          By Rem.~\ref{ec:rm2}, it is therefore a winning EC of $P\reduc (U_i\setminus L_i)$, which contradicts the maximality
          of $U$ in~$P\reduc (U_i\setminus L_i)$.

We now establish that the algorithm is \textit{complete}, i.e., $\maxWinningECs \subseteq \mwecAlgoSet$. Again, it is proved by induction on the size of~$P$. If~$P$ is empty, then
    the claim is obviously true. Now, suppose that $P$ is non-empty, and let $U\in \maxWinningECs$. There are two cases. (I) \textit{$U$ is a maximal EC of $P$}. In that case, it will be computed at line~\ref{alg:ec_maxEC} and never removed from 
          $\mwecAlgoSet$ (because the set of losing states will be empty as $U$ is winning).
(II) \textit{$U$ is not a maximal EC in $\markovProcess$}. Therefore there exists some maximal EC $U_i$ of $P$, which is losing and strictly contains~$U$. 
          Let $L_i$ be the non-empty set of worst-case losing states of $U_i$ (as computed by line~\ref{alg:ec_Li}). We have to show that
          $U$ is a maximal winning EC of $P\reduc (U_i\setminus L_i)$,  in which case we could conclude by induction
          hypothesis, i.e., $U$ would be returned by the recursive call $\mwecAlgoName(P\reduc (U_i\setminus L_i))$. By Rem.~\ref{ec:rm3}, $U$ and $L_i$ are disjoint, and therefore $U\subseteq (U_i\setminus L_i)$. By Rem.~\ref{ec:rm1} and Rem.~\ref{ec:rm2}, $U$ is a winning
          EC of $P\reduc (U_i\setminus L_i)$, since it is a winning EC of $P$. It remains
          to show that $U$ is a \emph{maximal} winning EC of $P\reduc (U_i\setminus L_i)$. Suppose that there exists a strict
          superset $U'$ of $U$ such that $U'$ is a winning EC of $P\reduc (U_i\setminus L_i)$. By Rem.~\ref{ec:rm2}, $U'$ would also be
          a winning EC of~$P$, which contradicts that $\ec \in \maxWinningECs$ by definition of $\maxWinningECs$ as the set of maximal winning ECs. It implies that $U$ is a maximal
          winning EC of $P\reduc (U_i\setminus L_i)$ and thus that $\ec \in \mwecAlgoSet$.

Finally, consider the complexity. Assume that we have an $\NPinter$ oracle solving the worst-case threshold problem~\cite{ZP96,jurdzinski98,gawlitza2009} and called in line~\ref{alg:ec_Li}. The number of recursive calls to $\mwecAlgoName$ is linear in~$|S|$,
        the number of states of $P$, because in the loop of line~\ref{alg:ec_loop}, the sets
        $U_i$ are pairwise disjoint, and because each call is executed after eliminating at least one state. Moreover, the maximal EC decomposition of $P$ can be computed in~$O(|S|^2)$~\cite{DBLP:journals/jacm/ChatterjeeH14}. Therefore, algorithm $\mwecAlgoName$ makes a polynomial number of calls to an $\NPinter$ oracle, which proves the claim.
\end{proof}

Consider the execution of algorithm $\mwecAlgoName$ on the MDP described in Fig.~\ref{fig:mp_winningECsComputationExample}. In its first call, it computes the maximal EC decomposition $\mwecAlgoSet = \{\ec_{1}, \ec_{2}\}$. Now, for $\ec_{1}$, we have that $L_{1}$ is empty and thus $\ec_{1}$ remains in $\mwecAlgoSet$. On the contrary, for $\ec_{2}$, we have that $L_{2} = \{\state_{1}, \state_{3}, \state_{4}\}$. Hence the algorithm suppresses $\ec_{2}$ from $\mwecAlgoSet$ and recurses on the sub-MDP $\markovProcess \reduc (\ec_{2} \setminus L_{2}) = \markovProcess \reduc \{\state_{5}\}$. There, the maximal EC decomposition gives the unique EC $\ec_{3}$ which is winning since $L_{3} = \emptyset$, and thus remains in $\mwecAlgoSet$. The algorithm ends with $\mwecAlgoSet = \{\ec_{1}, \ec_{3}\}$. Clearly we have that $\mwecAlgoSet = \maxWinningECs$ as proved before.

\subsection{\textbf{Winning end-components are almost-surely reached in the long-run}}
\label{subsec:mpInsideLosing}

Recall that Lemma~\ref{lem:ecsProbaOne} states that under any arbitrary strategy $\strat_{1} \in \strats_{1}$, the set of infinitely visited states of the outcome of the MDP $\markovProcess[\strat_{1}] = \game[\strat_{1},\stratStoch]$ is almost-surely equal to an EC. In this section, we refine this result and show that under any {\em finite-memory} strategy $\strat_{1}^{f} \in \stratsFinite_{1}$ {\em satisfying the $\BWC$ problem}, the set of infinitely visited states is almost-surely equal to a \textit{winning} EC. In other words, the long-run probability of \textit{negligible states}, defined as $\negligibleStates = \states \setminus \bigcup_{\ec \in \maxWinningECs} \ec = \states \setminus \bigcup_{\ec \in \winningECs} \ec$, is zero.

\begin{lemma}
\label{lem:winningECsProbaOne}
Let $\gameFull$ be a two-player game, $\graphFull$ its underlying graph, $\stratStoch \in \stratsMemoryless_{2}$ a memoryless stochastic model of $\playerTwo$, $\markovProcess = \game[\stratStoch]$ the resulting MDP and $\initState \in \states$ the initial state. Let $\strat^{f}_{1} \in \stratsFinite_{1}$ be a finite-memory strategy of $\playerOne$ that satisfies the $\BWC$ problem for thresholds $(0,\, \thresholdExp) \in \rat^{2}$. Then, we have that $\proba^{\markovProcess[\strat_{1}^{f}]}_{\initState}\left( \left\lbrace \play \in \outcomesMC{\markovProcess[\strat_{1}^{f}]}{\initState} \mid \infVisited{\play} \in \winningECs \right\rbrace \right) = 1$.
\end{lemma}

\begin{proof}
Let $\strat^{f}_{1} \in \stratsFinite_{1}$ be a finite-memory $\BWC$ satisfying strategy. By Lemma~\ref{lem:ecsProbaOne}, we have the claim is verified for $\ec \in \ecsSet = \winningECs \cup \losingECs$. It remains to show that the probability of having a losing EC, i.e., an EC in $\losingECs$, is zero.
By contradiction, assume there exists some $\ec_{\textsc{l}} \in \losingECs$ such that 
$\proba^{\markovProcess[\strat_{1}^{f}]}_{\initState}\left(\left\lbrace  \play \in \outcomesMC{\markovProcess[\strat_{1}^{f}]}{\initState} \mid \infVisited{\play} = \ec_{\textsc{l}}\right\rbrace \right) > 0$.
Since $\strat^{f}_{1}$ is finite-memory, we have that $\markovChain = \markovProcess[\strat^{f}_{1}]$ is a \textit{finite} MC. Thus, we consider the bottom strongly-connected components (BSCCs) of $\markovChain$ and the contradiction hypothesis implies that some outcomes of $\outcomesMC{\markovChain}{\initState}$ will be trapped in a BSCC corresponding to $\ec_{\textsc{l}}$ (i.e., this BSCC is reachable with non-zero probability in $\markovChain$), and visit all its states infinitely often. Since $\ec_{\textsc{l}}$ is losing, this BSCC induces plays where the mean-payoff is not strictly positive. Indeed, strategy $\stratStoch$ of $\playerTwo$ suffices to produce consistent outcomes that are worst-case losing thanks to Def.~\ref{def:classificationECs} (as only the support matters for the worst-case requirement, not the exact probabilities). By prefix-independence of the mean-payoff value function, we obtain the existence of plays of $\markovChain$, starting in $\initState$, and inducing a mean-payoff that does not satisfy the worst-case threshold. This shows that $\strat^{f}_{1}$ is not winning for the $\BWC$ problem and by contradiction, concludes our proof.
\end{proof}

The consequence of this statement is that \textit{edges involving negligible states do not contribute to the overall expectation} of finite-memory strategies satisfying the $\BWC$ problem. Based on this, we propose in Sect.~\ref{subsec:mpGlobal} a modification of the MDP $\markovProcess = \game[\stratStoch]$ that helps us synthesize satisfying strategies when they exist.

\begin{remark}
The proof of Lemma~\ref{lem:winningECsProbaOne} relies on the finite memory of the strategy. Similar reasoning cannot be applied if the strategy of $\playerOne$ uses infinite memory. Indeed, the Markov chain $\markovChain = \markovProcess[\strat^{f}_{1}]$ becomes infinite and we cannot base our analysis on BSCCs anymore (as they need not exist in general, outcomes cannot be trapped in a BSCC). As a matter of fact, we prove in Sect.~\ref{subsec:mpInfiniteMemory} that infinite-memory strategies may benefit from negligible states forming losing ECs in some cases, by staying in them forever with a non-zero probability, and thus it is not possible to neglect them in the overall expectation.
\end{remark}

\smallskip\noindent\textbf{Illustration.} Consider $\ec_{1}$ on Fig.~\ref{fig:mpRunningExample}. By Def.~\ref{def:classificationECs}, this EC is losing as always taking the edge of weight $-1$ is a winning strategy for $\playerTwo$ in $\gameNonZero \reduc \ec_{1}$. The optimal expectation achievable in $\markovProcess \reduc \ec_{1}$ by $\playerOne$ is $4$: this is higher than what is achievable in both $\ec_{2}$ and $\ec_{3}$. Note that there exists no winning EC included in $\ec_{1}$. By Lemma~\ref{lem:ecsProbaOne}, we know that any strategy of $\playerOne$ will see its expectation bounded by the maximum between the optimal expectations of the ECs $\ec_{1}$, $\ec_{2}$ and $\ec_{3}$. Lemma~\ref{lem:winningECsProbaOne} further refines this bound by restricting it to the maximum between the expectations of $\ec_{2}$ and $\ec_{3}$. Indeed, it states that $\playerOne$ cannot benefit from the expected value of $\ec_{1}$ while using finite memory, as being trapped in~$\ec_{1}$ with non-zero probability induces the existence of outcomes losing for the worst-case (here, outcomes that always take the $-1$ edge). Since $\ec_{1}$ neither helps for the worst-case nor for the expectation, there is no point in playing inside it and $\playerOne$ may as well cross it directly and try to maximize its expectation using the winning ECs, $\ec_{2}$ and $\ec_{3}$. The set of negligible states in $\markovProcess$ is $\negligibleStates = \states \setminus (\ec_{2} \cup \ec_{3}) = \{\state_{1}, \state_{2}, \state_{3}, \state_{4}, \state_{5}\}$.

In the game depicted in Fig.~\ref{fig:mp_winningECsComputationExample}, we already observed that $\ecsSet = \{\ec_{1}, \ec_{2}, \ec_{3}\}$, $\winningECs = \maxWinningECs = \{\ec_{1}, \ec_{3}\}$ and $\losingECs = \{\ec_{2}\}$. Consider the negligible state $\state_{1} \in \negligibleStates = \ec_{2} \setminus \ec_{3}$. As a consequence of Lemma~\ref{lem:winningECsProbaOne}, we have that a finite-memory strategy of $\playerOne$ may only take the edge $(\state_{1}, \state_{3})$ finitely often in order to ensure the worst-case requirement. Indeed, the losing outcome $(\state_{1}\state_{3}\state_{4})^{\omega}$ would exist (while of probability zero) if $\playerOne$ were to play this edge infinitely often. Therefore, it is clear that $\playerOne$ can only ensure that $\ec_{3}$ is reached with a probability arbitrarily close to one, and not equal to one, because at some point, he has to switch to edge $(\state_{1}, \state_{2})$ (after a bounded time since $\playerOne$ uses a finite-memory strategy).

\subsection{\textbf{Winning end-component with non-zero probabilities: combined strategy}}
\label{subsec:mpInsideStrongly}

In this section, we take a closer look at what happens inside a winning EC \textit{where the stochastic model assigns non-zero probabilities} to all possible edges. We will show how to deal with edges of probability zero in Sect.~\ref{subsec:mpInsideWeakly}. For the sake of readability, we make Assumption~\ref{assump:uniqueWEC}. Obviously, similar reasoning can be applied to all the such winning ECs in a larger game.

\begin{assumption}
\label{assump:uniqueWEC}
Let $\gameFull$ be a two-player game, $\graphFull$ its underlying graph, $\stratStoch \in \stratsMemoryless_{2}$ a memoryless stochastic model of $\playerTwo$, and $\markovProcess = \game[\stratStoch] = (\graph, \statesOne, \statesProb = \statesTwo, \mpTrans = \stratStoch)$ the resulting MDP. We assume that $\game$ is reduced to a unique maximal winning EC, i.e., $\maxWinningECs = \{\states\}$, and that $\gameNonZero = \game$, i.e., the set of edges of probability zero, $\edges \setminus \edgesNonZero$, is empty.
\end{assumption}

Our claim is that inside such a winning EC, $\playerOne$ has a \textit{finite-memory} strategy that simultaneously (a) ensures the worst-case requirement, and (b) yields an expected value which is $\varepsilon$-close to the maximal expectation of the EC. Consequently, we establish Theorem~\ref{thm:insideWinning} and Corollary~\ref{cor:insideWinning}.
\begin{theorem}
\label{thm:insideWinning}
Let $\gameFull$ be a two-player game reduced to a unique winning EC, $\graphFull$ its underlying graph, $\stratStoch \in \stratsMemoryless_{2}$ a memoryless stochastic model of $\playerTwo$ such that $\edgesNonZero = \edges$, $\markovProcess = \game[\stratStoch] = (\graph, \statesOne, \statesProb = \statesTwo, \mpTrans = \stratStoch)$ the resulting MDP, $\initState \in \states$ an initial state and $\optimalExp \in \rat$ the maximal expected value achievable by $\playerOne$ in $\markovProcess$. Then, for all~$\varepsilon > 0$, there exists a finite-memory strategy of $\playerOne$ that satisfies the $\BWC$ problem for the thresholds pair $(0,\, \optimalExp - \varepsilon)$.
\end{theorem}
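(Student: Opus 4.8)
The plan is to explicitly construct the combined strategy $\stratComb$ sketched in Section~\ref{subsec:mpApproach} and prove that for a suitable choice of its two parameters $\stepsExp$ (length of phase $\typeA$) and $\stepsWC$ (length of phase $\typeB$) it meets both requirements. First I would fix the ingredients guaranteed to exist on the subgame $\game$ under Assumption~\ref{assump:uniqueWEC}: a pure memoryless worst-case optimal strategy $\stratWC$ (which exists since $\game$ is a single winning EC, so $\playerOne$ can secure $\mpay > 0$ from every state against any $\strat_2 \in \strats_2(\gameNonZero) = \strats_2(\game)$), and a pure memoryless expectation-optimal strategy $\stratExp$ in the MDP $\markovProcess$ achieving $\optimalExp$. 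Since all weights are integers, let me record that $\stratWC$ in fact guarantees that along any window of length $\ell$ the running sum of weights is at least some lower bound; more precisely, because an optimal worst-case strategy can be chosen so that every cycle it allows has positive (hence $\geq 1$) total weight, after at most $\statesSize$ steps any accumulated deficit is being repaid. The strategy $\stratComb$ then alternates: in phase $\typeA$ play $\stratExp$ for exactly $\stepsExp$ steps while accumulating $\cmbSum \in \integ$, the sum of traversed weights; at the end, if $\cmbSum > 0$, start a fresh phase $\typeA$; if $\cmbSum \leq 0$, switch to phase $\typeB$, play $\stratWC$ for $\stepsWC$ steps, then return to phase $\typeA$. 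This is finite-memory: the memory needs only to store the current phase, a step counter bounded by $\stepsExp + \stepsWC$, and the partial sum $\cmbSum$, whose absolute value is bounded by $\stepsExp \cdot \largestW$ — so the memory is pseudo-polynomial, as promised by Theorem~\ref{thm:mp_memoryRequirements}.

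For requirement~(i), the worst-case bound, I would view any outcome as a concatenation of \emph{periods}, each period being either a lone phase $\typeA$ (when $\cmbSum > 0$) or a phase $\typeA$ followed by a phase $\typeB$ (when $\cmbSum \leq 0$). A lone-$\typeA$ period of length $\stepsExp$ contributes total weight $\geq 1$, hence average $\geq 1/\stepsExp$. For an $\typeA+\typeB$ period: the $\typeA$ part contributes $\cmbSum \leq 0$, but $\cmbSum \geq -\stepsExp\largestW$ always; choosing $\stepsWC = \stepsWC(\stepsExp)$ large enough — polynomial in $\stepsExp$, $\largestW$, $\statesSize$ — the phase-$\typeB$ part, run under $\stratWC$, repays this deficit and still leaves total weight $\geq 1$ over the whole period, so the period average is $\geq 1/(\stepsExp + \stepsWC)$. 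Here I use that $\stratWC$ is worst-case optimal on all of $\game$, so even an antagonistic $\playerTwo$ cannot prevent this repayment. Since every period has length at most $\stepsExp + \stepsWC$, and every prefix of a play decomposes into complete periods plus a bounded remainder, the $\liminf$ of the running mean-payoff is bounded below by $1/(\stepsExp + \stepsWC) > 0$, strictly, which is exactly Eq.~\eqref{eq:thresholdWC} with $\thresholdWC = 0$. This argument is robust to $\playerTwo$ playing arbitrarily, since both $\stratExp$ and $\stratWC$ keep the play inside the EC (Assumption~\ref{assump:uniqueWEC} makes $\edgesNonZero = \edges$, so $\playerTwo$ can never escape).

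For requirement~(ii), the expectation bound, I would analyze the Markov chain $\markovProcess[\stratComb] = \game[\stratComb, \stratStoch]$. The expected contribution of phase-$\typeA$ segments, per step, converges to $\optimalExp$ as $\stepsExp \to \infty$ by optimality of $\stratExp$; the loss comes from phase-$\typeB$ segments. The key is to bound the long-run \emph{fraction} of time spent in phase $\typeB$. A phase $\typeB$ is entered only when a length-$\stepsExp$ run of $\stratExp$ yields $\cmbSum \leq 0$; since $\optimalExp > 0$, this is a deviation of the empirical mean by at least $\optimalExp$ from its expectation, and by the concentration results for mean-payoff in finite Markov chains (Chernoff/Hoeffding-type bounds, \cite{tracol_ORL2009,glynn_SPL2002}) this has probability at most $e^{-c\,\stepsExp}$ for some constant $c > 0$ depending only on $\markovProcess[\stratExp]$. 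Hence the expected number of phase-$\typeB$ steps per period is at most $\stepsWC \cdot e^{-c\,\stepsExp}$, while the expected number of phase-$\typeA$ steps per period is $\geq \stepsExp$. Since $\stepsWC$ only needs to grow polynomially in $\stepsExp$, the ratio $\stepsWC\, e^{-c\stepsExp} / \stepsExp \to 0$, so the fraction of time in phase $\typeB$ vanishes, and the overall expected mean-payoff of $\stratComb$ tends to $\optimalExp$. Making this rigorous requires a renewal-type argument: cut the play at the (a.s.\ infinitely many) period boundaries, apply the strong law / ergodic theorem to the i.i.d.-over-periods (or at least regenerative) structure, and push $\stepsExp$ large enough that the expected value exceeds $\optimalExp - \varepsilon$.

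I expect the main obstacle to be exactly this last step: controlling the expectation rather than an almost-sure value. One has to be careful that the limit defining $\mpay$ is a $\liminf$ (so Fatou-type inequalities go the right way, which they do — the $\liminf$ of the partial means is, on almost every path, the period-average limit, which concentrates near $\optimalExp$), and one has to quote the mean-payoff concentration bound in a form that applies to the finite MC $\markovProcess[\stratExp]$ uniformly over the (finitely many) starting states of the EC; the dependence of the constant $c$ on the structure of the chain, and the fact that $\stepsWC(\stepsExp)$ needed for~(i) is only polynomial in $\stepsExp$ whereas the bad probability decays exponentially, is what makes the two requirements simultaneously satisfiable. The construction of $\stratWC$ as a strategy whose guaranteed running-sum deficit is bounded (rather than merely having nonnegative $\liminf$ mean) also deserves a short lemma, invoking memoryless determinacy of mean-payoff games and the integrality of weights.
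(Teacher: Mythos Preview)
Your proposal is correct and follows essentially the same route as the paper: the same combined strategy $\stratComb$, the same period decomposition for the worst-case (each period has integer sum $\geq 1$ and bounded length, hence $\mpay \geq 1/(\stepsExp+\stepsWC)$), and the same expectation argument (exponential concentration for the length-$\stepsExp$ empirical mean under $\stratExp$ via \cite{glynn_SPL2002,tracol_ORL2009}, beating the merely polynomial growth of $\stepsWC(\stepsExp)$). One detail the paper makes explicit that you only gesture at: to invoke the Hoeffding-type bound from \cite{tracol_ORL2009}, $\stratExp$ must be chosen so that the Markov chain $\markovProcess[\stratExp]$ is \emph{unichain}; the paper argues this is always possible inside a single EC without loss of optimality, and you should record that step.
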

The remainder of this section is dedicated to the proof of Thm.~\ref{thm:insideWinning}. It is a remarkably positive result as it essentially states that $\playerOne$ can \textit{guarantee both} the worst-case and the expected value thresholds \textit{without sacrifying any performance} (in terms of play values) except for some arbitrarily small $\varepsilon$. The key idea is to build a finite-memory strategy based on careful alternation between two memoryless strategies: one which is optimal for the worst-case, and one which is optimal for the expected value. The proof requires deep understanding of the limiting properties of Markov chains, such as the rate of convergence towards a stationary distribution. Nevertheless, we provide an intuitive sketch of the combined strategy and illustrate it on the running example in the following.

\begin{corollary}
\label{cor:insideWinning}
In a game $\game$ reduced to a winning EC, $\playerOne$ has a strategy for the $\BWC$ problem for thresholds $(0, \thresholdExp)$ against a stochastic model $\stratStoch \in \stratsMemoryless_{2}$ such that $\edgesNonZero = \edges$ if and only if the optimal expected value in $\game[\stratStoch]$ is strictly greater than $\thresholdExp$.
\end{corollary}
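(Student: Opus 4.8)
The plan is to obtain both implications directly from Theorem~\ref{thm:insideWinning}, combined with the classical fact that in the MDP $\markovProcess = \game[\stratStoch]$ the value $\optimalExp$ is genuinely the \emph{maximum} expected mean-payoff over \emph{all} strategies of $\playerOne$ (indeed attained by a memoryless one~\cite{filar1997}). In particular, $\expect_{\initState}^{\markovProcess[\strat_{1}]}(\mpay) \leq \optimalExp$ for every strategy $\strat_{1} \in \strats_{1}$, hence a fortiori for every finite-memory one.

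For the \emph{if} direction, assume $\optimalExp > \thresholdExp$. I would set $\varepsilon := \optimalExp - \thresholdExp > 0$ and apply Theorem~\ref{thm:insideWinning}, which yields a finite-memory strategy $\strat_{1} \in \stratsFinite_{1}$ of $\playerOne$ satisfying the $\BWC$ problem for the thresholds pair $(0,\, \optimalExp - \varepsilon) = (0,\, \thresholdExp)$; this is exactly what is required. For the \emph{only if} direction, assume $\playerOne$ has a strategy $\strat_{1} \in \stratsFinite_{1}$ satisfying the $\BWC$ problem for $(0, \thresholdExp)$. Then Eq.~\eqref{eq:thresholdExp} gives $\expect_{\initState}^{\game[\strat_{1}, \stratStoch]}(\mpay) > \thresholdExp$. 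Since $\game[\strat_{1}, \stratStoch] = \markovProcess[\strat_{1}]$ and $\strat_{1}$ is one particular strategy in $\markovProcess$, the observation above yields $\thresholdExp < \expect_{\initState}^{\markovProcess[\strat_{1}]}(\mpay) \leq \optimalExp$, i.e., $\optimalExp > \thresholdExp$.

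I expect no real obstacle: the corollary is merely a repackaging of Theorem~\ref{thm:insideWinning} (whose proof, via the combined strategy and the mixing/convergence estimates for Markov chains, is where all the difficulty lies) together with the optimality of memoryless strategies for expected mean-payoff in MDPs. The only point worth a sentence of care is that $\optimalExp$ is an upper bound on the expectation of \emph{every} strategy of $\playerOne$, finite-memory or not — which is immediate since finite-memory strategies form a subclass of general strategies and the supremum defining $\optimalExp$ is in fact attained.
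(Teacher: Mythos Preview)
Your proposal is correct and essentially identical to the paper's proof: both directions are handled exactly as you describe, with the paper choosing $\varepsilon \leq \optimalExp - \thresholdExp$ (you pick equality, which is fine) for the ``if'' direction and invoking optimality of $\optimalExp$ for the ``only if'' direction.
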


\begin{proof}
Consider the left-to-right implication. Assume $\strat_{1}$ is the $\BWC$ strategy. By definition, the optimal expected value $\optimalExp$ is at least equal to $\expect_{\initState}^{\markovProcess[\strat_{1}]}(\mpay)$, which is strictly greater than $\thresholdExp$ by hypothesis. Now consider the converse implication. Let $\optimalExp$ be the optimal expected value. By hypothesis, $\optimalExp > \thresholdExp$. By Thm.~\ref{thm:insideWinning}, for any $\varepsilon > 0$, there exists a strategy $\strat_{1} \in \stratsFinite_{1}$ that satisfies the $\BWC$ problem for thresholds $(0, \optimalExp - \varepsilon)$. In particular, it is possible to choose $\varepsilon \leq \optimalExp - \thresholdExp$ to obtain the claim and conclude the proof.
\end{proof}

\smallskip\noindent\textbf{Individual requirements - memoryless strategies.} First, consider the two requirements - worst-case and expected value - independently. By definition of winning ECs and the hypothesis that $\game = \gameNonZero$, $\playerOne$ has a strategy to guarantee a strictly positive mean-payoff against any strategy of the adversary in the game. As discussed in Sect.~\ref{sec:preliminaries}, pure memoryless optimal strategies exist for $\playerOne$~\cite{liggett_SR69,EM79}. In the following, we denote $\stratWC \in \stratsPureMemoryless_{1}$ such a strategy, and~$\optimalWC$ the optimal mean-payoff value, i.e., the minimal mean-payoff that $\stratWC$ ensures against any strategy of $\playerTwo$. Hence, we have that
$\optimalWC = \inf_{\strat'_{2} \in \strats_{2}} \left\lbrace \mpay(\play) \mid \play \in \outcomesGame{\game}{\initState}{\stratWC}{\strat'_{2}}\right\rbrace = \sup_{\strat_{1} \in \strats_{1}} \inf_{\strat'_{2} \in \strats_{2}} \left\lbrace \mpay(\play) \mid \play \in \outcomesGame{\game}{\initState}{\strat_{1}}{\strat'_{2}}\right\rbrace > 0$,
since we are in a winning EC for $\thresholdWC = 0$.

Similarly, we define a pure memoryless strategy maximizing the expected value in the MDP $\markovProcess = \game[\stratStoch]$ induced by applying the memoryless stochastic model $\stratStoch \in \stratsMemoryless_{2}$ on $\game$. We denote this strategy $\stratExp \in \stratsPureMemoryless_{1}$, and we write the associated expectation as $\optimalExp = \expect_{\initState}^{\markovProcess[\stratExp]}(\mpay)$. Notice that we manipulate equivalently strategies on the game and on the MDP thanks to their shared underlying graph (Sect.~\ref{subsec:mpAssumptions}). The existence of a pure memoryless optimal strategy follows from~\cite{filar1997} (see Sect.~\ref{sec:preliminaries}). However, we here require, \textit{without loss of generality}, that $\stratExp$ is chosen (there may exist several pure memoryless optimal strategies, all yielding the same expected value, by definition) in order to satisfy an additional property: we want that the Markov chain $\markovProcess[\stratExp]$ be \textit{unichain}, i.e., containing a unique recurrent class (i.e., a unique bottom strongly-connected component when considering edges which are assigned non-zero probability in the MC), and possibly some transient states. This will be useful later to apply needed technical results (Lemma~\ref{lem:mp_swecExpDecrease}). It is always possible to choose such a strategy. Intuitively, $\playerTwo$ cannot force the existence of multiple recurrent classes in the MC as it would contradict the fact that we are inside an EC (by definition, $\playerOne$ must be able to force the visit of any state with probability one). Hence, it remains to argue that $\playerOne$ has no interest in inducing multiple recurrent classes, as he cannot increase the expected value by doing so. This is clear since either the different recurrent classes yield the same expectation, in which case one suffices, or they yield different expectations, in which case an optimal strategy like $\stratExp$ will only use the one that produces the maximal expectation ($\playerOne$ has the power to restrict the MC to the class of his choice by definition of EC).

In general, one cannot satisfy the $\BWC$ problem by following only $\stratWC$ or only $\stratExp$, although they suffice when their respective requirements are considered independently. Fortunately, it is possible to build upon those two strategies in order to achieve simultaneous satisfaction with a combined finite-memory strategy.

\bigskip\noindent\textbf{Defining a combined strategy.} Based on the existence of strategies $\stratWC$ and $\stratExp$, we define a pure finite-memory strategy $\stratComb \in \stratsPureFinite_{1}$ that carefully and dynamically alternates between the two memoryless ones to ensure satisfaction of the $\BWC$ problem. Our strategy is \textit{parameterized by two naturals}: $\stepsExp$ and~$\stepsWC$.

\begin{definition}
\label{def:mp_stratComb}
In a game $\game$ satisfying Assumption~\ref{assump:uniqueWEC}, we define the {\em combined strategy} $\stratComb \in \stratsPureFinite_{1}$ as follows.
\begin{itemize}
\item[$\typeA$] Play $\stratExp$ for $\stepsExp$ steps and memorize $\cmbSum \in \integ$, the sum of weights encountered during these $\stepsExp$ steps.

\item[$\typeB$] If $\cmbSum > 0$, then go to $\typeA$. 

Else, play $\stratWC$ during $\stepsWC$ steps then go to $\typeA$.
\end{itemize}
\end{definition}
By step we mean taking any edge in the game, be it from a state of $\playerOne$ or $\playerTwo$. We define \textit{periods} as sequences played from the beginning of a phase of type $\typeA$ or $\typeB$ up to its end, i.e., the beginning of a new period. Intuitively, in a period of type $\typeA$, the strategy mimics the optimal expectation strategy. By playing~$\stratExp$ long enough, we can ensure that the mean-payoff obtained during the period is very close to $\optimalExp$, with probability close to one (Lemma~\ref{lem:mp_swecExpDecrease}). Still, we need to ensure the worst-case threshold in all cases. This may in general not be ensured by periods of type $\typeA$. Hence, we keep a memory of the running sum of weights in the period. This requires a finite number of bits of memory as the sum takes values in $\{-\stepsExp\cdot\largestW, -\stepsExp\cdot\largestW+1, \ldots{}, \stepsExp\cdot\largestW-1, \stepsExp\cdot\largestW\}$. If $\cmbSum > 0$ after the $\stepsExp$ steps, then the mean-payoff over the period satisfies the worst-case requirement (Eq.~\eqref{eq:thresholdWC}) and the strategy can immediately start a new period of type~$\typeA$. Otherwise, it is necessary to compensate in order to satisfy the worst-case requirement. In that case, the strategy mimics the optimal worst-case strategy $\stratWC$ for $\stepsWC$ steps. Such a strategy guarantees that cycles in the outcome have a strictly positive sum of weights since $\optimalWC > 0$, as discussed before. As $\cmbSum$ is lower bounded after $\stepsExp$ steps, there exists a value of $\stepsWC$ such that the total sum of weights (and thus the mean-payoff) over periods $\typeA$ + $\typeB$ is strictly positive. Because all weights are integers, we further deduce that the sum over a period is at least equal to one. By the boundedness of the length of a period, we thus prove that the overall mean-payoff along a play stays \textit{strictly} positive. Hence $\stratComb$ satisfies Eq.~\eqref{eq:thresholdWC}.

While this sketch is sufficient to see that the worst-case requirement is satisfied by strategy $\stratComb$, proving that we can choose $\stepsExp$ and $\stepsWC$ such that its expected value is $\varepsilon$-close to $\optimalExp$ is more involved. Intuitively, periods of type~$\typeB$ must not happen too frequently, nor be too long, in order to have a boundable impact on the overall expectation. The cornerstone to achieve such a moderate impact of periods of type $\typeB$ resides in the fact that a linear increase in $\stepsExp$ produces an exponential decrease in the need for a period of type~$\typeB$ (Lemma~\ref{lem:mp_swecExpDecrease}) whereas it only requires a linear increase in $\stepsWC$ to ensure the worst-case requirement (see Def.~\ref{def:mp_stepsWC} and Lemma~\ref{lem:mp_swecWC}). Note that the need for a decreasing contribution of periods of type $\typeB$ to the overall expectation explains why we need to track the current sum $\cmbSum$ and cannot settle for a simpler strategy that would play periods $\typeA$ and $\typeB$ in strict alternation (cf. Rem.~\ref{rem:mp_swecStrictNotSuf}).

In a nutshell, we claim that under Assumption~\ref{assump:uniqueWEC}, it is always possible to find values for constants $\stepsExp$ and $\stepsWC$ such that strategy $\stratComb$ satisfies the $\BWC$ problem for $(0,\, \optimalExp - \varepsilon)$, as stated in Theorem~\ref{thm:insideWinning}. Before proving it, we illustrate the combined strategy and introduce some intermediary technical results. Notice that implementing $\stratComb$ only requires finite memory as strategies $\stratExp$ and $\stratWC$ are memoryless, constants $\stepsExp$ and $\stepsWC$ have finite values, and $\cmbSum$ takes a finite number of values.

\smallskip\noindent\textbf{Illustration.} Consider the subgame $\gameNonZero \reduc \ec_{3} = \game \reduc \ec_{3}$ of the game in Fig.~\ref{fig:mpRunningExample} and the initial state $\initState = \state_{10}$. The worst-case requirement can be satisfied, that is why the EC is winning. Indeed, always choosing to go to $\state_{9}$ when in $\state_{10}$ is an optimal memoryless worst-case strategy $\stratWC$ that guarantees a mean-payoff $\optimalWC = 1$. Its expectation is $\expect_{\initState}^{\game[\stratWC,\stratStoch]}(\mpay) = 1$. On the other hand, the strategy $\stratExp$ that always selects $\state_{11}$ is optimal regarding the expected value criterion: it induces expectation\footnote{Given an irreducible MC $\markovChainFull$, with $\graphFull$, one can compute its limiting stationary distribution by finding the unique probability vector $\mathbf{v} \in \left[ 0, 1\right] ^{\vert\states\vert}$ such that $\mathbf{v}\mathbf{P}_{\mcTrans} = \mathbf{v}$, where $\mathbf{P}_{\mcTrans}$ denotes the transition matrix derived from $\mcTrans$. The expected mean-payoff can then be obtained by multiplying the row vector $\mathbf{v}$ by the column vector $\mathbf{e} \in \reals^{\vert\states\vert}$ that contains the respective expected weights over outgoing edges for each state. That is: $\forall\, \state \in \states$, $\mathbf{e}(\state) = \sum_{\state' \in \states} \mcTrans(\state)(\state') \cdot \weight((\state, \state'))$, and $\expect^{\markovChain}_{\state} = \mathbf{v}\cdot\mathbf{e}$.} $\optimalExp = \Big(0 + \big(1/2 \cdot 9 + 1/2 \cdot (-1)\big)\Big)/2 = 2$ against the stochastic model $\stratStoch$. However, it can only guarantee a mean-payoff of value $-1/2$ in the worst-case.

\begin{wrapfigure}{r}{60mm}
\vspace{-4mm}
  \centering   
  \scalebox{0.6}{\begin{tikzpicture}[->,>=latex,shorten >=1pt,auto,node
    distance=2.5cm,bend angle=45,font=\normalsize]
    \tikzstyle{p1}=[draw,circle,text centered,minimum size=10mm]
    \tikzstyle{p2}=[draw,rectangle,text centered,minimum size=10mm]
    \tikzstyle{p3}=[draw,diamond,text centered,minimum size=24mm,text width=14mm]
    \tikzstyle{empty}=[]
    \node[p3] (1) at (0,0) {$\state_{10}$\\$\cmbSum > 0$};
    \node[p3] (2) at (-4,0) {$\state_{11}$};
    \node[p3] (3) at (-4,4) {$\state_{10}$\\$\cmbSum \leq 0$};
    \node[p3] (4) at (0,4) {$\state_{9}$};
    \node[empty] (a) at (-4.2, 1.3) {$\frac{1}{2}$};
    \node[empty] (b) at (-3.4, -0.9) {$\frac{1}{2}$};
    \coordinate[shift={(8mm,0mm)}] (init) at (1.east);
    \path
    (1) edge node[above] {$0$} (2)
    (2) edge node[left] {$-1$} (3)
    (3) edge node[above] {$1$} (4)
    (4) edge node[right] {$1$} (1)
    (init) edge (1)
    ;
	\draw[->,>=latex] (2) to[out=320,in=220] node[below] {$9$} (1);
      \end{tikzpicture}}
      \caption{Markov chain $\game[\stratComb, \stratStoch]$ induced by the combined strategy $\stratComb$ and the stochastic model $\stratStoch$ over the winning EC $\ec_{3}$ of $\game$.}
\label{fig:mp_insideSWEC_MC}
\vspace{-8mm}
\end{wrapfigure}

By Theorem~\ref{thm:insideWinning}, we know that it is possible to find finite-memory strategies satisfying the $\BWC$ problem for any thresholds pair $(0,\, 2 - \varepsilon)$, $\varepsilon > 0$. In particular, consider the thresholds pair $(0,\, 3/2)$. We build a combined strategy~$\stratComb$ as described in Def.~\ref{def:mp_stratComb}. Let $\stepsExp = \stepsWC = 2$: the strategy plays the edge $(\state_{10}, \state_{11})$ once, then if the edge of value $9$ has been chosen by $\playerTwo$, it chooses $(\state_{10}, \state_{11})$ again; otherwise it chooses the edge $(\state_{10}, \state_{9})$ once and then resumes choosing $(\state_{10}, \state_{11})$. This strategy satisfies the $\BWC$ problem. In the worst-case, $\playerTwo$ always chooses the $-1$ edge, but each time he does so, the $-1$ is followed by two $+1$ thanks to the cycle $\state_{10} \state_{9} \state_{10}$. Strategy $\stratComb$ hence guarantees a mean-payoff equal to $(0 - 1 + 1 + 1)/4 = 1/4 > 0$ in the worst-case. For the expected value requirement, we can build the Markov chain $\game[\stratComb, \stratStoch]$ (Fig.~\ref{fig:mp_insideSWEC_MC}) and check that its expectation is $\expect_{\initState}^{\game[\stratComb,\stratStoch]}(\mpay) = 5/3 > 3/2$.

\begin{remark}
\label{rmk:mp_memorylessNotEnough}
Memoryless strategies do not suffice for the $\BWC$ problem, even with randomization. Indeed, the edge $(\state_{10}, \state_{11})$ cannot be assigned a non-zero probability as it would endanger the worst-case requirement (since the outcome~$(\state_{10}\state_{11})^{\omega}$ cycling on the edge of weight $-1$ would exist and have a negative mean-payoff). Hence, the only acceptable memoryless strategy is $\stratWC$, which has only an expectation of $1$.
\end{remark}

\smallskip\noindent\textbf{Technical results.} Before proving the correctness of strategy $\stratComb$, we introduce an important property verified by the MC $\game[\stratExp,\stratStoch]$. It is known that in the long-run, the probability of outcomes that induce a mean-payoff equal to the expectation of the MC is one. Lemma~\ref{lem:mp_swecExpDecrease} shows that, for sufficiently long prefixes, it is possible to bound the probability of having a mean-payoff which differs from the expected value by more than a given~$\varepsilon > 0$. In particular, it implies that for sufficiently large values of $\stepsExp$, this probability decreases exponentially with $\stepsExp$. This will help us bound the impact of periods of type $\typeB$ on the overall expectation.

\begin{lemma}[{Follows from the extension of~\cite[Thm. 2]{glynn_SPL2002} proposed in~\cite{tracol_ORL2009}}]
\label{lem:mp_swecExpDecrease}
For all initial state $\initState$, for all $\varepsilon > 0$, there exists constants \mbox{$c_{1}, c_{2} >0$} and $\stepsExp_{0} \in \nat$ such that, for all $\stepsExp \geq \stepsExp_{0}$,
\begin{equation*}
\proba^{\game[\stratExp,\stratStoch]}_{\initState}\Big( \play \in \plays{\game[\stratExp,\stratStoch]} \;\Big\vert\; \big\vert\mpay(\play(\stepsExp)) - \optimalExp\big\vert \geq \varepsilon\Big) \leq \expDecFct{\stepsExp}{\varepsilon} = \dfrac{c_{1}}{e^{c_{2}\cdot\stepsExp\cdot\varepsilon^{2}}}.
\end{equation*}
\end{lemma}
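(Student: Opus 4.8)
The plan is to recognise the stated inequality as an instance of a Hoeffding-type concentration bound for additive functionals of a finite Markov chain, and to derive it from the results of~\cite{glynn_SPL2002,tracol_ORL2009}. First I would make explicit the random variable at hand: writing $\markovChain := \game[\stratExp,\stratStoch] = \markovProcess[\stratExp]$ for the finite Markov chain obtained by fixing both the memoryless expectation-optimal strategy $\stratExp$ and the memoryless stochastic model $\stratStoch$, the quantity $\mpay(\play(\stepsExp)) = \tfrac{1}{\stepsExp}\sum_{i=0}^{\stepsExp-1}\weight((\state_{i},\state_{i+1}))$ is the empirical time-average over $\stepsExp$ steps of the per-step weight, which is bounded in absolute value by $\largestW$. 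By the choice of $\stratExp$ recalled above, $\markovChain$ is \emph{unichain}: it has a single recurrent class $R$ (an irreducible, hence ergodic, bottom SCC) together with possibly some transient states, and its asymptotic expected mean-payoff, computed on $R$, is exactly $\optimalExp$.

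Second, I would invoke the concentration inequality of Glynn and Ormoneit~\cite{glynn_SPL2002}, in the form adapted to finite Markov chains by Tracol~\cite{tracol_ORL2009}: for an irreducible finite Markov chain and a per-step reward bounded in absolute value by $\largestW$, there exist constants $c_{1}, c_{2} > 0$ depending only on the chain (essentially on its mixing rate and on $\largestW$) such that, for every starting state and every number of steps $\stepsExp$, the probability that the empirical average deviates from its limit $\optimalExp$ by at least $\varepsilon$ is at most $c_{1}\, e^{-c_{2}\stepsExp\varepsilon^{2}} = \expDecFct{\stepsExp}{\varepsilon}$. Applied on the recurrent class this is precisely the claimed bound for runs that are already in $R$, and since there are only finitely many states a single pair $(c_{1},c_{2})$ serves all of them.

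Third, one must account for the fact that $\play$ starts in an arbitrary state $\initState$, not necessarily in $R$. Since $\markovChain$ is unichain, the hitting time $\tau$ of $R$ from $\initState$ has an exponentially decaying tail, $\proba^{\markovChain}_{\initState}(\tau > t) \le \beta\gamma^{t}$ with $\gamma \in (0,1)$; conditioning on the event that $\tau$ is below a suitably small multiple of $\varepsilon\stepsExp / \largestW$, the weights accumulated before reaching $R$ move the empirical average (both directly and through the change of denominator from $\stepsExp$ to $\stepsExp - \tau$) by less than $\varepsilon/2$, while after time $\tau$ the strong Markov property reduces us to the already-treated case from a state of $R$ with $\stepsExp - \tau$ steps remaining; the discarded event has probability exponentially small in $\stepsExp$ and is absorbed into $\expDecFct{\stepsExp}{\varepsilon}$ after enlarging $c_{1}$ and shrinking $c_{2}$, and picking $\stepsExp_{0}$ large enough makes all the remaining slack fit within $\varepsilon$. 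This transient ``burn-in'' step --- controlling the prefix before absorption and keeping the constants independent of $\initState$ --- is the only genuinely delicate point; once a run lies in the recurrent class, the quoted ergodic concentration inequality does all the work. Alternatively, the statement of~\cite{tracol_ORL2009} can be quoted directly for finite Markov chains with an arbitrary initial distribution, which subsumes this last step.
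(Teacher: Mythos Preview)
Your proposal is broadly correct and, like the paper, essentially reduces the lemma to the cited results; but you put the technical weight in a different place than the paper does, and in doing so you slightly misdiagnose what makes the direct application of~\cite{glynn_SPL2002} fail.

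You write that the recurrent class $R$ is ``irreducible, hence ergodic'', and then invoke Glynn--Ormoneit on it. This is the slip: a finite irreducible chain need not be aperiodic, and \cite[Thm.~2]{glynn_SPL2002} is stated for \emph{uniformly ergodic} chains, which in the finite setting means irreducible \emph{and} aperiodic. The Markov chain $\game[\stratExp,\stratStoch]$ has no reason to be aperiodic in general. The paper's sketch makes exactly this point: the r\^ole of Tracol's extension~\cite{tracol_ORL2009} is precisely to handle the periodic case, by decomposing the BSCC of period $d$ into its $d$ cyclic (aperiodic) classes, applying~\cite{glynn_SPL2002} on each, and aggregating the resulting bounds into a single pair of constants.

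By contrast, you identify the ``genuinely delicate point'' as the transient burn-in before absorption in $R$. That argument is fine and your treatment of it (exponential tail of the hitting time, strong Markov property, absorbing the error into the constants) is sound, but it is not where the obstruction lies: once one has the concentration bound on an irreducible, possibly periodic, recurrent class---which is what~\cite{tracol_ORL2009} supplies---the extension to arbitrary initial states is routine. Your final sentence, quoting~\cite{tracol_ORL2009} directly for arbitrary initial distributions, is in fact the cleanest way to conclude; just be aware that the substantive content of that citation is the periodicity workaround, not the handling of transients.
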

In~\cite[Thm. 2]{glynn_SPL2002}, Glynn and Ormoneit present an extension of Hoeffding's inequality~\cite{hoeffding_JASA1963} for uniformly ergodic Markov chains. Straight application of this result in our setting is not possible, as the MC $\game[\stratExp, \stratStoch]$ does not need to be aperiodic in general. Nevertheless, this result is extended to unichain MCs (possibly periodic) in~\cite{tracol_ORL2009}. Hence, Lemma~\ref{lem:mp_swecExpDecrease} reformulates the latter result in our precise context. Notice that the MC $\game[\stratExp, \stratStoch]$ is unichain thanks to the choice of $\stratExp$, as presented earlier.

\begin{proof}
For the sake of completeness, we sketch the main steps proposed by Tracol~\cite{tracol_ORL2009} to extend the results of~\cite{glynn_SPL2002}. Consider the MC $\game[\stratExp, \stratStoch]$: it can be decomposed in a set of transient states and a unique recurrent class, i.e., a bottom strongly-connected component. Assume this BSCC is periodic, of period $d \in \natStrict$. We can decompose it in $d$ aperiodic classes on which we are able to apply the bound provided by~\cite[Thm. 2]{glynn_SPL2002}. The key idea is then to obtain a unified bound by aggregating the bounds obtained for each aperiodic MC, given sufficiently large values of the constants.

A word on the constants of Lemma~\ref{lem:mp_swecExpDecrease}. Careful analysis of the proofs of~\cite[Thm. 2]{glynn_SPL2002} and~\cite[Prop. 2]{tracol_ORL2009} reveals that $c_{1}$ is exponential in $\varepsilon$ and polynomial in the characteristics of the MC, while $c_{2}$ is only polynomial in the characteristics of the MC. More importantly, constant $\stepsExp_{0}$ is polynomial in the size of the MC and polynomial in the largest weight $\largestW$ (exponential in its encoding).
\end{proof}

\smallskip\noindent\textbf{Analysis of the combined strategy.} Our goal is to show that for any $\varepsilon > 0$, there exist two naturals $\stepsExp$ and $\stepsWC$ such that~$\stratComb$ proves the correctness of Theorem~\ref{thm:insideWinning}. First, we define $\stepsWC$ as a \textit{linear} function of~$\stepsExp$. Note that the main purpose of~$\stepsExp$ is to create periods of type $\typeA$ long enough to have an expected mean-payoff close to the optimal value achieved by $\stratExp$, i.e., $\optimalExp$. The aim of $\stepsWC$ is for periods of type $\typeB$ to be long enough to compensate the possible negative effect of periods of type $\typeA$ and thus ensure the worst-case requirement. As stated before, $\stepsWC$ should not grow too quickly to preserve an overall mean-payoff which is mainly influenced by periods of type $\typeA$ (hence close to the optimal expectation $\optimalExp$).

\begin{definition}
\label{def:mp_stepsWC}
Given a natural constant $\stepsExp \in \nat$, we define $\stepsWC = \left\lfloor\dfrac{\stepsExp\cdot\largestW + \statesSize\cdot\largestW + \statesSize \cdot \optimalWC}{\optimalWC}\right\rfloor + 1$.
\end{definition}

\begin{remark}
\label{rem:mp_swecStrictNotSuf}
Obviously, $\stepsWC$ needs to be proportional to $\stepsExp$ to preserve the worst-case requirement as the amount to compensate is bounded by $\stepsExp\cdot\largestW$. Consequently, we can observe the need for the bookkeeping of $\cmbSum$ in strategy $\stratComb$ in contrast to a non-dynamic alternation scheme between periods of type $\typeA$ and $\typeB$. Indeed, in a strategy following the latter scheme, the long-term expectation would be close to $\frac{\stepsExp\cdot\optimalExp + \stepsWC\cdot\optimalWC}{\stepsExp+\stepsWC}$. As $\stepsWC$ is not constant but proportional to $\stepsExp$, one can easily see that this expression does not tend to $\optimalExp$ when $\stepsExp$ tends to infinity (which is required when $\varepsilon$ tends to zero according to Lemma~\ref{lem:mp_swecExpDecrease}). Thus, strict alternation does not suffice to satisfy the thresholds pair presented in Thm.~\ref{thm:insideWinning}.
\end{remark}

Under the value of $\stepsWC$ given in Def.~\ref{def:mp_stepsWC}, Lemma~\ref{lem:mp_swecWC} states that the worst-case requirement is satisfied. The idea is to decompose any play into an infinite sequence of periods, each of them having a bounded length and ensuring a strictly positive sum of weights, yielding an overall strictly positive mean-payoff of the play.

\begin{lemma}
\label{lem:mp_swecWC}
For any $\stepsExp \in \nat$, the combined strategy $\stratComb \in \stratsPureFinite_{1}$ is such that for all $\strat_{2} \in \strats_{2}$, for all $\play \in \outcomesGame{\game}{\initState}{\stratComb}{\strat_{2}}$, we have that $\mpay(\play) > 0$.
\end{lemma}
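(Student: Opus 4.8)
The plan is to partition every outcome consistent with $\stratComb$ into consecutive \emph{periods} (as in Def.~\ref{def:mp_stratComb}), to show that each period has length at most $\stepsExp+\stepsWC$ and sum of weights at least $1$, and then to deduce a strictly positive lower bound on the mean-payoff of the whole play. Fix $\strat_{2} \in \strats_{2}$ and $\play = \state_{0}\state_{1}\ldots \in \outcomesGame{\game}{\initState}{\stratComb}{\strat_{2}}$. Since $\stratComb$ plays each phase for a fixed number of steps ($\stepsExp$ for a phase $\typeA$, and $0$ or $\stepsWC$ for the subsequent phase $\typeB$), $\play$ decomposes uniquely as a concatenation $P_{1}P_{2}\cdots$ of periods, where each $P_{j}$ is either (i) an isolated phase $\typeA$ of length $\stepsExp$, occurring when $\cmbSum > 0$ at the end of that phase, or (ii) a phase $\typeA$ followed by a phase $\typeB$ of $\stepsWC$ steps, of total length $\stepsExp + \stepsWC$, occurring when $\cmbSum \leq 0$. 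In either case the length of a period lies in $\{\stepsExp,\, \stepsExp + \stepsWC\}$, hence is at most $\stepsExp + \stepsWC$.

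Next I bound $\tpay(P)$ from below by $1$ for every period $P$. In case (i), by construction $\cmbSum > 0$ after the phase $\typeA$, and since $\cmbSum$ is a sum of integer weights we get $\tpay(P) = \cmbSum \geq 1$. In case (ii), write $\tpay(P) = \cmbSum + s_{B}$, where $\cmbSum$ is the (integer) sum over the $\stepsExp$ steps of the phase $\typeA$, so $\cmbSum \geq -\stepsExp\cdot\largestW$, and $s_{B}$ is the sum over the $\stepsWC$ steps during which $\stratWC$ is played. The key point is that $\stratWC$ is a pure memoryless strategy guaranteeing mean-payoff $\optimalWC > 0$ against every strategy of $\playerTwo$ in the subgame, which under Assumption~\ref{assump:uniqueWEC} is all of $\game$ and which $\playerTwo$ cannot leave. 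Consequently, every cycle reachable in the one-player graph $\game[\stratWC]$ has mean weight $\geq \optimalWC$: otherwise $\playerTwo$ could reach such a cycle and loop it forever, producing a play of mean-payoff $< \optimalWC$ and contradicting the guarantee. Decomposing the length-$\stepsWC$ path traversed during the phase $\typeB$ into simple cycles (each of length $\leq \statesSize$, hence total weight $\geq \optimalWC$ times its length, as $\optimalWC>0$) together with a simple path of length $\leq \statesSize - 1$ (total weight $\geq -\statesSize\cdot\largestW$) yields $s_{B} \geq \optimalWC\cdot(\stepsWC - \statesSize) - \statesSize\cdot\largestW$. By Def.~\ref{def:mp_stepsWC} we have $\stepsWC > (\stepsExp\cdot\largestW + \statesSize\cdot\largestW + \statesSize\cdot\optimalWC)/\optimalWC$, whence $\optimalWC\cdot(\stepsWC - \statesSize) - \statesSize\cdot\largestW > \stepsExp\cdot\largestW$, and therefore $\tpay(P) = \cmbSum + s_{B} > -\stepsExp\cdot\largestW + \stepsExp\cdot\largestW = 0$; being an integer, $\tpay(P) \geq 1$.

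Finally, consider a prefix $\play(n)$. It consists of some number $k \geq 0$ of complete periods followed by a strict initial segment of the next period; since each period has length at most $\stepsExp + \stepsWC$, we get $n \leq (k+1)(\stepsExp + \stepsWC)$, i.e.\ $k \geq n/(\stepsExp + \stepsWC) - 1$. The sum over the $k$ complete periods is $\geq k$ by the previous paragraph, and the sum over the remaining at most $\stepsExp + \stepsWC$ edges is $\geq -(\stepsExp + \stepsWC)\cdot\largestW$, so
\begin{equation*}
\mpay(\play(n)) = \frac{\tpay(\play(n))}{n} \;\geq\; \frac{k - (\stepsExp + \stepsWC)\cdot\largestW}{n} \;\geq\; \frac{1}{\stepsExp + \stepsWC} - \frac{1 + (\stepsExp + \stepsWC)\cdot\largestW}{n}.
\end{equation*}
Letting $n \to \infty$ gives $\mpay(\play) = \liminf_{n \to \infty} \mpay(\play(n)) \geq 1/(\stepsExp + \stepsWC) > 0$, which is the claim. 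The only genuinely non-routine ingredient is the lower bound on $s_{B}$: one must combine the cycle-mean property of optimal memoryless worst-case strategies with the path-decomposition estimate, and then check that the choice of $\stepsWC$ in Def.~\ref{def:mp_stepsWC} is exactly large enough to dominate the worst possible deficit $\stepsExp\cdot\largestW$ incurred during a phase $\typeA$; everything else is bookkeeping on the lengths and sums of periods.
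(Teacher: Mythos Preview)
Your proof is correct and follows essentially the same approach as the paper: decompose the play into periods of types $\typeA$ and $\typeA+\typeB$, show each period has integer sum $\geq 1$ (using for the latter that cycles under $\stratWC$ have mean $\geq \optimalWC$ and hence $s_B \geq (\stepsWC-\statesSize)\optimalWC - \statesSize\cdot\largestW$, which by Def.~\ref{def:mp_stepsWC} dominates the deficit $\stepsExp\cdot\largestW$), and then bound the prefix sum to conclude $\mpay(\play) \geq 1/(\stepsExp+\stepsWC)$. Your write-up even spells out the justification for the cycle-mean property of $\stratWC$, which the paper leaves implicit.
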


\begin{proof}
Let $\strat_{2}$ be an arbitrary strategy of $\playerTwo$ and $\play$ any outcome in $\outcomesGame{\game}{\initState}{\stratComb}{\strat_{2}}$. By definition of $\stratComb$, we decompose the play in a sequence of periods of type $\typeA$ and $\typeB$. That is, $\play = \prefix_{0}\prefix_{1}\prefix_{2}\ldots{}$ where, for all $i \geq 0$, $\prefix_{i}$ is a finite sequence of states that is either of length $\stepsExp$ if $\prefix_{i}$ is of type $\typeA$ or of length $\stepsWC$ if $\prefix_{i}$ is of type $\typeB$. Moreover, $\prefix_{0}$ is of type $\typeA$ and for all $i \geq 1$, $\prefix_{i}$ is of type $\typeB$ if and only if $\prefix_{i-1}$ is of type~$\typeA$ and such that $\cmbSum(\prefix_{i-1}) \leq 0$ (i.e., the sum of weights along the sequence is not strictly positive). We regroup each sequence of type $\typeB$ with its predecessor of type $\typeA$ and obtain $\play = \prefix'_{0}\prefix'_{1}\prefix'_{2}\ldots{}$ such that all sequences are either of type $\typeA$ or of type $\typeA + \typeB$.

Consider any sequence $\prefix'_{i}$ of type $\typeA$. Since it is not followed by a period of type $\typeB$, we know that $\cmbSum(\prefix'_{i}) > 0$. Now consider any sequence  $\prefix'_{i}$ of type $\typeA + \typeB$. At the end of the period of type $\typeA$, the sum is bounded from below by~$-\stepsExp\cdot\largestW$. During the period of type $\typeB$, an optimal \textit{memoryless} worst-case strategy $\stratWC$ is followed and consequently, all formed cycles have a mean-payoff at least equal to $\optimalWC$. Hence, the sum of weights over the period of type $\typeB$ is at least $(\stepsWC - \statesSize)\cdot\optimalWC - \statesSize\cdot\largestW$. By Def.~\ref{def:mp_stepsWC}, this induces that the overall sum over the sequence of type $\typeA + \typeB$ is $\cmbSum(\prefix'_{i}) > 0$.

In both cases, we have that $\cmbSum(\prefix'_{i}) > 0$. But since weights are integers, this implies a stronger inequality: $\cmbSum(\prefix'_{i}) \geq 1$. We go back to the play seen as an infinite sequence of states $\play = \state_{0}\state_{1}\state_{2}\ldots{}$. Thanks to the previous observations, we can now state that the total sum of weights up to any state $\state_{t}$, $t \geq 0$, is bounded by $\tpay(\play(t)) \geq -\big[ t \bmod (\stepsExp+\stepsWC)\big] \cdot \largestW + \left\lfloor \frac{t}{\stepsExp + \stepsWC}\right\rfloor \cdot 1 \geq -(\stepsExp+\stepsWC)\cdot \largestW + \frac{t}{\stepsExp + \stepsWC} - 1$.
Hence the mean-payoff of the play $\play$ is
$\mpay(\play) \geq \liminf_{t \rightarrow \infty} \left[ \frac{-(\stepsExp+\stepsWC)\cdot \largestW}{t} + \frac{1}{\stepsExp + \stepsWC} - \frac{1}{t}\right]  = \frac{1}{\stepsExp + \stepsWC} > 0$,
which concludes the proof.
\end{proof}

It remains to show that for any $\varepsilon > 0$, there exists a value $\stepsExp \in \nat$ such that the expected value requirement (Eq.~\eqref{eq:thresholdExp}) is also satisfied by $\stratComb$. This is proved in Lemma~\ref{lem:mp_swecExp}. Again, decomposition of plays into periods needs to be considered. Furthermore, the crux of the proof resides in the use of Lemma~\ref{lem:mp_swecExpDecrease}: it induces that when the length of periods of type~$\typeA$ grows linearly, the probability of periods of type~$\typeB$ decreases exponentially. Since the length of periods of type~$\typeB$ only grows linearly in~$\stepsExp$, the overall impact of periods~$\typeB$ in the expectation tends to zero when~$\stepsExp$ tends to infinity. Hence, the expectation of~$\stratComb$ tends to the optimal expectation~$\optimalExp$ and classical convergence analysis provides the result.

\begin{lemma}
\label{lem:mp_swecExp}
For any $\varepsilon > 0$, there exists $\stepsExp \in \nat$ such that $\,\expect_{\initState}^{\game[\stratComb,\stratStoch]}(\mpay) > \optimalExp - \varepsilon$.
\end{lemma}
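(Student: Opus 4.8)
The strategy is to decompose any outcome of the Markov chain $\game[\stratComb,\stratStoch]$ into the sequence of periods prescribed by Definition~\ref{def:mp_stratComb}, classify each period as type $\typeA$ or of type $\typeA+\typeB$ (regrouping each $\typeB$ period with its $\typeA$ predecessor, exactly as in the proof of Lemma~\ref{lem:mp_swecWC}), and control the expected mean-payoff period by period. The heart of the argument is that periods of type $\typeA$ contribute a mean-payoff close to $\optimalExp$ with high probability, while periods of type $\typeA+\typeB$ — which drag the average down, since in phase $\typeB$ one only plays $\stratWC$, whose expectation $\expect_{\initState}^{\game[\stratWC,\stratStoch]}(\mpay)$ is typically well below $\optimalExp$ — occur rarely and are not too long. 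First I would fix $\varepsilon>0$ and choose an auxiliary $\varepsilon' = \varepsilon/4$ (say). By Lemma~\ref{lem:mp_swecExpDecrease}, for this $\varepsilon'$ there are constants $c_1,c_2>0$ and a threshold $\stepsExp_0$ such that for all $\stepsExp\geq \stepsExp_0$, the probability that a length-$\stepsExp$ phase of type $\typeA$ has mean-payoff more than $\varepsilon'$ below $\optimalExp$ is at most $\expDecFct{\stepsExp}{\varepsilon'} = c_1/e^{c_2\stepsExp\varepsilon'^2}$. In particular, conditioned on starting a phase $\typeA$, the probability $q_{\stepsExp}$ that $\cmbSum \leq 0$ (hence that a phase $\typeB$ follows) is at most $\expDecFct{\stepsExp}{\varepsilon'}$ as soon as $\stepsExp\cdot\varepsilon' > $ the relevant slack, since $\cmbSum\leq 0$ forces $\mpay$ of that phase to be $\leq 0 < \optimalExp - \varepsilon'$ (recall $\optimalExp \geq \optimalWC > 0$).

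Next I would estimate the expected mean-payoff directly. Writing a play as an infinite concatenation of regrouped periods $\prefix'_0\prefix'_1\prefix'_2\ldots$, each of length between $\stepsExp$ and $\stepsExp+\stepsWC$, one can bound $\expect_{\initState}^{\game[\stratComb,\stratStoch]}(\mpay)$ from below by a Cesàro-type average of per-period contributions. The cleanest route is a renewal/regeneration argument: the sequence of period-types and their weight-sums forms, up to the finite memory, a process on a finite Markov chain, so by the ergodic theorem for finite MCs the mean-payoff of almost every play equals a constant, namely the long-run average weight per step, which equals $\expect_{\initState}^{\game[\stratComb,\stratStoch]}(\mpay)$ itself. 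This average is
\begin{equation*}
\frac{(1-q_{\stepsExp})\cdot\stepsExp\cdot(\text{avg.\ phase-}\typeA\text{ payoff}) + q_{\stepsExp}\cdot\big(\stepsExp\cdot(\text{avg.\ bad phase-}\typeA\text{ payoff}) + \stepsWC\cdot(\text{avg.\ phase-}\typeB\text{ payoff})\big)}{(1-q_{\stepsExp})\cdot\stepsExp + q_{\stepsExp}\cdot(\stepsExp+\stepsWC)}.
\end{equation*}
All payoffs lie in $[-\largestW,\largestW]$; the good phase-$\typeA$ payoff is $\geq \optimalExp - \varepsilon'$ by construction; $\stepsWC$ is linear in $\stepsExp$ by Definition~\ref{def:mp_stepsWC}; and $q_{\stepsExp}\leq \expDecFct{\stepsExp}{\varepsilon'}$ decays exponentially. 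Hence $q_{\stepsExp}\cdot\stepsWC \to 0$ and $q_{\stepsExp}\cdot(\stepsExp+\stepsWC)/\stepsExp \to 0$ as $\stepsExp\to\infty$, so the whole expression tends to the good phase-$\typeA$ payoff, which is $\geq \optimalExp - \varepsilon'$. A routine $\varepsilon$-management step then yields a concrete $\stepsExp$ with $\expect_{\initState}^{\game[\stratComb,\stratStoch]}(\mpay) > \optimalExp - \varepsilon$.

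The main obstacle is making the "average phase-$\typeA$ payoff" rigorous: a phase $\typeA$ does not start from a fixed state nor from the stationary distribution of $\game[\stratExp,\stratStoch]$, so $\optimalExp$ is only the limiting expected mean-payoff, and Lemma~\ref{lem:mp_swecExpDecrease} is precisely the tool that tames the transient behaviour uniformly over the starting state — note its statement quantifies over all initial states. I would lean on this uniformity, together with the finiteness (and unichain structure) of $\game[\stratExp,\stratStoch]$, to assert that for $\stepsExp$ large, $\expect[\mpay(\play(\stepsExp))] \geq \optimalExp - \varepsilon'$ regardless of where the phase begins; combined with the bounded-range truncation this controls the conditional expectation of the phase payoff on both the "good" and "bad" events. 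A secondary technical point is justifying the passage from per-period averages to the actual expectation $\expect_{\initState}^{\game[\stratComb,\stratStoch]}(\mpay)$ of the $\liminf$ mean-payoff; since the induced MC is finite this is the standard fact that the expected long-run average equals the stationary average, so no genuine difficulty arises there, but it must be invoked cleanly. I expect the bookkeeping of these nested $\varepsilon$'s and the explicit choice of $\stepsExp$ (polynomial in the size of the MC and in $\largestW$, per the remark following Lemma~\ref{lem:mp_swecExpDecrease}, which is needed later for the complexity and memory bounds) to be the most delicate part of the write-up.
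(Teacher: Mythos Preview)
Your proposal is correct and follows essentially the same approach as the paper: decompose plays into periods of type $\typeA$ and $\typeA+\typeB$, use Lemma~\ref{lem:mp_swecExpDecrease} to bound the probability $q_{\stepsExp}$ of a $\typeB$-phase by the exponentially decaying $\expDecFct{\stepsExp}{\varepsilon'}$, express the expected mean-payoff as a renewal-type ratio, and conclude from $q_{\stepsExp}\cdot\stepsWC\to 0$ (since $\stepsWC$ is only linear in $\stepsExp$). The paper handles the two technical points you flag---dependence on the phase's starting state, and the passage from per-period averages to $\expect(\mpay)$---in the same way (the first via the uniformity in Lemma~\ref{lem:mp_swecExpDecrease}, acknowledged in a footnote; the second via finiteness of the induced MC), and its $\varepsilon$-management differs only cosmetically (a squeeze at the end rather than your $\varepsilon/4$ auxiliary).
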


\begin{proof}
For the proof, we assume that $\varepsilon \leq \optimalExp$, otherwise the claim is obviously true for any $\stepsExp \in \nat$ since the mean-payoff of any outcome consistent with $\stratComb$ is strictly positive by application of Lemma \ref{lem:mp_swecWC}. Now, for a given $\stepsExp \in \nat$, consider the corresponding finite MC $\markovChain(K) = \game[\stratComb,\stratStoch]$ where 
$\stratComb$ is defined with the parameter $K$ (i.e., periods of type~$\typeA$ are of length $K$). To obtain the claim, we prove that $\expect_{\initState}^{\markovChain(K)}(\mpay) \xrightarrow[\stepsExp \rightarrow \infty]{} \optimalExp$. Similarly to the proof of Lemma~\ref{lem:mp_swecWC}, any outcome of $\markovChain(K)$ is an outcome consistent with $\stratComb$ in $\game$ and thus can be decomposed in an infinite sequence of periods of types $\typeA$  and $\typeA+\typeB$.

To begin with, we will consider (i) the expectation over \textit{one} period of type $\typeA$, and (ii) the expectation over \textit{one} period of type $\typeA + \typeB$, as well as the respective probabilities of seeing such periods whenever a new period begins. Note that the probability of a period being of type $\typeA$ or $\typeA + \typeB$ does not depend on whether the previous one was of type $\typeA$ or $\typeA + \typeB$, by definition of $\stratComb$. Hence, the probability that some 
period has a specific type, and the expected value of that period, are not influenced by what happened over previous periods. For the sake of simplicity, we omit that different periods do not necessarily start in the same state, as the resulting impact on the expectation is negligible for sufficiently long periods.

(i) Let us first consider periods of type $\typeA$. By Def. \ref{def:mp_stratComb}, $\playerOne$ follows the strategy $\stratExp$ during those periods. Recall that $\expect_{\initState}^{\game[\stratExp,\stratStoch]}(\mpay) = \optimalExp$. Let us denote by $e_{\typeA}$ the expected mean-payoff over a period of type $\typeA$. By Lemma \ref{lem:mp_swecExpDecrease}, for any $\varepsilon > 0$, there exists $\stepsExp_{0} \in \nat$ such that for all $\stepsExp \geq \stepsExp_{0}$, this expected value is bounded from below by $e_{\typeA}\geq \left( 1 - \expDecFct{\stepsExp}{\varepsilon}\right) \cdot (\optimalExp - \varepsilon) + \expDecFct{\stepsExp}{\varepsilon}\cdot x$, with $x$ a lower bound on \textit{any} consistent outcome. Since any period of type $\typeA$ is not followed by a period of type $\typeB$ (otherwise it would be considered as a period of type $\typeA+\typeB$),  we know that the sum of weights along the 
$K$ steps of the period is greater than or equal to $1$, and therefore we can take $x \geq 1/\stepsExp$ (cf.~proof of Lemma \ref{lem:mp_swecWC}).

(ii) Now, consider periods of type $\typeA+\typeB$. As shown in the proof of Lemma \ref{lem:mp_swecWC}, we know that the expected mean-payoff over such a period, denoted by $e_{\typeA+\typeB}$, is
greater than or equal to $1/(\stepsExp+\stepsWC)$. Now consider the probability $p_{\typeA+\typeB}$ that a period is of type $\typeA+\typeB$. By definition of strategy $\stratComb$, $p_{\typeA+\typeB}$ is equal to the probability of having a total sum of weights less than or equal to $0$ after $\stepsExp$ steps of playing according to $\stratExp$. Since we assume that $\varepsilon \leq \optimalExp$ and since $\optimalExp \geq \optimalWC > 0$, we can bound from above this probability by the probability to have a mean-payoff less than $\optimalExp - \varepsilon$ over the $\stepsExp$ steps. Repeating the argument of point~(i) and applying Lemma \ref{lem:mp_swecExpDecrease}, we deduce that $p_{\typeA+\typeB} \leq \expDecFct{\stepsExp}{\varepsilon}$ for all $K\geq K_0$.

We now have sufficient arguments to study the overall expected value over $m$ periods and express it as a weighted average between the two types of periods considered. Let $p_{\typeA}$ denote the probability of periods of type $\typeA$. Note that $p_{\typeA} = 1-p_{\typeA+\typeB}$. The expected number of periods of type $\typeA$ is $m\cdot p_{\typeA}$. Similarly, $m\cdot p_{\typeA+\typeB}$ is the expected
number of periods of type $\typeA+\typeB$. The expected sum of weights over the $m$ periods is therefore $m\cdot p_{\typeA} \cdot e_{\typeA} \cdot \stepsExp + m\cdot p_{\typeA+\typeB} \cdot e_{\typeA+\typeB} \cdot (\stepsExp + \stepsWC)$ since periods of type $\typeA$ have length $K$ and periods of type~$\typeA+\typeB$ length $K+L$. The expected length of the $m$ periods is $m\cdot p_{\typeA} \cdot \stepsExp + m\cdot p_{\typeA+\typeB} \cdot (\stepsExp + \stepsWC)$. Finally, we have that

\begin{equation}
\label{eq:mp_expPeriod}
\expect_{\initState,\,m\,\text{periods}}^{\markovChain(K)}(\mpay) = \dfrac{m\cdot p_{\typeA} \cdot e_{\typeA} \cdot \stepsExp + m\cdot p_{\typeA+\typeB} \cdot e_{\typeA+\typeB} \cdot (\stepsExp + \stepsWC)}{m\cdot p_{\typeA} \cdot \stepsExp + m\cdot p_{\typeA+\typeB} \cdot (\stepsExp + \stepsWC)}.
\end{equation}
Clearly, this expression does not depend on the number of periods $m$. This is consistent with our analysis since we argued that periods of type $\typeA$ and $\typeA + \typeB$ are statistically independent. Also, note that this reasoning is only correct for complete periods. Nonetheless, any prefix $\play(n)$, $n \geq 0$, of the play is composed of a sequence of complete periods followed by a suffix of length bounded by $(\stepsExp + \stepsWC)$ and of total sum of weights bounded by $-(\stepsExp + \stepsWC)\cdot\largestW$ and $(\stepsExp + \stepsWC)\cdot\largestW$. Hence, we characterize the expected mean-payoff over the $n$ first steps of a play as
\begin{equation*}
\dfrac{\left\lfloor\dfrac{n}{l}\right\rfloor \cdot l \cdot \expect_{\initState,\,1\,\text{period}}^{\markovChain(K)}(\mpay) - (\stepsExp + \stepsWC) \cdot \largestW}{n} \leq \expect_{\initState,\,n\,\text{steps}}^{\markovChain(K)}(\mpay) \leq \dfrac{\left\lfloor\dfrac{n}{l}\right\rfloor \cdot l \cdot \expect_{\initState,\,1\,\text{period}}^{\markovChain(K)}(\mpay) + (\stepsExp + \stepsWC) \cdot \largestW}{n},
\end{equation*}
with $l = p_{\typeA} \cdot \stepsExp + p_{\typeA+\typeB} \cdot (\stepsExp + \stepsWC)$, the expected length of a period. Naturally, the finite suffix proves to be negligible when $n$ grows, hence
$\expect_{\initState}^{\markovChain(K)}(\mpay) = \liminf_{n \rightarrow \infty} \left[ \expect_{\initState,\,n\,\text{steps}}^{\markovChain(K)}(\mpay)\right] = \expect_{\initState,\,1\,\text{period}}^{\markovChain(K)}(\mpay)$.
Observe that this equation uses the equality between the expectation over the values of plays and the limit of the expectation over values of prefixes. This equality is verified for the mean-payoff value function but does not need to be true for arbitrary value functions.

Back to Eq.~\eqref{eq:mp_expPeriod}, with $m = 1$, we use $e_{\typeA+\typeB} \geq 1/(\stepsExp+\stepsWC) > 0$ and assume $\stepsExp > 0$ to obtain the inequality
\[\expect_{\initState}^{\markovChain(K)}(\mpay) \geq \frac{p_{\typeA} \cdot e_{\typeA}}{p_{\typeA} + p_{\typeA+\typeB} \cdot \left( \frac{\stepsExp + \stepsWC}{\stepsExp}\right) }\raisepunct{.}\]
Again, we assume $\stepsExp$ large enough to ensure $p_{\typeA} > 0$ (such a $\stepsExp$ exists by consequence of Lemma \ref{lem:mp_swecExpDecrease}) and get
\begin{equation*}
\expect_{\initState}^{\markovChain(K)}(\mpay) \geq \frac{e_{\typeA}}{1 + \left(\frac{p_{\typeA+\typeB}}{p_{\typeA}}\right) \cdot \left( \frac{\stepsExp + \stepsWC}{\stepsExp}\right) }\raisepunct{.}
\end{equation*}
By (i) and (ii), we have $p_{\typeA} \geq 1 - \expDecFct{\stepsExp}{\varepsilon}$, $p_{\typeA+\typeB} \leq \expDecFct{\stepsExp}{\varepsilon}$, and $e_{\typeA} \geq \left( 1 - \expDecFct{\stepsExp}{\varepsilon}\right) \cdot (\optimalExp - \varepsilon)$. We derive that
\begin{equation}
\label{eq:mp_expIneq}
\expect_{\initState}^{\markovChain(K)}(\mpay) \geq \dfrac{\left( 1 - \expDecFct{\stepsExp}{\varepsilon}\right) \cdot (\optimalExp - \varepsilon)}{1 + \frac{\expDecFct{\stepsExp}{\varepsilon} \cdot (\stepsExp + \stepsWC)}{(1 - \expDecFct{\stepsExp}{\varepsilon}) \cdot \stepsExp} }\raisepunct{.}
\end{equation}
Recall that ultimately, we want to prove that $\expect_{\initState}^{\markovChain(K)}(\mpay) \xrightarrow[\stepsExp \rightarrow \infty]{} \optimalExp$. Consider what happens when $\stepsExp \rightarrow \infty$ in Eq.~\eqref{eq:mp_expIneq}: notice that $\stepsWC$ is linear in $\stepsExp$ by Def. \ref{def:mp_stepsWC}, hence $\stepsWC \rightarrow \infty$, and that $\expDecFct{\stepsExp}{\varepsilon} \rightarrow 0$ by Lemma \ref{lem:mp_swecExpDecrease}. This does not suffice to conclude on the possible convergence of the lower bound given in Eq.~\eqref{eq:mp_expIneq}. The crux of the argument is given by Lemma \ref{lem:mp_swecExpDecrease}: $\expDecFct{\stepsExp}{\varepsilon}$ decreases exponentially for a linear increase in $\stepsExp$. Thus, we have that $\expDecFct{\stepsExp}{\varepsilon} \cdot (\stepsExp + \stepsWC) \rightarrow 0$. Therefore,
\begin{equation}
\label{eq:mp_expIneqLimit}
\lim_{\stepsExp \rightarrow \infty} \left[\dfrac{\left( 1 - \expDecFct{\stepsExp}{\varepsilon}\right) \cdot (\optimalExp - \varepsilon) }{1 + \frac{\expDecFct{\stepsExp}{\varepsilon} \cdot (\stepsExp + \stepsWC)}{(1 - \expDecFct{\stepsExp}{\varepsilon}) \cdot \stepsExp} }\right] = \optimalExp - \varepsilon.
\end{equation}
Notice two facts. First, for any $\stepsExp \in \nat$, we have that $\expect_{\initState}^{\markovChain(K)}(\mpay) \leq \expect_{\initState}^{\game[\stratExp,\stratStoch]}(\mpay) = \optimalExp$ by the optimality of~$\stratExp$. Second, Eq.~\eqref{eq:mp_expIneqLimit} is valid for any $\varepsilon$ such that $\nu^*\geq \varepsilon > 0$. Hence we observe that the sequence of expected values $(\expect_{\initState}^{\markovChain(\stepsExp)}(\mpay))_{\stepsExp \geq 0}$ is bounded from above and from below by two sequences converging to $\optimalExp$. Ergo
$\expect_{\initState}^{\markovChain(\stepsExp)}(\mpay) \xrightarrow[\stepsExp \rightarrow \infty]{} \optimalExp$.
By definition of convergence, for any $\varepsilon$ such that $\nu^*\geq \varepsilon > 0$, there exists $\stepsExp \in \nat$ such that $\expect_{\initState}^{\markovChain(\stepsExp)}(\mpay) > \optimalExp - \varepsilon$, which concludes our proof.
\end{proof}

Based on Lemma~\ref{lem:mp_swecWC} and Lemma~\ref{lem:mp_swecExp}, Theorem~\ref{thm:insideWinning} follows and concludes our analysis of winning ECs with no edges of probability zero. We extend these results to arbitrary winning ECs in the next section.

\subsection{\textbf{Starting in a winning end-component: witness-and-secure strategy}}
\label{subsec:mpInsideWeakly}

We now turn to winning ECs with potentially edges of probability zero, i.e., $\edgesNonZero \subseteq \edges$. We present in this section how to construct a finite-memory strategy that can benefit $\varepsilon$-closely from the maximal expectation achievable in such winning ECs when facing the stochastic model $\stratStoch$, while guaranteeing satisfaction of the worst-case requirement even against \textit{arbitrary} strategies of $\playerTwo$ (i.e., strategies that may use edges in $\edges \setminus \edgesNonZero$). It is crucial to notice that we now consider \textit{a complete game}, i.e., not necessarily reduced to a single winning EC as in Sect.~\ref{subsec:mpInsideStrongly}. Still, we assume that the play starts in a winning EC: consistent outcomes will stay in it when $\playerTwo$ follows $\stratStoch$ (because $\playerOne$ will have no interest to leave), but may exit the EC if $\playerTwo$ takes edges of probability zero, either by the action of $\playerTwo$ (recall there may exist edges that leave the EC in $\edges \setminus \edgesNonZero$) or the action of $\playerOne$ (which may need to leave to guarantee a strictly positive mean-payoff).

\begin{theorem}
\label{thm:insideWeakly}
Let $\gameFull$ be a two-player game, $\graphFull$ its underlying graph, $\stratStoch \in \stratsMemoryless_{2}$ a memoryless stochastic model of $\playerTwo$, $\markovProcess = \game[\stratStoch] = (\graph, \statesOne, \statesProb = \statesTwo, \mpTrans = \stratStoch)$ the resulting MDP, $\ec \in \winningECs$ a winning EC, $\initState \in \ec$ an initial state inside the EC, and $\optimalExp \in \rat$ the maximal expected value achievable by $\playerOne$ in $\markovProcess \reduc \ec$. Then, for all $\varepsilon > 0$, there exists a finite-memory strategy of $\playerOne$ that satisfies the $\BWC$ problem for the thresholds pair~$(0,\, \optimalExp - \varepsilon)$.
\end{theorem}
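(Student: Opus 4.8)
The plan is to lift the combined strategy of Section~\ref{subsec:mpInsideStrongly} to the setting where $\edgesNonZero \subsetneq \edges$ and where the game is not reduced to the winning EC~$\ec$, by adding a reactive "safety net" on top of it. First, I would restrict attention to the subgame $\gameNonZero \reduc \ec$: by definition of a winning EC (Eq.~\eqref{eq:winningECs}), $\playerOne$ has a pure memoryless strategy there ensuring $\mpay > 0$ against every strategy of $\playerTwo$ \emph{that uses only edges of $\edgesNonZero$}, and the induced MDP $\markovProcess \reduc \ec$ admits a pure memoryless strategy achieving the maximal expectation $\optimalExp$. Since $\gameNonZero \reduc \ec = \game \reduc \ec$ when we delete the probability-zero edges, Theorem~\ref{thm:insideWinning} applies to $\gameNonZero \reduc \ec$ and yields, for every $\varepsilon > 0$, a finite-memory strategy $\stratComb$ on this restricted arena satisfying the $\BWC$ problem for $(0, \optimalExp - \varepsilon)$ \emph{under the assumption that $\playerTwo$ never leaves $\edgesNonZero$}. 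Call this the \emph{nominal behavior}.

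Second, I would define the \emph{witness-and-secure strategy} $\stratSecure$ as follows: play $\stratComb$ as long as every edge taken by $\playerTwo$ lies in $\edgesNonZero$; if at some point $\playerTwo$ takes an edge $e \in \edges \setminus \edgesNonZero$ (whether $e$ stays in $\ec$ or leaves it), switch \emph{forever} to a pure memoryless worst-case winning strategy $\stratWC^{\game}$ in the full game $\game$ — which exists in every state by the standing preprocessing hypothesis (b) that all states of $\game$ are worst-case winning for threshold $0$. Note $\stratSecure$ is finite-memory: it is $\stratComb$ (finite memory) augmented with one absorbing "panic" mode running the memoryless $\stratWC^{\game}$, plus possibly the current state for the switch. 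The worst-case requirement (Eq.~\eqref{eq:thresholdWC}) is then argued by a case split on an arbitrary $\strat_2 \in \strats_2$ and outcome $\play$: either $\play$ stays within $\edgesNonZero$ forever, in which case $\play$ is an outcome consistent with $\stratComb$ in $\gameNonZero \reduc \ec$ and hence $\mpay(\play) > 0$ by Lemma~\ref{lem:mp_swecWC}; or $\play$ eventually takes an edge in $\edges \setminus \edgesNonZero$, after which $\stratSecure$ follows $\stratWC^{\game}$ and $\playerOne$ guarantees $\mpay > 0$ from that point on — and by prefix-independence of the mean-payoff, $\mpay(\play) > 0$ for the whole play. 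Hence Eq.~\eqref{eq:thresholdWC} holds.

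Third, for the expectation (Eq.~\eqref{eq:thresholdExp}) I would observe that against the stochastic model $\stratStoch$, which by construction only selects edges in $\edgesNonZero$, the "panic" switch is triggered with probability zero: in the Markov chain $\game[\stratSecure, \stratStoch]$, almost every outcome is an outcome of $\gameNonZero \reduc \ec$ consistent with $\stratComb$, because $\initState \in \ec$, $\playerTwo$ never leaves $\edgesNonZero$, and $\playerOne$ following $\stratComb$ never leaves $\ec$ (the combined strategy only mixes strategies internal to the EC). Since the panic edges have probability $0$ under $\stratStoch$ and the mean-payoff takes values in $[-\largestW, \largestW]$, the measure-zero set of exceptional outcomes contributes nothing to the expectation. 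Therefore $\expect_{\initState}^{\game[\stratSecure, \stratStoch]}(\mpay) = \expect_{\initState}^{(\gameNonZero \reduc \ec)[\stratComb, \stratStoch]}(\mpay) > \optimalExp - \varepsilon$ by Lemma~\ref{lem:mp_swecExp}, which is the claim. Choosing $\varepsilon$ as given completes the argument.

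The main obstacle I anticipate is the bookkeeping around the \emph{equality of expectations} between the restricted chain $(\gameNonZero \reduc \ec)[\stratComb, \stratStoch]$ and the full chain $\game[\stratSecure, \stratStoch]$: one must argue carefully that (i) the maximal expectation $\optimalExp$ computed in $\markovProcess \reduc \ec$ is unaffected by the presence of the extra arena and the extra edges (it is, since $\stratStoch$ ignores probability-zero edges and $\playerOne$ can stay in $\ec$), and (ii) the strategy $\stratExp$ used inside $\stratComb$ can still be chosen unichain on $\gameNonZero \reduc \ec$, so that Lemma~\ref{lem:mp_swecExpDecrease} still applies verbatim — this is fine because the unichain argument in Section~\ref{subsec:mpInsideStrongly} only used that $\playerOne$ controls which recurrent class to reach inside an EC, which is still true in $\gameNonZero \reduc \ec$. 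Everything else is a routine re-run of the proofs of Lemmas~\ref{lem:mp_swecWC} and~\ref{lem:mp_swecExp} with the added, probability-zero, reactive clause.
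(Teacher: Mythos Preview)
Your proposal is correct and follows essentially the same approach as the paper: define the witness-and-secure strategy by running $\stratComb$ on $\gameNonZero \reduc \ec$ (obtained via Theorem~\ref{thm:insideWinning}) until $\playerTwo$ uses an edge outside $\edgesNonZero$, then switch forever to a memoryless worst-case winning strategy in $\game$; handle the worst-case by the same two-case split with prefix-independence, and the expectation by noting that against $\stratStoch$ the switch never triggers. The only cosmetic difference is that the paper observes the outcome sets $\outcomesGame{\game}{\initState}{\stratWNS}{\stratStoch}$ and $\outcomesGame{\game_{\mpTrans}}{\initState}{\stratComb}{\stratStoch}$ are literally equal (since outcomes are defined via supports), whereas you phrase it as a measure-zero event---both arguments land on the same equality of expectations.
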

We prove this theorem in the following. Let us first give some key intuition. With respect to the expected value requirement of Eq.~\eqref{eq:thresholdExp}, the hypothesis is that $\playerTwo$ \textit{will} follow its stochastic model $\stratStoch$. Hence, he will only play edges in $\edgesNonZero$ and the combined strategy proposed in Sect.~\ref{subsec:mpInsideStrongly} suffices to achieve the claimed expectation and guarantee the worst-case threshold against $\stratStoch$ (basically, we can apply Thm.~\ref{thm:insideWinning} on $\gameNonZero$). Now, we still have to account for arbitrary strategies of $\playerTwo$ to satisfy the worst-case requirement of Eq.~\eqref{eq:thresholdWC}. Notice that the combined strategy suffices to ensure it against all strategies playing exclusively in $\edgesNonZero$. So, it only remains to deal with strategies that choose some edges in $\edges \setminus \edgesNonZero$. It is easy for $\playerOne$ to \textit{witness} if~$\playerTwo$ chooses an edge outside $\edgesNonZero$ (as the stochastic model is assumed known by $\playerOne$). If this happens, $\playerOne$ can \textit{secure} its mean-payoff value by switching from the combined strategy to a worst-case winning strategy, which exists in all states due to the preprocessing of the game (Sect.~\ref{subsec:mpAssumptions}).

\smallskip\noindent\textbf{Basis strategies.} We denote by $\stratSecure \in \stratsPureMemoryless_{1}(\game)$ a pure memoryless worst-case winning strategy on $\game$. This strategy exists in all states of the game due to the preprocessing, including states of the EC $\ec \in \winningECs$. Still, it may require leaving the EC to ensure a strictly positive mean-payoff (because the definition of winning ECs only consider edges in $\edgesNonZero$).
When using $\stratStoch$, $\playerTwo$ cannot force leaving the winning EC $\ec \in \winningECs$. By Thm.~\ref{thm:insideWinning}, $\playerOne$ has a finite-memory strategy on $\gameNonZero \reduc \ec$, denoted $\stratComb$, that ensures Eq.~\eqref{eq:thresholdExp}, and verifies Eq.~\eqref{eq:thresholdWC}, \textit{if we restrict} $\playerTwo$ to strategies in $\strats_{2}(\gameNonZero)$.

\smallskip\noindent\textbf{Defining a witness-and-secure strategy.} In order to prove Thm.~\ref{thm:insideWeakly}, we define a pure finite-memory \textit{witness-and-secure strategy} as follows.

\begin{definition}
\label{def:mp_stratWitAndFlee}
In a game $\game$ such that $\markovProcess = \game[\stratStoch]$, $\ec \in \winningECs$ is a winning EC and $\initState \in \ec$ is the initial state, we define the {\em witness-and-secure strategy} $\stratWNS \in \stratsPureFinite_{1}(\game)$ as follows.
\begin{itemize}
\item[(i)] Play the combined strategy $\stratComb \in \stratsPureFinite_{1}(\game_{\mpTrans} \reduc \ec)$ as long as $\playerTwo$ picks edges in $\edgesNonZero$.

\item[(ii)] As soon as $\playerTwo$ takes an edge in $\edges \setminus \edgesNonZero$,\footnote{More complex switching schemes could be used, such as only switching if the edge taken is really dangerous (i.e., part of a non strictly positive cycle), switching after a bounded number of deviations from the support, etc. But this simple scheme proves to be sufficient to realize Thm.~\ref{thm:insideWeakly} and is easier to analyze.} play the worst-case winning strategy $\stratSecure \in \stratsPureMemoryless_{1}(\game)$ forever.
\end{itemize}
\end{definition}
This strategy makes use of the combined strategy $\stratComb$ presented in Def.~\ref{def:mp_stratComb}. Hence it is similarly parameterized by two naturals $\stepsExp$ and $\stepsWC$ that respectively define the lengths of periods of type $\typeA$ and of type $\typeB$ in  Def.~\ref{def:mp_stratComb}. Again, we will show that for any $\varepsilon > 0$, we can find values for $\stepsExp$ and $\stepsWC$ such that Thm.~\ref{thm:insideWeakly} is verified. Notice that $\stratWNS$ is finite-memory since $\stratComb$ and $\stratSecure$ are too and watching for the appearance of an edge belonging to $\edges \setminus \edgesNonZero$ in the actions of the adversary only requires a finite amount of memory.

Intuitively, the witness-and-secure strategy acts as follows. As long as $\playerTwo$ conforms to $\edgesNonZero$, playing in $\game$ is essentially the same as playing in $\gameNonZero$. Hence $\stratWNS$ prescribes acting like $\stratComb$, which induces satisfaction of the $\BWC$ problem in $\gameNonZero \reduc \ec$ by Thm.~\ref{thm:insideWinning}. Two requirements must be satisfied by $\stratWNS$: (a) the worst-case and (b) the expected value. First consider (a). Two situations may occur. Either the outcome is such that $\stratWNS$ always stays in phase (i) and strategy $\stratComb$ is used forever in $\gameNonZero \reduc \ec$, in which case direct application of Thm.~\ref{thm:insideWinning} suffices to prove that Eq.~\eqref{eq:thresholdWC} is satisfied. Or, the outcome is such that $\stratWNS$ switches to phase (ii), in which case satisfaction of the worst-case requirement follows by definition of $\stratSecure$. Now consider (b). Notice that the only consistent outcomes always stay in phase~(i), by definition of the set $\outcomesGame{\game}{\initState}{\stratWNS}{\stratStoch}$ which does not allow for choices outside of $\edgesNonZero$. Hence the overall expectation is equal to the one over outcomes staying in phase~(i). By Def.~\ref{def:mp_stratWitAndFlee}, the latter is exactly the expectation of~$\stratComb$, which satisfies the threshold by Thm.~\ref{thm:insideWinning}. Thus, the existence of fitting values of $\stepsExp$ and $\stepsWC$ for Thm.~\ref{thm:insideWeakly} is guaranteed.

\smallskip\noindent\textbf{Illustration.} Consider the winning EC $\ec_{2}$ in the game of Fig.~\ref{fig:mpRunningExample} and the initial state $\state_{6} \in \ec_{2}$. Notice that $\playerOne$ can ensure a strictly positive mean-payoff in the subgame $\gameNonZero \reduc \ec_{2}$, but not in $\game \reduc \ec_{2}$. Indeed, by always choosing the $-1$ edges (which requires edge $(\state_{7}, \state_{6}) \in \edgesNonZero \setminus \edges$), $\playerTwo$ can ensure a negative mean-payoff whatever the strategy of $\playerOne$. However, there exists a strategy that ensures Eq.~\eqref{eq:thresholdWC}, i.e., yields a strictly positive mean-payoff against any strategy in $\strats_{2}(\game)$, by leaving the EC. Let $\stratSecure$ be the memoryless strategy that takes the edge $(\state_{6}, \state_{9})$ and then cycle on $(\state_{10}\state_{9})^{\omega}$ forever: it guarantees a mean-payoff of $1 > 0$.

For a moment, consider the EC $\ec_{2}$ in $\gameNonZero$. Graphically, it means that the $-1$ edge from $\state_{7}$ to $\state_{6}$ disappears. In the subgame $\gameNonZero \reduc \ec_{2}$, there are two particular memoryless strategies. The optimal worst-case strategy $\stratWC$ guarantees a mean-payoff of $1/2 > 0$ by choosing to go to $\state_{7}$. The optimal expectation strategy $\stratExp$ yields an expected mean-payoff of $3$ by choosing to go to $\state_{8}$ (notice this strategy yields the same expectation in $\markovProcessNonZero \reduc \ec_{2}$ and $\markovProcess \reduc \ec_{2}$). Based on them, we build the combined strategy $\stratComb \in \stratsPureFinite_{1}(\game_{\mpTrans} \reduc \ec_{2})$ as defined in Def.~\ref{def:mp_stratComb} and by Thm.~\ref{thm:insideWinning}, for any $\varepsilon > 0$, there are values of $\stepsExp$ and $\stepsWC$ such that it satisfies the $\BWC$ problem for thresholds $(0,\, 3-\varepsilon)$ in $\gameNonZero \reduc \ec_{2}$. For example, for parameters $\stepsExp = \stepsWC = 2$, we have that $\expect_{\state_{6}}^{(\markovProcess_{\mpTrans} \reduc \ec_{2})[\stratComb]}(\mpay) = \expect_{\state_{6}}^{(\markovProcess \reduc \ec_{2})[\stratComb]}(\mpay) = 13/6$.

We construct the witness-and-secure strategy $\stratWNS \in \stratsPureFinite_{1}(\game)$ based on $\stratComb$ and $\stratSecure$ as described by Def.~\ref{def:mp_stratWitAndFlee}. In this case, that means playing as $\stratComb$ until the $-1$ edge from $\state_{7}$ to $\state_{6}$ is taken by $\playerTwo$. As previously sketched, such a strategy ensures a worst-case mean-payoff equal to $1 > 0$ thanks to $\stratSecure$ and yields an expectation $\expect_{\state_{6}}^{\markovProcess[\stratWNS]}(\mpay) = 13/6$ for $\stepsExp = \stepsWC = 2$.

Finally, notice that securing the mean-payoff by switching to phase (ii) of $\stratWNS$ \textit{is needed} to satisfy the worst-case requirement if $\playerTwo$ plays in $\edges \setminus \edgesNonZero$. Also, observe that it is still necessary to alternate according to $\stratComb$ in $\game_{\mpTrans} \reduc \ec_{2}$ and that playing $\stratExp$ is not sufficient to ensure the worst-case (because $\playerOne$ has to deal with the $-1$ edge from $\state_{8}$ to $\state_{6}$ that remains in $\edgesNonZero$).

\smallskip\noindent\textbf{Analysis of the witness-and-secure strategy.} We close our discussion of winning ECs with the formal proof of Thm.~\ref{thm:insideWeakly} through the use of the witness-and-secure strategy $\stratWNS$ (Def.~\ref{def:mp_stratWitAndFlee}).

\begin{proof}[Proof of Theorem~\ref{thm:insideWeakly}]
Assume an arbitrary $\varepsilon > 0$. Let $\stepsExp, \stepsWC \in \nat$ be such that the combined strategy $\stratComb \in \stratsPureFinite_{1}(\game_{\mpTrans} \reduc \ec)$, as defined in Def.~\ref{def:mp_stratComb}, satisfies the $\BWC$ problem for thresholds $(0,\, \optimalExp - \varepsilon)$ in $\game_{\mpTrans} \reduc \ec$. The existence of such values is guaranteed by Thm.~\ref{thm:insideWinning}. We build the finite-memory strategy $\stratWNS \in \stratsPureFinite_{1}(\game)$ according to Def.~\ref{def:mp_stratWitAndFlee} and claim it satisfies the $\BWC$ problem for thresholds $(0,\, \optimalExp - \varepsilon)$ in $\game$.

First, consider the worst-case. Let $\strat_{2} \in \strats_{2}(\game)$ be any strategy of $\playerTwo$ and $\play \in \outcomesGame{\game}{\initState}{\stratWNS}{\strat_{2}}$ be any outcome consistent with $\stratWNS$. Two cases are possible. One, $\playerTwo$ keeps choosing edges in $\edgesNonZero$ forever. That is, for $\play = \state_{0}\state_{1}\state_{2}\ldots{}$, for all $i \geq 0$ such that $\state_{i} \in \statesProb = \statesTwo$, we have that $(\state_{i}, \state_{i+1}) \in \edges_{\mpTrans}$. Then, the play is constrained to $\game_{\mpTrans} \reduc \ec$ and consistent with $\stratComb$. Hence, it follows from Thm.~\ref{thm:insideWinning} (and Lemma~\ref{lem:mp_swecWC}) that $\mpay(\play) > 0$. Two, $\playerTwo$ chooses some edges in $\edges \setminus \edgesNonZero$. That is, for $\play = \state_{0}\state_{1}\state_{2}\ldots{}$, there exists $i \geq 0$ such that $(\state_{i}, \state_{i+1}) \not\in \edges_{\mpTrans}$. Let $i_{0}$ be the smallest index where it happens. By definition of $\stratWNS$, we know that $\playerTwo$ switches to $\stratSecure$ at step $i_{0}$. Hence the suffix $\play' = \state_{i_{0}}\state_{i_{0}+1}\state_{i_{0}+2}\ldots{}$ is consistent with $\stratSecure$. Consequently, $\mpay(\play') > 0$. By prefix-independence of the mean-payoff value function, we conclude that $\mpay(\play) > 0$, which closes the case of the worst-case requirement.

Second, consider the expected value requirement. We claim that $\expect_{\initState}^{\game[\stratWNS, \stratStoch]}(\mpay) > \optimalExp - \varepsilon$. By definition, $\outcomesGame{\game}{\initState}{\stratWNS}{\stratStoch} = \outcomesMDP{\markovProcess}{\initState}{\stratWNS}$ only contains plays where $\playerTwo$ conforms to $\edgesNonZero$ at all times. Such plays never exit the EC and by Def.~\ref{def:mp_stratWitAndFlee}, we have that $\outcomesGame{\game}{\initState}{\stratWNS}{\stratStoch} = \outcomesGame{\game_{\mpTrans}}{\initState}{\stratComb}{\stratStoch}$. Also note that the probability measure of plays of those two sets is identical. Hence, by Thm.~\ref{thm:insideWinning} (and Lemma~\ref{lem:mp_swecExp}), we obtain that
$\expect_{\initState}^{\game[\stratWNS, \stratStoch]}(\mpay) = \expect_{\initState}^{\game_{\mpTrans}[\stratComb, \stratStoch]}(\mpay)  > \optimalExp - \varepsilon$.
This sets the case for the expectation and concludes our proof.
\end{proof}

\subsection{\textbf{Global strategy: favor reaching the highest valued winning end-components}}
\label{subsec:mpGlobal}

We now have all the elements needed to describe the final steps of algorithm $\mpAlgoName$ (lines~\ref{alg:mp_modifyWeights}-\ref{alg:mp_main_end}) and prove its correctness. In this section, we first describe (Def.~\ref{def:mp_weightsZeroInLosingECs}) how to modify the weights of the MDP $\markovProcess = \game[\stratStoch]$ such that a classical optimal expectation strategy in the modified MDP $\markovProcess'$ will naturally try to reach \textit{winning} ECs with the highest combined expectation. This step is a cornerstone of the \textit{global strategy} $\stratGlobal \in \stratsPureFinite_{1}(\game)$ that we define next (Def.~\ref{def:mp_globalStrategy}). This strategy is a by-product of algorithm $\mpAlgoName$.

We study the adequacy of algorithm $\mpAlgoName$ through two lemmas. In Lemma~\ref{lem:mp_algoCorrectness}, we prove its \textit{correctness}, i.e., that if it returns $\yes$, then the global strategy $\stratGlobal$ satisfies the $\BWC$ problem for the given thresholds. In Lemma~\ref{lem:mp_algoCompleteness}, we show its \textit{completeness}, i.e., that if it returns $\no$, then there exists no finite-memory strategy that satisfies the $\BWC$ problem. Combining those two lemmas and the analysis of the preprocessing conducted in Sect.~\ref{subsec:mpAssumptions}, we conclude that algorithm $\mpAlgoName$ is a valid algorithm to solve the $\BWC$ problem on any two-player game with the mean-payoff value function.

\smallskip\noindent\textbf{Modifying the MDP to naturally reach winning ECs.} Our motivation is the creation of an MDP $\markovProcess'$ such that an optimal strategy in $\markovProcess'$ maximizes the expectation without using negligible states (as defined in Sect.~\ref{subsec:mpClassification}, that is, with regard to $\game$ and $\markovProcess$). Indeed, we know by Lemma~\ref{lem:winningECsProbaOne} that winning ECs should be almost-surely eventually used in order to satisfy the worst-case requirement of the $\BWC$ problem. In particular, states in losing ECs and not in any winning sub-EC should be avoided in the long-run.

\begin{definition}
\label{def:mp_weightsZeroInLosingECs}
Given $\gameFull$, $\graphFull$ and $\markovProcess = \game[\stratStoch]$, we define $\game' = (\graph', \statesOne, \statesTwo)$, $\graph' = (\states, \edges, \weight')$ and $\markovProcess' = \game'[\stratStoch]$ by modifying the weight function as follows:
\begin{equation*}
\forall\, e = (\state_{1}, \state_{2}) \in \edges,\, \weight'(e) := \begin{cases}\weight(e) \text{ if } \exists\: \ec \in \maxWinningECs \text{ s.t. } \{\state_{1}, \state_{2}\} \subseteq \ec,\\0 \text{ otherwise.} \end{cases}
\end{equation*}
\end{definition}

Let $\stratExp \in \stratsPureMemoryless_{1}(\markovProcess')$ be a pure memoryless strategy of $\playerOne$ that maximizes the expected mean-payoff in $\markovProcess'$. Such a strategy always exists~\cite{filar1997}. Note that following $\stratExp$ does not suffice to satisfy the $\BWC$ problem in general (Rem.~\ref{rmk:mp_memorylessNotEnough}). This strategy will be part of the global strategy $\stratGlobal$ (Def.~\ref{def:mp_globalStrategy}): its role is to maximize the combined expectation of reachable winning ECs, while avoiding using negligible states, in particular states that only belong to losing ECs. The strategy to adopt inside winning ECs will be prescribed by another part of the global strategy, based on what we have established in Sect.~\ref{subsec:mpInsideWeakly}. Observe that it suffices to consider the \textit{maximal} winning ECs in order to maximize the expectation, as proved by Lemma~\ref{lem:mp_ecsMaximal}.

\begin{remark}
\label{rem:optiExpModifMDP}
Notice that $\stratExp$ is also well-defined in $\markovProcess$ and $\game$ thanks to the shared underlying graph. Also, recall that all states of $\game$ are worst-case winning due to the preprocessing. Let $\stratWC \in \stratsPureMemoryless_{1}(\game)$ be an optimal worst-case winning strategy. We observe that all states remain worst-case winning in $\game'$ for the reason that an optimal worst-case strategy only needs to visit edges involving negligible states finitely often. Indeed, either these negligible states do not belong to any EC, in which case $\playerOne$ cannot rely on them to satisfy the worst-case requirement (as he cannot ensure that he will be able to see them infinitely often), or they belong to losing ECs and no winning sub-EC, in which case $\playerOne$ only needs to visit them a finite number of times (basically to get out of the set $(\bigcup_{\ec \in \ecsSet} \ec) \setminus (\bigcup_{\ec \in \winningECs} \ec)$ if the play starts in it and reach the set $\bigcup_{\ec \in \winningECs} \ec$). Hence, the guaranteed mean-payoff is not impacted by the changes described in Def.~\ref{def:mp_weightsZeroInLosingECs}: it remains strictly positive in all states. By virtue of this, we deduce that
$\forall\, \state \in \states,\; \expect_{\state}^{\markovProcess'[\stratExp]}(\mpay) \geq \expect_{\state}^{\markovProcess'[\stratWC]}(\mpay) > 0$,
and as such, that strategy $\stratExp$ will not prescribe staying in the set $(\bigcup_{\ec \in \ecsSet} \ec) \setminus (\bigcup_{\ec \in \winningECs} \ec)$ forever. Indeed, it is always beneficial to exit it and obtain a strictly positive expectation instead of an expectation equal to zero (recall all edges involving negligible states are mapped to weight zero by Def.~\ref{def:mp_weightsZeroInLosingECs}).
\end{remark}

\smallskip\noindent\textbf{Defining a global strategy.} Based on the memoryless strategies $\stratExp$ and $\stratWC$ in $\game$ (as defined above), and the pure finite-memory witness-and-secure strategy $\stratWNS$ in winning ECs (as presented in Def.~\ref{def:mp_stratWitAndFlee} - parameters $\stepsExp$ and $\stepsWC$ may vary depending on the actual corresponding EC), we build a \textit{global strategy}~$\stratGlobal$ in $\game$ as follows. This strategy is parameterized by a natural constant $\stepsGlobal \in \nat$.
\begin{definition}
\label{def:mp_globalStrategy}
In a game $\game$, we define the {\em global strategy} $\stratGlobal \in \stratsPureFinite_{1}(\game)$ as follows.

\begin{itemize}
\item[$(a)$] Play $\stratExp \in \stratsPureMemoryless_{1}(\game)$ for $\stepsGlobal$ steps.
\item[$(b)$] Let $\state \in \states$ be the reached state.
\begin{itemize}
\item[$(b.1)$] If $\state \in \ec \in \maxWinningECs$, play the corresponding strategy $\stratWNS \in \stratsPureFinite_{1}(\game)$ forever.
\item[$(b.2)$] Else play $\stratWC \in \stratsPureMemoryless_{1}(\game)$ forever.
\end{itemize}
\end{itemize}
\end{definition}
Let us sketch this strategy. In phase $(a)$, the optimal expectation strategy in $\markovProcess'$ is followed. It will drive the outcomes towards the ECs with the highest expected values. By taking $\stepsGlobal$ large enough, we can ensure that the probability of being in an EC will be arbitrarily close to one (by Lemma~\ref{lem:ecsProbaOne}). As a result of the weights modification described in Def.~\ref{def:mp_weightsZeroInLosingECs}, we can further ensure that the probability of being inside a \textit{winning} EC will be arbitrarily close to one. Note that in phase $(b.1)$, the witness-and-secure strategy guarantees satisfaction of the worst-case requirement while yielding an expectation arbitrarily close to the optimal expectation of the EC (as proved in Sect.~\ref{subsec:mpInsideWeakly}). Also, in phase~$(b.2)$, the mean-payoff of outcomes is strictly positive, and the probability of being in $(b.2)$ can be arbitrarily close to zero for large enough values of $\stepsGlobal$. Overall, we obtain that $\stratGlobal$ satisfies the worst-case requirement because the strategies played in the two terminal phases, $(b.1)$ and $(b.2)$, all guarantee its satisfaction and the mean-payoff is prefix-independent (hence it is not impacted by phase $(a)$). Furthermore, the expectation of $\stratGlobal$ can be arbitrarily close to the maximal expectation $\optimalExp$ achievable in $\markovProcess'$ (i.e., the one achieved by~$\stratExp$) by taking sufficiently large values for the constants $\stepsExp$, $\stepsWC$ and~$\stepsGlobal$. Hence, if $\optimalExp > \thresholdExp$, $\stratGlobal$ is a proper $\BWC$ satisfying strategy for $\playerOne$.

Finally, $\optimalExp$ constitutes an upper bound to the expectation of any strategy of $\playerOne$ in $\markovProcess'$. By Lemma~\ref{lem:winningECsProbaOne}, it is also an upper bound on the expectation of any strategy that satisfies the worst-case requirement in the original game and MDP. It follows that if $\optimalExp \leq \thresholdExp$, then there exists no finite-memory strategy that satisfies the $\BWC$ problem.
As the validity of the preprocessing was shown in Sect.~\ref{subsec:mpAssumptions}, this let us conclude that algorithm $\mpAlgoName$ is both correct \textit{and} complete.

\smallskip\noindent\textbf{Illustration.} Consider the game $\game$ depicted in Fig.~\ref{fig:mpRunningExample} and the associated MDP $\markovProcess = \game[\stratStoch]$. Following Lemma~\ref{lem:ecsProbaOne}, analysis of the maximal ECs $\ec_{1}$, $\ec_{2}$ and $\ec_{3}$ reveals that the maximal expected mean-payoff achievable in $\markovProcess$ is $4$. It is for example obtained by the memoryless strategy that chooses to go to $\state_{2}$ from $\state_{1}$ and to $\state_{4}$ from $\state_{3}$. Observe that playing in $\ec_{1}$ forever is needed to achieve this expectation. By Lemma~\ref{lem:winningECsProbaOne}, this should not be allowed as the worst-case cannot be ensured if it is. Indeed, $\playerTwo$ can produce worst-case losing outcomes by playing the $-1$ edge. Clearly, the maximal expected value that $\playerOne$ can ensure while guaranteeing the worst-case requirement is thus bounded by the maximal expectation in $\markovProcess'$, i.e., by $3$. Let $\stratExp$ denote an optimal memoryless expectation strategy in $\markovProcess'$ that tries to enter~$\ec_{2}$ by playing $(\state_{1}, \state_{2})$ and $(\state_{3}, \state_{5})$, and then plays edge $(\state_{6}, \state_{8})$ forever.

Observe that algorithm $\mpAlgoName$ answers $\yes$ for any thresholds pair $(0,\, \thresholdExp)$ such that $\thresholdExp < 3$. For the sake of illustration, we construct the global strategy $\stratGlobal$ as presented in Def.~\ref{def:mp_globalStrategy} with $\stepsGlobal = 6$ and $\stepsExp = \stepsWC = 2$. For the first six steps, it behaves exactly as $\stratExp$. Note that after the six steps, the probability of being in $\ec_{2}$ is $1/4 + 1/8 = 3/8$. Then, $\stratGlobal$ switches to another strategy depending on the current state ($\stratWNS$ or $\stratWC$) and sticks to this strategy forever. Particularly, if the current state belongs to $\ec_{2}$, it switches to $\stratWNS$ as described in Def.~\ref{def:mp_stratWitAndFlee} for $\stepsExp = \stepsWC = 2$, which guarantees the worst-case threshold and induces an expectation of $13/6$ (Sect.~\ref{subsec:mpInsideWeakly}). By definition of $\stratGlobal$ on the sample game~$\game$, if the current state after six steps is not in $\ec_{2}$, then $\stratGlobal$ switches to $\stratWC$ which guarantees a mean-payoff of $1$ by reaching state $\state_{9}$ and then playing $(\state_{9}\state_{10})^{\omega}$. Overall, the expected mean-payoff of $\stratGlobal$ against $\stratStoch$ is $\expect_{\state_{1}}^{\game[\stratGlobal, \stratStoch]}(\mpay) \geq \frac{3}{8}\cdot\frac{13}{6} + \frac{5}{8}\cdot 1 = \frac{23}{16}$.
Notice that by taking $\stepsGlobal$, $\stepsExp$ and $\stepsWC$ large enough, it is possible to satisfy the $\BWC$ problem for any $\thresholdExp < 3$ with the strategy $\stratGlobal$. Also, observe that the winning EC $\ec_{2}$ is crucial to achieve expectations strictly greater than $2$, which is the upper bound when limited to EC $\ec_{3}$. For example, $\stepsGlobal = 25$ and $\stepsExp = \stepsWC = 2$ implies an expectation strictly greater than $2$ for the global strategy.

Lastly, note that in general, the maximal expectation achievable in $\markovProcess'$ (and thus in $\markovProcess$ when limited to strategies that respect the worst-case requirement) may depend on a combination of ECs instead of a unique one. This is transparent through the solving of the expected value problem in the MDP $\markovProcess'$. Hence, the approach followed by our algorithm is a way of solving a complex problem by breaking it into smaller pieces.

\smallskip\noindent\textbf{Correctness and completeness.} We start by proving the correctness of the algorithm $\mpAlgoName$ described in Alg.~\ref{alg:mp} and the soundness of the global strategy presented in Def.~\ref{def:mp_globalStrategy} to satisfy the $\BWC$ problem.

\begin{lemma}[correctness]
\label{lem:mp_algoCorrectness}
If algorithm $\mpAlgoName$ answers $\yes$, then there exist values of the parameters~$N$ and $K$ such that the global strategy $\stratGlobal \in \stratsPureFinite_{1}$ satisfies the $\BWC$ mean-payoff problem.
\end{lemma}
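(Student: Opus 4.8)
The plan is to exhibit concrete (but large) values for the three parameters $\stepsGlobal$, $\stepsExp$, $\stepsWC$ and to verify that the resulting global strategy $\stratGlobal$ of Def.~\ref{def:mp_globalStrategy} satisfies both requirements of the $\BWC$ problem whenever $\mpAlgoName$ answers $\yes$, i.e.\ whenever $\optimalExp > \thresholdExp$. Since the preprocessing (Sect.~\ref{subsec:mpAssumptions}) is sound and complete, we may work directly in the preprocessed game $\game$ with worst-case threshold $0$, memoryless stochastic model $\stratStoch$, and all states worst-case winning. Recall that $\optimalExp$ is the maximal expected mean-payoff in the modified MDP $\markovProcess'$ (Def.~\ref{def:mp_weightsZeroInLosingECs}), achieved by the pure memoryless strategy $\stratExp$, and fix $\varepsilon>0$ with $\varepsilon < \optimalExp - \thresholdExp$.

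First I would handle the \emph{worst-case requirement} (Eq.~\eqref{eq:thresholdWC}), which is the easy half. Take any $\strat_{2}\in\strats_{2}(\game)$ and any outcome $\play$ consistent with $\stratGlobal$ and $\strat_{2}$. After the $\stepsGlobal$-step prefix of phase $(a)$, the play enters either phase $(b.1)$ — in which case the suffix is consistent with the witness-and-secure strategy $\stratWNS$ of the reached winning EC, so its mean-payoff is strictly positive by Thm.~\ref{thm:insideWeakly} — or phase $(b.2)$, in which case the suffix is consistent with the pure memoryless worst-case winning strategy $\stratWC$, and by Rem.~\ref{rem:optiExpModifMDP} this strategy still guarantees a strictly positive mean-payoff in $\game$ (the weight modification of Def.~\ref{def:mp_weightsZeroInLosingECs} does not hurt it). In both cases the suffix has strictly positive mean-payoff, and since the mean-payoff is prefix-independent, so does $\play$. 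Hence Eq.~\eqref{eq:thresholdWC} holds for \emph{every} choice of the parameters.

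The core of the argument is the \emph{expected value requirement} (Eq.~\eqref{eq:thresholdExp}). Here I would reason in the MC $\game[\stratGlobal,\stratStoch]$. Consider the random state $X$ reached after the $\stepsGlobal$ steps of phase $(a)$, which is played according to $\stratExp$ followed against $\stratStoch$ — i.e.\ according to the memoryless strategy attaining $\optimalExp$ in $\markovProcess'$. By Lemma~\ref{lem:ecsProbaOne} and the convergence of absorption probabilities in finite MCs (exponential rate, cf.~\cite{grinstead_AMS1997}), the probability that $X$ lies in some EC of $\markovProcess$ tends to one as $\stepsGlobal\to\infty$; and because the weights in $\markovProcess'$ are zero outside maximal winning ECs while the expectation from any winning EC is strictly positive (so $\stratExp$ has no incentive to be trapped in a losing EC, cf.~Rem.~\ref{rem:optiExpModifMDP}), in fact $\proba(X\in\bigcup_{\ec\in\maxWinningECs}\ec)\to 1$. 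Moreover, conditioned on ending in a particular $\ec\in\maxWinningECs$, the subsequent play under $\stratWNS$ yields, by Thm.~\ref{thm:insideWeakly}, an expected mean-payoff $> \optimalExp_{\ec}-\varepsilon$, where $\optimalExp_{\ec}$ is the optimal expectation in $\markovProcess\reduc\ec$; and by the definition of $\optimalExp$ via $\markovProcess'$ together with Lemma~\ref{lem:mp_ecsMaximal}, the $\proba$-weighted average of the $\optimalExp_{\ec}$ over the reachable maximal winning ECs, plus the vanishing contribution of phase $(b.2)$ and of the negligible event $X\notin\bigcup\maxWinningECs$, converges to $\optimalExp$ as $\stepsGlobal\to\infty$. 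Using prefix-independence of the mean-payoff to discard the bounded phase-$(a)$ prefix, I would then conclude that $\expect_{\initState}^{\game[\stratGlobal,\stratStoch]}(\mpay)\to \optimalExp - \varepsilon'$ for any fixed tolerance, hence $> \optimalExp - \varepsilon > \thresholdExp$ for $\stepsGlobal$ (and the internal parameters $\stepsExp,\stepsWC$ of each $\stratWNS$) chosen large enough.

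The main obstacle is making the convergence argument of the last paragraph rigorous: one must carefully decompose $\expect_{\initState}^{\game[\stratGlobal,\stratStoch]}(\mpay)$ according to which maximal winning EC (if any) is reached after phase $(a)$, bound the expectation of plays not trapped in a winning EC (these have mean-payoff in $[-\largestW,\largestW]$ but occur with probability $\to 0$) and the expectation contributed by phase $(b.2)$, and then invoke two independent convergence regimes simultaneously — the exponential decay of $\proba(X\notin\bigcup\maxWinningECs)$ in $\stepsGlobal$, and the $\varepsilon$-closeness of $\stratWNS$ to $\optimalExp_{\ec}$ from Thm.~\ref{thm:insideWeakly} — without the two interfering. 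The cleanest route is: fix $\varepsilon$ first, then pick per-EC parameters $\stepsExp,\stepsWC$ so each $\stratWNS$ is $(\varepsilon/2)$-optimal in its EC, \emph{then} pick $\stepsGlobal$ large enough that the residual probability mass times the width $2\largestW$ is below $\varepsilon/2$ and that $\proba(X\in\bigcup\maxWinningECs)$-conditioned average of $\optimalExp_{\ec}$ is within $\varepsilon/2$ of $\optimalExp$. One also has to note, as in the footnote to Def.~\ref{def:mp_stratComb}, that different periods of $\stratWNS$ start in different states, but the resulting discrepancy is $O(1/\stepsExp)$ and absorbed in the tolerance. Everything else reduces to prefix-independence, linearity of expectation, and the already-established Lemmas~\ref{lem:ecsProbaOne}, \ref{lem:winningECsProbaOne}, \ref{lem:mp_ecsMaximal} and Thm.~\ref{thm:insideWeakly}.
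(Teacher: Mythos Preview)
Your proposal is correct and follows essentially the same approach as the paper's proof: the worst-case argument is identical (prefix-independence plus Thm.~\ref{thm:insideWeakly} for phase $(b.1)$ and the existence of $\stratWC$ for phase $(b.2)$), and for the expectation you reproduce the paper's decomposition $\sum_{\ec}p_{\stepsGlobal}(\ec)\,e_{\stepsExp}(\ec)+\sum_{\state\in\negligibleStates}p_{\stepsGlobal}(\state)\,e_{\textsc{wc}}(\state)$ and its two-parameter convergence to $\optimalExp$. Your sequential choice of parameters (fix $\stepsExp,\stepsWC$ per EC first, then $\stepsGlobal$) is a slightly cleaner presentation than the paper's simultaneous limit $\stepsGlobal,\stepsExp\to\infty$, but the substance is the same.
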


\begin{proof}
We assume the answer returned by $\mpAlgoName$ is $\yes$ and we prove the claim.

First, consider the worst-case requirement (Eq.~\eqref{eq:thresholdWC}). Let $\strat_{2} \in \strats_{2}$ be an arbitrary strategy of $\playerTwo$. Let~$\stepsGlobal$ take an arbitrary value in $\nat$, and for any winning EC $\ec \in \maxWinningECs$, let $\stepsExp_{\ec}$ take an arbitrary value in $\nat$ and $\stepsWC_{\ec}$ be defined according to Def.~\ref{def:mp_stepsWC} with regard to $\stepsExp_{\ec}$. Consider the outcomes consistent with $\stratGlobal$ and $\strat_{2}$. Our goal is to prove that for all outcomes $\play \in \outcomesGame{\game}{\initState}{\stratGlobal}{\strat_{2}}$, we have that $\mpay(\play) > 0$. Let $\play$ be an arbitrary outcome in this set, $\state = \last{\play(\stepsGlobal)}$ be the state reached after phase $(a)$ of the global strategy, and $\play'$ be the suffix play such that $\play = \play(\stepsGlobal)\cdot \play'$. Two cases are possible. First, assume $\state \in \ec$ for some maximal winning EC $\ec \in \maxWinningECs$. Then, $\play'$ is consistent with the witness-and-secure strategy $\stratWNS$ (as presented in Sect.~\ref{subsec:mpInsideWeakly} for initial states in $\ec$). By Thm.~\ref{thm:insideWeakly}, $\mpay(\play') > 0$. Second, assume $\state \not\in \bigcup_{\ec \in \maxWinningECs} \ec$, i.e., $\state \in \negligibleStates$. Then, $\play'$ is consistent with the worst-case winning strategy $\stratWC$ provided by the preprocessing, and we have that $\mpay(\play') > 0$. By prefix-independence of the mean-payoff value function, we conclude that in both cases, $\mpay(\play) = \mpay(\play') > 0$, proving that strategy~$\stratGlobal$ ensures the worst-case requirement.

Second, consider the expected value requirement (Eq.~\eqref{eq:thresholdExp}). We need to prove that $\expect_{\initState}^{\game[\stratGlobal, \stratStoch]}(\mpay) > \thresholdExp$ for some well-chosen values of $\stepsGlobal$ and $\stepsExp \in \nat$. Formally, the value $\stepsExp_{\ec}$ may be different in each winning EC $\ec \in\, \maxWinningECs$, so we will take a uniform value $\stepsExp$ sufficiently large to ensure that it works for all ECs. Values $\stepsWC_{\ec}(\stepsExp)$ are defined according to~Def.~\ref{def:mp_stepsWC}. Again noting that the weights encountered during phase $(a)$ of strategy $\stratGlobal$ have no impact on the mean-payoff of plays (because phase $(a)$ is of finite duration and all weights are also finite), we formulate the expectation as
\begin{equation}
\label{eq:mp_globalStratExpect}
\expect_{\initState}^{\game[\stratGlobal, \stratStoch]}(\mpay) = \sum_{\ec \in\, \maxWinningECs} \big[ p_{\stepsGlobal}(\ec) \cdot e_{\stepsExp}(\ec)\big]  + \sum_{\state \in \negligibleStates} \big[ p_{\stepsGlobal}(\state)\cdot e_{\textsc{wc}}(\state)\big],
\end{equation}
where $p_{\stepsGlobal}(\ec)$ denotes the probability to be in a state belonging to the maximal winning EC $\ec \in \maxWinningECs$ after~$\stepsGlobal$ steps of following strategy~$\stratExp$ (i.e., phase $(a)$); $e_{\stepsExp}(\ec)$ denotes the expectation of plays starting in $\ec$ and consistent with $\stratWNS$ for values $\stepsExp$ and $\stepsWC_{\ec}(\stepsExp)$ of the parameters (this expectation is identical for all initial states in the EC); $p_{\stepsGlobal}(\state)$ denotes the probability to be in a given negligible state $\state \in \negligibleStates$ (i.e., outside of winning ECs) after phase~$(a)$; and~$e_{\textsc{wc}}(s)$ denotes the expectation over plays that start in such a state $\state$ and are consistent with the worst-case strategy~$\stratWC$. Observe that $\sum_{\state \in \negligibleStates} p_{\stepsGlobal}(\state) = 1 - \sum_{\ec \in\, \maxWinningECs} p_{\stepsGlobal}(\ec)$.

Similarly, we write the expectation of the optimal expectation strategy $\stratExp$ in $\markovProcess'$ as $\expect_{\initState}^{\game'[\stratExp, \stratStoch]}(\mpay) = \sum_{\ec \in\, \maxWinningECs} \big[ p(\ec) \cdot e(\ec)\big]$, where $p(\ec)$ and $e(\ec)$ denote the probability and the expectation of maximal winning ECs when strategy $\stratExp$ is followed forever. Note that this equation depends uniquely on winning ECs by consequence of Rem.~\ref{rem:optiExpModifMDP}, and specifically maximal winning ECs by further application of Lemma~\ref{lem:mp_ecsMaximal}. In addition, observe that $\expect_{\initState}^{\game'[\stratExp, \stratStoch]}(\mpay) = \expect_{\initState}^{\game[\stratExp, \stratStoch]}(\mpay) = \optimalExp$, as the weight modification (Def.~\ref{def:mp_weightsZeroInLosingECs}) does not alter winning ECs.

We claim that $\expect_{\initState}^{\game[\stratGlobal, \stratStoch]}(\mpay)$ tends to $\optimalExp$ when $\stepsGlobal$ and $\stepsExp$ tend to infinity. We study the terms of Eq.~\eqref{eq:mp_globalStratExpect}. Note that $e_{\textsc{wc}}$ takes a bounded value (in $\left] 0,\, \largestW\right] $ by definition of $\stratWC$). By application of the analysis developed in Lemma~\ref{lem:mp_swecExp} and Theorem~\ref{thm:insideWeakly}, we have that $\forall\, \ec \in\, \maxWinningECs,\: e_{\stepsExp}(\ec) \xrightarrow[\stepsExp \rightarrow \infty]{}  e(\ec)$. Furthermore, by definition of $\stratExp$ we have that $\forall\, \ec \in\, \maxWinningECs,\: p_{\stepsGlobal}(\ec) \xrightarrow[\stepsGlobal \rightarrow \infty]{}  p(\ec)$,
and by definition of the modified weight function (Def.~\ref{def:mp_weightsZeroInLosingECs}) and Rem.~\ref{rem:optiExpModifMDP}, that $\sum_{\ec \in\, \maxWinningECs} p_{\stepsGlobal}(\ec) \xrightarrow[\stepsGlobal \rightarrow \infty]{}  \sum_{\ec \in\, \maxWinningECs} p(\ec) = 1$. Summing up, we obtain as needed that $\expect_{\initState}^{\game[\stratGlobal, \stratStoch]}(\mpay) \xrightarrow[\stepsGlobal,\, \stepsExp \rightarrow \infty]{} \optimalExp$.

By convergence, for all $\varepsilon > 0$, there exist $\stepsGlobal, \stepsExp \in \nat$ such that $\expect_{\initState}^{\game[\stratGlobal, \stratStoch]}(\mpay) \geq \optimalExp - \varepsilon$. Since algorithm $\mpAlgoName$ answered $\yes$, we have that $\optimalExp > \thresholdExp$. Hence, there exist values $\stepsGlobal, \stepsExp \in \nat$ such that $\expect_{\initState}^{\game[\stratGlobal, \stratStoch]}(\mpay) > \thresholdExp$. This concludes the proof.
\end{proof}

In order to prove that the algorithm solves the $\BWC$ problem (Def.~\ref{def:bwc_problem}) for the mean-payoff value function, we still need to establish its completeness: if the global strategy does not suffice to satisfy some thresholds pair, then no finite-memory strategy can do it.

\begin{lemma}[completeness]
\label{lem:mp_algoCompleteness}
If algorithm $\mpAlgoName$ answers $\no$, then there exists no finite-memory strategy of $\playerOne$ that satisfies the $\BWC$ mean-payoff problem.
\end{lemma}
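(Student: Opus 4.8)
The plan is to argue by contraposition along the structure the algorithm already imposes: assume $\playerOne$ has some finite-memory strategy $\strat_{1}^{f} \in \stratsFinite_{1}$ satisfying the $\BWC$ problem, and show that $\mpAlgoName$ cannot have answered $\no$. First I would discharge the preprocessing. By Lemma~\ref{lem:mp_thresholdChange} and Lemma~\ref{lem:mp_memorylessStochModel}, existence of a finite-memory $\BWC$ satisfying strategy transfers between the input instance and the preprocessed instance $(\game, \stratStoch, (0, \thresholdExp))$, so it suffices to argue on the latter, where $\thresholdWC = 0$, $\stratStoch$ is memoryless, and every state of $\game$ is worst-case winning. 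This last point already rules out the $\no$ returned at line~\ref{alg:mp_losingNoB}: if $\initState^{i} \notin \states_{\textit{WC}}$, then by Lemma~\ref{lem:mp_noLosingStates} no $\BWC$ satisfying strategy can be played from $\initState^{i}$ at all, since such a strategy must avoid every state outside $\states_{\textit{WC}}$. Hence the only remaining way for the algorithm to answer $\no$ is at line~\ref{alg:mp_maxExpComp}, i.e.\ $\optimalExp \leq \thresholdExp$, and the work is to contradict Eq.~\eqref{eq:thresholdExp}.

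The core argument is that a finite-memory $\BWC$ satisfying strategy extracts no expectation from negligible states, so its expected mean-payoff in $\markovProcess$ is bounded by the optimal expectation $\optimalExp$ of the modified MDP $\markovProcess'$ of Def.~\ref{def:mp_weightsZeroInLosingECs}. Concretely, by Lemma~\ref{lem:winningECsProbaOne}, under $\strat_{1}^{f}$ we have $\infVisited{\play} \in \winningECs$ for $\proba^{\markovProcess[\strat_{1}^{f}]}_{\initState}$-almost every outcome $\play$. Every winning EC is contained in a (unique) maximal winning EC (the union of two intersecting winning ECs is again a winning EC), so for almost every $\play$ there is $\ec \in \maxWinningECs$ with $\infVisited{\play} \subseteq \ec$; consequently all but finitely many edges traversed by $\play$ have both endpoints in $\ec$, where $\weight$ and $\weight'$ agree by Def.~\ref{def:mp_weightsZeroInLosingECs}. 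Since the mean-payoff is prefix-independent, $\mpay_{\weight}(\play) = \mpay_{\weight'}(\play)$ for almost every $\play$, and since mean-payoff values lie in $[-\largestW, \largestW]$ for both weight functions the measure-zero exceptional set is harmless; hence $\expect_{\initState}^{\game[\strat_{1}^{f}, \stratStoch]}(\mpay) = \expect_{\initState}^{\markovProcess[\strat_{1}^{f}]}(\mpay_{\weight}) = \expect_{\initState}^{\markovProcess'[\strat_{1}^{f}]}(\mpay_{\weight'})$. The latter is at most $\optimalExp$, the maximal expected value in $\markovProcess'$, since memoryless strategies are optimal there~\cite{filar1997} and hence $\optimalExp$ is the supremum over all strategies, finite-memory ones included; Lemma~\ref{lem:mp_ecsMaximal} is what guarantees nothing is lost by zeroing weights outside the \emph{maximal} winning ECs. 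Combining with the algorithm's answer, $\expect_{\initState}^{\game[\strat_{1}^{f}, \stratStoch]}(\mpay) \leq \optimalExp \leq \thresholdExp$, which contradicts Eq.~\eqref{eq:thresholdExp}; so no such $\strat_{1}^{f}$ exists and the $\no$ answer is correct.

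The step I expect to be the main obstacle is the clean justification that the almost-sure equality $\mpay_{\weight}(\play) = \mpay_{\weight'}(\play)$ upgrades to equality of expectations, equivalently that edges incident to negligible states contribute nothing to $\expect_{\initState}^{\markovProcess[\strat_{1}^{f}]}(\mpay)$. This rests on three ingredients that must be stitched together: finiteness of the memory (so Lemma~\ref{lem:winningECsProbaOne} applies and $\infVisited{\play}$ is a.s.\ a \emph{winning} EC, not merely an EC), prefix-independence of the mean-payoff (so the finite prefix outside the terminal maximal winning EC is irrelevant), and the uniform bound $\largestW$ on play values (so the null set where the two values might differ cannot move the integral). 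Everything else — the transfer through the preprocessing and the optimality bound $\optimalExp$ in $\markovProcess'$ — follows immediately from the quoted lemmas.
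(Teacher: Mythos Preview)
Your proposal is correct and follows essentially the same route as the paper's proof: both argue by contraposition, rule out the $\no$ at line~\ref{alg:mp_losingNoB} via the worst-case requirement, and then use Lemma~\ref{lem:winningECsProbaOne} to identify $\expect_{\initState}^{\markovProcess[\strat_{1}^{f}]}(\mpay)$ with $\expect_{\initState}^{\markovProcess'[\strat_{1}^{f}]}(\mpay)$ and bound the latter by $\optimalExp$. Your write-up is in fact more explicit than the paper's about the three ingredients (almost-sure confinement to a maximal winning EC, prefix-independence, and the uniform bound $\largestW$) that justify this equality of expectations; the paper compresses all of this into a one-line appeal to Lemma~\ref{lem:winningECsProbaOne}.
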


\begin{proof}
By contradiction, assume there exists $\strat_{1}^{f} \in \stratsFinite_{1}$ that satisfies the $\BWC$ problem for thresholds $(0,\, \thresholdExp)$. We claim that algorithm $\mpAlgoName$ answers $\yes$. First, notice that the algorithm cannot answer $\no$ at line~\ref{alg:mp_losingNoB} since $\strat_{1}^{f}$ satisfies the worst-case requirement from the initial state $\initState$. Hence it remains to prove that~$\optimalExp$, as computed by the algorithm, is such that $\optimalExp > \thresholdExp$. If it is the case, the algorithm will answer $\yes$, which proves our claim.

By hypothesis, strategy $\strat_{1}^{f}$ induces an expectation $\expect_{\initState}^{\game[\strat_{1}^{f}, \stratStoch]}(\mpay) > \thresholdExp$. By Lemma~\ref{lem:winningECsProbaOne}, we have that
$\expect_{\initState}^{\game[\strat_{1}^{f}, \stratStoch]}(\mpay) = \expect_{\initState}^{\game'[\strat_{1}^{f}, \stratStoch]}(\mpay) = \expect_{\initState}^{\markovProcess'[\strat_{1}^{f}]}(\mpay)$, with $\game'$ the game obtained by the transformation defined in Def.~\ref{def:mp_weightsZeroInLosingECs}. Moreover, by definition of the optimal expectation, we have that for all $\strat_{1} \in \strats_{1}$, $\optimalExp \geq  \expect_{\initState}^{\markovProcess'[\strat_{1}]}(\mpay)$. In particular, this inequality is verified for strategy $\strat_{1}^{f}$. Hence, we obtain that $\optimalExp \geq  \expect_{\initState}^{\markovProcess'[\strat_{1}^{f}]}(\mpay) > \thresholdExp$. Consequently, the answer of the algorithm is $\yes$ and the lemma is proved.
\end{proof}

In summary, correctness and completeness of algorithm $\mpAlgoName$ as stated in Thm.~\ref{thm:mp_decisionProblem} follows from the combination of Lemma~\ref{lem:mp_algoCorrectness}, Lemma~\ref{lem:mp_algoCompleteness} and the validity of the preprocessing, as presented in Sect.~\ref{subsec:mpAssumptions}. The complexity of the algorithm is discussed in the next section (Lemma~\ref{lem:mp_algoComplexity}), as well as matching lower bounds for the $\BWC$ problem (Lemma~\ref{lem:mp_problemHardness}).

\subsection{\textbf{Complexity: algorithm and lower bound}}
\label{subsec:mpComplexity}

In this section, we prove that the BWC problem is in $\NPinter$ (Lemma~\ref{lem:mp_algoComplexity}), as algorithm $\mpAlgoName$ is polynomial in the complexity of the worst-case threshold problem, which also belongs to $\NPinter$~\cite{ZP96,jurdzinski98}. Whether the latter is in $\PTIME$ or not is a long-standing open problem~\cite{BCDGR11,Chatterjee201525}. In Lemma~\ref{lem:mp_problemHardness}, we establish that it reduces in polynomial time to the $\BWC$ problem. Given the outstanding nature of the worst-case threshold problem membership to $\PTIME$, algorithm $\mpAlgoName$ can thus be considered optimal. Furthermore, we observe that if the worst-case threshold problem were proved to be solvable in deterministic polynomial time, then algorithm $\mpAlgoName$ would also take polynomial time (Rem.~\ref{rem:mp_complexityCollapse}).

The size of the \textit{input} for algorithm $\mpAlgoName$ depends polynomially on (i) the number of states of the input game~$\vert\states^{i}\vert$, (ii)~the number of edges of the input game $\vert\edges^{i}\vert$, (iii) the number of bits of the encoding of weights $\bits^{i} = \log_{2} \largestW^{i}$, (iv) the size of the memory of the Moore machine $\vert\mooreMem\vert$, (v) the size of the supports and the length of the encoding of probabilities for the next-action function $\mooreNext$, and (vi) the encoding of the thresholds $\thresholdWC, \thresholdExp \in \rat$.

\begin{lemma}
\label{lem:mp_algoComplexity}
Algorithm $\mpAlgoName$ requires polynomial time plus a polynomially-many calls to an $\NPinter$ oracle solving the worst-case threshold problem. Hence the beyond worst-case problem for the mean-payoff is in $\NPinter$.
\end{lemma}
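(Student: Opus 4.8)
The plan is to walk through algorithm $\mpAlgoName$ line by line and argue that every step is either a deterministic polynomial-time computation or a polynomially-bounded batch of calls to an $\NPinter$ oracle for the worst-case threshold problem in mean-payoff games. Since $\NPinter$ is closed under polynomial-time Turing reductions (one can simulate the oracle calls within both an NP and a coNP computation), this yields membership of the $\BWC$ problem in $\NPinter$.

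First I would handle the preprocessing block (lines~\ref{alg:mp_thresholdsTest}--\ref{alg:mp_mdp}). The affine reweighting of line~\ref{alg:mp_thresholds} is a trivial linear-time pass over the edges (and only multiplies the bit-length of weights by a polynomial factor, since $a,b$ come from the encoding of $\thresholdWC^{i}$). Computing $\states_{\textit{WC}}$ on line~\ref{alg:mp_winningStates} amounts to solving the worst-case mean-payoff threshold problem from every state, i.e.\ $|\states^{i}|$ calls to the $\NPinter$ oracle of~\cite{ZP96,jurdzinski98}. The restriction to the subgame $\game^{w}$ (line~\ref{alg:mp_reduc}) is linear. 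The product with the Moore machine of $\strat_{2}^{i}$ (line~\ref{alg:mp_memory}) blows the state space up to at most $|\states_{\textit{WC}}|\cdot|\mooreMem|$, which is polynomial in the size of the input (the stochastic model is part of the input); transcribing $\strat_{2}^{i}$ into the memoryless $\stratStoch$ and forming the MDP $\markovProcess$ (lines~\ref{alg:mp_stochModel}--\ref{alg:mp_mdp}) is again linear. So the whole preprocessing costs polynomial time plus polynomially many oracle calls, and the game $\game$ it produces has size polynomial in $|\game^{i}|$, $|\mooreMachine{\strat_{2}^{i}}|$ and the encodings of weights and thresholds.

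Next I would treat the main block (lines~\ref{alg:mp_main}--\ref{alg:mp_main_end}). Line~\ref{alg:mp_main} computes $\maxWinningECs$ via algorithm $\mwecAlgoName$; by Lemma~\ref{lem:mpECsPartition} this uses polynomially many calls to an oracle for mean-payoff games (the maximal EC decomposition is computed in $O(|\states|^{2})$ time~\cite{chatterjee_SODA2012}, the recursion depth and branching are both bounded by $|\states|$ since each recursive call removes at least one state, and the loser-set $L_i$ on line~\ref{alg:ec_Li} is a single worst-case query per EC). Line~\ref{alg:mp_modifyWeights} builds $\markovProcess'$ by one linear pass marking, for each edge, whether both endpoints lie in a common member of $\maxWinningECs$ (the members are pairwise disjoint, so this is an $O(|\edges|)$ lookup). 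Line~\ref{alg:mp_maxExp} computes the maximal expected mean-payoff in the MDP $\markovProcess'$, which is solvable in polynomial time by linear programming~\cite{filar1997}. The final comparison (line~\ref{alg:mp_maxExpComp}) is trivial. Correctness of the $\yes$/$\no$ answer is already established by Lemma~\ref{lem:mp_algoCorrectness}, Lemma~\ref{lem:mp_algoCompleteness}, and the validity of the preprocessing shown in Sect.~\ref{subsec:mpAssumptions}; here I only need the complexity accounting.

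Finally I would conclude: summing over all steps, $\mpAlgoName$ runs in deterministic polynomial time augmented with a polynomial number of invocations of an $\NPinter$ oracle for the worst-case threshold problem, hence the $\BWC$ problem is in $\NPinter$. I expect the only genuinely delicate point to be bookkeeping: one must be careful that the product construction on line~\ref{alg:mp_memory} keeps the instance polynomial (it does, because $\stratStoch$ is given as part of the input and not, say, as an exponentially-large explicit transition table), and that the affine reweighting on line~\ref{alg:mp_thresholds} only inflates the weight encoding polynomially, so that the weight magnitude $\largestW$ and thus the running times of the later polynomial-time subroutines stay polynomial in the original input size. Everything else is a routine chaining of known polynomial-time procedures (EC decomposition~\cite{chatterjee_SODA2012}, LP for expected mean-payoff~\cite{filar1997}) with the $\NPinter$ oracle for mean-payoff games~\cite{ZP96,jurdzinski98}.
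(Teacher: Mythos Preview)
Your proposal is correct and follows essentially the same line-by-line complexity accounting as the paper's own proof: both walk through the preprocessing and main blocks, identify line~\ref{alg:mp_winningStates} and the call to $\mwecAlgoName$ (via Lemma~\ref{lem:mpECsPartition}) as the only places requiring the $\NPinter$ oracle, and invoke~\cite{chatterjee_SODA2012} and~\cite{filar1997} for the remaining polynomial-time steps. Your explicit remark that $\NPinter$ is closed under polynomial-time Turing reductions is a useful justification that the paper leaves implicit.
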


\begin{proof}
To prove $\NPinter$-membership, we review each step of the algorithm sequentially. Lines~\ref{alg:mp_thresholdsTest}-\ref{alg:mp_thresholds} and lines~\ref{alg:mp_losingNo}-\ref{alg:mp_mdp} are at most polynomial in the input. Line~\ref{alg:mp_winningStates} consists in solving the worst-case threshold problem on the input game $\game^{i}$: this can be done by calling an $\NPinter$ oracle~\cite{ZP96,jurdzinski98}. Overall, the preprocessing is in $\NPinter$ and produces a game $\game$ such that $\vert\game\vert \leq \vert\game^{i}\vert \cdot \vert\mooreMachine{\strat_{2}^{i}}\vert$, using the natural definitions of those sizes as polynomial functions of the values described in points (i) to (vi).

For the main algorithm, the complexities are as follows. Line~\ref{alg:mp_main} is the call to sub-algorithm $\mwecAlgoName(\markovProcessNonZero)$, which has been proved to only make a polynomial number of calls to an $\NPinter$ oracle in Lemma~\ref{lem:mpECsPartition}. Note that the size of $\markovProcess = \game[\stratStoch]$ is polynomial in the input. The weights modification (line~\ref{alg:mp_modifyWeights}) requires linear time (polynomial in the input game) as do lines~\ref{alg:mp_maxExpComp}-\ref{alg:mp_main_end}. Finally, computing the maximal expected value on $\markovProcess'$ (line~\ref{alg:mp_maxExp}) is polynomial in $\vert\markovProcess'\vert$ via linear programming~\cite{filar1997}, hence polynomial in the input size.

In conclusion, the algorithm executes a polynomial number of operations, and each of them belongs to $\PTIME$ or is a call to an $\NPinter$ oracle. Therefore, the $\BWC$ problem for the mean-payoff belongs to $\PTIME^{\NPinter}$. By~\cite{Bra79}, we know that $\PTIME^{\NPinter} = \NPinter$, which proves the claim.
\end{proof}

\begin{remark}
\label{rem:mp_complexityCollapse}
Assume an algorithm $\textsc{Ptime\_wc}$ is established to solve the worst-case threshold problem in deterministic polynomial time. Then, the complexities of algorithm $\mpAlgoName$ and sub-algorithm $\mwecAlgoName$ boil down to a polynomial number of polynomial time operations and external calls, and it follows that the BWC mean-payoff problem is in $P$.
\end{remark}

Reduction of the worst-case threshold problem to the $\BWC$ one seems natural by Eq.~\eqref{eq:thresholdWC}. Still, we need to pay attention to the strict inequality in the $\BWC$ problem definition: we use the existence of memoryless winning strategies for the worst-case problem and careful analysis of the domain of the mean-payoff values of outcomes to prove that it is not restrictive. The expected value part of the $\BWC$ problem can be defined arbitrarily under certain conditions. 
Our complexity results are summed up in Thm.~\ref{thm:mp_decisionProblem}.

\begin{lemma}
\label{lem:mp_problemHardness}
The worst-case threshold problem on mean-payoff games reduces in polynomial time to the $\BWC$ mean-payoff problem.
\end{lemma}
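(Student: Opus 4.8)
The plan is to reduce, in polynomial time, an arbitrary instance of the worst-case threshold problem on mean-payoff games to a $\BWC$ mean-payoff instance. Given a mean-payoff game $\gameFull$ with graph $\graphFull$, initial state $\initState$, and rational threshold $\thresholdWC$, the classical worst-case threshold problem asks whether $\playerOne$ has a strategy ensuring $\mpay(\play) \geq \thresholdWC$ against all strategies of $\playerTwo$. The first subtlety is the non-strict versus strict inequality: the $\BWC$ problem (Def.~\ref{def:bwc_problem}) uses $\mpay(\play) > \thresholdWC$. To handle this, I would first recall that for mean-payoff games pure memoryless optimal strategies exist for both players (cited from~\cite{liggett_SR69,EM79}), so the optimal worst-case value $\optimalWC(\state)$ in every state is rational with denominator bounded by $\vert\states\vert$; in particular, if $\playerOne$ can ensure $\mpay \geq \thresholdWC$ then he can ensure $\mpay > \thresholdWC - \tfrac{1}{\vert\states\vert+1}$, and conversely ensuring $\mpay > \thresholdWC'$ for a slightly shifted $\thresholdWC'$ is equivalent to ensuring $\mpay \geq \thresholdWC$ for the original threshold. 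Concretely, set $\thresholdWC' := \thresholdWC - \tfrac{1}{2\vert\states\vert}$ (or use the affine reweighting of Lemma~\ref{lem:mp_thresholdChange} to make the target integer and then subtract a suitable small rational), which is still polynomial-size; then $\playerOne$ wins the worst-case threshold problem for $(\game,\thresholdWC)$ iff he has a strategy with $\mpay(\play) > \thresholdWC'$ against all adversary strategies, because the mean-payoff of any memoryless-play cycle decomposition lies in the discrete set of rationals with denominator at most $\vert\states\vert$, so there is no value strictly between $\thresholdWC'$ and $\thresholdWC$ achievable in the worst case.

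Next I would define the $\BWC$ instance on the same game $\game$ with worst-case threshold $\thresholdWC'$. For the expected-value component we need a finite-memory stochastic model $\stratStoch \in \stratsFinite_{2}$ of $\playerTwo$ and an expected-value threshold $\thresholdExp$. The idea is to choose these so that the expectation requirement~\eqref{eq:thresholdExp} is trivially satisfied by \emph{any} strategy of $\playerOne$, so the $\BWC$ problem collapses exactly to requirement~\eqref{eq:thresholdWC}. The cleanest choice is to take $\stratStoch$ to be an arbitrary memoryless stochastic model (e.g. uniform on the successors in every state of $\playerTwo$) and to set $\thresholdExp := -\largestW - 1$, a value strictly below the minimum possible mean-payoff of any play in $\graph$ (recall the mean-payoff of any play lies in $[-\largestW, \largestW]$). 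Then $\expect_{\initState}^{\game[\strat_{1},\stratStoch]}(\mpay) \geq -\largestW > \thresholdExp$ for every $\strat_{1}$, so~\eqref{eq:thresholdExp} holds vacuously. Note $\thresholdWC' < \thresholdExp$ need not hold here, but the assumption $\thresholdExp > \thresholdWC$ in Def.~\ref{def:bwc_problem} and Alg.~\ref{alg:mp} was only a convenience observation (the paragraph after Def.~\ref{def:bwc_problem} explains that the case $\thresholdWC \geq \thresholdExp$ reduces to worst-case analysis), so either we are directly in that already-handled regime, or we additionally shift all weights up by a large constant so that after the affine transformation $\thresholdExp$ can be taken above $\thresholdWC'$; this shift changes all mean-payoffs by the same additive constant and hence preserves both the worst-case comparison and the vacuous expectation bound. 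All of this is clearly computable in polynomial time.

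It remains to argue correctness of the reduction: $\playerOne$ wins the worst-case threshold problem on $(\game, \thresholdWC)$ iff $\playerOne$ has a finite-memory strategy $\strat_{1} \in \stratsFinite_{1}$ satisfying the $\BWC$ problem for $(\thresholdWC', \thresholdExp)$ against $\stratStoch$. For the forward direction, if $\playerOne$ wins the worst-case threshold problem, take his pure memoryless optimal strategy $\strat_{1}^{pm}$; it ensures $\mpay(\play) \geq \thresholdWC > \thresholdWC'$ for all adversary strategies, so~\eqref{eq:thresholdWC} holds, and~\eqref{eq:thresholdExp} holds vacuously by the choice of $\thresholdExp$; since $\strat_{1}^{pm}$ is finite-memory (memoryless), it is a valid $\BWC$ witness. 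For the converse, if some $\strat_{1} \in \stratsFinite_{1}$ satisfies the $\BWC$ problem, then in particular it satisfies~\eqref{eq:thresholdWC}, i.e. $\mpay(\play) > \thresholdWC'$ against all strategies of $\playerTwo$; by the discreteness argument from the first paragraph (the worst-case value over memoryless adversary responses is a rational with bounded denominator, and $\playerTwo$ has a memoryless optimal response against any fixed finite-memory $\strat_{1}$, working in the induced finite MDP for $\playerTwo$), this forces $\mpay(\play) \geq \thresholdWC$ in the worst case, so $\strat_{1}$ is a winning worst-case strategy. The main obstacle — and the only place requiring care — is precisely this strict-versus-non-strict gap: one must verify that no play value strictly between $\thresholdWC'$ and $\thresholdWC$ is attainable as a worst-case outcome, which follows from the existence of memoryless optimal strategies and the resulting bound on denominators of achievable mean-payoff values; everything else in the reduction is routine bookkeeping.
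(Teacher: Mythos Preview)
Your approach is essentially identical to the paper's: shift the worst-case threshold by a small rational to convert the non-strict inequality to a strict one via the discreteness of achievable mean-payoff values under memoryless optimal play, and trivialize the expectation requirement by taking an arbitrary memoryless stochastic model together with $\thresholdExp < -\largestW$ (the paper normalizes $\thresholdWC$ to $0$ and uses $-1/\vert\states\vert$ as the shifted threshold, but this is cosmetic).

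One minor slip in your converse direction: the denominator bound you invoke applies to the \emph{game value}, computed over the original state space $\states$, not to the worst-case value of the specific finite-memory $\strat_{1}$, whose product game $\game \otimes \mooreMachine{\strat_{1}}$ may have simple cycles of length up to $\vert\states\vert\cdot\vert\mooreMem\vert$ rather than $\vert\states\vert$. The clean argument is that the existence of \emph{any} $\strat_{1}$ with worst-case $> \thresholdWC'$ forces the game value above $\thresholdWC'$, hence $\geq \thresholdWC$ by discreteness, so some memoryless strategy wins the original worst-case problem---which is all the reduction requires.
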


\begin{proof}
Given a game $\gameFull$, its underlying graph $\graphFull$, an initial state $\initState \in \states$, and the worst-case threshold $\thresholdWC = 0$ (without loss of generality), the worst-case threshold problem asks if the following proposition is true:
\begin{equation}
\label{eq:mp_wcReducProof}
\exists\, \strat_{1} \in \strats_{1},\,\forall\, \strat_{2} \in \strats_{2},\, \forall\, \play \in \outcomesGame{\game}{\initState}{\strat_{1}}{\strat_{2}},\, \mpay(\play) \geq 0.
\end{equation}
By the results of~\cite{liggett_SR69,EM79}, it is equivalent to restrict both players to pure memoryless strategies:
\begin{equation}
\label{eq:mp_wcReducProofB}
\exists\, \strat^{pm}_{1} \in \stratsPureMemoryless_{1},\,\forall\, \strat^{pm}_{2} \in \stratsPureMemoryless_{2},\, \forall\, \play \in \outcomesGame{\game}{\initState}{\strat^{pm}_{1}}{\strat^{pm}_{2}},\, \mpay(\play) \geq 0.
\end{equation}

It is well-known that in this context, $\mpay(\play) \geq 0 \Leftrightarrow \mpay(\play) > -\frac{1}{\vert\states\vert}$. Indeed, consider the following argument. First, the mean-payoff of any outcome $\play \in \outcomesGame{\game}{\initState}{\strat^{pm}_{1}}{\strat^{pm}_{2}}$ can be trivially bounded by $-\largestW \leq \mpay(\play) \leq \largestW$, with $\largestW$ the largest absolute value of any weight assigned by $\weight$ to edges of $\game$. Second, consider the decomposition of $\play$ into simple cycles (i.e., cycles with no repeated state except for the starting and ending state). Since weights are integers, any simple cycle has an associated mean-payoff belonging to $\{-\largestW, \ldots{}, -\frac{1}{\vert\states\vert}, 0, \frac{1}{\vert\states\vert}, \ldots{}, \largestW\}$. As both strategies are memoryless, any outcome $\play \in \outcomesGame{\game}{\initState}{\strat^{pm}_{1}}{\strat^{pm}_{2}}$ will ultimately consist in a repeated simple cycle. Hence we have that $\mpay(\play) \in \{-\largestW, \ldots{}, -\frac{1}{\vert\states\vert}, 0, \frac{1}{\vert\states\vert}, \ldots{}, \largestW\}$ and we observe that no value can be taken between $-\frac{1}{\vert\states\vert}$ and $0$.

Consequently, Eq.~\eqref{eq:mp_wcReducProofB} is equivalent to
\begin{equation}
\label{eq:mp_wcReducProofC}
\exists\, \strat^{pm}_{1} \in \stratsPureMemoryless_{1},\,\forall\, \strat^{pm}_{2} \in \stratsPureMemoryless_{2},\, \forall\, \play \in \outcomesGame{\game}{\initState}{\strat^{pm}_{1}}{\strat^{pm}_{2}},\, \mpay(\play) > -\frac{1}{\vert\states\vert}.
\end{equation}

To formulate Eq.~\eqref{eq:mp_wcReducProofC} in terms of a $\BWC$ problem, we have to define an expected value threshold $\thresholdExp \in \rat$ and a stochastic model $\stratStoch \in \stratsFinite_{2}$. Since all plays $\play \in \plays{\graph}$ satisfy $\mpay(\play) \geq -\largestW$ by definition of the weight function, we trivially have that $\forall\, \strat_{1} \in \strats_{1},\, \forall\, \stratStoch \in \stratsFinite_{2},\, \expect_{\initState}^{\game[\strat_{1}, \stratStoch]}(\mpay) \geq -\largestW$.
Hence, it suffices to fix an arbitrary stochastic model $\stratStoch \in \stratsFinite_{2}$ and an arbitrary expectation threshold $\thresholdExp < -\largestW$ to obtain that Eq.~\eqref{eq:mp_wcReducProof} is satisfied \textit{if and only if} $\playerOne$ has a strategy to satisfy the $\BWC$ problem for thresholds $(-\frac{1}{\vert\states\vert},\, \thresholdExp)$ against the stochastic model $\stratStoch$.
Notice this reduction is polynomial because we can choose a simple stochastic model (e.g., a memoryless strategy requires a Moore machine of size linear in the size of the game) and a value of $\thresholdExp$ that will not require a super-polynomial growth of the encodings (e.g., $\thresholdExp = -\largestW - 1$).
\end{proof}

\newpage
\subsection{\textbf{Memory requirements}}
\label{subsec:mpMemoryRequirements}

\begin{wrapfigure}{r}{75mm}
\vspace{-3mm}
  \centering   
  \scalebox{0.7}{\begin{tikzpicture}[->,>=latex,shorten >=1pt,auto,node
    distance=2.5cm,bend angle=45,font=\Large]
    \tikzstyle{p1}=[draw,circle,text centered,minimum size=10mm]
    \tikzstyle{p2}=[draw,rectangle,text centered,minimum size=10mm]
    \tikzstyle{empty}=[]
    \node[p1] (1) at (0,0) {$\state_{2}$};
    \node[p1] (9) at (-4,0) {$\state_{1}$};
    \node[p2] (10) at (-8,0) {$\state_{3}$};
    \node[empty] (a) at (-7.5, 0.9) {$\frac{1}{2}$};
    \node[empty] (b) at (-7.5, -0.9) {$\frac{1}{2}$};
    \coordinate[shift={(0mm,5mm)}] (init) at (9.north);
    \path
    (init) edge (9)
    ;
	\draw[->,>=latex] (1) to[out=140,in=40] node[above] {$1$} (9);
	\draw[->,>=latex] (9) to[out=320,in=220] node[below] {$1$} (1);
	\draw[->,>=latex] (9) to[out=180,in=0] node[below] {$0$} (10);
	\draw[->,>=latex] (10) to[out=40,in=140] node[above] {$-X$} (9);
	\draw[->,>=latex] (10) to[out=320,in=220] node[below] {$X+5$} (9);
      \end{tikzpicture}}
      \caption{Family of games $(\game(X))_{X \in \natStrict}$ requiring polynomial memory in $\largestW = X + 5$ to satisfy the $\BWC$ problem for thresholds $(0,\, \thresholdExp \in \left]  1,\, 5/4\right[ )$.}
\label{fig:mp_familyExpInLargestW}
\vspace{-3mm}
  \end{wrapfigure}

Across the previous sections, we have studied the complexity of deciding the $\BWC$ problem, i.e., deciding the existence of a \textit{finite-memory} strategy of $\playerOne$ satisfying Def.~\ref{def:bwc_problem} for the mean-payoff value function. Now, we focus on the \textit{size of the memory} used by such a strategy. In Thm.~\ref{thm:mp_memoryRequirements}, we give an upper bound for the memory of the global strategy described in Def.~\ref{def:mp_globalStrategy} (which has been shown to suffice if satisfaction of the $\BWC$ problem is possible). This is obtained through careful analysis of the structure of involved strategies (global, witness-and-secure, combined). All of them are based on alternation between well-chosen pure memoryless strategies, based on parameters $\stepsGlobal$, $\stepsExp$ and $\stepsWC \in \nat$. We prove that these values only need to be polynomial in the size of the game and the stochastic model, and in the values of weights and thresholds.

Furthermore, we prove this upper bound to be tight in the sense that polynomial memory in the values of weights is needed in general. To establish this result, we exhibit a family of games $(\game(X))_{X \in \natStrict}$, presented in Fig.~\ref{fig:mp_familyExpInLargestW}, which requires polynomial memory in the largest absolute weight.

\begin{theorem}
\label{thm:mp_memoryRequirements}
Memory of pseudo-polynomial size may be necessary and is always sufficient to
satisfy the $\BWC$ problem for the mean-payoff: polynomial in the size of the game and the stochastic model, and polynomial in the weight and threshold values.
\end{theorem}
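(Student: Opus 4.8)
The plan is to establish the two halves of the statement separately: the upper bound, that the global strategy $\stratGlobal$ of Def.~\ref{def:mp_globalStrategy} — which by Lemma~\ref{lem:mp_algoCorrectness} witnesses satisfiability whenever the $\BWC$ problem holds — can be implemented with pseudo-polynomial memory; and the lower bound, that on the family $(\game(X))_{X\in\natStrict}$ of Fig.~\ref{fig:mp_familyExpInLargestW} memory polynomial in $\largestW$ is unavoidable.

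\textbf{Upper bound.} First I would do a purely structural accounting of the memory of $\stratGlobal$ in terms of its parameters. Phase $(a)$ plays the memoryless $\stratExp$ and costs only a counter up to $\stepsGlobal$. Phases $(b.1)$/$(b.2)$ either use the memoryless $\stratWC$ (free) or a witness-and-secure strategy $\stratWNS$ (Def.~\ref{def:mp_stratWitAndFlee}), which itself costs one bit (``has $\playerTwo$ left $\edgesNonZero$ yet?'', detectable from the last edge alone) plus the memory of the combined strategy $\stratComb$ (Def.~\ref{def:mp_stratComb}): a phase bit, a step counter bounded by $\max(\stepsExp,\stepsWC)$, and the running sum $\cmbSum \in \{-\stepsExp\cdot\largestW,\dots,\stepsExp\cdot\largestW\}$. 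Hence the memory of $\stratGlobal$ is polynomial in $\stepsGlobal$, $\stepsWC$, $\stepsExp\cdot\largestW$, $\statesSize$ and $|\mooreMem|$ (all these strategies live in the preprocessed game $\game$, whose size is at most $|\game^{i}|\cdot|\mooreMachine{\strat_{2}^{i}}|$). The more delicate task is to show $\stepsGlobal,\stepsExp,\stepsWC$ can be taken pseudo-polynomial. For $\stepsWC$ this follows from Def.~\ref{def:mp_stepsWC} once we observe $\optimalWC\ge 1/\statesSize$ (a positive mean of a simple cycle of integer weights, as in the proof of Lemma~\ref{lem:mp_problemHardness}), making $\stepsWC$ polynomial in $\stepsExp,\largestW,\statesSize$. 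For $\stepsExp$ I would re-run the convergence analysis of Lemma~\ref{lem:mp_swecExp}: it suffices that $\expDecFct{\stepsExp}{\varepsilon}$ and $\expDecFct{\stepsExp}{\varepsilon}\cdot(\stepsExp+\stepsWC)$ be small relative to the gap $\optimalExp-\thresholdExp$, with $\varepsilon$ a fixed fraction of that gap; since $\optimalExp$ is obtained by linear programming on $\markovProcess'$ its bit-length is polynomial, so $\optimalExp-\thresholdExp$ is bounded below by the inverse of a pseudo-polynomial quantity, and plugging into $\expDecFct{\stepsExp}{\varepsilon}=c_{1}/e^{c_{2}\stepsExp\varepsilon^{2}}$ together with the bounds on $c_{1},c_{2},\stepsExp_{0}$ recorded in the proof of Lemma~\ref{lem:mp_swecExpDecrease} (polynomial in $\statesSize$, $\largestW$ and in the reciprocals of the stochastic model's transition probabilities) yields a pseudo-polynomial $\stepsExp$. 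Finally $\stepsGlobal$ only needs to push the probability of still lying outside $\bigcup_{\ec\in\maxWinningECs}\ec$ after phase $(a)$ below a pseudo-polynomially small threshold, so that through Eq.~\eqref{eq:mp_globalStratExpect} and Rem.~\ref{rem:optiExpModifMDP} the expectation of $\stratGlobal$ beats $\thresholdExp$; the standard geometric decay of absorption probabilities in a finite Markov chain~\cite{grinstead_AMS1997}, with rate governed by $\statesSize$ and the least positive transition probability, gives a pseudo-polynomial $\stepsGlobal$. Combining the three bounds closes the upper bound.

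\textbf{Lower bound.} Here I would work with the explicit family $(\game(X))_{X\in\natStrict}$ of Fig.~\ref{fig:mp_familyExpInLargestW}, each satisfying Assumption~\ref{assump:uniqueWEC}, with $\thresholdWC=0$ and any rational $\thresholdExp\in\left]1,5/4\right[$. The steps are: (1) exhibit a (necessarily non-memoryless) finite-memory strategy realizing the $\BWC$ problem, so the instance is a genuine witness; (2) observe that the memoryless strategy always choosing $(\state_{1},\state_{2})$ has expected mean-payoff at most $1<\thresholdExp$, whence — via the BSCC decomposition of the finite Markov chain induced with $\stratStoch$ — any satisfying strategy must use edge $(\state_{1},\state_{3})$ infinitely often along positive-probability outcomes; (3) note that against the adversary that always picks the $-X$ branch following $(\state_{1},\state_{3})$, keeping $\mpay>0$ forces the strategy, after each late occurrence of $(\state_{1},\state_{3})$, to play $(\state_{1},\state_{2})$ at least $\lfloor X/2\rfloor+1$ times before the next $(\state_{1},\state_{3})$ (since $2(\lfloor X/2\rfloor+1)>X$); (4) conclude by a pigeonhole/pumping argument on the Moore machine of a satisfying finite-memory strategy run against this fixed adversary: if it has at most $\lfloor X/2\rfloor$ memory elements, the induced play exhibits a reachable cycle containing a $-X$ step but fewer than $\lfloor X/2\rfloor+1$ compensating steps, i.e.\ a consistent outcome with $\mpay\le 0$, contradicting Eq.~\eqref{eq:thresholdWC}. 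Hence $\Omega(X)=\Omega(\largestW)$ memory is necessary, matching the upper bound.

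\textbf{Main obstacle.} The routine parts are the structural memory count and the lower-bound pumping argument; the real work is the quantitative side of the upper bound — tracing through the constants $c_{1},c_{2},\stepsExp_{0}$ of Lemma~\ref{lem:mp_swecExpDecrease} (ultimately the Glynn--Ormoneit and Tracol concentration bounds) and through the absorption-time estimate, so as to certify that $\stepsExp$ and $\stepsGlobal$ depend only \emph{polynomially} on $\statesSize$, $|\mooreMem|$, $\largestW$ and the threshold values, rather than super-polynomially. That is precisely where the ``pseudo-polynomial, not worse'' claim is earned.
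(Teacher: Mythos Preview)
Your proposal is correct and follows essentially the same approach as the paper: for the upper bound you account for the memory of $\stratGlobal$ via bounds on $\stepsGlobal$, $\stepsExp$, $\stepsWC$ and the range of $\cmbSum$, invoking the exponential absorption decay for $\stepsGlobal$ and the constants of Lemma~\ref{lem:mp_swecExpDecrease} for $\stepsExp$; for the lower bound you use the same family $(\game(X))_{X\in\natStrict}$ and the same compensation argument. Your pigeonhole/pumping step in (4) is in fact more explicit than the paper's treatment, which argues the necessity of $\lfloor X/2\rfloor+1$ compensating $(\state_{1},\state_{2})$-steps after each $-X$ but leaves the translation into a Moore-machine size bound informal; and your caution that the ``real work'' lies in the quantitative constants of Lemma~\ref{lem:mp_swecExpDecrease} is well placed, since the paper itself only cites that lemma's proof without carrying the analysis through in detail.
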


\begin{proof}
We first consider the upper bound on memory, derived by analysis of the global strategy $\stratGlobal$ (Def.~\ref{def:mp_globalStrategy}). Observe that it follows the memoryless strategy $\stratExp$ for $\stepsGlobal$ steps before switching to phase \textit{(b)}. The correctness of the strategy (Lemma~\ref{lem:mp_algoCorrectness}) relies on the existence of a value $\stepsGlobal$ such that the probability of being in an EC after $\stepsGlobal$ steps is \textit{high enough}. We argue that $\stepsGlobal$ does not need to be exponentially large. 

Consider the probability to be outside of winning ECs after $\stepsGlobal$ steps. By classical results on MCs, this probability decreases exponentially fast when $\stepsGlobal$ grows. Indeed, to prove it, it suffices to consider the chain $\game[\stratExp, \stratStoch]$, replace BSCCs by absorbing states and observe that the probability of absorption tends towards one exponentially fast~\cite{grinstead_AMS1997}. Now, consider the expectation of the global strategy for given constants~$\stepsGlobal$ and~$\stepsExp$, as given in Eq.~\eqref{eq:mp_globalStratExpect}. Let $\thresholdExp < \optimalExp - \varepsilon$ be the expected value threshold considered in the $\BWC$ problem (as before, we assume $\thresholdWC < \thresholdExp < \optimalExp$ otherwise the problem is trivial). Assume that $\stepsExp$ is sufficiently large to have $\sum_{\ec \in \maxWinningECs} p(\ec) \cdot e_{\stepsExp}(\ec) > \optimalExp - \varepsilon'$, with $\varepsilon' < \varepsilon$. We want to establish how large $\stepsGlobal$ needs to be to ensure an overall expectation strictly greater than $\thresholdExp$. Since~$e_{\textsc{wc}}(s)$ can be trivially lower bounded by zero for all $\state \in \negligibleStates$, it is clear that to obtain the needed property, we need to have values $p_{\stepsGlobal}(\ec)$ growing polynomially with $\varepsilon$ and~$\optimalExp$. As the growth of $p_{\stepsGlobal}(\ec)$ is exponential in the growth of the value $\stepsGlobal$, we obtain that a logarithmic value of~$\stepsGlobal$, hence \textit{polynomial in the encoding}, suffices to achieve the desired expected value.

Similarly, we study the strategies followed in phase \textit{(b)} of the global strategy (Def.~\ref{def:mp_globalStrategy}). The case \textit{(b.2)} is the easiest: the worst-case strategy $\stratWC$ is memoryless. In case \textit{(b.1)}, the witness-and-secure strategy $\stratWNS$ is used. By Def.~\ref{def:mp_stratWitAndFlee}, this strategy needs polynomial memory to witness the use of edges in $\edges \setminus \edgesNonZero$ and to implement the memoryless secure strategy $\stratSecure$. It also needs to implement the combined strategy $\stratComb$, based on alternation between memoryless strategies. The size of the memory of $\stratComb$ is polynomial in~$\stepsExp$,~$\stepsWC$ and the largest absolute value taken by $\cmbSum$, as well as in the size of the game. The proof of Lemma~\ref{lem:mp_swecExpDecrease} guarantees that for constant $\stepsExp$, a value polynomial in the size of the input game and the stochastic model, as well as in the values of weights and thresholds, suffices. By Def.~\ref{def:mp_stepsWC}, an identical situation is verified for~$\stepsWC$. Finally, the running sum $\cmbSum$ takes values in $\{-\stepsExp \cdot \largestW, \ldots{}, \stepsExp \cdot \largestW\}$, hence it also verifies such bounds. Overall, the memory needed by the combined strategy is polynomial in the size of the input game and the stochastic model, and in the weight and threshold values.

Aggregating all these bounds, we conclude that the global strategy also requires memory at most polynomial in the size of the game and the stochastic model, and in the values, thus proving the upper bound.

It remains to show that pseudo-polynomial memory is really necessary in general. In order to achieve this, we introduce a family of games, $(\game(X))_{X \in \natStrict}$, such that winning the $\BWC$ problem on $\game(X)$ requires memory polynomial in the largest weight $\largestW = X + 5$. This family is presented in Fig.~\ref{fig:mp_familyExpInLargestW}. Let the worst-case threshold be $\thresholdWC = 0$ and the expectation threshold be an arbitrary value $\thresholdExp \in \left]  1,\, 5/4\right[$. Thanks to Thm.~\ref{thm:insideWinning}, the $\BWC$ problem is satisfiable, because~$\game(X)$ is reduced to a winning EC with no edge of probability zero, and the optimal expectation is $5/4 > \thresholdExp$ (expectation achieved by the memoryless strategy that always chooses edge $(\state_{1}, \state_{3})$). Notice that it cannot be achieved by the memoryless strategy that always chooses edge $(\state_{1}, \state_{2})$ since this strategy induces a mean-payoff equal to $1 < \thresholdExp$. Hence it is mandatory to choose $(\state_{1}, \state_{3})$ infinitely often in order to achieve the expected value requirement (Eq.~\eqref{eq:thresholdExp}).

Let $\strat_{1} \in \stratsFinite_{1}$ be a finite-memory strategy of $\playerOne$ that satisfies the $\BWC$ problem. Observe that it may as well be pure, i.e., $\strat_{1} \in \stratsPureFinite_{1}$ as choosing edge $(\state_{1}, \state_{3})$ with a non-zero probability recurrently yields consistent outcomes that do not satisfy the worst-case requirement (Eq.~\eqref{eq:thresholdWC}). Also observe that anytime edge $(\state_{1}, \state_{3})$ is chosen, there is a probability~$1/2$ that the edge of weight $-X$ is taken to come back. Hence, from some point on, every appearence of this edge of weight $-X$ must be \textit{eventually} counteracted in order to preserve the worst-case requirement. A finite number of non-compensated occurences is not a problem thanks to the prefix-independence of the mean-payoff value function. Looking at the involved weights, it is clear that taking the edge $(\state_{1}, \state_{2})$ for $(\left\lfloor X/2\right\rfloor + 1)$ times is necessary to counteract the negative edge of weight $-X$. Hence, memory polynomial in $X$ (hence in $\largestW$) is needed to ensure both the worst-case and the expected value requirements for the given thresholds. This concludes our proof.
\end{proof}

\subsection{\textbf{Infinite-memory strategies}}
\label{subsec:mpInfiniteMemory}

We close our study of the $\BWC$ problem for the mean-payoff value function by considering what happens when $\playerOne$ is allowed to use infinite-memory strategies. Specifically, we show that in this context, infinite-memory strategies are in general strictly more powerful than finite-memory ones: they can exploit losing ECs to benefit from their possibly higher optimal expected value; and even inside a single winning EC, they can be optimal with regard to the expectation whereas finite-memory ones are limited to $\varepsilon$-optimality. Nonetheless, as discussed in the introduction, such strategies are ill-suited for the synthesis of implementable controllers for real-world applications, hence our focus on finite memory in the previous results.

\begin{wrapfigure}{r}{84mm}
  \centering   
  \scalebox{0.7}{\begin{tikzpicture}[->,>=latex,shorten >=1pt,auto,node
    distance=2.5cm,bend angle=45,font=\Large]
    \tikzstyle{p1}=[draw,circle,text centered,minimum size=10mm]
    \tikzstyle{p2}=[draw,rectangle,text centered,minimum size=10mm]
    \tikzstyle{empty}=[]
    \node[p1] (0) at (0,0) {$\state_{0}$};
    \node[p2] (1) at (-4,0) {$\state_{1}$};
    \node[p1] (2) at (4,0) {$\state_{2}$};

    \node[empty] (proba1a) at (-3.6, 0.9) {$\frac{9}{10}$};
    \node[empty] (proba1b) at (-3.6, -0.9) {$\frac{1}{10}$};

    \coordinate[shift={(0mm,5mm)}] (init) at (0.north);
    \path
    (init) edge (0)
    (2) edge [loop right, out=35, in=325,looseness=2, distance=16mm] node [right] {$-1$} (2)
    ;

    \path (0) edge node[above] {$0$} (1) ;
    \path (1) edge [bend right] node[below] {$-4$} (0) ;
    \path (1) edge [bend left] node[above] {$4$} (0) ;
    \path (0) edge node[above] {$0$} (2) ;
 
    \draw[dashed,-] (1,-2) -- (-5,-2) -- (-5,2) -- (1,2) -- (1,-2);
    \draw[dashed,-] (3,2) -- (6.5,2) -- (6.5,-2) -- (3,-2) -- (3,2);

    \node[empty] (lec) at (-2, 2.3) {$\ec_{1}$};
    \node[empty] (swec) at (4.5, 2.3) {$\ec_{2}$};
      \end{tikzpicture}}
      \caption{Infinite-memory strategies may use losing ECs forever with a non-zero probability in order to increase the expected value.}
\label{fig:mp_infiniteMemory}
  \vspace{-2mm}
  \end{wrapfigure}

\smallskip\noindent\textbf{Losing end-components may still be useful.} Let us consider the game depicted in Fig.~\ref{fig:mp_infiniteMemory}, together with 
a memoryless stochastic model of the adversary $\stratStoch \in \stratsMemoryless_{2}$, modeled by the probabilities $1/10$ and $9/10$ on the edges leaving $\state_{1}$. The MDP $\markovProcess = \game[\stratStoch]$ can be decomposed into two
end-components $\ec_{1}$ and $\ec_{2}$, as depicted by the dashed lines. Assume the worst-case threshold is $\thresholdWC = -3/2$ (notice for once we take it different than zero), then $\ec_{1}$ is losing (because~$\playerTwo$ can induce an outcome of mean-payoff value $-4/2 \leq -3/2$, and he can do that by choosing edges in $\edgesNonZero$) and $\ec_{2}$ is winning (as the only outcome yields mean-payoff $-1 > -3/2$), following~Def.~\ref{def:classificationECs}. 
As shown in Lemma~\ref{lem:winningECsProbaOne}, any finite-memory strategy of $\playerOne$ which ensures a mean-payoff strictly greater than $\thresholdWC$, leaves $\ec_{1}$ with probability one against $\stratStoch$, because states of $\ec_{1}$ are classified as negligible. Therefore, in order to satisfy the worst-case requirement of the $\BWC$ problem, the
expected mean-payoff of any finite-memory strategy of~$\playerOne$ is $\optimalExp_{2} = -1$, i.e., the expectation obtained in $\ec_{2}$ by the only possible outcome. Notice however that if we forget about the worst-case requirement, the maximal expectation that $\playerOne$ could achieve in $\ec_{1}$ is $\optimalExp_{1} = \frac{1}{2}\cdot (\frac{9}{10}\cdot 4 + \frac{1}{10} \cdot (-4)) = \frac{8}{5}$.

We now show that $\playerOne$ can ensure the worst-case requirement and obtain
an expected value strictly greater than $-1$, if he is allowed to use infinite memory. We define a pure infinite-memory strategy $\strat_{1}$ for $\playerOne$
as follows: $\strat_{1}$ stores the running sum along the prefix played so far, and chooses to move from $\state_0$ to $\state_1$ as long as this sum is strictly greater than zero (except in the first round where it moves directly to $\state_1$). First, notice that this strategy trivially guarantees a mean-payoff greater than or equal to $-1 > \thresholdWC$. Indeed, either the running sum always stays strictly positive, implying that the mean-payoff is at least zero, or the running sum gets negative or null at some point, in which case the strategy switches to $\ec_{2}$ and the mean-payoff takes value $-1$. Second, let us compute the expected mean-payoff of this strategy against $\stratStoch$. Let $p_{{\sf switch}}$ denote the probability to switch to $\ec_{2}$ along a play, as prescribed by the strategy $\strat_{1}$. By definition, it is equal to the probability, when playing inside the EC $\ec_{1}$ and starting from an initial credit of zero in state $\state_{0}$, to come back to $\state_0$ with a credit less than or equal to zero after an arbitrary number of steps. Formally, let $\markovChain$ be the MC induced by the subgame $G\reduc \ec_{1}$ and $\stratStoch$ (note that $\playerOne$ has no choice in it). We have that
$p_{{\sf switch}} = \proba^{M}_{\state_{0}} \big( \{\play \in \outcomesMC{M}{s_0} \mid \exists\, i > 0,\, \last{\play(i)} = \state_{0} \wedge \tpay(\play(i)) \leq 0\}\big)$.

Determining the probability $p_{{\sf switch}}$ of the running sum hitting zero is equivalent to a well-studied problem on MCs: the \textit{gambler's ruin problem}~\cite{grinstead_AMS1997}. Applying results on this problem, we obtain that, if the probabilities $9/10$ and $1/10$ are respectively replaced by arbitrary probabilities $p$ and $q$ such that $p~>~q$, the probability that the gambler is eventually ruined is $\frac{q}{p}$. In our example, this implies that $p_{{\sf switch}} = 1/9$.

We are now able to use this result to provide a lower bound for the expected value of the strategy $\strat_{1}$. Consider the set of outcomes $\outcomesGame{\game}{\state_{0}}{\strat_{1}}{\stratStoch}$: it can be partitioned into the set of plays that stay in $\ec_{1}$, for which the mean-payoff is trivially bounded by zero as discussed before; and the set of plays that reach $\ec_{2}$, for which the mean-payoff is equal to $-1$. Hence, the overall expectation respects the following inequality:
$\expect_{\state_{0}}^{\game[\strat_{1}, \stratStoch]}(\mpay) \geq p_{{\sf switch}}\cdot (-1) + (1 - p_{{\sf switch}})\cdot 0 = -\frac{1}{9} > -1$.
Strategy $\strat_{1}$ yields an expectation at least equal to $-1/9$, hence strictly greater than the expectation achievable by any finite-memory strategy satisfying the $\BWC$ problem, which we have shown to be equal to $-1$.

Intuitively, the added power of infinite memory comes from the possibility to memorize an unbounded running sum, whereas finite memory implies an upper bound on such a sum. In the first case, $\playerOne$ will be able to properly acknowledge that some plays see their running sum diverging without ever dropping to zero (the set of such plays has a strictly positive probability in our example), which lets him benefit from the added expectation without endangering the worst-case requirement. In the second case, $\playerOne$ sees all running sums as upper bounded by some value~$X \in \nat$ due to its limited memory. As such, when he sees a sequence of weights whose total sum is~$-X$, an event that occurs almost-surely infinitely often when an outcome $\play$ is such that $\infVisited{\play} = \ec$ for some EC $\ec \in \losingECs = \ecsSet \setminus \winningECs$, $\playerOne$ will \textit{believe} its running sum hits zero, \textit{whether it really does or not}. Consequently, he has to leave $\ec_{1}$ eventually to ensure the worst-case requirement.

\smallskip\noindent\textbf{Optimal expected values can be reached in winning end-components.} Consider a setting satisfying Assumption~\ref{assump:uniqueWEC}: a game $\game$ reduced to a winning EC such that $\edgesNonZero = \edges$. Let $\thresholdWC = 0$, as usual. In Sect.~\ref{subsec:mpInsideStrongly}, we have seen that, for all $\varepsilon>0$, it is possible to combine a worst-case strategy $\stratWC$ with an optimal expectation strategy~$\stratExp$ into a finite-memory strategy $\stratComb$ that ensures satisfaction of the $\BWC$ problem for thresholds $(0,\, \optimalExp - \varepsilon)$, where $\optimalExp$ is the maximal expected value in $P = \game[\stratStoch]$.

In general, it is not possible to construct a finite-memory strategy $\stratComb$ that ensures the worst-case while inducing an expected value exactly equal to $\optimalExp$ against the stochastic model. See for example the game $\game \reduc \ec_{3}$ in Fig.~\ref{fig:mpRunningExample}: clearly, $\playerOne$ has to use $(\state_{10}, \state_{9})$ infinitely often to ensure the worst-case, and when using finite memory, the contribution (in terms of proportion of cycles played) of the corresponding cycle in the overall expectation can be lower bounded by a strictly positive probability, hence inducing an expected value strictly lower than $\optimalExp = 2$.

Nonetheless, it is possible to build an infinite-memory strategy,
$\strat_{1}^{inf}$, that exactly achieves this expectation while verifying the worst-case threshold. It is in essence similar to the finite-memory combined strategy (Def.~\ref{def:mp_stratComb}). Consider the following argument. Observe that in the analysis of the combined strategy (Sect.~\ref{subsec:mpInsideStrongly}), we show that when parameters $\stepsExp$ and $\stepsWC(\stepsExp)$ tend to infinity, the expectation induced by $\stratComb$ tends to $\optimalExp$. Moreover, the worst-case is always ensured by choice of $\stepsWC(\stepsExp)$. Hence, we possess all the elements needed to construct $\strat_{1}^{inf}$: it suffices to implement a strategy that plays as $\stratComb$, but sequentially increasing the values of $\stepsExp$ and $\stepsWC(\stepsExp)$ up to infinity. Formally, let $(K_i)_{i\in \nat}$ be a strictly increasing sequence of naturals, and
for all $i\geq 0$, let $\stepsWC(K_i)$ be the natural given by Def.~\ref{def:mp_stepsWC}.

The strategy $\strat_{1}^{inf}$ is defined as follows: 
\begin{itemize}
\item[$(init)$] Initialize $i$ to $0$.
\item[$\typeA$] Play $\stratExp$ for $K_i$ steps and memorize $\cmbSum \in \integ$, the sum of weights encountered during these $K_i$ steps.
\item[$\typeB$] If $\cmbSum > 0$, then go to $\typeA$.

Else, play $\stratWC$ during $\stepsWC(K_i)$ steps, then increment $i$ and go to $\typeA$.
\end{itemize}
By doing so, it is possible to show that $\strat_{1}^{inf}$ ensures an expected mean-payoff exactly equal to $\optimalExp$, as well as the worst-case requirement. For the worst-case, it suffices to apply the reasoning developed in Lemma~\ref{lem:mp_swecWC}. To show that~$\strat_{1}^{inf}$ achieves the expected value $\optimalExp$, the intuitive argument is that the probability that a period of type $\typeA$ is followed by a period of type $\typeB$ tends to zero as 
$K_i$ grows, since $\optimalExp > 0$. Therefore, the probability that $\stratWC$
is played infinitely many times is zero.
Indeed, consider the
example of Fig.~\ref{fig:mpRunningExample}. By playing
$\strat_{1}^{inf}$ as above, $\playerOne$ can ensure the worst-case requirement and induce the optimal expected mean-payoff $2$, because the proportion of time spent following strategy $\stratExp$ will tend to one as the parameter $K_{i}$ tends to infinity. 

\smallskip\noindent\textbf{Solving the $\BWC$ problem for infinite-memory strategies.} The problem was recently studied by Clemente and Raskin in~\cite{CR15}, where they showed that it also belongs in $\NPinter$.

\section{Truncated Sum Value Function - Shortest Path Problem}
\label{sec:shortest_path}

Consider a game $\gameFull$ with an underlying graph $\graphFull$ such that the weight function $\weight\colon \edges \rightarrow \natStrict$ assigns \textit{strictly positive} integer weights to all edges, and a target set of states $\truncatedTarget \subseteq \states$ that $\playerOne$ wants to reach with a path of bounded value. That is, $\playerOne$ aims to ensure some threshold on the \textit{truncated sum} value function $\truncatedSum{\truncatedTarget}$. In other words, we study the $\BWC$ synthesis problem for the \textit{shortest path problem}~\cite{bertsekas_MOR1991,deAlfaro_CONCUR1999}. More precisely, given an initial state $\initState \in \states$, a worst-case threshold $\thresholdWC \in \nat$ (we assume a natural threshold w.l.o.g.~as all weights take positive integer values and so does the truncated sum function for any play reaching~$\truncatedTarget$), an expected value threshold $\thresholdExp \in \rat$ and a finite-memory stochastic model $\stratStoch$ for the adversary, the problem is to decide if $\playerOne$ has a finite-memory strategy $\strat_{1} \in \stratsFinite_{1}$ such that
\begin{equation*}
\begin{cases}
      \forall\, \strat_{2} \in \strats_{2},\, \forall\, \play \in \outcomesGame{\game}{\initState}{\strat_{1}}{\strat_{2}},\, \truncatedSum{\truncatedTarget}(\play) < \thresholdWC\\
      \expect_{\initState}^{\game[\strat_{1}, \stratStoch]}(\truncatedSum{\truncatedTarget}) < \thresholdExp
    \end{cases}
\end{equation*}
Hence, regarding Def.~\ref{def:bwc_problem}, the inequalities are reversed. We assume that $\thresholdExp < \thresholdWC$. Equivalently, the problem could be stated with value function $-\truncatedSum{\truncatedTarget}$ without changing the definition.

We establish the following results. First, satisfaction of the $\BWC$ problem for the truncated sum value function can be decided in pseudo-polynomial time (Sect.~\ref{subsec:ts_pseudoPolyAlgo}). Second, pseudo-polynomial memory may be necessary to satisfy the $\BWC$ problem and that it always suffices (Sect.~\ref{subsec:ts_memory}). Third, the decision problem cannot be solved in polynomial time unless $\PTIME = \NPTIME$ by providing an $\NPTIME$-hardness result (Sect.~\ref{subsec:ts_hardness}).

\subsection{\textbf{A pseudo-polynomial time algorithm}}
\label{subsec:ts_pseudoPolyAlgo}

To solve the decision problem, we proceed as follows. First, we show how to construct, from the original game $\game$ and the worst-case threshold $\thresholdWC$, a new game $\gameTS$ such that there is a one-to-one correspondence between the strategies of~$\playerOne$ in $\gameTS$ and the strategies of $\playerOne$ in the original game $\game$ that are winning for the worst-case requirement (Eq.~\eqref{eq:thresholdWC}). To construct this game, we unfold the original graph $\graph$, tracking the current value of the truncated sum \textit{up to the worst-case threshold $\thresholdWC$}, and integrating this value in the states of an expanded graph $\graph'$. In the corresponding game $\game'$, we then compute the set of states $R$ from which $\playerOne$ can reach the target set with cost lower than the worst-case threshold and we define the subgame $\gameTS = \game' \reduc R$ such that any path in the graph of $\gameTS$ satisfies the worst-case requirement. Second, assuming that $\gameTS$ is not empty, we can now combine it with the the stochastic Moore machine $\mooreMachine{\stratStoch}$ of the adversary to construct an MDP in which we search for a $\playerOne$ strategy that ensures reachability of $\truncatedTarget$ with an expected cost strictly lower than $\thresholdExp$ (Eq.~\eqref{eq:thresholdExp}). If such a strategy exists, it is guaranteed that it will also satisfy the worst-case requirement against any strategy of $\playerTwo$ thanks to the bijection evoked earlier.

Hence, for the shortest path, our approach is sequential, first solving the worst-case, then optimizing the expected value among the worst-case winning strategies. Observe that this approach is not applicable to the mean-payoff, as in that case there exists no obvious finite representation of the worst-case winning strategies.

 \begin{theorem}
 \label{thm:ts_pseudoPoly}
The beyond worst-case problem for the shortest path can be solved in pseudo-polynomial time: polynomial in the size of the game graph, the Moore machine for the stochastic model of the adversary and the encoding of the expected value threshold, and polynomial in the value of the worst-case threshold.
 \end{theorem}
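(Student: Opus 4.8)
The proof follows the two-stage strategy sketched before the theorem statement: first reduce the worst-case requirement to a reachability-style constraint on an unfolded game, then solve an expected-cost problem on an MDP built on top of it. The key construction is the expanded game $\gameTS$. First I would unfold $\graph$ by annotating each state with the current truncated sum, capped at $\thresholdWC$: define the state space of $\graph'$ as $\states \times \{0, 1, \ldots{}, \thresholdWC\} \cup \{\tsFailSymbol\}$ where $\tsFailSymbol$ is an absorbing failure state. A state $(\state, v)$ with $\state \notin \truncatedTarget$ and an edge $(\state, \state') \in \edges$ of weight $\weight((\state,\state'))$ leads to $(\state', v + \weight((\state,\state')))$ if $v + \weight((\state,\state')) < \thresholdWC$, and to $\tsFailSymbol$ otherwise; states $(\state, v)$ with $\state \in \truncatedTarget$ are made absorbing (target reached within budget). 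The partition $\statesOne', \statesTwo'$ is inherited from the first component. Since all weights are strictly positive integers, the sum strictly increases along every edge until either $\truncatedTarget$ is hit or the budget $\thresholdWC$ is exhausted, so $\graph'$ has size $O(\vert\states\vert \cdot \thresholdWC)$ — pseudo-polynomial — and is acyclic except for the self-loops on absorbing states. I would then let $R$ be the set of states of $\game'$ from which $\playerOne$ has a strategy to reach a target-annotated absorbing state while surely avoiding $\tsFailSymbol$; this is a standard attractor/safety computation done in time polynomial in $\vert\game'\vert$. Set $\gameTS := \game' \reduc R$ (with the initial state $(\initState, 0)$, which belongs to $R$ iff the worst-case threshold is satisfiable at all — otherwise we answer \no\ immediately).

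\textbf{The correspondence.} The second step is to prove the one-to-one correspondence: strategies of $\playerOne$ in $\gameTS$ are exactly the traces of strategies of $\playerOne$ in $\game$ that win the worst-case requirement of Eq.~\eqref{eq:thresholdWC}. For the forward direction, any play in $\gameTS$ stays in $R$, hence avoids $\tsFailSymbol$ and eventually reaches a target-annotated state, which by construction of the annotation means $\truncatedSum{\truncatedTarget}(\play) < \thresholdWC$; projecting on $\states$ gives a worst-case-winning play in $\game$. Conversely, a worst-case-winning strategy $\strat_1$ in $\game$ never lets the truncated sum reach $\thresholdWC$ before hitting $\truncatedTarget$ against any $\playerTwo$ strategy, so its natural lift (tracking the sum in memory) never visits $\tsFailSymbol$ and stays in $R$; this is formalized exactly as in Lemma~\ref{lem:mp_memorylessStochModel}, using that the sum-tracking memory update is deterministic. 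The bijection between outcomes is then immediate via the projection operator $\proj{\states}$, as in Lemma~\ref{lem:projBijective}.

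\textbf{Solving the expected value.} Assuming $\gameTS$ is nonempty, I would form the product $\gameTS \otimes \mooreMachine{\stratStoch}$ with the Moore machine of the stochastic model and fix $\playerTwo$'s choices according to $\stratStoch$, obtaining an MDP $\markovProcess$ of size polynomial in $\vert\gameTS\vert$ and $\vert\mooreMachine{\stratStoch}\vert$ — hence pseudo-polynomial in the original input and the Moore machine, and polynomial in $\thresholdWC$. On $\markovProcess$ we want a $\playerOne$ strategy minimizing the expected truncated sum to the target-annotated absorbing states; this is the classical \emph{stochastic shortest path} problem on MDPs, solvable in polynomial time in $\vert\markovProcess\vert$ via linear programming~\cite{bertsekas_MOR1991,deAlfaro_CONCUR1999} (one must first handle, in the standard way, states from which the target is not reachable with probability one against $\stratStoch$, assigning them infinite expected cost, which automatically excludes them). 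We then compare the optimal expected cost $\optimalExp$ with the threshold $\thresholdExp$ and answer \yes\ iff $\optimalExp < \thresholdExp$. Correctness: every strategy counted by the LP is, by the correspondence, a worst-case winner, and conversely every worst-case winner lifts into $\markovProcess$ with the same induced MC (same outcomes, same probabilities, same truncated-sum values), so the LP optimum is exactly the best expectation achievable among worst-case-winning strategies; memoryless optimality on $\markovProcess$ then yields a finite-memory strategy in $\game$ by reintegrating the sum-tracking memory and the Moore machine memory. Summing the costs of the stages — attractor computation, product, LP — all of which are polynomial in $\vert\markovProcess\vert = O(\mathrm{poly}(\vert\game\vert, \vert\mooreMachine{\stratStoch}\vert, \thresholdExp) \cdot \thresholdWC)$, gives the pseudo-polynomial bound claimed.

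\textbf{Main obstacle.} The routine parts are the unfolding and the LP; the delicate point I expect to spend the most care on is the correspondence lemma, specifically checking that restricting to $R$ loses nothing on the expectation side. One must argue that a worst-case-winning strategy never needs to leave $R$ — i.e., that the projection of any worst-case winner is consistent with staying in $R$ — and that conversely no $\playerOne$ strategy living in $\gameTS$ but occasionally "wasting" budget is penalized: since we optimize expectation over \emph{all} strategies in $\gameTS$ and the stochastic model is fixed, the LP already ranges over exactly the right set, but writing this cleanly requires the bijection of outcomes together with prefix-tracking of the running sum, mirroring the proof of Lemma~\ref{lem:mp_memorylessStochModel}. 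A secondary subtlety is ensuring the stochastic-shortest-path LP is well-posed (proper policies exist / infinite-cost states are correctly identified); this is handled by the standard preprocessing that removes states from which $\truncatedTarget$ is unreachable with probability one under $\stratStoch$.
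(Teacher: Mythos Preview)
Your proposal is correct and follows essentially the same approach as the paper: unfold the game by tracking the running sum up to $\thresholdWC$, restrict to the $\playerOne$-attractor of the budget-respecting target to obtain $\gameTS$, take the product with the Moore machine, and solve the resulting expected-cost MDP in polynomial time. The only cosmetic difference is that the paper phrases the final optimization as a total-payoff MDP problem (via the absorbing-target reduction from Sect.~\ref{sec:preliminaries}) rather than invoking the stochastic shortest path literature directly, and it does not dwell on the proper-policy subtlety you flag, but the argument is otherwise identical.
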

 \begin{proof}
Let $\gameFull$ be the two-player game, $\graphFull$ its underlying graph, $\weight\colon \edges \rightarrow \natStrict$ its weight function, $\initState \in \states$ the initial state, $\truncatedTarget \subseteq \states$ the target set, $\stratStoch \in \stratsFinite_{2}$ the stochastic model of $\playerTwo$, with $\mooreMachineFull{\stratStoch}$ its Moore machine, $\thresholdWC \in \nat$ the worst-case threshold, and $\thresholdExp \in \rat$ the expected value threshold.
Based on $\game$ and $\thresholdWC$, we define the game $\game' = (\graph', \statesOne', \statesTwo')$. Its underlying graph $\graph' = (\states', \edges', \weight')$ is built by unfolding the original graph $\graph$, tracking the current value of the truncated sum \textit{up to the worst-case threshold $\thresholdWC$}, and integrating this value in the states of $\graph'$. Formally, we have that
  \begin{itemize}
\item $\statesOne'=\statesOne \times \left( \{0,1,\dots,\thresholdWC-1\} \cup \{\tsFailSymbol\}\right) $, $\statesTwo'=\statesTwo \times \left( \{0,1,\dots,\thresholdWC-1\} \cup \{\tsFailSymbol\}\right) $, and $\states' = \statesOne' \cup \statesTwo'$;
	\item $\edges'= \left\lbrace \big((\state_1, u_1),(\state_2,u_2)\big) \in \states' \times \states' \mid (\state_1, \state_2) \in \edges \,\land\, u_2 = u_1 + \weight((\state_1, \state_2)) \right\rbrace$, with the convention that, for all $c \in \nat$, $\tsFailSymbol + c = \tsFailSymbol$, and, for all $u \in \nat$, $u + c = \tsFailSymbol$ if $u + c \geq \thresholdWC$;
	\item $\forall\, e = \big((\state_1, u_1),(\state_2,u_2)\big) \in \edges',\, \weight'(e) = \weight((s_1,s_2))$.
  \end{itemize}
The symbol $\tsFailSymbol$ represents costs exceeding the worst-case threshold $\thresholdWC$. The initial state in $\game'$ is $\initState' = (\initState, 0)$, and the target set is $\truncatedTarget' = \truncatedTarget \times \{0, 1, \ldots,\thresholdWC-1\}$, i.e., in $\game'$ the target set is restricted to copies of states of the original target set that are reached with a sum less than $\thresholdWC$. Notice that for any state $\state_{1}' = (\state_{1}, \tsFailSymbol) \in \states'$, all its successors in $\graph'$ are of the form $\state_{2}' = (\state_{2}, \tsFailSymbol)$.

Now, we compute in $\game'$ the set of states $R \subseteq \states'$ from which $\playerOne$ has a strategy to force reaching $\truncatedTarget'$ using a classical attractor computation, i.e., $R = \attr_{\game'}^{\playerOne}(\truncatedTarget')$ (cf. Sect.~\ref{sec:preliminaries}). Clearly, all states outside this attractor set are losing for the worst-case requirement. Indeed, from any state outside of $R$, either $\playerOne$ cannot force reaching a state $\state' = (\state, u)$ with $\state \in \truncatedTarget$, or he can only do it for $u = \tsFailSymbol$. In particular, if $\initState'=(\initState,0) \not\in R$ then we know that $\playerOne$ cannot enforce the worst-case threshold in the original game $\game$, and we can stop here in this case: no strategy exists for the $\BWC$ problem. 

Assume that  $\initState'=(\initState,0) \in R$. Let us write $\gameTS = \game' \reduc R$. Note that there will be deadlocks in $\gameTS$ (i.e., states with no successors): this is guaranteed since the sum of weights is strictly growing (recall $\weight\colon \edges \rightarrow \natStrict$) and states of the form $(\state, \tsFailSymbol)$ do not belong to $R$ by definition. However, the only deadlocks will be on states that are in $\truncatedTarget' \subseteq R$ (by definition of $R$ as the attractor of $\truncatedTarget'$). Hence, we get rid of them by adding self-loops of weight zero (this does not change the truncated sum up to $\truncatedTarget'$ by definition of $\truncatedSum{\truncatedTarget'}$).
It is easy to see that all strategies $\strat_{1} \in \strats_{1}(\gameTS)$ are winning for the worst-case requirement, and that there is a bijection between the winning strategies for the worst-case requirement in the original game $\game$ and the strategies in $\gameTS$. 

We are now equipped to handle the expectation objective. We proceed as follows. First, we take the product of~$\gameTS$ and $\mooreMachineFull{\stratStoch}$, following the construction of Lemma~\ref{lem:mp_memorylessStochModel} (the proof holds for the truncated sum value function as well). On the product game, we again preserve correspondence with the worst-case winning strategies in the original game $\game$. Applying the memoryless stochastic model (resulting of Lemma~\ref{lem:mp_memorylessStochModel}) on the product game, we obtain an MDP $\markovProcess$. It is then clear that $\playerOne$ has a strategy to enforce an expected value strictly less than the threshold $\thresholdExp$ in $\markovProcess$ \textit{if and only if} $\playerOne$ has a strategy that \textit{both} enforces the worst-case threshold against any strategy $\strat_{2} \in \strats_{2}(\game)$, and the expectation threshold against $\stratStoch$ in $\game$. To decide if such a strategy exists, we compute the minimal achievable expected value on $\markovProcess$ and we compare it against the threshold $\thresholdExp$. Thanks to the reduction from truncated sum to total-payoff proposed in the preliminaries (Sect.~\ref{sec:preliminaries}), we know that this optimal value can be achieved by a memoryless strategy and its computation can be executed in polynomial time in the size of the encoding of $\markovProcess$ via linear programming~\cite{filar1997}. Hence, it requires time polynomial in the size of the encoding of $\game$ and $\mooreMachine{\stratStoch}$, and polynomial in the value $\thresholdWC$ (since $\vert\states'\vert = \vert\states\vert \cdot (\thresholdWC + 1)$).
\end{proof}

\subsection{\textbf{Memory requirements}}
\label{subsec:ts_memory}
  
In the next theorem, we characterize the memory needed by strategies satisfying the $\BWC$ problem.
 
\begin{theorem}
\label{thm:ts_memory}
Memory of pseudo-polynomial size may be necessary and is always sufficient to
satisfy the $\BWC$ problem for the shortest path: polynomial in the size of the game and the stochastic model, and polynomial in the worst-case threshold value.
\end{theorem}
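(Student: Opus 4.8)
The plan is to split the statement into the upper bound (pseudo-polynomial memory always suffices) and the lower bound (it is sometimes necessary). The upper bound is essentially a corollary of the construction in the proof of Theorem~\ref{thm:ts_pseudoPoly}, so I would first spell that out. Recall that the algorithm there produces an MDP $\markovProcess$ obtained by taking the product of the expanded game $\gameTS$ (whose state space has size $\vert\states\vert\cdot(\thresholdWC+1)$) with the Moore machine $\mooreMachine{\stratStoch}$ of the stochastic model, and then computes a memoryless optimal strategy in $\markovProcess$ (this is legitimate since, via the reduction from truncated sum to total-payoff recalled in Sect.~\ref{sec:preliminaries}, memoryless strategies suffice for the expected shortest path~\cite{filar1997}). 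A memoryless strategy in $\markovProcess$, when projected back to the original game $\game$, becomes a finite-memory strategy whose memory is exactly the state component of $\markovProcess$ that does not come from $\game$ itself: namely the current truncated-sum value in $\{0,1,\dots,\thresholdWC-1\}\cup\{\tsFailSymbol\}$ together with the memory element of $\mooreMachine{\stratStoch}$. Hence the memory needed is bounded by $(\thresholdWC+1)\cdot\vert\mooreMem\vert$, i.e.\ polynomial in $\vert\game\vert$, $\vert\mooreMachine{\stratStoch}\vert$ and in the \emph{value} $\thresholdWC$ (pseudo-polynomial). One small point to verify here is that this decision procedure does produce a witness strategy — which it does, since computing the optimal memoryless strategy in $\markovProcess$ via linear programming yields the strategy explicitly — and that projecting it back through the bijection of Theorem~\ref{thm:ts_pseudoPoly} and Lemma~\ref{lem:mp_memorylessStochModel} preserves both the worst-case guarantee and the expectation.

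For the lower bound I would use the family $(\game(\thresholdWC))_{\thresholdWC}$ of Fig.~\ref{fig:ts_familyExpInThresholdWC}, parameterized by worst-case thresholds of the form $\thresholdWC\in\{13+k\cdot 4\mid k\in\nat\}$ (the arithmetic-progression restriction is just to make $\lfloor\thresholdWC/2\rfloor$ and $\lfloor\thresholdWC/4\rfloor$ behave cleanly). In $\game(\thresholdWC)$, from $\state_1$ player~1 can either jump directly to the target $\state_3$ at cost $\lfloor\thresholdWC/2\rfloor$, or cycle $\state_1\to\state_2\to\state_1$ (each traversal of the square state $\state_2$ costs $1$ going out and $1$ coming back, and from $\state_2$ the adversary's stochastic model sends the play to $\state_3$ with probability $1/2$ at cost $1$). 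Each failed attempt through $\state_2$ adds $2$ to the running sum but, with probability $1/2$, terminates at $\state_3$ with a small total cost rather than the expensive $\lfloor\thresholdWC/2\rfloor$. To keep the worst-case guarantee (reaching $\state_3$ with total cost $<\thresholdWC$), player~1 can play $(\state_1,\state_2)$ at most until the running sum reaches $\lfloor\thresholdWC/2\rfloor$: if it then takes $(\state_1,\state_3)$, the worst total cost is at most $\lfloor\thresholdWC/2\rfloor+\lfloor\thresholdWC/2\rfloor = 2\lfloor\thresholdWC/2\rfloor<\thresholdWC$ (using that $\thresholdWC$ is odd for this family), so this is the \emph{last} point at which switching is still safe. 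Concretely, this means player~1 can afford exactly $\lfloor\thresholdWC/4\rfloor$ traversals of the $\state_1\to\state_2\to\state_1$ loop and must then switch to $(\state_1,\state_3)$.

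The crux is then to argue (i) that the expected value is \emph{strictly} improved by using all $\lfloor\thresholdWC/4\rfloor$ allowed attempts before switching, and (ii) that counting those attempts provably requires memory linear in $\thresholdWC$. For (i) I would compute, or at least bound, the expected cost to target as a function of the number $n$ of loop iterations a (pure, w.l.o.g.\ — randomizing on $(\state_1,\state_3)$ is ruled out as it creates a worst-case-losing outcome) strategy performs before switching: at each of the first $n$ iterations the play reaches $\state_3$ with probability $1/2$ paying a cost of the form $(\text{running sum so far})+1$, and with probability $2^{-n}$ it survives all $n$ iterations and pays $2n+\lfloor\thresholdWC/2\rfloor$. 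One checks this expectation is a strictly decreasing function of $n$ on $\{0,1,\dots,\lfloor\thresholdWC/4\rfloor\}$, so the unique optimal $\BWC$-feasible strategy is the one doing exactly $\lfloor\thresholdWC/4\rfloor$ iterations (Fig.~\ref{fig:ts_optimalStratForBWC} depicts its induced Markov chain). One then picks the expectation threshold $\thresholdExp$ strictly between the expected cost of the $(\lfloor\thresholdWC/4\rfloor)$-iteration strategy and that of the $(\lfloor\thresholdWC/4\rfloor-1)$-iteration strategy, forcing any $\BWC$-satisfying strategy to perform exactly $\lfloor\thresholdWC/4\rfloor$ iterations. For (ii), the standard argument: any strategy with fewer than $\lfloor\thresholdWC/4\rfloor$ memory states must, by a pigeonhole / pumping argument on the linear sequence of visits to $\state_1$, either switch to $(\state_1,\state_3)$ too early (missing the optimal expectation, violating $\thresholdExp$) or loop on $(\state_1,\state_2)$ without ever switching on some branch (letting the running sum exceed $\lfloor\thresholdWC/2\rfloor$, hence producing a worst-case outcome of total cost $\ge\thresholdWC$, violating $\thresholdWC$) — either way contradiction. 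The main obstacle I anticipate is nailing down the monotonicity computation in (i) cleanly enough that the gap between consecutive expectations is non-degenerate (so that a valid rational $\thresholdExp$ can be inserted and the encoding stays polynomial), and making the pumping argument in (ii) fully rigorous for arbitrary finite-memory (possibly randomized) strategies rather than just for "counter-like" ones; both are routine but need care with the $\lfloor\cdot\rfloor$ arithmetic.
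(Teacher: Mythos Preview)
Your proposal is correct and follows essentially the same approach as the paper: the upper bound is read off the construction of Theorem~\ref{thm:ts_pseudoPoly}, and the lower bound uses the family of Fig.~\ref{fig:ts_familyExpInThresholdWC} with exactly the argument you sketch (the paper even computes $e(n)=\sum_{i=0}^{n-1}2^{1-i}+2^{-n}\lfloor\thresholdWC/2\rfloor$ and reduces $e(n)<e(n-1)$ to $\lfloor\thresholdWC/2\rfloor>4$, then places $\thresholdExp$ in $\left(e(n),e(n-1)\right]$). One small slip: in your w.l.o.g.\ purity remark, it is repeatedly assigning non-zero probability to $(\state_1,\state_2)$---not $(\state_1,\state_3)$---that creates the worst-case-losing outcome.
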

  
\begin{proof}
The upper bound on the size of the memory can be obtained directly from the construction exposed in the proof of Thm.~\ref{thm:ts_pseudoPoly}. Indeed, we have shown that if the $\BWC$ problem can be satisfied, the memoryless strategy that minimizes the expectation in the MDP $\markovProcess$ does satisfy it. Translated back to the original game, this strategy has a memory which is polynomial in $\vert\game\vert$, $\vert\mooreMachine{\stratStoch}\vert$, and the value of $\thresholdWC$. Intuitively, the strategy needs to memorize the current value of the sum of weights, up to the value of the worst-case threshold (at which point it does not matter to bookkeep it anymore as $\playerOne$ has already failed to enforce the worst-case requirement). Hence, such a strategy requires memory polynomial in the input game and the stochastic model, and in the threshold.

To prove that pseudo-polynomial memory may be necessary, we introduce a family $(\game(\thresholdWC))_{\thresholdWC \in \{13 + k\cdot 4 \mid k \in \nat\}}$ of games, indexed by the value of the worst-case threshold. This value is taken in a specific set $\{13 + k\cdot 4 \mid k \in \nat\}$ mostly to ease the following calculations. The family is presented in Fig.~\ref{fig:ts_familyExpInThresholdWC}: it consists of three states $\states = \{\state_{1}, \state_{2}, \state_{3}\}$. The weight function only assigns strictly positive weights as assumed in the setting of the shortest path problem. All weights are equal to $1$ except for edge $(\state_{1}, \state_{3})$ which has a weight $\left\lfloor\frac{\thresholdWC}{2}\right\rfloor$. Notice that $\thresholdWC$ is chosen odd and such that $\left\lfloor\frac{\thresholdWC}{2}\right\rfloor$ is even.
We will consider the values of the expectation threshold $\thresholdExp$ that can be ensured by a $\BWC$ strategy in such a game, under the chosen worst-case threshold $\thresholdWC$ and against a stochastic model assigning uniform distributions, and show that to minimize this value, $\playerOne$ needs to use linear memory in $\thresholdWC$, hence proving the claim.

\begin{figure}[thb]
  \centering
  \scalebox{0.6}{\begin{tikzpicture}[->,>=latex,shorten >=1pt,auto,node
    distance=2.5cm,bend angle=45,font=\Large]
    \tikzstyle{p1}=[draw,circle,text centered,minimum size=10mm]
    \tikzstyle{p2}=[draw,rectangle,text centered,minimum size=10mm]
    \tikzstyle{empty}=[]
    \node[p1] (1) at (0,0) {$\state_{1}$};
    \node[p2] (2) at (4,0) {$\state_{2}$};
    \node[p1,double] (3) at (2,-2.4) {$\state_{3}$};
    \node[empty] (a) at (3.5,0.9) {$\frac{1}{2}$};
    \node[empty] (b) at (3.8,-0.9) {$\frac{1}{2}$};
    \coordinate[shift={(-5mm,0mm)}] (init) at (1.west);
    \path
    (1) edge node[below] {$1$} (2)
    (2) edge node[right,near end,xshift=2mm] {$1$} (3)
    (1) edge node[left,near end,xshift=-2mm] {$\left\lfloor\dfrac{\thresholdWC}{2}\right\rfloor$} (3)
    (3) edge [loop below, out=240, in=300,looseness=2, distance=16mm] node [below] {$1$} (3)
    (init) edge (1)
    ;
	\draw[->,>=latex] (2) to[out=140,in=40] node[above] {$1$} (1);
      \end{tikzpicture}}
  \vspace{-2mm}
      \caption{Family of games requiring memory linear in $\thresholdWC$ for the $\BWC$ problem.}
\label{fig:ts_familyExpInThresholdWC}
\end{figure}

First, observe that if the running sum of weights (which is an integer value) gets strictly larger than~$\left\lfloor\frac{\thresholdWC}{2}\right\rfloor$, then $\playerOne$ has lost the worst-case requirement (Eq.~\eqref{eq:thresholdWC}) as playing $(\state_{1}, \state_{2})$ does not guarantee reaching $\truncatedTarget$, and playing $(\state_{1}, \state_{3})$ induces a total cost at least equal to $\thresholdWC$. Hence, when in $\state_{1}$ with a running sum equal to $\left\lfloor\frac{\thresholdWC}{2}\right\rfloor$, $\playerOne$ has no valid choice but to take the edge $(\state_{1}, \state_{3})$. Since randomization clearly does not help (as it will produce consistent outcomes that are losing if the edge $(\state_{1}, \state_{2})$ is repeatedly assigned a non-zero probability), defining the optimal strategy of $\playerOne$ boils down to deciding for how long he should take the edge $(\state_{1}, \state_{2})$ before switching (if at all).

We claim that it should maximize the number of passes in $\state_{2}$ (Fig.~\ref{fig:ts_optimalStratForBWC}). Let $n$ denotes the number of times $\playerOne$ chooses $(\state_{1}, \state_{2})$ before switching. Clearly, to guarantee satisfaction of the worst-case requirement, we need $2\cdot n + \left\lfloor\frac{\thresholdWC}{2}\right\rfloor < \thresholdWC$. Since the threshold is odd, we have $2\cdot n + \left\lfloor\frac{\thresholdWC}{2}\right\rfloor < 2\cdot n + \frac{\thresholdWC}{2}$. Hence, it suffices to have $2\cdot n + \frac{\thresholdWC}{2} \leq \thresholdWC$, or equivalently, $n \leq \frac{\thresholdWC}{4}$. Note that this bound is linear in the value of $\thresholdWC$. What remains to prove is that increasing the number of passes results in a decrease of the expected value. Let $e(n)$ denotes the expected value induced by the strategy that plays edge $(\state_{1}, \state_{2})$ for $n$ times before switching. Careful computation reveals that $e(n)$ can be expressed as follows:
$e(n) = \sum_{i=0}^{n-1} \frac{1}{2^{i-1}} + \frac{1}{2^{n}}\cdot \left\lfloor\frac{\thresholdWC}{2}\right\rfloor$.
Our thesis is that for all $n \geq 0$, $e(n) < e(n-1)$. By the previous equation, that is
\begin{align*}
\sum_{i=0}^{n-1} \frac{1}{2^{i-1}} + \frac{1}{2^{n}}\cdot \left\lfloor\frac{\thresholdWC}{2}\right\rfloor < \sum_{i=0}^{n-2} \frac{1}{2^{i-1}} + \frac{1}{2^{n-1}}\cdot \left\lfloor\frac{\thresholdWC}{2}\right\rfloor \:\Leftrightarrow\:
\frac{1}{2^{n-2}} - \frac{1}{2^{n}}\cdot \left\lfloor\frac{\thresholdWC}{2}\right\rfloor < 0 \:\Leftrightarrow\:
\left\lfloor\frac{\thresholdWC}{2}\right\rfloor > \frac{2^{n}}{2^{n-2}} = 4 \:\Leftrightarrow\:
\thresholdWC > 9,
\end{align*}
and the last inequality is true thanks to the hypothesis that $\thresholdWC \in \{13 + k\cdot 4 \mid k \in \nat\}$. This shows that increasing $n$ decreases the expectation, as wanted.

\begin{figure}[htb]
  \centering   
  \scalebox{0.4}{\begin{tikzpicture}[->,>=latex,shorten >=1pt,auto,node
    distance=2.5cm,bend angle=45,font=\huge,scale=1.2]
    \tikzstyle{p1}=[draw,circle,text centered,minimum size=35mm]
    \tikzstyle{p2}=[draw,rectangle,text centered,minimum size=20mm]
    \tikzstyle{empty}=[]
    \node[p1] (1) at (0,0) {$\state_{1}, 0$};
    \node[p2] (2) at (5,0) {$\state_{2}, 1$};
    \node[p1] (3) at (10,0) {$\state_{1}, 2$};
    \node[p2] (4) at (15,0) {$\state_{2}, 3$};
    \node[p2] (5) at (22,0) {$\state_{2}, \left\lfloor\dfrac{\thresholdWC}{2}\right\rfloor - 1$};
    \node[p1] (6) at (27,0) {$\state_{1}, \left\lfloor\dfrac{\thresholdWC}{2}\right\rfloor$};
    \node[p1] (7) at (5,-5) {$\state_{3}, 2$};
    \node[p1] (8) at (15,-5) {$\state_{3}, 4$};
    \node[p1] (9) at (22,-5) {$\state_{3}, \left\lfloor\dfrac{\thresholdWC}{2}\right\rfloor$};
    \node[p1] (10) at (27,-5) {$\state_{3}, 2\cdot\left\lfloor\dfrac{\thresholdWC}{2}\right\rfloor$};
    \node[empty] (a) at (4.8,-1.4) {$\frac{1}{2}$};
    \node[empty] (b) at (14.8,-1.4) {$\frac{1}{2}$};
    \node[empty] (c) at (21.8,-1.4) {$\frac{1}{2}$};
    \node[empty] (d) at (6.2,0.4) {$\frac{1}{2}$};
    \node[empty] (e) at (16.2,0.4) {$\frac{1}{2}$};
    \node[empty] (f) at (23.8,0.4) {$\frac{1}{2}$};
    \coordinate[shift={(-5mm,0mm)}] (init) at (1.west);
    \path
    (1) edge node[above] {$1$} (2)
    (2) edge node[above] {$1$} (3)
    (3) edge node[above] {$1$} (4)
    (5) edge node[above] {$1$} (6)
    (2) edge node[right] {$1$} (7)
    (4) edge node[right] {$1$} (8)
    (5) edge node[right] {$1$} (9)
    (6) edge node[right] {$\left\lfloor\dfrac{\thresholdWC}{2}\right\rfloor$} (10)
    (init) edge (1)
    ;
	\draw[->,>=latex] (4) -- (18,0);
	\draw[loosely dashed,-] (18,0) -- (20.6,0);
      \end{tikzpicture}}
      \caption{Partial representation of the Markov chain induced by the $\BWC$ strategy that minimizes the expected cost to target in $\game(\thresholdWC)$, $\thresholdWC \in \{13 + k\cdot 4 \mid k \in \nat\}$.}
\label{fig:ts_optimalStratForBWC}
  \end{figure}

In conclusion, the optimal $\BWC$ strategy for the expected value criterion consists in playing $(\state_{1}, \state_{2})$ for exactly $n = \left\lfloor\frac{\thresholdWC}{4}\right\rfloor$ times, then swithing to $(\state_{1}, \state_{3})$ to ensure the worst-case (the corresponding MC is represented in Fig.~\ref{fig:ts_optimalStratForBWC}). Following our computations, it is possible to impose that playing this strategy is necessary to satisfy the $\BWC$ problem by taking the expected value threshold such that $e(n) < \thresholdExp \leq e(n-1)$. This proves that memory linear in $\thresholdWC$ is needed for the given family of games.
 \end{proof}
 
\begin{remark}
\label{rem:ts_infMem}
In contrast to the case of the mean-payoff value function (Sect.~\ref{subsec:mpInfiniteMemory}), infinite memory gives no additional power here. Indeed, the proof of Thm.~\ref{thm:ts_pseudoPoly} gives a complete representation of worst-case winning strategies through the game $\game_{\thresholdWC}$ and it is further proved that finite memory suffices to define an optimal strategy with regard to the expected value criterion among these worst-case winning strategies.
\end{remark}

\subsection{\textbf{$\NPTIME$-hardness of the decision problem}}
\label{subsec:ts_hardness}
 
We conclude our study of the $\BWC$ problem in the shortest path setting (i.e., for the truncated sum value function) by showing that it is very unlikely that a truly-polynomial (i.e., also polynomial in the size of the encoding of the worst-case threshold) time algorithm exists, as we establish in Thm.~\ref{thm:ts_NPHardness} that the decision problem is $\NPTIME$-hard. We prove it by reduction from the \textit{$K^{th}$ largest subset problem}~\cite{garey_FNY1979}. 
A recent paper by Haase and Kiefer~\cite{HaasePP} shows that this $K^{th}$ largest subset problem is actually $\PPTIME$-complete. Thus, it suggests that the $\BWC$ shortest path problem does not belong to $\NPTIME$ at all, otherwise the polynomial hierarchy would collapse to $\PTIME^{\NPTIME}$ by Toda's theorem~\cite{toda1991pp}.

The $K^{th}$ largest subset problem is expressed as follows. Given a finite set $\kthSet$, a size function $\kthSizeFctFull$ assigning strictly positive integer values to elements of $\kthSet$, and two naturals $\kthSetsNbr, \kthSetMaxSum \in \nat$, decide if there exist $\kthSetsNbr$ distinct subsets $\kthSubset_{i} \subseteq \kthSet$, $1 \leq i \leq \kthSetsNbr$, such that $\kthSizeFct(\kthSubset_{i}) = \sum_{\kthElem \in \kthSubset_{i}} \kthSizeFct(\kthElem) \leq \kthSetMaxSum$ for all $\kthSetsNbr$ subsets. The $\NPTIME$-hardness of this problem was proved in~\cite{johnson_JACM1978} via a Turing reduction from the partition problem.

\begin{wrapfigure}{r}{80mm}
  \centering   
  \vspace{-6mm}
  \scalebox{0.68}{\begin{tikzpicture}[->,>=latex,shorten >=1pt,auto,node
    distance=2.5cm,bend angle=45,font=\Large]
    \tikzstyle{p1}=[draw,circle,text centered,minimum size=10mm]
    \tikzstyle{p2}=[draw,rectangle,text centered,minimum size=10mm]
    \tikzstyle{empty}=[]
    \node[p1] (1) at (0,0) {{\small choice}};
    \node[p2] (2) at (4,-2) {$\kthWCState$};
    \node[p2] (3) at (4,2) {$\kthExpState$};
    \node[p1] (4) at (8,0) {{\small target}};
    \node[empty] (a) at (4.7,2.3) {$0$};
    \node[empty] (b) at (3.85,1.25) {$1$};
    \node[empty] (C) at (4.7,-2.3) {$1$};
    \coordinate[shift={(-5mm,0mm)}] (init) at (1.west);
    \path
    (1) edge node[below] {$1$} (2)
    (1) edge node[above] {$1$} (3)
    (4) edge [loop right, out=35, in=325,looseness=2, distance=16mm] node [right] {$1$} (4)
    (init) edge (1)
    ;
	\draw[->,>=latex] (2) to[out=0,in=230] node[below,xshift=2mm] {$\kthWeightC$} (4);
	\draw[->,>=latex] (3) to[out=270,in=180] node[below] {$\kthWeightB$} (4);
	\draw[->,>=latex] (3) to[out=0,in=100] node[above] {$\kthWeightA$} (4);
      \end{tikzpicture}}
      \caption{Choice gadget: choosing $\kthExpState$ is best for the expected value, but it is safe with regard to the worst-case if and only if the random subset selection produced a subset $\kthSubset$ such that $\kthSizeFct(\kthSubset) \leq \kthSetMaxSum$.}
\label{fig:ts_NPhardChoiceGadget}
\vspace{-8mm}
  \end{wrapfigure}

The crux of the reduction is as follows. We build a game composed of two gadgets. The \textit{random subset selection gadget} (Fig.~\ref{fig:ts_NPhardRandomSelectGadget}) stochastically generates paths that represent subsets of $\kthSet$: all subsets are equiprobable. The \textit{choice gadget} follows (Fig.~\ref{fig:ts_NPhardChoiceGadget}): $\playerOne$ decides either to go to $\kthExpState$, which leads to lower expected values (and lower is better in our setting) but may be dangerous for the worst-case requirement, or to go to~$\kthWCState$, which is always safe with regard to the worst-case but induces a higher expected cost. The trick is to define values of the thresholds and the weights used in the gadgets such that an optimal (i.e., minimizing the expectation while guaranteeing the worst-case threshold) strategy for $\playerOne$ consists in choosing $\kthExpState$ only when the randomly generated subset $\kthSubset \subseteq \kthSet$ satisfies $\kthSizeFct(\kthSubset) \leq \kthSetMaxSum$, as asked by the~$K^{th}$ largest subset problem; and such that this strategy satisfies the $\BWC$ problem if and only if there exist~$\kthSetsNbr$ distinct subsets that verify this bound, i.e., if and only if the answer to the~$K^{th}$ largest subset problem is $\yes$.

\begin{figure}[htb]
  \centering   
  \scalebox{0.6}{\begin{tikzpicture}[->,>=latex,shorten >=1pt,auto,node
    distance=2.5cm,bend angle=45,font=\Large]
    \tikzstyle{p1}=[draw,circle,text centered,minimum size=10mm]
    \tikzstyle{p2}=[draw,rectangle,text centered,minimum size=10mm]
    \tikzstyle{empty}=[]
    \node[p2] (1) at (0,0) {$\kthElem_{1}$};
    \node[p2] (2) at (4,0) {$\kthElem_{2}$};
    \node[p2] (3) at (8,0) {$\kthElem_{3}$};
    \node[p2] (4) at (12,0) {$\kthElem_{\kthSetSize}$};
    \node[p1] (5) at (16,0) {{\small choice}};
    \node[empty] (a) at (0.5,0.9) {$\frac{1}{2}$};
    \node[empty] (b) at (0.5,-0.9) {$\frac{1}{2}$};
    \node[empty] (c) at (4.5,0.9) {$\frac{1}{2}$};
    \node[empty] (d) at (4.5,-0.9) {$\frac{1}{2}$};
    \node[empty] (e) at (12.5,0.9) {$\frac{1}{2}$};
    \node[empty] (f) at (12.5,-0.9) {$\frac{1}{2}$};
    \coordinate[shift={(-5mm,0mm)}] (init) at (1.west);
    \path
    (init) edge (1)
    ;
	\draw[->,>=latex] (1) to[out=40,in=140] node[above] {$\kthNewSizeFct(\kthElem_{1})$} (2);
	\draw[->,>=latex] (2) to[out=40,in=140] node[above] {$\kthNewSizeFct(\kthElem_{2})$} (3);
	\draw[->,>=latex] (4) to[out=40,in=140] node[above] {$\kthNewSizeFct(\kthElem_{\kthSetSize})$} (5);
	\draw[->,>=latex] (1) to[out=320,in=220] node[below] {$1$} (2);
	\draw[->,>=latex] (2) to[out=320,in=220] node[below] {$1$} (3);
	\draw[->,>=latex] (4) to[out=320,in=220] node[below] {$1$} (5);
	\draw[-,loosely dashed] (3) to (4);
      \end{tikzpicture}}
      \caption{Random subset selection gadget: an element is selected in the subset if the upper edge is taken when leaving the corresponding state.}
\label{fig:ts_NPhardRandomSelectGadget}
  \end{figure}
\begin{theorem}
\label{thm:ts_NPHardness}
The beyond worst-case problem for the shortest path is $\NPTIME$-hard.
\end{theorem}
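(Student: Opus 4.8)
The plan is to reduce the $K^{th}$ largest subset problem to the $\BWC$ shortest path problem in polynomial time, using the two gadgets sketched above. First I would set up notation: let $(\kthSet, \kthSizeFct, \kthSetsNbr, \kthSetMaxSum)$ be an instance with $\kthSet = \{\kthElem_{1}, \ldots, \kthElem_{\kthSetSize}\}$. The key quantitative trick is to rescale the sizes: I would define $\kthNewSizeFct(\kthElem_{j}) = 2^{j} \cdot \kthSizeFct(\kthElem_{j})$ (or a similar weighting ensuring that distinct subsets yield distinct total weights along the random selection paths, hence distinct costs) so that the $2^{\kthSetSize}$ equiprobable paths generated by the random subset selection gadget (Fig.~\ref{fig:ts_NPhardRandomSelectGadget}) are in bijection with subsets $\kthSubset \subseteq \kthSet$, each path accumulating a cost related to $\kthSizeFct(\kthSubset)$. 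Choosing the upper edge at state $\kthElem_{j}$ corresponds to putting $\kthElem_{j}$ in $\kthSubset$, choosing the lower edge (weight $1$) corresponds to leaving it out; each path has probability $2^{-\kthSetSize}$.

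Next I would describe the choice gadget (Fig.~\ref{fig:ts_NPhardChoiceGadget}) appended after the selection gadget. In state \emph{choice}, $\playerOne$ picks either $\kthExpState$ (cheaper in expectation) or $\kthWCState$ (safe). The weights $\kthWeightA, \kthWeightB, \kthWeightC$ and the worst-case threshold $\thresholdWC$ must be calibrated so that, after a path encoding subset $\kthSubset$: going through $\kthWCState$ always keeps the total truncated sum strictly below $\thresholdWC$; going through $\kthExpState$ keeps it strictly below $\thresholdWC$ \emph{if and only if} $\kthSizeFct(\kthSubset) \leq \kthSetMaxSum$ (i.e., the large-weight option $\kthWeightA$ from $\kthExpState$ to \emph{target} is only taken by $\playerTwo$ adversarially and is safe exactly when the subset is small enough — alternatively the adversary at $\kthExpState$ chooses between $\kthWeightA$ and $\kthWeightB$, so $\playerOne$ must assume the worst). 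Since $\playerOne$ must guarantee the worst-case against \emph{all} adversary strategies, the only worst-case-safe strategies are those choosing $\kthExpState$ precisely on the paths encoding subsets with $\kthSizeFct(\kthSubset) \leq \kthSetMaxSum$. I would then argue that among worst-case-safe strategies, the expected cost is minimized by taking $\kthExpState$ on \emph{every} such ``good'' path; and a direct computation shows this minimal expected cost equals some value $e^{\ast}$ that is a strictly decreasing function of the number $N_{\text{good}}$ of subsets with $\kthSizeFct(\kthSubset) \leq \kthSetMaxSum$. Setting the expected-value threshold $\thresholdExp$ to the value corresponding to $N_{\text{good}} = \kthSetsNbr$ (strictly between the expectation at $\kthSetsNbr-1$ good subsets and at $\kthSetsNbr$ good subsets) makes the $\BWC$ problem answer $\yes$ iff at least $\kthSetsNbr$ subsets satisfy the bound, i.e., iff the $K^{th}$ largest subset instance is positive.

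Finally I would verify that the reduction is polynomial: the game has $O(\kthSetSize)$ states and constantly many gadget states, the stochastic model is a fixed memoryless uniform distribution (Moore machine of linear size), and the crucial point is that $\thresholdWC$ and the weights $\kthNewSizeFct(\kthElem_{j})$ are encoded in binary and are only exponential in \emph{value} but polynomial in \emph{bit-length} — which is exactly why this yields $\NPTIME$-hardness and not a polynomial-time algorithm, consistent with the pseudo-polynomial upper bound of Thm.~\ref{thm:ts_pseudoPoly}. The main obstacle I anticipate is the calibration step: choosing weights and both thresholds so that (a) distinct subsets give distinct costs, (b) the $\kthWCState$ branch is unconditionally safe while the $\kthExpState$ branch is safe exactly on good subsets, and (c) the expected cost, as a function of how many good subsets the strategy ``exploits'', crosses $\thresholdExp$ precisely at $\kthSetsNbr$ — all simultaneously, and with encodings that stay polynomial. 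Getting the arithmetic of the geometric sums (the $2^{-\kthSetSize}$ weights) to line up with the threshold comparison is the delicate computation, but it is routine bookkeeping once the gadget structure is fixed; I would present it as an explicit closed-form for $e^{\ast}$ and the needed inequality on $\thresholdExp$.
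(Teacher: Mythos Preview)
Your overall plan---reduce from the $K^{th}$ largest subset problem, use the random subset selection gadget followed by the choice gadget, and calibrate thresholds so that $\playerOne$'s optimal $\BWC$ strategy picks $\kthExpState$ exactly on ``good'' subsets---matches the paper's proof. The structure is right.

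There is however a genuine gap in your calibration step. You propose the per-element rescaling $\kthNewSizeFct(\kthElem_{j}) = 2^{j}\cdot \kthSizeFct(\kthElem_{j})$ with the stated aim of making distinct subsets yield distinct path costs. This is the wrong objective and the wrong scaling. First, $2^{j}\cdot \kthSizeFct(\kthElem_{j})$ does \emph{not} in general make sums over distinct subsets distinct (take $\kthSizeFct(\kthElem_{1})=2$, $\kthSizeFct(\kthElem_{2})=1$). More importantly, after any non-uniform per-element rescaling the path cost encodes $\sum_{\kthElem_{j}\in\kthSubset} 2^{j}\kthSizeFct(\kthElem_{j})$, which bears no monotone relation to $\kthSizeFct(\kthSubset)=\sum_{\kthElem_{j}\in\kthSubset}\kthSizeFct(\kthElem_{j})$; hence there is no threshold on path costs that separates the subsets with $\kthSizeFct(\kthSubset)\leq \kthSetMaxSum$ from the others, and your property ``going through $\kthExpState$ is safe iff $\kthSizeFct(\kthSubset)\leq\kthSetMaxSum$'' cannot be enforced. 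The actual purpose of the rescaling in the paper is different: one takes a \emph{uniform} factor $\kthNewSizeFct(\kthElem)=(\kthSetSize+1)\cdot\kthSizeFct(\kthElem)$ so that the additive noise contributed by the weight-$1$ ``not selected'' edges (at most $\kthSetSize$) is strictly smaller than the gap $(\kthSetSize+1)$ between consecutive values of $(\kthSetSize+1)\cdot\kthSizeFct(\kthSubset)$. This is what makes the path-cost threshold $\kthPathBound=(\kthSetSize+1)(\kthSetMaxSum+1)-1$ equivalent to $\kthSizeFct(\kthSubset)\leq\kthSetMaxSum$.

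A second, smaller issue: the optimal expected cost $e^{\ast}$ under the worst-case constraint is not a function of $N_{\text{good}}$ alone---it also depends on the actual costs of the good and bad paths. The paper handles this by choosing $\kthWeightC$ (and hence $\thresholdWC$) large enough that the variation among path costs in the selection gadget becomes negligible compared to the cost incurred in the choice gadget, and then proves the equivalence via explicit upper and lower bounds rather than a closed form for $e^{\ast}$. Your ``strictly between the expectation at $\kthSetsNbr-1$ and at $\kthSetsNbr$'' phrasing presupposes a well-defined value per count, which is not available without this coarsening argument.
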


\begin{proof}
We establish a reduction from the $K^{th}$ largest subset problem: given a finite set $\kthSet = \{\kthElem_{1}, \ldots{}, \kthElem_{\kthSetSize}\}$ (hence $\kthSetSize = \vert\kthSet\vert$), a size function $\kthSizeFctFull$, and two naturals $\kthSetsNbr, \kthSetMaxSum \in \nat$, decide if there exist $\kthSetsNbr$ distinct subsets $\kthSubset_{i} \subseteq \kthSet$, $1 \leq i \leq \kthSetsNbr$, such that $\kthSizeFct(\kthSubset_{i}) = \sum_{\kthElem \in \kthSubset_{i}} \kthSizeFct(\kthElem) \leq \kthSetMaxSum$ for all $\kthSetsNbr$ subsets. This problem is known to be $\NPTIME$-hard~\cite{johnson_JACM1978,garey_FNY1979}. Note that the restriction to~$\natStrict$ for the codomain of $\kthSizeFct$ in place of $\nat$ is w.l.o.g. as the problem is satisfied for $\kthSet$, $\kthSetsNbr$, $\kthSetMaxSum$ and $\kthSizeFct\colon \kthSet \rightarrow \nat$ if and only if it is satisfied for $\kthSet' = \kthSet \setminus \{\kthElem \in \kthSet \mid \kthSizeFct(\kthElem) = 0\}$, $\kthSetsNbr' = \left\lfloor\frac{\kthSetsNbr}{2^{\vert\kthSet\vert - \vert\kthSet'\vert}}\right\rfloor$, $\kthSetMaxSum' = \kthSetMaxSum$ and $\kthSizeFct'\colon \kthSet' \rightarrow \natStrict$ such that for all $\kthElem \in \kthSet'$, $\kthSizeFct'(\kthElem) = \kthSizeFct(\kthElem)$. Obviously, we should have $\kthSetsNbr \leq 2^{\kthSetSize}$, otherwise the problem is trivial since we cannot find a sufficient number of \textit{distinct} subsets.

Before giving the details of our reduction, we define, given $\kthSet$ and $\kthSizeFct$, the function $\kthNewSizeFct\colon \kthSet \rightarrow \natStrict$ such that for each $\kthElem \in \kthSet$, $\kthNewSizeFct(\kthElem) = (\kthSetSize + 1) \cdot \kthSizeFct(\kthElem)$. Clearly, it satisfies the following property: 
\begin{equation}
\label{eq:ts_scaling}
\forall\, \kthSubset \subseteq \kthSet,\, \kthSizeFct(\kthSubset) \leq \kthSetMaxSum \Leftrightarrow \kthNewSizeFct(\kthSubset) \leq (\kthSetSize + 1) \cdot \kthSetMaxSum.
\end{equation}

We present two gadgets used to construct a game and a $\BWC$ shortest path problem such that the answer to the $K^{th}$ largest subset problem is $\yes$ if and only if the answer to the $\BWC$ problem is also $\yes$.

First, the fragment of the game graph depicted in Fig.~\ref{fig:ts_NPhardRandomSelectGadget} is called the {\em random subset selection gadget}. All its states belong to $\playerTwo$, except for the last one, and model the selection (or not) of an element of $\kthSet$ in a subset. Basically, there is a bijection between paths in this gadget and subsets of $\kthSet$: an element $\kthElem_{i} \in \kthSet$ is selected by the gadget if the outgoing upper edge is taken when leaving state $\kthElem_{i}$, and not selected when the outgoing lower edge is taken. To be able to formally distinguish between such paths, which we usually define as sequence of \textit{states}, we should introduce dummy states to split edges. We omit this technical trick for the sake of simplicity. The stochastic model followed by $\playerTwo$ in the $\BWC$ shortest path problem we construct is the uniform distribution: the upper and lower edges are equiprobable in all states. This gadget verifies the following important properties. 
\begin{enumerate}
\item All subsets are equiprobable: they have probability $\frac{1}{2^\kthSetSize}$ to be selected.
\item If the gadget selects a subset $\kthSubset \subseteq \kthSet$ through the corresponding path $\kthRandomPath{\kthSubset}$, the total sum of weights along $\kthRandomPath{\kthSubset}$, denoted by $\kthPathSizeFct(\kthRandomPath{\kthSubset})$, is equal to $\kthNewSizeFct(\kthSubset) + n - \vert\kthSubset\vert$.
\end{enumerate}

By Eq.~\eqref{eq:ts_scaling}, we have that 
\begin{equation}
\label{eq:ts_equivalencePathSetForMax}
\forall\, \kthSubset \subseteq \kthSet,\, \kthSizeFct(\kthSubset) = \kthSetMaxSum \Leftrightarrow (\kthSetSize + 1) \cdot \kthSetMaxSum \leq \kthPathSizeFct(\kthRandomPath{\kthSubset}) < (\kthSetSize + 1) \cdot (\kthSetMaxSum + 1).
\end{equation}
Indeed, consider the following. Observe that $0 \leq \kthSetSize - \vert\kthSubset\vert \leq \kthSetSize$, for any subset $\kthSubset \subseteq \kthSet$. Hence the left-to-right implication is trivial. For the converse, we directly deduce the following equivalent expression:
$\kthSetMaxSum - \frac{\kthSetSize - \vert\kthSubset\vert}{\kthSetSize + 1} \leq \kthSizeFct(\kthSubset) < \kthSetMaxSum + 1 - \frac{\kthSetSize - \vert\kthSubset\vert}{\kthSetSize + 1}$.
The left inequality implies that $\kthSetMaxSum - 1 < \kthSizeFct(\kthSubset)$, and since $\kthSizeFct(\kthSubset) \in \nat$, that $\kthSetMaxSum \leq \kthSizeFct(\kthSubset)$. The right inequality implies that $\kthSizeFct(\kthSubset) < \kthSetMaxSum + 1$, and using the same argument, that $\kthSizeFct(\kthSubset) \leq \kthSetMaxSum$. We conclude that Eq.~\eqref{eq:ts_equivalencePathSetForMax} is true. Consequently, we define the value $\kthPathBound = (\kthSetSize + 1) \cdot (\kthSetMaxSum + 1) - 1$, which is an upper bound on the value $\kthPathSizeFct(\kthRandomPath{\kthSubset}) \leq \kthPathBound$ of a path corresponding to a subset $\kthSubset \subseteq \kthSet$ such that $\kthSizeFct(\kthSubset) \leq \kthSetMaxSum$.

Now consider the second gadget, called the {\em choice gadget} and depicted in Fig.~\ref{fig:ts_NPhardChoiceGadget}. This gadget comes after the random subset selection gadget. Its aim is to discriminate subsets generated by the preceding gadget based on whether or not they satisfy the upper bound  $\kthSizeFct(\kthSubset) \leq \kthSetMaxSum$. Observe the shared \textit{choice} state. There, $\playerOne$ has the choice to go up to state $\kthExpState$ or down to state $\kthWCState$. Both belong to $\playerTwo$. Again, probabilities for the stochastic model of the adversary are depicted in Fig.~\ref{fig:ts_NPhardChoiceGadget}. So, in $\kthExpState$, an arbitrary strategy of $\playerTwo$ can decide to impose cost $\kthWeightA$ or cost $\kthWeightB$ before reaching the target set of the game. Nonetheless, the stochastic model $\stratStoch$ of $\playerTwo$ assigns probability zero to the edge of weight $\kthWeightA$: the expectation of any strategy of $\playerOne$ against this stochastic model will be independent of the value $\kthWeightA$. In $\kthWCState$, the cost added is always equal to~$\kthWeightC$. Intuitively, we will choose values so that to minimize his expected cost-to-target, $\playerOne$ should choose~$\kthExpState$, but also so that the worst-case requirement implies that it is only safe to choose this state if the previous path defined a subset that satisfies the bound $\kthSizeFct(\kthSubset) \leq \kthSetMaxSum$ given by the $K^{th}$ largest subset problem.

To complete the description of the reduction, we need to precise the values of the thresholds $\thresholdWC$ and $\thresholdExp$, and the weights $\kthWeightA$, $\kthWeightB$ and $\kthWeightC$. Assume that we choose the worst-case threshold and the weights such that:
\begin{itemize}
\item[\textit{(a)}] $\kthPathBound + 1 + \kthWeightA + 1 \geq \thresholdWC$, i.e., going to $\kthExpState$ with a path $\kthRandomPath{\kthSubset}$ (obtained in the random subset selection gadget) of cost $\kthPathSizeFct(\kthRandomPath{\kthSubset}) > \kthPathBound$ (i.e., with a selected subset $\kthSubset \subseteq \kthSet$ such that $\kthSizeFct(\kthSubset) > \kthSetMaxSum$ by Eq.~\eqref{eq:ts_equivalencePathSetForMax}) is losing for the worst-case threshold if $\playerTwo$ takes the edge of weight $\kthWeightA$;
\item[\textit{(b)}] $\kthPathBound + \kthWeightA + 1 < \thresholdWC$ and $\kthPathBound + \kthWeightB + 1 < \thresholdWC$, i.e., going to $\kthExpState$ with a path $\kthRandomPath{\kthSubset}$ of cost $\kthPathSizeFct(\kthRandomPath{\kthSubset}) \leq \kthPathBound$ (i.e., $\kthSizeFct(\kthSubset) \leq \kthSetMaxSum$) is safe for the worst-case requirement whatever the choice of $\playerTwo$;
\item[\textit{(c)}] for all $\kthSubset \subseteq \kthSet$, we have that $\kthPathSizeFct(\kthRandomPath{\kthSubset}) + \kthWeightC + 1 < \thresholdWC$, i.e., going to $\kthWCState$ is always safe for the worst-case requirement.
\end{itemize}
Then clearly, $\playerOne$ can always choose to go to $\kthWCState$ and ensure the worst-case threshold, but he can go up only if the chosen subset $\kthSubset$ satisfies $\kthSizeFct(\kthSubset) \leq \kthSetMaxSum$, which is equivalent to say that $\kthPathSizeFct(\kthRandomPath{\kthSubset}) \leq \kthPathBound$. We add the following constraints to the choices of the expectation threshold and the weights:
\begin{itemize}
\item[\textit{(d)}] in order to minimize the expected truncated sum in the MDP defined by the stochatic model, the optimal choice for~$\playerOne$ is to always take $\kthExpState$ when possible (i.e., when $\kthSizeFct(\kthSubset) \leq \kthSetMaxSum$, or equivalently $\kthPathSizeFct(\kthRandomPath{\kthSubset}) \leq \kthPathBound$ because of the constraint~\textit{(a)} defined above);
\item[\textit{(e)}] the expected value $\optimalExp$ of this optimal choice satisfies the expectation requirement (i.e., $\optimalExp < \thresholdExp$) if and only if the number of distinct subsets $\kthSubset_{i} \subseteq \kthSet$ verifying $\kthSizeFct(\kthSubset_{i}) \leq \kthSetMaxSum$ is larger than or equal to $K$.
\end{itemize}

We will now define values such that properties \textit{(a)} through \textit{(e)} are ensured. First, let $\kthPathMax = \max \{\kthPathSizeFct(\kthRandomPath{\kthSubset}) \mid \kthSubset \subseteq \kthSet\} = \kthPathSizeFct(\kthRandomPath{\kthSet})$ be the maximal cost of a path in the random subset selection gadget (the equality with $\kthPathSizeFct(\kthRandomPath{\kthSet})$ is thanks to the size function $\kthSizeFct$ assigning strictly positive values). We claim the needed properties are verified for the following values:
$\thresholdWC = 2^{\kthSetSize + 1} \cdot \kthSetSize \cdot (\kthPathMax + 2)$, $\thresholdExp = \big(\kthSetsNbr \cdot (\kthPathBound + 2) + (2^{\kthSetSize} - \kthSetsNbr) \cdot \thresholdWC\big)/2^{\kthSetSize}$,
$\kthWeightA = \thresholdWC - \kthPathBound - 2$, $\kthWeightB = 1$, and $\kthWeightC = \thresholdWC - \kthPathMax - 2$.
Using these, we review each property one-by-one. For \textit{(a)}, we obtain by simple substitutions
$\textit{(a)} \:\Leftrightarrow\: \kthPathBound + 1 + \thresholdWC - \kthPathBound - 2 + 1 \geq \thresholdWC \:\Leftrightarrow\: 0 \geq 0$,
which is obvious. Similarly, for \textit{(b)}, we have that
\begin{equation}
\label{eq:ts_hardnessProofEqA}
\textit{(b)} \quad\Leftrightarrow\quad (\kthPathBound + \thresholdWC - \kthPathBound - 2 + 1 < \thresholdWC) \wedge (\kthPathBound + 1 + 1 < \thresholdWC) \quad\Leftrightarrow\quad (-1 < 0) \wedge (\kthPathBound + 2 < 2^{\kthSetSize + 1} \cdot \kthSetSize \cdot (\kthPathMax + 2)).
\end{equation}
The first term of the conjunction is trivially true so we focus on the second one. Without loss of generality, we can assume that $\kthSetMaxSum < \kthSizeFct(\kthSet)$ as otherwise the $K^{th}$ largest subset problem reduces to decide if $\kthSetsNbr \leq 2^{\kthSetSize}$. Thus, we deduce the inequality $\kthPathBound  < (\kthSetSize + 1) \cdot (\kthSizeFct(\kthSet) + 1) - 1$. Also note that, by definition, we have that $\kthPathMax = \kthPathSizeFct(\kthRandomPath{\kthSet}) = \kthNewSizeFct(\kthSet) = (\kthSetSize + 1) \cdot \kthSizeFct(\kthSet)$. Using these inequalities in Eq.~\eqref{eq:ts_hardnessProofEqA}, we derive that proving the following central inequality suffices to obtain \textit{(b)}:
\begin{equation*}
\kthPathBound + 2 < (\kthSetSize + 1) \cdot (\kthSizeFct(\kthSet) + 1) + 1 \leq 2^{\kthSetSize + 1} \cdot \kthSetSize \cdot ((\kthSetSize + 1) \cdot \kthSizeFct(\kthSet) + 2) = 2^{\kthSetSize + 1} \cdot \kthSetSize \cdot (\kthPathMax + 2).
\end{equation*}
This boils down to
$(2^{\kthSetSize + 1}\cdot \kthSetSize - 1)\cdot (\kthSetSize + 1)\cdot \kthSizeFct(\kthSet) + (2\cdot 2^{\kthSetSize+1} - 1)\cdot \kthSetSize - 2 \geq 0$,
which is true for $\kthSetSize \geq 1$ (which we can assume otherwise $\kthSet = \emptyset$ and the problem is trivial). Hence, property \textit{(b)} is verified by our choice of values. Now, consider property \textit{(c)}. We have
$\textit{(c)} \:\Leftrightarrow\: \kthPathSizeFct(\kthRandomPath{\kthSubset}) + \thresholdWC - \kthPathMax - 2 + 1 < \thresholdWC \:\Leftrightarrow\:  \kthPathSizeFct(\kthRandomPath{\kthSubset}) < \kthPathMax + 1$, which is true by definition of $\kthPathMax$ as the maximum over the values of paths. Regarding property \textit{(d)}, we have to show that choosing $\kthExpState$ gives an expectation strictly lower (recall we want to minimize it) than choosing~$\kthWCState$. Observe that due to the particular structure of the game graph, the strategy of $\playerOne$ is restricted to this one-shot choice of edge. Note that in this expected value context, the actual value obtained in the random subset selection gadget does not matter to decide whether to go to $\kthExpState$ or to $\kthWCState$: hence it suffices to look at the expectation from the \textit{choice} state up to the \textit{target} state. For $\kthExpState$, it is trivially equal to $1 + 1 = 2$ as the stochastic model $\stratStoch$ of $\playerTwo$ always chooses the edge of weight~$\kthWeightB$. For $\kthWCState$, this expectation is equal to
\begin{equation*}
1 + \kthWeightC = 1 + \thresholdWC - \kthPathMax - 2 = 2^{\kthSetSize + 1} \cdot \kthSetSize \cdot (\kthPathMax + 2) - \kthPathMax - 1 \geq (2^{\kthSetSize + 1} \cdot \kthSetSize - 1) \cdot (\kthPathMax + 2) \geq (\kthPathMax + 2) > 2,
\end{equation*}
and we obtain the claim \textit{(d)}. Note that an actual strategy that satisfies the $\BWC$ problem will only be able to choose $\kthExpState$ if the selected path satisfies the bound $\kthPathSizeFct(\kthRandomPath{\kthSubset}) \leq \kthPathBound$, as discussed in properties \textit{(a)} and \textit{(b)}.

Finally, it remains to show the most involved property \textit{(e)}: proving it will conclude our reduction as we will obtain that the answer to the $K^{th}$ largest subset problem is $\yes$ if and only if the answer to the $\BWC$ problem we have defined is $\yes$. Note that combining the already proved properties \textit{(a)} through \textit{(d)}, we know that the strategy $\strat_{1} \in \stratsPureFinite_{1}$ that chooses state $\kthExpState$ when $\kthPathSizeFct(\kthRandomPath{\kthSubset}) \leq \kthPathBound$ and state $\kthWCState$ otherwise, yields the minimal expectation value $\optimalExp$ \textit{under the worst-case constraint} of threshold $\thresholdWC$. Hence, it suffices to study this strategy to answer the $\BWC$ problem. Our claim is thus that
\begin{equation}
\label{eq:ts_hardnessProofEqC}
\optimalExp = \expect^{\game[\strat_{1}, \stratStoch]}_{a_{1}} < \thresholdExp \quad\Leftrightarrow\quad \Big\vert \left\lbrace \kthSubset \subseteq \kthSet \mid \kthSizeFct(\kthSubset) \leq \kthSetMaxSum \right\rbrace \Big\vert \geq \kthSetsNbr,
\end{equation}
with $\game$ and $\stratStoch$ the game and stochastic model we defined.

For the left-to-right implication, we reason by contradiction and show that if there is only $\kthSetsNbr - 1$ (or less) distinct subsets whose sum is less than or equal to $\kthSetMaxSum$, then strategy $\strat_{1}$ has an expected cost larger than or equal to $\thresholdExp$. To show that, we use the fact that all paths (i.e., subsets) have equal probability in the random subset selection gadget, and establish that a lower bound on the sum of all the paths under this strategy reaches or exceeds $2^{\kthSetSize} \cdot \thresholdExp$. Recall that $\playerTwo$ follows its stochastic model $\stratStoch$ for this matter. First, let $\kthLBExp = 0$ which is trivially a lower bound for the cost of all the paths that goes through $\kthExpState$. Second, let $\kthLBWC = (2^{\kthSetSize} - (\kthSetsNbr - 1)) \cdot (\kthPathBound + 1 + \kthWeightC + 1)$: it is clearly a lower bound for the sum of the values of paths that go through state $\kthWCState$ when $\playerOne$ follows strategy $\strat_{1}$. We have that $2^{\kthSetSize} \cdot \optimalExp \geq \kthLBExp + \kthLBWC$. Let us now establish that $\kthLBWC \geq 2^{\kthSetSize} \cdot \thresholdExp$ and we will be done. We proceed as follows.
\begin{align*}
\kthLBWC - 2^{\kthSetSize} \cdot \thresholdExp &= (2^{\kthSetSize} - \kthSetsNbr + 1) \cdot (\kthPathBound + 1 + \thresholdWC - \kthPathMax - 2 + 1) - \kthSetsNbr \cdot (\kthPathBound + 2) - (2^{\kthSetSize} - \kthSetsNbr)\cdot \thresholdWC\\
&= \thresholdWC + (2^{\kthSetSize} - \kthSetsNbr + 1) \cdot (\kthPathBound - \kthPathMax) - \kthSetsNbr \cdot (\kthPathBound + 2)\\
&= 2^{\kthSetSize + 1} \cdot \kthSetSize \cdot (\kthPathMax + 2) + (2^{\kthSetSize} - \kthSetsNbr + 1) \cdot (\kthPathBound - \kthPathMax) - \kthSetsNbr \cdot (\kthPathBound + 2)
\end{align*}
Recall that $\kthPathBound, \kthPathMax \geq 0$, $\kthSetSize \geq 1$ and $0 \leq \kthSetsNbr \leq 2^{\kthSetSize}$ (otherwise the answer is trivially $\no$). Furthermore, $\kthPathBound$ is the upper bound on the values of paths $\kthRandomPath{\kthSubset}$ representing good subsets, i.e., subsets $\kthSubset \subseteq \kthSet$ such that $\kthSizeFct(\kthSubset) \leq \kthSetMaxSum$. This value is used by the strategy $\strat_{1}$ implemented by $\playerOne$ to decide whether going to $\kthExpState$ is safe with regard to the worst-case requirement or not. As such, we can assume that $\kthPathBound \leq \kthPathMax$, otherwise all paths are safe and the answer to the problem is trivially $\yes$ (since all subsets respect the bound and $\kthSetsNbr \leq 2^{\kthSetSize}$). Using $\kthSetsNbr \leq 2^{\kthSetSize}$, $\kthPathBound \geq 0$ and $\kthPathBound \leq \kthPathMax$, we can write $\kthLBWC - 2^{\kthSetSize} \cdot \thresholdExp \geq (2^{\kthSetSize + 1} \cdot \kthSetSize - 2^{\kthSetSize} + \kthSetsNbr - 1 -\kthSetsNbr) \cdot \kthPathMax + (2^{\kthSetSize + 1} \cdot \kthSetSize \cdot 2 - 2 \cdot \kthSetsNbr)$.

To prove that this last expression is non-negative, we analyze its terms. We know that $\kthPathMax \geq 0$. For its coefficient, we have 
$2^{\kthSetSize + 1} \cdot \kthSetSize - 2^{\kthSetSize} - 1 \geq 2^{\kthSetSize} - 1 \geq 0$ because $\kthSetSize \geq 1$. For the last term, we use $\kthSetsNbr \leq 2^{\kthSetSize}$ and obtain that $2^{\kthSetSize + 2} \cdot \kthSetSize - 2 \cdot \kthSetsNbr \geq 2^{\kthSetSize + 2} - 2^{\kthSetSize + 1} = 2^{\kthSetSize + 1} \geq 0$.
Hence all terms of the last expression are non-negative and $\kthLBWC \geq 2^{\kthSetSize} \cdot \thresholdExp$, proving that the left-to-right implication of Eq.~\eqref{eq:ts_hardnessProofEqC} is verified.

It remains to prove the right-to-left implication. Assume there are exactly $\kthSetsNbr$ distinct subsets of sum less than or equal to $\kthSetMaxSum$ (if there are more, then the bounds below are easier to obtain). We claim that strategy $\strat_{1}$ ensures an expected truncated sum $\optimalExp$ strictly lower than $\thresholdExp$. To show this, we establish that the total sum of the outcomes under this strategy of $\playerOne$ and the stochastic model of $\playerTwo$ is strictly bounded from above by $2^{\kthSetSize} \cdot \thresholdExp$, and the claim follows thanks to all paths being equiprobable in the random subset selection gadget. First, consider the paths that go through $\kthExpState$ (i.e., all the paths corresponding to subsets~$\kthSubset$ such that $\kthSizeFct(\kthSubset) \leq \kthSetMaxSum$). By definition of $\strat_{1}$ and our hypothesis, there are exactly $\kthSetsNbr$ such paths. We define $\kthUBExp = \kthSetsNbr \cdot (\kthPathBound + 2)$, a clear upper bound for the sum of the values of these paths, by definition of $\kthPathBound$ and~$\stratStoch$. Second, there are $(2^{\kthSetSize} - \kthSetsNbr)$ paths that go through $\kthWCState$. Let $\kthUBWC = (2^{\kthSetSize} - \kthSetsNbr) \cdot (\kthPathMax + 1 + \kthWeightC) = (2^{\kthSetSize} - \kthSetsNbr) \cdot (\thresholdWC - 1)$ be a bound for the sum of the values of all these paths. Clearly,
\begin{equation*}
\small
2^{\kthSetSize}\cdot \optimalExp \leq \kthUBExp + \kthUBWC = \kthSetsNbr \cdot (\kthPathBound + 2) + (2^{\kthSetSize} - \kthSetsNbr) \cdot (\thresholdWC - 1) < \kthSetsNbr \cdot (\kthPathBound + 2) + (2^{\kthSetSize} - \kthSetsNbr) \cdot \thresholdWC = 2^{\kthSetSize}\cdot \thresholdExp
\end{equation*}
by definition of the expected value threshold $\thresholdExp$, and so we are done for this direction.

Having verified both directions of the equivalence given in Eq.~\eqref{eq:ts_hardnessProofEqC}, the correctness of our reduction from the $K^{th}$ largest subset problem is established. Note that it requires values of the thresholds that are exponential in the size of the set $\kthSet$ and polynomial in the value of the largest weight assigned by the size function $\kthSizeFct$ (or equivalently, exponential in its encoding) for the $K^{th}$ largest subset problem. It also requires to use edge weights that are polynomial in these values. Observe that this is not a problem, as all those values may be represented using a logarithmic number of bits, hence polynomially in the characteristics of the initial $K^{th}$ largest subset problem. Finally, notice that we do need to consider exponential constants in our game to obtain the $\NPTIME$-hardness of our problem, as for values polynomial in the size of the game, the algorithm described in Sect.~\ref{subsec:ts_pseudoPolyAlgo} actually operates in truly-polynomial time. This concludes our proof.
\end{proof}

\section{Conclusion}
\label{sec:conclusion}

In this paper, we paved the way to a new approach, combining worst-case and expected value requirements in what we named the \textit{beyond worst-case synthesis problem}. We believe this setting is adequate to the synthesis of controllers that must ensure strict guarantees under all circumstances, and prove to be more efficient in reasonable conditions, a problem for which few theoretical frameworks exist.

We thoroughly studied the $\BWC$ synthesis problem in the context of two well-known quantitative measures: the \textit{mean-payoff} and the \textit{shortest path}. For the mean-payoff, we proved the problem to be in $\NPinter$, matching the complexity of the worst-case threshold problem \cite{ZP96,jurdzinski98,gawlitza2009}, which we encompass. Hence, the $\BWC$ setting provides additional modeling power at no complexity cost (in terms of problem solving), a remarkably positive result. For the shortest path, the $\BWC$ problem proves to be harder than the worst-case threshold problem, going from polynomial to pseudo-polynomial time, with an $\NPTIME$-hardness result. In both cases, synthesized strategies may require pseudo-polynomial memory, but accept natural, elegant representations, based on states of the game and simple integer counters.

Possible future work include study of the $\BWC$ problem for other quantitative measures, extension of our results for the mean-payoff and the shortest path to more general settings (multi-dimension~\cite{DBLP:journals/iandc/VelnerC0HRR15,DBLP:journals/acta/ChatterjeeRR14}, decidable classes of games with imperfect information~\cite{degorre_CSL2010,HunterPR14}, etc), and application of the $\BWC$ problem to various practical cases.
Given the relevance of the framework for practical applications, it would certainly be worthwhile to develop tool suites supporting it. We could for example build on symblicit implementations recently developed for monotonic Markov decision processes by Bohy et al.~\cite{DBLP:journals/corr/BohyBR14a}. Following the publication of~\cite{bruyere_STACS2014}, the concept of beyond worst-case synthesis has also been applied to priced timed games in~\cite{larsen2014}.

\bibliographystyle{plain}
\bibliography{bwc_bib}

\end{document}